\newtheorem{thm}{Theorem}[section]
\newtheorem{lemma}[thm]{Lemma}
\newtheorem{defn}[thm]{Definition}
\newtheorem{rem}[thm]{Remark}
\newtheorem{ass}[thm]{Assumption}
\newtheorem{exam}[thm]{Example}
\title{{\LARGE Spectral analysis of an abstract pair interaction model}}
\author{Keisuke ASAHARA
    \thanks{Graduate School of Science, Department of Mathematics,
               Hokkaido University\newline
              Kita 10, Nishi 8, Kita-Ku, Sapporo, Hokkaido, 060-0810, Japan \newline 
              E-mail: asahara@math.sci.hokudai.ac.jp} 
\and Daiju FUNAKAWA
   \thanks{Department of Electronics and Information Engineering, 
              Faculty of Engineering, Hokkai-Gakuen University \newline
              4-1-40, Asahi-Cho, Toyohira-Ku, Sapporo, Hokkaido, 062-8605, Japan \newline 
              E-mail: funakawa@hgu.jp}
}
\date{}
\numberwithin{equation}{section}
\renewcommand{\labelenumi}{$(\arabic{enumi})$}
\def\C{{\mathbb{C}}}
\def\N{{\mathbb{N}}}
\def\R{{\mathbb{R}}}
\def\U{\mathbb{U}}
\def\V{\mathbb{V}}
\def\W{\mathbb{W}}
\def\mU{\mathcal{U}}
\def\mM{\mathcal{M}}
\def\msD{\mathscr{D}}
\def\msF{\mathscr{F}}
\def\msH{\mathscr{H}}
\def\msL{\mathscr{L}}
\def\emps{\emptyset}
\def\nin{\notin}
\def\disp{\displaystyle}
\def\sla{\backslash}
\def\inner<#1>{\left\langle #1 \right\rangle}
\def\slim{\mathrm{s} \mbox{-}\lim}
\def\ra{\rightarrow}
\def\lra{\leftrightarrow}
\def\dra{\downarrow}
\def\rest{\upharpoonright}
\def\ga{\gamma}
\def\la{\lambda}
\def\vep{\varepsilon}
\def\om{\omega}
\def\Ub{U_{\mathrm b}}
\def\Eg{E_{\mathrm g}}
\def\Eb{E_{\mathrm b}}
\def\Ran{\mathrm{Ran}}
\def\SQ{\mathrm{d}\Gamma_{\mathrm{b}}}
\def\Segal{\Phi_{\mathrm{s}}}
\def\fb{\mathscr{F}_{\mathrm{b}}}
\def\fbfin{\mathscr{F}_{\mathrm{b,fin}}}
\def\fbz{\mathscr{F}_{\mathrm{b,0}}}
\def\Borel{\mathbf{B}^1}
\def\Ltr{L^2(\R)}
\def\Ltd{L^2(\R^d)}
\def\Dp{D_{+}}
\def\Dm{D_{-}}
\def\Dpm{D_{\pm}}
\def\Dmp{D_{\mp}}
\def\Om{\Omega}
\def\Omz{\Omega_0}
\def\Omp{\Omega_{+}}
\def\Ompm{\Omega_{\pm}}
\def\Rp{R_{+}}
\def\Rm{R_{-}}
\def\Rpm{R_{\pm}}
\def\Rmp{R_{\mp}}
\def\Tp{T_{+}}
\def\Tm{T_{-}}
\def\Tpm{T_{\pm}}
\def\Tmp{T_{\mp}}
\def\sigp{\sigma_{\mathrm{p}}}
\def\sigac{\sigma_{\mathrm{ac}}}
\def\sigsc{\sigma_{\mathrm{sc}}}
\def\Tr{\mathrm{Tr}}
\def\Ker{\mathrm{Ker}}
\def\Re{\mathrm{Re}}
\def\Im{\mathrm{Im}}
\def\lac{\la_{\mathrm{c}}}
\def\lacz{\la_{\mathrm{c},0}}
\begin{document}

\maketitle
\begin{abstract}
We consider an abstract pair-interaction model in  quantum field theory with a coupling constant $\lambda\in\R$ and
analyze  the Hamiltonian $H(\lambda)$ of the model. 
In the massive case, there exist constants $\lambda_{\rm c}<0$ and $\lambda_{{\rm c},0}<\lambda_{\rm c}$
such that, for each $\lambda \in (\lambda_{{\rm c},0},\lambda_{\rm c})\cup
(\lambda_{\rm c},\infty)$, $H(\lambda)$ is diagonalized by a proper Bogoliubov transformation, so that
the spectrum of $H(\lambda)$ is explicitly identified, where
the spectrum of $H(\lambda)$ for $\lambda>\lambda_{\rm c}$ is different from that  for $\lambda\in (\lambda_{{\rm c},0},
\lambda_{\rm c})$. As for the case $\lambda<\lambda_{{\rm c},0}$, we show that
$H(\lambda)$ is unbounded from above and below. 
In the massless case, $\lambda_{\rm c}$ coincides with $\lambda_{{\rm c},0}$.

\bigskip

{\it Key words}: quantum  field, pair-interaction model, spectral analysis, Bogoliubov transformation.

\medskip

{\it 2010 Mathematics Subject Classification}: 47N50, 47B25, 81T10.

\end{abstract}


\section{Introduction}
In this paper, we consider an abstract  pair-interaction model in  quantum  field theory.
The Hamiltonian of the model is of the form
\begin{align*}
H(\lambda):=\SQ(T)+\frac{\lambda}{2}\Segal(g)^2
\end{align*}
acting in the boson Fock space $\msF_{\rm b}(\msH)$ over a Hilbert space $\msH$ (see Subsection 2.1), 
where $T$ is a self-adjoint operator on $\msH$, $\SQ(T)$ is the second quantization  operator of $T$, 
$\Segal(g)$ is the Segal field operator with test vector $g$ in $\msH$ (see Subsection 2.1) and $\lambda\in\R $ is a coupling constant. 
A model of this type is called a $\phi^2$-model.

There have been many studies on massive or massless $\phi^2$-models in concrete forms or abstract forms
(see, e.g.,  \cite{ARUn, De17, HT1, MS05, NNS16, TR15}). 
In  \cite{MS05} and \cite{TR15}, the (essential) self-adjointness of the Hamiltonian of a $\phi^2$-model 
is proved
in the case where  $\la>0$ or $|\la|$ is sufficiently small. 
In \cite{MS05}, the existence of a ground state of a $\phi^2$-model also is shown  
in the case where the quantum field under consideration is massive and $\lambda>0$.

It is well known that Hamiltonians with linear and/or  quadratic interactions in quantum fields 
may be analyzed  by the method of  Bogoliubov transformations (see, e.g., \cite{ AR81, AR89, AR1, ARUn, Be1,De17,Hi12,NNS16}). 
A typical  Bogoliubov transformation is constructed from  bounded linear operators $U,V$ and 
a conjugation operator $J$ on $\msH$ satisfying the following equations:
\begin{equation}
\label{eq:uvcondition}
\left\{
\begin{array}{rl}
U^*U-V^*V&=I,\\
U^*_JV-V^*_JU&=0,\\
UU^*-V_JV^*_J&=I,\\
UV^*-V_JU^*_J&=0,
\end{array}
\right.
\end{equation}
where $A_J:=JAJ$ and $A^*$ is the adjoint  of a densely defined linear operator $A$. 
It is well known that
there is a unitary operator $\U$ on $\msF_{\rm b}(\msH)$
which implements the Bogoliubov transformation in question
if and only if $V$ is Hilbert-Schmidt \cite{Be1,BS79,Ru78,Sh62}. 
Moreover, it is shown that, under the condition that $V$ is Hilbert-Schmidt and suitable additional conditions, 
the Hamiltonian under consideration  is unitarily equivalent via $\U$
to a second quantization operator up to a constant addition.
 For example, the Pauli-Fierz model with dipole approximation, which can be regarded as a kind of $\phi^2$-model,  
is analyzed by this method in \cite{Hi12}.

Recently, a general quadratic form Hamiltonian with a coupling constant $\lambda\in\R$ has been analyzed in \cite{NNS16} and it is shown that, in the case of a massive quantum field, 
under suitable conditions, the Hamiltonian is diagonalized by a Bogoliubov transformation.  
In \cite{De17}, the sufficient condition formulated in \cite{NNS16} to obtain the result just mentioned has been  extended. 
The spectrum of the standard pair-interaction model in physics, which is a concrete realization 
of the abstract pair-interaction model,  is  formally known \cite{HT1} 
in the case where 
$\la>\lacz$ and $\la\neq\lac$ 
for some constants $\lambda_{\rm c}$ and $\lambda_{{\rm c},0}<\lambda_{\rm c}$. 
The paper \cite{ARUn} gives a rigorous proof for that
  in the framework of the boson Fock space theory over
$\msH=L^2(\R^d)$ for any $d\in \N$ and $\la>\lac$.
 
One of the  motivations for the present work is to extend the theory developed in \cite{ARUn} with $\msH=L^2(\R^d)$ 
to the theory with $\msH$ being an abstract Hilbert  space including the case where $\lambda<\lambda_{\rm c}$.
 
It is known \cite{HT1} that  spectral properties of a pair-interaction model may depend on the range of $\lambda$ with
$\lambda_{\rm c}$ being a border point.
Hence it is important to make this aspect clear mathematically.
Therefore we analyze our model also for the region $\lambda<\lambda_{\rm c}$.
We  show that,  in the massive case with $\lambda\in (\lambda_{{\rm c},0},\lambda_{\rm c})$ also, the method of Bogoliubov transformations
can be applied to prove that
the Hamiltonian $H(\lambda)$ is unitarily equivalent to
a second quantization operator up to a constant addition.
Then we see that the spectrum of $H(\lambda)$ for $\lambda\in (\lambda_{{\rm c},0},\lambda_{\rm c})$
is different from that for $\lambda>\lambda_{\rm c}$. 
In the massless case, $\lambda_{{\rm c},0}$ coincides with $\lambda_{\rm 0}$.

The main results of the present paper include the following (1)--(3) (see Theorem 2.8 for more details):
(1) Identification of the spectra of $H(\lambda)$ for $\lambda>\lambda_{\rm c}$.
(2) Identification of the spectra of $H(\lambda)$ for 
$\lambda_{{\rm c},0}<\lambda<\lambda_{\rm c}$ (the massive case; in the massless case, 
$\lambda_{{\rm c},0}=\lambda_{\rm c}$).
In this case,
bound states different from the ground state appear.
(3) Unboundedness from above and below of $H(\lambda)$ for $\lambda<\lambda_{{\rm c},0}$.

The outline of this paper is as follows.
In  Section 2, we define our  model and recall a fundamental fact in a general theory of  Bogoliubov transformations. 
We prove the (essential) self-adjointness of $H(\lambda)$ (Theorem \ref{sa}). 
Then we state the main theorem of this paper (Theorem \ref{Main}). 
In Section 3, we construct operators $U$ and $V$ which are used to define the  Bogoliubov transformation we need.
In Section 4, we show that $U$ and $V$ satisfy \eqref{eq:uvcondition} and $V$ is Hilbert-Schmidt.
In Section 5. we prove Theorem 2.8 (1) and  calculate the ground state energy of $H(\lambda)$ in the case 
 $\lambda >\lambda _{\rm c}$. In Section 6, we prove Theorem 2.8(2).
In Section 7, we prove Theorem 2.8 (3). 
In Section 8, we consider a slightly generalized Hamiltonian of the form
$H(\eta,\lambda):=H(\lambda)+\eta\Phi_{\rm S}(f)$ with $\eta\in\R$ and $f\in\msH$.
Applying the methods and results in the preceding sections, we can 
analyze $H(\eta,\lambda)$ to identify the spectra of it. In Appendix, we state
some basic facts in the theory of boson Fock space.

\section{Preliminaries}

\subsection{The abstract Boson Fock Space}
Let $\msH$ be a Hilbert space over the complex field $\C$ with the inner product $\inner<\cdot,\cdot>_{\msH}$. The inner product is linear in the second variable and  anti-linear in the first one. The symbol $\|\cdot\|_{\msH}$ denotes the norm associated with it. We omit $\msH$ in 
$\inner<\cdot,\cdot>_{\msH}$ and $\|\cdot\|_{\msH}$, respectively if there is no danger of confusion. For each non-negative integer $n=0,1,2,\ldots, \ \otimes_{\rm s}^n\msH$ denotes the $n$-fold symmetric tensor product Hilbert space of $\msH$ with convention $\otimes_{\rm s}^0\msH :=\C$. 
Then 
\begin{equation*}
\fb(\msH):=\oplus_{n=0}^{\infty}\otimes_{\rm s}^n\msH 
\end{equation*}
is called the Boson-Fock space over $\msH$. 
For a dense subspace $\msD$ in $\msH$, $\hat{\otimes}_{\rm s}^n\msD$ denotes the algebraic $n$-fold symmetric tensor product of $\msD$ with $\hat{\otimes}_{\rm s}^0\msH :=\C$. 
Then 
\begin{equation*}
\fbfin(\msD):=\hat{\oplus}_{n=0}^{\infty}\hat{\otimes}_{\rm s}^n\msD
\end{equation*}
is a dense subspace of $\fb(\msH)$, where $\hat{\oplus}_{n=0}^{\infty}\msD_n$ denotes the  algebraic direct sum of subspace $\msD_n\subset\otimes_{\rm s}^n\msH, n=0,1,2,\ldots$. 
The finite particle vector subspace 
\begin{equation*}
\fbz(\msH):=\left\{\psi=\{\psi^{(n)}\}_{n=0}^{\infty}\in\fb(\msH)\ \middle|
\begin{array}{c}
 \psi^{(n)}\in\otimes_{\rm s}^n\msH,\ n\geq0, \rm{there\ is\ an\ integer}\ n_0\in\N\\
\rm{such \ that} \ \psi^{(n)}=0,\ \rm{for\ all}\ n\geq n_0
\end{array}
\right\}
\end{equation*}
satisfies $\fbfin(\msD)\subset\fbz(\msH)\subset\fb(\msH)$. It is dense in $\fb(\msH)$.
For a linear operator $T$ on a Hilbert space, we denote its domain by $D(T)$.

For a densely defined closable operator $T$ on $\msH$, let $T_{\rm b}^{(n)}$ be the  densely defined closed operator on $\otimes_{\rm s}^n\msH$ defined by 
\begin{equation*}
T_{\rm b}^{(n)}:=\left\{
\begin{array}{cl}
\overline{\disp\sum_{j=1}^n I\otimes\cdots\otimes I\otimes \overbrace{T}^{j-\rm{th}}\otimes I\otimes\cdots\otimes I\rest\hat{\otimes}_{\rm s}^n D(T)},
& n\geq1,\\
0,& n=0,
\end{array}
\right.
\end{equation*}
where $I$ denotes the identity operator on $\msH$, $\overline{A}$ denotes the closure of a closable operator $A$ and $A\rest \mM$ denotes the restriction of a linear operator $A$ on a subspace $\mM$. The operator 
\begin{equation*}
\SQ(T):=\oplus_{n=0}^{\infty}T_{\rm b}^{(n)}
\end{equation*}
is called the second quantization operator of $T$. If $T$ is self-adjoint or non-negative, then  so is $\SQ(T)$. 
For each $f\in\msH$, there exists a unique densely defined closed operator 
 $A(f)$ on $\fb(\msH)$ such that its adjoint $A(f)^*$ 
is given as follows:
\begin{align*}
D(A(f)^*):=\left\{\psi=\{\psi^{(n)}\}_{n=0}^{\infty}\in\fb(\msH)\ \middle|\
\sum_{n=0}^{\infty}n\left\|S_n\left(f\otimes\psi^{(n-1)}\right)\right\|^2<\infty\right\},\\
(A(f)^*\psi)^{(n)}=\sqrt{n}S_{n}(f\otimes\psi^{(n-1)}), n\in\N,\quad 
(A(f)^*\psi)^{(0)}=0\ {\rm for}\ \psi\in D(A(f)^*),
\end{align*}
where $S_n$ is the symmetrization operator on the $n$-fold tensor product $\otimes^n\msH$ of $\msH$. The operator $A(f)$ (resp. $A(f)^*$) is called the annihilation 
(resp. creation) operator with test vector $f$. 
We have 
\begin{equation*}
\fbz(\msH)\subset D(A(f))\cap D(A(f)^*)
\end{equation*}
for all $f\in\msH$ and $A(f)$ and $A(f)^*$ leave $\fbz(\msH)$ invariant. 
Moreover, they satisfy the following commutation relations:
\begin{align}
\label{eq:ccr}
[A(f),A(g)^*]&=\inner<f,g>,& [A(f),A(g)]&=0, & [A(f)^*,A(g)^*]&=0,& \mathrm{for}\ \mathrm{all}\ f,g\in\msH
\end{align}
on $\fbz(\msH)$, where $[A,B]:=AB-BA$ is the commutator of linear operators $A$ and $B$.
The relation \eqref{eq:ccr} is called the canonical commutation relations (CCR) over $\msH$. 
The symmetric operator 
\begin{equation*}
\Segal(f):=\frac{1}{\sqrt{2}}(A(f)+A(f)^*),\ f\in \msH
\end{equation*}
is called the Segal-field operator with test vector $f$. 
We write its closure by the same symbol.

\subsection{Bogoliubov Transformation}
In this subsection, we define a Bogoliubov transformation and recall an important  theorem about it. For a conjugation $J$ on $\msH$ (i.e., $J$ is an anti-linear operator on $\msH$ satisfying $\|Jf\|=\|f\|$ for all $f\in\msH$ and $J^2=I$) and a linear operator $A$ on $\msH$, we define 
\begin{equation*}
A_J:=JAJ.
\end{equation*} 
\begin{defn}
Let $U$ and $V$ be bounded linear operators on $\msH$ and $J$ be a conjugation on $\msH$. Then, for each $f\in\msH$, we define a linear operator $B(f)$ on $\fb(\msH)$ by 
\begin{equation*}
B(f):=A(Uf)+A(JVf)^*.
\end{equation*}
Then the correspondence $(A(\cdot),A(\cdot)^*)\mapsto(B(\cdot),B(\cdot)^*)$ is called a Bogoliubov transformation.
\end{defn}
By $\fbz(\msH)\subset D(B(f))$, the adjoint $B(f)^*$ exists and the equation $B(f)^*=A(Uf)^*+A(JVf)$ holds on $\fbz(\msH)$ for each $f\in\msH$. If the equations 
\begin{align*}
U^*U-V^*V=I,\quad U^*_JV-V^*_JU=0
\end{align*}
hold, then the Bogoliubov transformation preserves CCR, i.e., it holds that 
\begin{equation*}
[B(f),B(g)^*]=\inner<f,g>,[B(f),B(g)]=0,[B(f)^*,B(g)^*]=0,\ \mathrm{for}\ \mathrm{all}\ f,g\in\msH,
\end{equation*}
on $\fb(\msH)$. The following theorem is well-known \cite{Ru78, Sh62}: 
\begin{thm}
\label{PC}
Let $\msH$ be separable and $U$ and $V$ satisfy \eqref{eq:uvcondition}. 
Then there exists a unitary operator $\U$ on $\fb(\msH)$ such that 
\begin{equation*}
\U\overline{B(f)}\U^{-1}=A(f),\quad f\in\msH 
\end{equation*}
if and only if $V$ is Hilbert-Schmidt.
\end{thm}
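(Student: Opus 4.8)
The statement is the implementability (Shale–Berezin) criterion, and the two implications call for quite different arguments, so I would treat them in turn. The common starting point is that \eqref{eq:uvcondition} forces $U$ to be a bijection of $\msH$ with bounded inverse: from $U^*U=I+V^*V$ one gets $\|Uf\|\ge\|f\|$, so $U$ is injective with closed range, and from $UU^*=I+V^*_JV_J$ one gets that $U^*$ is injective, so $\Ran U$ is dense; hence $\Ran U=\msH$, $\|U^{-1}\|\le1$, and moreover $\|VU^{-1}\|^2=\|I-(UU^*)^{-1}\|\le\|V\|^2/(1+\|V\|^2)<1$. For the ``only if'' part, suppose an implementer $\U$ exists and put $\Omega':=\U^{-1}\Omz$. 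The intertwining relation gives $\overline{B(f)}\,\Omega'=\U^{-1}A(f)\Omz=0$ for all $f\in\msH$, so $\Omega'$ is a unit ``vacuum'' for the transformed field. Expanding the identity $A(Uf)\Omega'=-A(JVf)^*\Omega'$ into particle-number components and using that $Uf$ runs over all of $\msH$: the $n=0$ component gives $\Omega'^{(1)}=0$, an immediate induction shows that $\Omega'^{(0)}=0$ would force $\Omega'=0$, hence $\Omega'^{(0)}\ne0$, and the $n=1$ component identifies $\Omega'^{(2)}$ — which, as an element of $\otimes_{\rm s}^2\msH$, is canonically a ($J$-symmetric) Hilbert–Schmidt operator on $\msH$ — with a nonzero multiple (the scalar being essentially $\Omega'^{(0)}$) of an operator built from $VU^{-1}$ and $J$. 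Since $U^{-1}$ and $J$ are bounded, $V$ is then Hilbert–Schmidt.

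For the converse, assume $V$ is Hilbert–Schmidt. Then the operator $L$ attached through $J$ to the prospective two-particle component is a $J$-symmetric Hilbert–Schmidt operator with $\|L\|<1$, so $L^*L$ is trace class and $\det(I-L^*L)^{1/2}>0$. Reversing the componentwise recursion above, I would define a candidate vacuum $\Omega'\in\fb(\msH)$ by $\Omega'^{(2n+1)}:=0$ and $\Omega'^{(2n)}:=c\,c_n\,(\text{$n$-fold symmetrized tensor power of $L$})$, with explicit constants $c_n$ dictated by the recursion and normalization $c:=\det(I-L^*L)^{1/4}$; the series $\sum_n\|\Omega'^{(2n)}\|^2$, which is of the form $\sum_n\binom{2n}{n}4^{-n}\|{\cdots L^{\otimes n}\cdots}\|^2$ up to constants, converges to $1$ precisely because $L$ is Hilbert–Schmidt with $\|L\|<1$, so $\|\Omega'\|=1$. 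A direct computation with the CCR \eqref{eq:ccr} and the explicit form of the $c_n$ then verifies $B(f)\Omega'=0$ for every $f\in\msH$.

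It remains to build $\U$ from $\Omega'$. First I would show $\Omega'$ is cyclic for $\{B(f)^*:f\in\msH\}$: the relations \eqref{eq:uvcondition} say precisely that the inverse of our Bogoliubov transformation is again a Bogoliubov transformation of the same type (with $U,V$ replaced by operators built from $U^*,V^*$), so each $A(h)$ and $A(h)^*$ is a finite linear combination of operators $B(g),B(g')^*$; hence the closed subspace $\mM$ generated from $\Omega'$ by the $B(f)^*$'s is invariant under every $A(h)$ and $A(h)^*$ — phrased cleanly via the bounded Weyl operators $e^{i\Segal(\cdot)}$ to avoid domain issues — and, being nonzero, $\mM=\fb(\msH)$ by irreducibility of the Fock representation. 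Then I would define $W$ on the dense subspace $\fbfin(\msH)$, spanned by the vectors $A(f_1)^*\cdots A(f_k)^*\Omz$, by $W\bigl(A(f_1)^*\cdots A(f_k)^*\Omz\bigr):=B(f_1)^*\cdots B(f_k)^*\Omega'$; using the CCR for the $A(f)$'s and for the $B(f)$'s together with $A(f)\Omz=0=B(f)\Omega'$, both $\inner<W\psi,W\vph>$ and $\inner<\psi,\vph>$ are given by the same permanent-type formula, so $W$ is well defined and isometric, extends to an isometry of $\fb(\msH)$, and is surjective by cyclicity; thus $W$ is unitary, and $\U:=W^{-1}$ satisfies, after a check on $\fbfin(\msH)$, the desired relation $\U\,\overline{B(f)}\,\U^{-1}=A(f)$. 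I expect the technical heart of the proof to be the explicit construction of $\Omega'$ on the ``if'' side — identifying the constants $c_n$ and the exact convergence condition of the defining series — together with the cyclicity step; the remaining work is routine manipulation with the CCR.
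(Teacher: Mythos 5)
The paper itself contains no proof of Theorem \ref{PC}: it is quoted as a well-known result with a citation to Shale and Ruijsenaars, so there is no in-paper argument to compare against. Your sketch is essentially the standard proof from that literature, and its skeleton is sound: the observation that \eqref{eq:uvcondition} makes $U$ boundedly invertible with $\|VU^{-1}\|<1$ is correct (indeed $(U^{-1})^*V^*VU^{-1}=I-(UU^*)^{-1}$); the ``only if'' direction via the transformed vacuum $\Omega'=\U^{-1}\Omz$, the vanishing of the odd components, and the identification of $\Omega'^{(2)}$ with a nonzero multiple of the kernel of the ($J$-symmetric, by the relation $UV^*=V_JU^*_J$) operator built from $VU^{-1}$ is the classical argument; and the ``if'' direction via the exponential-type vacuum, the isometry $W$ on $\fbfin(\msH)$ defined through matching CCR computations, and cyclicity is the classical construction. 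Three places in your outline still require genuine work, all standard but not automatic: (i) the convergence of $\sum_n\|\Omega'^{(2n)}\|^2$ cannot be obtained from the crude bound you indicate, since that would need $\|L\|$ small in Hilbert--Schmidt norm rather than in operator norm; one needs the canonical form of a $J$-symmetric Hilbert--Schmidt operator (or the determinant identity giving $\det(I-L^*L)^{-1/2}$) to see that $\|L\|<1$ and $L$ Hilbert--Schmidt suffice --- you flag this as the technical heart, correctly; (ii) the cyclicity step: passing from invariance of the algebraic span under $A^{\sharp}$ (via identities such as $A(h)=B(U^*h)-B(V^*Jh)^*$, which do follow from \eqref{eq:uvcondition}) to invariance of its closure under the Weyl operators needs an analytic-vector estimate for the non-finite-particle vectors $B(f_1)^*\cdots B(f_k)^*\Omega'$, again using $\|L\|<1$; (iii) upgrading the inclusion $A(f)\subset\U\overline{B(f)}\U^{-1}$ obtained on $\fbfin(\msH)$ to the stated equality of closed operators, which one gets by running the same computation for the creation operators and taking adjoints. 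None of these is a flaw in the approach; they are exactly the points the cited references handle in detail.
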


\subsection{Hamiltonian}
\label{Hamiltonian}
For a self-adjoint operator $T$ on $\msH$, constants $\lambda,\eta\in\R$ which are called coupling constants, and vectors $f,g\in\msH$, we define an operator 
\begin{equation*}
H(\la):=\SQ(T)+\dfrac{\la}{2}\Segal(g)^2,
\quad H(\eta,\lambda):=H(\lambda)+\eta\Segal(f).
\end{equation*}
If $g\in D(T^{-1/2})$, we define the constant 
\begin{equation*}
\lambda_{{\rm c},0}:=-\|T^{-1/2}g\|^{-2}.
\end{equation*}
\begin{thm}
\label{sa}
Suppose that $T$ is an injective, non-negative, self-adjoint operator on $\msH$. 
Let $f\in D(T^{-1/2})$ and $g\in D(T^{-1/2})\cap D(T)$. Then the following $(1)$-$(3)$ hold:
\begin{enumerate}
\item
Let 
\begin{equation}
\lambda_T(g):=\|T^{-1/2}g\|^{-1}(\|T^{-1/2}g\|+\|T^{1/2}g\|)^{-1}
\end{equation}
and $|\la|<\lambda_T(g)$. 
Then $H(\eta,\la)$ is self-adjoint with $D(H(\eta,\la))=D(\SQ(T))$ and essentially self-adjoint on any core of $\SQ(T)$ for all $\eta\in\R$. Moreover $H(\eta,\la)$ is bounded from below.
\item
Let $|\la|\geq\lambda_T(g)$ and $f\in D(T^{1/2})$. 
Then $H(\eta,\la)$ is essentially self-adjoint on any core of $\SQ(T)$ for all $\eta\in\R$. 
Moreover, if $\lambda\geq\lambda_T(g)$, then $H(\eta,\lambda)$ is self-adjoint. 
\item
Let $f\in D(T^{1/2})$. Then $\overline{H(\lambda_{{\rm c},0})}$ is bounded from below. 
Moreover, if $\lambda>\lambda_{{\rm c},0}$, then $\overline{H(\eta,\lambda)}$ is bounded from below for all $\eta\in\R$ and 
$D(\SQ(T)^{1/2})=D(\overline{H(\eta,\la)}+M)^{1/2})$ for a constant $M\geq0$ satisfying $\overline{H(\eta,\la)}+M\geq0$.
\end{enumerate}
\end{thm}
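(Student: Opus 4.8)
All three parts rest on the standard $\SQ(T)$-bounds for annihilation and creation operators: for $h\in D(T^{-1/2})$ and $\psi\in D(\SQ(T)^{1/2})$ one has
\[
\|A(h)\psi\|\le\|T^{-1/2}h\|\,\|\SQ(T)^{1/2}\psi\|,\qquad \|A(h)^*\psi\|\le\|T^{-1/2}h\|\,\|\SQ(T)^{1/2}\psi\|+\|h\|\,\|\psi\|,
\]
together with $\|h\|^2\le\|T^{-1/2}h\|\,\|T^{1/2}h\|$ (valid whenever $h\in D(T^{-1/2})\cap D(T^{1/2})$, hence for $g$, and, under the stronger hypotheses, for $f$) and the interpolation $\|\SQ(T)^{1/2}\psi\|\le\tfrac12(\vep\|\SQ(T)\psi\|+\vep^{-1}\|\psi\|)$. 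Expanding $\Segal(g)^2=\tfrac12(A(g)^2+A(g)^{*2})+A(g)^*A(g)+\tfrac12\|g\|^2$ and estimating each summand, one obtains an \emph{operator} bound $\|\Segal(g)^2\psi\|\le 2\|T^{-1/2}g\|(\|T^{-1/2}g\|+\|T^{1/2}g\|)\|\SQ(T)\psi\|+c\|\psi\|$ and a \emph{sharp} form bound $\|\Segal(g)\psi\|^2\le 2\|T^{-1/2}g\|^2\inner<\psi,\SQ(T)\psi>+\|g\|^2\|\psi\|^2$; moreover $\Segal(f)$ is $\SQ(T)^{1/2}$-bounded, hence $\SQ(T)$-bounded with relative bound $0$. (The exact coefficient $2\|T^{-1/2}g\|^2$ in the form bound is what makes $\lacz$, rather than something smaller, the critical value for semiboundedness.) All these estimates are first proved on the core $\fbfin(D(T))$ and then extended to $D(\SQ(T))$, resp.\ $D(\SQ(T)^{1/2})$. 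This already yields (1): for $|\la|<\lambda_T(g)$ the operator bounds make $\tfrac\la2\Segal(g)^2+\eta\Segal(f)$ an $\SQ(T)$-bounded perturbation with relative bound $<1$ (take $\vep$ small), so the Kato--Rellich theorem gives self-adjointness on $D(\SQ(T))$, essential self-adjointness on every core of $\SQ(T)$, and semiboundedness.

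For (2), when $|\la|\ge\lambda_T(g)$ the relative bound is $\ge1$ and Kato--Rellich fails, but $\Segal(g)^2$ and $\Segal(f)$ are still $\SQ(T)$-bounded, so $H(\eta,\la)$ is well-defined and symmetric on $D(\SQ(T))$. I would deduce essential self-adjointness on cores of $\SQ(T)$ from Nelson's commutator theorem, taking as comparison operator $N$ one built from $\SQ(T)$ and the number operator $N_{\rm b}:=\SQ(I)$ — essentially $N=\SQ(T)+N_{\rm b}+1$, but modified (e.g.\ replacing $\SQ(T)$ by $\SQ(T^{1/2})^2$, which still satisfies $\SQ(T^{1/2})^2\ge\SQ(T)$ and commutes with $\SQ(T)$ and $N_{\rm b}$) so that the commutator of $N$ with $\Segal(f)$ remains meaningful under the weaker hypothesis $f\in D(T^{1/2})$. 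One then checks the hypotheses of the commutator theorem: $D(N)\subseteq D(H(\eta,\la))$ with $\|H(\eta,\la)\psi\|\le c\|N\psi\|$, and $|\inner<H(\eta,\la)\psi,N\psi>-\inner<N\psi,H(\eta,\la)\psi>|\le c\inner<\psi,N\psi>$ on $D(N)$; the commutators involving $\Segal(g)^2$ produce Segal operators of $g$ and $Tg$ (this is where the standing $g\in D(T)$ is used), those involving $\Segal(f)$ produce Segal operators of $f$ and $T^{1/2}f$ (this is where $f\in D(T^{1/2})$ is used). For the additional self-adjointness when $\la\ge\lambda_T(g)$: then $\tfrac\la2\Segal(g)^2\ge0$, so the self-adjoint operator associated with the closed nonnegative form $\inner<\psi,\SQ(T)\psi>+\tfrac\la2\|\Segal(g)\psi\|^2$ is a self-adjoint extension of the (already essentially self-adjoint) $H(\la)\rest D(\SQ(T))$, hence equals $\overline{H(\la)}$; since $\Segal(f)$ is relatively form-bounded with bound $0$ with respect to it, the KLMN theorem permits adding $\eta\Segal(f)$ and concluding that $\overline{H(\eta,\la)}$ is self-adjoint and bounded from below.

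For (3), the sharp form bound gives, for $\la\le0$ and $\psi\in\fbfin(D(T))$,
\[
\inner<\psi,H(\eta,\la)\psi>\ \ge\ \bigl(1-|\la|\,\|T^{-1/2}g\|^2\bigr)\inner<\psi,\SQ(T)\psi>-c\|\psi\|^2+\eta\inner<\psi,\Segal(f)\psi>.
\]
At $\la=\lacz$ one has $|\la|\,\|T^{-1/2}g\|^2=1$, so the leading coefficient is exactly $0$; with $\eta=0$ the right-hand side is then bounded below, whence $\overline{H(\lacz)}$ is bounded below. For $\la>\lacz$ the coefficient is strictly positive (and for $\la>0$ even $\tfrac\la2\Segal(g)^2\ge0$), so the term $\eta\inner<\psi,\Segal(f)\psi>$, which is $\le\vep\inner<\psi,\SQ(T)\psi>+C_\vep\|\psi\|^2$ for every $\vep>0$, can be absorbed, giving a lower bound for every $\eta\in\R$; combining this with the reverse estimate $\inner<\psi,(H(\eta,\la)+M)\psi>\le c_2\inner<\psi,(\SQ(T)+1)\psi>$ shows that the forms of $\overline{H(\eta,\la)}+M$ and of $\SQ(T)+1$ are equivalent on the common form core $\fbfin(D(T))$, whence $D(\SQ(T)^{1/2})=D((\overline{H(\eta,\la)}+M)^{1/2})$. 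I expect part (2) to be the main obstacle: choosing a comparison operator for Nelson's theorem that simultaneously dominates the unbounded piece $\SQ(T)$ and has a tractable commutator with $\Segal(f)$ when $f\notin D(T)$, and then carrying out the commutator estimates, is the delicate step; a secondary subtlety is part (3), where at the exact critical coupling $\lacz$ there is no slack in the leading coefficient, which is precisely why only $H(\lacz)$ — and not $H(\eta,\lacz)$ — is asserted to be semibounded.
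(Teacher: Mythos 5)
Parts (1) and (3) of your proposal follow the paper's own route (Kato--Rellich after a deformed operator bound on $\Segal(g)^2$ with relative bound $|\la|\|T^{-1/2}g\|(\|T^{-1/2}g\|+\|T^{1/2}g\|)$; the sharp form bound $\|\Segal(g)\psi\|^2\le 2\|T^{-1/2}g\|^2\inner<\psi,\SQ(T)\psi>+\|g\|^2\|\psi\|^2$, absorption of $\eta\Segal(f)$ via $f\in D(T^{-1/2})$, and a two-sided comparison of the forms of $\overline{H(\eta,\la)}+M$ and $\SQ(T)$ together with core arguments). The genuine gap is in part (2), in the clause ``if $\la\ge\lambda_T(g)$ then $H(\eta,\la)$ is self-adjoint''. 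This asserts self-adjointness of the operator $H(\eta,\la)$ on its natural domain $D(\SQ(T))$, i.e.\ that $H(\eta,\la)$ is \emph{closed}; that is strictly stronger than essential self-adjointness. Your KLMN/form-sum argument identifies the form sum with $\overline{H(\eta,\la)}$ and concludes that $\overline{H(\eta,\la)}$ is self-adjoint --- but self-adjointness of the closure is exactly the essential self-adjointness you already obtained, and the KLMN theorem gives no control of the operator domain, so the stated claim is not proved by your argument. The paper settles it by an operator-norm estimate: for $\la>0$ and any $0<\vep<1$ there is $c_\vep>0$ with
\begin{equation*}
(1-\vep)\|\SQ(T)\psi\|^2+\left\|\tfrac{\la}{2}\Segal(g)^2\psi\right\|^2\le\|H(\eta,\la)\psi\|^2+c_\vep\|\psi\|^2,\qquad\psi\in D(\SQ(T)),
\end{equation*}
which shows that $H(\eta,\la)$ is closed on $D(\SQ(T))$ and hence, being essentially self-adjoint and closed, self-adjoint. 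Some estimate of this kind (or another proof of closedness on $D(\SQ(T))$) must be supplied.

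A secondary problem is your choice of comparison operator for Nelson's theorem. Replacing $\SQ(T)$ by $\SQ(T^{1/2})^2$ (or adding the number operator) does make the commutator with $\Segal(f)$ well defined for $f\in D(T^{1/2})$, but that commutator contains terms of the form $\SQ(T^{1/2})\,(A(T^{1/2}f)^*-A(T^{1/2}f))$ plus their adjoint counterparts, whose expectation in $\psi$ you can only estimate by $\|\SQ(T^{1/2})\psi\|\,\|N^{1/2}\psi\|$, i.e.\ of order $\|N\psi\|\,\|N^{1/2}\psi\|$; Nelson's theorem requires a bound of order $\|N^{1/2}\psi\|^2$, and commuting a further half power of the comparison operator through the field operator to repair this again forces $f$ into a smaller domain, so the estimate does not close as you sketch it. The paper's device makes the detour unnecessary: keep $N=\SQ(T)+I$, prove the commutator bound first for $f\in D(T)$ (where $[\SQ(T),\Segal(f)]$ involves $A(Tf)$), observe that the resulting constant contains only $\|T^{1/2}f\|$ because $|\inner<\psi,A(Tf)\psi>|\le\|T^{1/2}f\|\,\|\psi\|\,\|\SQ(T)^{1/2}\psi\|$, and then pass to $f\in D(T^{1/2})$ by a limiting argument (e.g.\ $f_n:=E([0,n])f$), extending also from $\fbfin(D(T))$ to $D(\SQ(T))$ in $\psi$ using that $\fbfin(D(T))$ is a core of $\SQ(T)$ and that $\Segal(g)^2$ is $\SQ(T)$-bounded.
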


\begin{proof}
\begin{enumerate}
\item
For any $\lambda\in\R$, by using \eqref{eq:ccr}, \eqref{eq:ani}, \eqref{eq:cre} and  \cite[Theorem 5.18.]{AR2}, one can easily see that there are constants $a,b\geq0$ such that for all $\psi\in D(\SQ(T))$,
\begin{equation*}
\left\|\frac{\la}{2}\Segal(g)^2\psi\right\|\leq
\dfrac{|\la|}{4}\left(a\|\SQ(T)\psi\|+b\|\psi\|\right).
\end{equation*}
In particular, we can choose $a$ and $b$ which satisfy $a|\la|/4<1$ 
if $|\la|<\lambda_T(g)$. 
We remark that, to obtain the factor $\lambda_T(g)$, we need to deform terms 
$\|A(g)^{*2}\psi\|^2$, $\|A(g)^*A(g)\psi\|^2$ and $\|A(g)^2\psi\|^2$ coming from $\|\Segal(g)\psi\|^2$
 $(\psi\in\fbz(\msH))$ to $\|A(g)A(g)^*\psi\|^2+$ a marginal term respectively. 
Thus, for $|\la|<\lambda_T(g)$, by the Kato-Rellich theorem, $H(\lambda)$ is self-adjoint. 
It is well known that $\Segal(f)$ is infinitesimally small with respect to $\SQ(T)$. 
Hence, by the Kato-Rellich theorem, for $|\la|<\lambda_T(g)$, $H(\eta,\la)$ is self-adjoint. 
\item
Firstly, we show that, for any $f\in D(T^{1/2})$ and $\eta, \lambda\in\R$, $H(\eta,\la)$ is essentially self-adjoint on any core of $\SQ(T)$. 
By \eqref{eq:ani}, \eqref{eq:cre} and \cite[Theorem 5.18.]{AR2}, we can see that there exists $a>0$ such that 
$\|H(\eta,\la)\psi\|\leq a\|(\SQ(T)+I)\psi\|$ for all $\psi\in D(\SQ(T))$. 
For the first let $f\in D(T)$. Then by \eqref{eq:ccr} and \eqref{eq:crsq}, 
for any $\psi\in\fbfin(D(T))$, we have 
\begin{align*}
&\inner<H(\eta,\la)\psi,(\SQ(T)+I)\psi>-\inner<(\SQ(T)+I)\psi,H(\eta,\la)\psi>\\
=&\frac{\la}{\sqrt2}\left(\inner<\Segal(g)\psi,A(Tg)\psi>
-\inner<A(Tg)\psi,\Segal(g)\psi>\right)
+\frac{\eta}{\sqrt2}(\inner<\psi,A(Tf)\psi>-\inner<A(Tf)\psi,\psi>).
\end{align*}
Thus, by \eqref{eq:ani} and \eqref{eq:cre}, we have 
\begin{align}
\label{eq:nelson1}
|\inner<H(\eta,\la)\psi,(\SQ(T)+I)\psi>-\inner<(\SQ(T)+I)\psi,H(\eta,\la)\psi>|
\leq C\|(\SQ(T)+I)^{1/2}\psi\|^2,
\end{align}
where $C:=\left\{|\la|\|T^{1/2}g\|(\|g\|+2\|T^{-1/2}g\|)+\sqrt{2}|\eta|\|T^{1/2}f\|\right\}$.
By a limiting argument, using the fact that $\fbfin(D(T))$ is a core of $\SQ(T)$ and $\SQ(T)$-boundedness of $\Segal(g)^2$, we can show that for 
$f\in D(T^{1/2})$ and $\psi\in D(\SQ(T))$, \eqref{eq:nelson1} holds. 
Thus, by the Nelson commutator theorem, for all $\eta,\lambda\in\R$, $H(\eta,\la)$ is essentially self-adjoint and $\overline{H(\eta,\la)}$ is essentially self-adjoint on any core of $\SQ(T)$. The equation $\overline{H(\eta,\la)\rest \msD}=\overline{\overline{H(\eta,\la)}\rest \msD}$ holds for any core $\msD$ of $\SQ(T)$. Hence $H(\eta,\la)$ is essentially self-adjoint on any core of $\SQ(T)$ for all $\eta, \lambda\in\R$. 
Next we show that, if $\lambda>-\|T^{-1/2}g\|^{-1}(\|T^{-1/2}g\|+\|T^{1/2}g\|)^{-1}$, then  $H(\eta,\lambda)$ is self-adjoint. 
We can show that , for $\lambda>0$ and any $0<\vep<1$, there is a constant $c_{\vep}>0$ such that 
\begin{equation*}
(1-\vep)\|\SQ(T)\psi\|^2+\left\|\frac{\la}{2}\Segal(g)^2\psi\right\|^2
\leq\|H(\eta,\la)\psi\|^2+c_{\vep}\|\psi\|^2, \quad \psi\in D(\SQ(T)).
\end{equation*}
Hence $H(\eta,\la)$ is closed. In particular, it is self-adjoint. 
\item
It is well known that, for any $\vep>0$, $\vep\SQ(T)+\eta\Segal(f)$ is bounded from below. 
For any $\vep>0$ and $\psi\in D(\SQ(T)^{1/2})$, 
\begin{equation*}
|\inner<\psi,A(f)\psi>|\leq \|T^{-1/2}f\|\left(\vep\|\SQ(T)^{1/2}\psi\|^2+\frac{1}{4\vep}\|\psi\|^2\right).
\end{equation*}
Hence if the assertion follows for $\eta=0$, then so is for all $\eta$. 
Thus we show that the assertion follows for $\eta=0$. 
If $\la>0$, then clearly $H(\la)\geq0$. 
Let $\la<0$. For any $\psi\in D(\SQ(T)^{1/2})$,
\begin{align*}
\|\Segal(g)\psi\|^2
\leq 2\|T^{-1/2}g\|^2\|\SQ(T)^{1/2}\psi\|^2+\|g\|^2\|\psi\|^2.
\end{align*}
Thus for any $\psi\in D(\SQ(T))$,
\begin{align}
\label{eq:bddfb}
\inner<\psi,H(\la)\psi>=&
\ \|\SQ(T)^{1/2}\psi\|^2+\dfrac{\la}{2}\|\Segal(g)\psi\|^2\nonumber\\
\geq&\ (1+\la\|T^{-1/2}g\|^2)\|\SQ(T)^{1/2}\psi\|^2+\dfrac{\la}{2}\|g\|^2\|\psi\|^2.
\end{align}
Hence $H(\la)$ is bounded from below if $\la\geq\lacz$.\\
\quad Let $\la\geq\lacz$ and $M\geq0$ be a constant satisfying $H(\la)+M\geq0$. Then for any $\psi\in D(\SQ(T))=D(H(\la)),$ 
\begin{align}
\label{eq:bddhami<}
\|(\overline{H(\la)}+M)^{1/2}\psi\|^2\leq 
(1+|\la|\|T^{-1/2}g\|^2)\|\SQ(T)^{1/2}\psi\|^2+\left(\dfrac{|\la|}{2}\|g\|^2+M\right)\|\psi\|^2.
\end{align}
By the fact that $D(\SQ(T))$ is a core of $\SQ(T)^{1/2}$, we have 
$D(\SQ(T)^{1/2})\subset D((\overline{H(\la)}+M)^{1/2})$ 
and \eqref{eq:bddhami<} holds on $D(\SQ(T)^{1/2})$.\\
\quad In the case of $\la>0$, it is easy to see that 
$\|H(\la)^{1/2}\psi\|\geq\|\SQ(T)^{1/2}\psi\|$ holds for any $\psi\in D(\SQ(T))$. 
In the case of $0>\la>\lacz$, 
\begin{equation*}
\|\SQ(T)^{1/2}\psi\|^2\leq
\dfrac{1}{1+\la\|T^{-1/2}g\|^2}\left\{\|(\overline{H(\la)}+M)^{1/2}\psi\|^2
-\left(\dfrac{\la}{2}\|g\|^2+M\right)\|\psi\|^2\right\}.
\end{equation*}
holds for any $\psi\in D(\SQ(T))$ by \eqref{eq:bddfb}.
Hence for $\la>\lacz$ there is a constant $a,b\geq0$ such that 
\begin{equation}
\label{eq:bddhami>}
\|\SQ(T)^{1/2}\psi\|\leq a\|(\overline{H(\la)}+M)^{1/2}\psi\|+b\|\psi\|.
\end{equation}
By operational calculus, $D(\SQ(T))$ is a core of $(\overline{H(\la)}+M)^{1/2}$.
Thus we have 
$D((\overline{H(\la)}+M)^{1/2})\subset D(\SQ(T)^{1/2})$ 
and \eqref{eq:bddhami>} holds on $D((\overline{H(\la)}+M)^{1/2})$.
\end{enumerate}
\end{proof}

\begin{rem}
\upshape
By \cite[Lemma13-15]{AR1}, if $\msH$ is separable, then Theorem \ref{sa} takes the following forms:\\
Let $\msH$ be separable, $T$ be a non-negative, injective self-adjoint operator, $f\in D(T^{-1/2})$ and $g\in D(T^{-1/2})\cap D(T^{1/2})$. Then the following (1)-(3) holds:
\begin{enumerate}
\item
Let $\la>\lacz$. 
Then $H(\eta,\la)$ is self-adjoint with $D(H(\eta,\la))=D(\SQ(T))$ and essentially self-adjoint on any core of $\SQ(T)$ for all $\eta\in\R$. Moreover $H(\eta,\la)$ is bounded from below.
\item
Let $\la\leq\lacz$ and $f\in D(T^{1/2})$. 
Then $H(\eta,\la)$ is essentially self-adjoint on any core of $\SQ(T)$ for all $\eta\in\R$. 
In particular, if $\eta=0$ and $\la=\lacz$, then $H(\lacz)=H(0,\lacz)$ is bounded from below. 
\item
Let $\la>\lacz$. Then $D(\SQ(T)^{1/2})=D(H(\eta,\la)+M)^{1/2})$ for a constant $M\geq0$ satisfying $H(\eta,\la)+M\geq0$.
\end{enumerate}
\end{rem}

\begin{defn}
Let $T$ be a self-adjoint operator on $\msH$ and $\{E(B)\ |\ B\in\Borel\}$ be the 
spectral measure associated with $T$ on the Borel field $\Borel$ on $\R$. 
The operator $T$ is called purely absolutely continuous if, for each $f\in\msH$, the measure $\|E(\cdot)f\|^2$ on $\Borel$ is absolutely continuous with respect to the one-dimensional  Lebesgue measure.
\end{defn}

\begin{defn}
For a purely absolutely continuous self-adjoint operator $T$ and vectors $f,g\in\msH$, $\psi_{g,f}$ denotes the Radon-Nikodym derivative of the finite complex Borel measure  $\inner<g,E(\cdot)f>$ on $\Borel$. In particular, we set $\psi_g:=\psi_{g,g}$. 
\end{defn}

\subsection{Assumptions}
To prove our main theorem stated later (Theorem \ref{Main}), we need some assumptions. 
For a closed operator $A$, $\sigma(A)$ denotes the spectrum of $A$. If $A$ is self-adjont,  then $\sigma_{\rm ac}(A)$ (resp. $\sigma_{\rm p}(A)$, $\sigma_{\rm sc}(A)$) 
denotes the absolutely continuous (resp. point, singular continuous) spectrum of $A$. 
For a self-adjoint operator $A$ bounded from below, 
\begin{equation*}
E_0(A):=\inf\sigma(A) 
\end{equation*}
is called the lowest energy of $A$. In particular, it is called the ground state energy of $A$ if $E_0(A)\in\sigp(A)$. In this case, any for responding eigenvector is called a ground state of $A$. The ground state is said to be unique if ${\rm dim}\ \Ker(A-E_0(A))=1$. For linear operators  $A$ and $B$, the symbol $A\subset B$ means that $D(A)\subset D(B)$ and $Af=Bf$ for all  $f\in D(A)$, i.e., $B$ is an extension of $A$.

\begin{ass}
\label{ass}
\begin{enumerate}
\item
The operator $T$ is a non-negative, purely absolutely continuous self-adjoint operator,
\item
The fixed vector $g\in\msH$ satisfies 
$g\in D(\hat{T}^{-1/2})\cap D(T^{1/2})$ and $Jg=g$, where $\hat{T}:=T-E_0$, $E_0:=E_0(T)$ and 
 $J$ is a conjugation on $\msH$ satisfying 
$J D(T)\subset D(T)$ and $JT\psi=TJ\psi$ for any $\psi \in D(T)$ $($ i.e., $JT\subset TJ)$,
\item
$\sup_{E_0<x}x^{\pm1}\psi_g(x)<\infty$ and $\psi_g(x)>0$ for all  $x\in(E_0,\infty)$, 
\item
$\psi_g\in C^1((E_0,\infty))$ and  
$\lim_{x\dra E_0}x^{-1}\psi_g'(x)=0=\lim_{x\ra\infty}x^{-1}\psi_g'(x).$
\end{enumerate}
\end{ass}

\begin{rem}\upshape 
\label{remass}
The operator $T$ is injective since it is a purely absolutely continuous self-adjoint operator. 
Since $T$ has no eigenvector, the inverse of $\hat{T}$ exists. 
Assumption \ref{ass} (2) implies that $T_J=T$. In general, for a self-adjoint operator $A$ and a conjugation $J$, we can choose a vector $f\in D(A)$ satisfying $Jf=f$ if $A_J=A$. Thus the vector $g$ in Assumption \ref{ass} (2) exists. 
By Assumption \ref{ass} (3), one can easily show that $\sup_{x\in\sigma(T)}\psi_g(x)<\infty$ 
and, for each $f\in\msH$, the functions $\psi_{g,f},\psi_{T^{\pm1/2}g,f}$ are in $\Ltr$ and the maps $: f\mapsto\psi_{g,f}, \psi_{T^{\pm1/2}g,f}$ are bounded. Actually, for any $h\in\msH$ and $B\in\Borel$, the following inequality holds
\begin{equation*}
|\inner<E(B)h,f>|^2\leq\|E(B)h\|^2\|E(B)f\|^2
\end{equation*}
by Schwarz's inequality. 
Thus we obtain $|\psi_{h,f}(\mu)|^2\leq\psi_h(\mu)\psi_f(\mu)$ for almost all $\mu\in\R$ with respect to the Lebesgue measure. Hence, by Assumption \ref{ass} (3), 
we have the boundedness of the mappings. 
Moreover, we see that for any $F\in\Ltr$, $g\in D(F(T))$, where $F(T)$ denotes the operator  defined by $F(T):=\int_{\R}F(\mu)dE(\mu)$. 
In particular, $g$ is in $D(\psi_{g,f}(T))$ for any $f\in\msH$. 
\end{rem}

\begin{lemma}
\label{f(T)_J=f(T)}
Let  $T$ be a self-adjoint operator such that $JT\subset TJ$. 
Then 
\begin{enumerate}
\item
$E(B)_J=E(B),$ for all $B\in\Borel$.
\item
Let $F$ be a Borel measurable function on $\R$. Then $F(T)_J=F^*(T),$ 
where $F^*$ is complex conjugation of $F$.
\end{enumerate}
\end{lemma}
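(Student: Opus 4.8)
The plan is to reduce the whole statement to one algebraic fact: since the conjugation $J$ is anti-unitary, i.e.\ $\inner<Jf,Jg>=\ovl{\inner<f,g>}$ for all $f,g\in\msH$, and $J^2=I$, the map $A\mapsto A_J=JAJ$ is an anti-linear, multiplicative, $*$-preserving and isometric bijection of the bounded linear operators on $\msH$ onto themselves; in particular it carries orthogonal projections to orthogonal projections and commutes with strong limits. First I would observe that the hypothesis $JT\subset TJ$, combined with $J^2=I$, forces $JD(T)=D(T)$, $JT\psi=TJ\psi$ for every $\psi\in D(T)$, and therefore $T_J=JTJ=T$; moreover $(T\pm i)J=J(T\mp i)$ holds on $D(T)$.

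For (1), I would check directly that $\{E(B)_J\}_{B\in\Borel}$ is again a projection-valued measure: each $E(B)_J$ is an orthogonal projection by the properties above, $E(\R)_J=I_J=I$, and strong countable additivity is inherited from that of $E(\cdot)$ because $J$ is continuous. Using anti-unitarity of $J$ one gets, for every $\psi\in\msH$,
\begin{equation*}
\inner<\psi,E(B)_J\psi>=\ovl{\inner<J\psi,E(B)J\psi>}=\|E(B)J\psi\|^2 ,
\end{equation*}
so the scalar measure $B\mapsto\inner<\psi,E(B)_J\psi>$ is precisely the spectral measure of $T$ at the vector $J\psi$. Let $\wdt T:=\int_\R\mu\,dE(\mu)_J$ be the self-adjoint operator associated with $E(\cdot)_J$. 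Then $D(\wdt T)=\{\psi:J\psi\in D(T)\}=JD(T)=D(T)$, and for $\psi\in D(T)$,
\begin{equation*}
\inner<\psi,\wdt T\psi>=\int_\R\mu\,d\|E(\mu)J\psi\|^2=\inner<J\psi,TJ\psi>=\ovl{\inner<\psi,JTJ\psi>}=\inner<\psi,T_J\psi>=\inner<\psi,T\psi>,
\end{equation*}
using $T_J=T$ and the self-adjointness of $T$ in the last step. Hence $\wdt T$ and $T$ have the same domain and the same quadratic form, so $\wdt T=T$ by polarization of the associated sesquilinear forms together with density of $D(T)$; by uniqueness of the spectral resolution, $E(B)_J=E(B)$ for all $B\in\Borel$. (Equivalently, one checks $\int_\R(\mu-i)^{-1}dE(\mu)_J=J(T+i)^{-1}J=(T-i)^{-1}$ using $(T\pm i)J=J(T\mp i)$, and concludes $\wdt T=T$ from equality of resolvents at $i$.)

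For (2), let $F$ be a Borel measurable function on $\R$. By (1) the scalar spectral measure of $T$ at $J\psi$ equals that at $\psi$: writing $\rho_\psi(B):=\inner<\psi,E(B)\psi>$ we have $\|E(B)J\psi\|^2=\inner<\psi,E(B)_J\psi>=\rho_\psi(B)$. Since $|F|=|F^*|$,
\begin{equation*}
\psi\in D(F(T)_J)\iff J\psi\in D(F(T))\iff\int_\R|F|^2\,d\rho_\psi<\infty\iff\psi\in D(F^*(T)),
\end{equation*}
so $D(F(T)_J)=D(F^*(T))$. For $\psi$ in this common domain I would approximate $F$ in $L^2(\R,\rho_\psi)$ by simple functions $F_n$; applying $J$ to $F_n(T)J\psi=\sum_k F_n(\mu_k)E(B_k)J\psi$ and using (1) gives $JF_n(T)J\psi=\sum_k\ovl{F_n(\mu_k)}E(B_k)\psi=F_n^*(T)\psi$, and letting $n\to\infty$ (continuity of $J$ on the left, $L^2(\rho_\psi)$-convergence on the right) yields $F(T)_J\psi=F^*(T)\psi$, which is (2). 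I expect the only point requiring any care to be the identification $\wdt T=T$ in the proof of (1) — that is, recognizing $E(\cdot)_J$ as the spectral measure of the self-adjoint operator $T_J=T$; once that is done, (2) is a routine functional-calculus limiting argument, and the rest is bookkeeping with domains and with the anti-linearity of $J$.
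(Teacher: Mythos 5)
Your proof is correct, and it follows the same route the paper intends: the paper's proof is just the one-line remark ``proved by using the spectral theorem,'' and your argument supplies exactly the standard details behind that remark, namely that $T_J=T$, that $E(\cdot)_J$ is a projection-valued measure whose associated self-adjoint operator is again $T$ (so $E(B)_J=E(B)$ by uniqueness of the spectral resolution), and then the simple-function approximation giving $F(T)_J=F^*(T)$ with the correct domain identification.
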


\begin{proof}
These are proved by using the spectral theorem.
\end{proof}

\subsection{The Main Theorem}
In this subsection, we state the main theorem of the present paper. 
Let $\lambda_{\rm c}$ be a constant defined by 
\begin{equation*}
\lac:=-\left(\int_{[E_0,\infty)}\frac{\mu}{\mu^2-E_0^2}\ d\|E(\mu)g\|^2\right)^{-1}<0.
\end{equation*}
Then it is easy to see that $\lacz\leq\lac$, and $\lacz=\lac$ if and only if $E_0=0$.
\begin{thm}
\label{Main}
Let $\msH$ be separable. Then the following $(1)$-$(3)$ hold:
\begin{enumerate}
\item
Let $T$ and $g$ satisfy Assumption \ref{ass}. If $\la>\lac$, then there are a unitary operator $\U$ on $\fb(\msH)$ and a constant $\Eg\in\R$ such that 
\begin{equation}
\label{eq:main1}
\U H(\la)\U^{-1}=\SQ(T)+\Eg.
\end{equation}
In particular,  $\U^{-1}\Omz$ is the unique ground state of $H(\la)$ up to constant multiples,  and  
\begin{align}
\sigma(H(\la))&=\{\Eg\}\cup[E_0+\Eg,\infty),\label{eq:spectrum}\\
\sigac(H(\la))&=[E_0+\Eg,\infty),\ \sigp(H(\la))=\{\Eg\},\ \sigsc(H(\la))=\emps.\label{eq:spectra}
\end{align}
\item
Let $T$ and $g$ satisfy Assumption \ref{ass} and $E_0>0$. 
If $\lacz<\la<\lac$, then there exist a unitary operator $\V$ on $\fb(\msH)$, an injective non-negative self-adjoint operator $\xi$ on $\msH$ and a constant $\Eb\geq0$ such that 
$\xi$ has a ground state and 
\begin{equation*}
\V H(\la)\V^{-1}=\SQ(\xi)+\Eg-\Eb.
\end{equation*}
In particular, $\V^{-1}\Omz$ is the unique ground state of $H(\lambda)$ up to constant multiples, and  
\begin{align*}
&\sigma(H(\la))=\cup_{n=0}^{\infty}\{n\beta+\Eg-\Eb\}\cup[E_0+\Eg-\Eb,\infty),\\
&\sigac(H(\la))=[E_0+\Eg-\Eb,\infty),\\ &\sigp(H(\la))=\cup_{n=0}^{\infty}\{n\beta+\Eg-\Eb\},\ 
\sigsc(H(\la))=\emps,
\end{align*}
where $\beta>0$ is the discrete ground state energy of $\xi$.
\item
Let $T$ be a non-negative, injective self-adjoint operator. 
If $g\in D(T^{-1/2})$ and $\la<\lacz$, then $H(\la)$ is unbounded from above and below.
\end{enumerate}
\end{thm}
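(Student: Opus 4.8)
The plan is to exhibit, for $\la<\lacz$, an explicit sequence of normalized vectors $\psi_n$ in $D(\ovl{H(\la)})$ (or rather in $\fbz(\msH)$, on which $H(\la)$ acts without closure issues) along which $\inner<\psi_n,H(\la)\psi_n>\to+\infty$ and another along which it goes to $-\infty$. Since $\la<\lacz=-\|T^{-1/2}g\|^{-2}$, we have $1+\la\|T^{-1/2}g\|^2<0$, so the quadratic form $\inner<\psi,H(\la)\psi>=\|\SQ(T)^{1/2}\psi\|^2+\tfrac{\la}{2}\|\Segal(g)\psi\|^2$ has, heuristically, a ``wrong-sign'' direction coming from the one-particle sector along $T^{-1/2}g$. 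The unboundedness from below is the easier half: I would pick (approximate) vectors concentrated in the one-particle subspace near the direction $T^{-1/2}g$ and scale them, or more robustly use coherent-type vectors $\psi_t:=e^{-it\Segal(ih)}\Om_0$ (Weyl vectors) for a suitable real one-particle vector related to $T^{-1/2}g$, for which $\inner<\psi_t,\SQ(T)\psi_t>$ grows like $t^2\|T^{1/2}h\|^2$ while $\inner<\psi_t,\Segal(g)^2\psi_t>$ grows like $t^2\inner<h,g>^2$ plus a constant; choosing $h$ so that the $\la$-term dominates (possible exactly because $\la<\lacz$, after optimizing $h$ over $D(T^{1/2})$ and using that $\|T^{-1/2}g\|^{-2}=\inf\{\|T^{1/2}h\|^2/\inner<h,g>^2\}$) forces the form to $-\infty$. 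One must be mildly careful that these Weyl vectors lie in $D(\SQ(T))$, which holds when $h\in D(T)$; density of $D(T)\cap D(T^{1/2})$ lets one approximate the optimal $h$.

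For unboundedness from above, the mechanism is different and is where the structure of $\Segal(g)^2$ enters: even when $\la<0$, the operator $\tfrac{\la}{2}\Segal(g)^2$ is negative, but $\SQ(T)$ is an unbounded positive operator, so naively $H(\la)$ ``should'' be unbounded above — the subtlety is that on the domain $D(\SQ(T))$ the negative term is $\SQ(T)$-bounded with relative bound possibly $\geq 1$, so one cannot conclude boundedness above or below cheaply. I would argue: take $\phi_N$ to be a normalized vector supported in the $N$-particle sector, built from one-particle vectors $u$ with $u\perp g$ (so that $\Segal(g)$ annihilates the ``$g$-direction'' contributions up to the creation part) — more precisely use $\phi_N:=(A(u)^*)^N\Om_0/\|(A(u)^*)^N\Om_0\|$ with $u\in D(T)$, $\inner<u,g>=0$. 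Then $\inner<\phi_N,\SQ(T)\phi_N>=N\inner<u,Tu>/\|u\|^2\to\infty$, while $\inner<\phi_N,\Segal(g)^2\phi_N>=\tfrac12\inner<\phi_N,(A(g)+A(g)^*)^2\phi_N>$ reduces, by the CCR \eqref{eq:ccr} and $\inner<u,g>=0$, to something of order $\|g\|^2$ uniformly in $N$. Hence $\inner<\phi_N,H(\la)\phi_N>\to+\infty$. Existence of such $u$ is guaranteed because $g$ is a single vector and $\msH$ is infinite-dimensional (the massless/abstract setting); if $\msH$ were finite-dimensional one would instead exploit that $T$ has unbounded spectrum, but purely-absolutely-continuous $T$ forces $\dim\msH=\infty$ anyway, so $\{g\}^\perp\cap D(T)$ is infinite-dimensional and contains vectors of arbitrarily large Rayleigh quotient $\inner<u,Tu>/\|u\|^2$.

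The main obstacle I anticipate is bookkeeping rather than conceptual: making sure the chosen test vectors actually lie in $D(\ovl{H(\la)})$ — equivalently, in $D(\SQ(T))$ — and that the off-diagonal cross terms in $\Segal(g)^2=\tfrac12(A(g)^2+A(g)^{*2}+2A(g)^*A(g)+\|g\|^2)$ genuinely contribute only bounded (in $N$, resp. subleading in $t$) amounts, so that the leading asymptotics survive. For the below-direction, the delicate point is the optimization identity $\|T^{-1/2}g\|^{-2}=\inf_{h\in D(T^{1/2}),\,\inner<h,g>\neq 0}\|T^{1/2}h\|^2\,|\inner<h,g>|^{-2}$, which follows from Cauchy--Schwarz applied to $\inner<h,g>=\inner<T^{1/2}h,T^{-1/2}g>$ with equality approached along $h_\epsilon=(T+\epsilon)^{-1}g\in D(T)$; strict inequality $\la<\lacz$ is exactly what is needed to turn ``$\geq$'' into a strict gain. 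I would present both halves as two short lemmas and then combine them; since $H(\la)$ is known to be essentially self-adjoint (or at least symmetric) on $\fbz(\msH)\supset\fbfin(D(T))$ by Theorem \ref{sa}, quadratic-form lower/upper bounds on this core transfer to $\ovl{H(\la)}$, giving the claim.
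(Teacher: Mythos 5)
Your proposal addresses only part (3) of the theorem; parts (1) and (2), which are the core of the paper and require the whole Bogoliubov apparatus (the operators $\Rpm$, $U,V$, $X,Y$, $\xi$, the Hilbert--Schmidt property of $V$, and the time-evolution/vacuum argument of Sections 3--6), are not touched by your strategy, so as a proof of the full statement it is incomplete. I comment below on part (3) only.

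For part (3) your argument is correct, and the lower-unboundedness half takes a genuinely different route from the paper. The upper half coincides with the paper's: the paper also takes $\psi_n=a_nA(f)^{*n}\Omz$ with $f\in D(T)$, $\inner<g,f>=0$, so that the expectation of $\Segal(g)^2$ stays bounded while $\SQ(T)$ contributes $n\|T^{1/2}f\|^2/\|f\|^2$. For the lower half the paper stays inside $\fbfin(D(T))$ and uses the finite superpositions $\phi_N=\sum_{n=0}^N a_nA(f_\delta)^{*n}\Omz$ with $a_n=n^{-3/4}(n!)^{-1/2}$ and the spectral cutoff $f_\delta=T^{-1}E((\delta,\infty))g/\|T^{-1}E((\delta,\infty))g\|$, so that the particle-number-changing cross terms $\Re\,(a_{n-2}^*/a_n^*)\inner<g,f_\delta>^2\approx n\inner<g,f_\delta>^2$ restore the full coefficient $1+\la\|T^{-1/2}E((\delta,\infty))g\|^2$, which is made negative for $\la<\lacz$ by an $(\vep,\delta)$-optimization. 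Your Weyl-vector computation achieves exactly the same thing in closed form: with $h\in D(T)$ the coherent vector lies in $D(\SQ(T))=D(H(\la))$ and the form equals $\tfrac{t^2}{2}\bigl(\|T^{1/2}h\|^2+\la(\Re\inner<g,h>)^2\bigr)+\tfrac{\la}{4}\|g\|^2$, and your optimization identity $\|T^{-1/2}g\|^{-2}=\inf_h\|T^{1/2}h\|^2|\inner<h,g>|^{-2}$ (approached along $h_\vep=(T+\vep)^{-1}g\in D(T)$, for which $\inner<g,h_\vep>$ is automatically real and positive, so no conjugation is needed) shows the bracket can be made negative precisely when $\la<\lacz$. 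Both proofs hinge on the same essential point, which you correctly identified: a trial state confined to a single particle-number sector only sees $1+\tfrac{\la}{2}\|T^{-1/2}g\|^2$ (so your parenthetical first suggestion of one-particle vectors would not suffice by itself), and one must let the off-diagonal $A(g)^2$, $A(g)^{*2}$ terms act across sectors. What each approach buys: the paper's argument needs only finite particle vectors and no Weyl operators, at the price of the somewhat delicate choice of $a_n$ and the cutoff $f_\delta$; yours is shorter and conceptually transparent (the instability is the classical quadratic-form instability), at the price of the (standard, and correctly flagged) domain check that $e^{-it\Segal(ih)}\Omz\in D(\SQ(T))$ for $h\in D(T)$.
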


\begin{exam}
\upshape
A concrete realization of the abstract model is given  as follows (see \cite[Chapter 12]{HT1}): 
\begin{equation*}
\begin{array}{ccc}
\msH\lra \Ltd,&T\lra\om,&g\lra\dfrac{\hat{\rho}}{\sqrt{\om}}
\end{array}
\end{equation*}
where $\om$ is a multiplication operator associated with the function $\om(k):=\sqrt{|k|^2+m^2},k\in\R^d$ for a fixed $m\geq0$ and $\hat{\rho}$ is the Fourier transform of a function $\rho\in\Ltd$ satisfying $\hat{\rho}/\sqrt{\om}\in\Ltd$.
Assume that $\hat{\rho}$ is rotation invariant, i.e., there exists a function $v$ on $[0,\infty)$ such that 
$\hat{\rho}(k)=v(|k|)$ for all $k\in\R^d$. Then we have $\psi_g(s)=|S^{d-1}|\ \om_1^{-1}(s)^{d-2}\ |v(\om_1^{-1}(s))|^2$ for $s\geq m$, where $|S^{d-1}|$ is the surface area of the $(d-1)$-dimensional unite sphere with convention $|S^0|=2\pi$ and $\om_1(r)=\sqrt{r^2+m^2},r\geq0$. 
Hence, with $J$ being the complex conjugation, the following conditions 
(2)'-(4)' imply that the present model satisfies Assumption \ref{ass}:
\begin{enumerate}
\renewcommand{\labelenumi}{(\arabic{enumi})'}
\setcounter{enumi}{1}
\item
$\hat{\rho}(k)^*=\hat{\rho}(k)$ and 
\begin{equation*}
\hat{\rho}\in\Ltd,\int_{\R^d}\dfrac{|\hat{\rho}(k)|^2}{|k|^2}dk<\infty.
\end{equation*}
\item
$\hat{\rho}$ is rotation invariant.
$\sup_{k\in\R^d}\om(k)^{\pm1/2}|k|^{(d-2)/2}|\hat{\rho}(k)|<\infty$.
$\hat{\rho}(k)>0$, for all $k\neq0$.
\item
$v\in C^1([0,\infty))$ and 
\begin{align*}
\lim_{|k|\ra0}|k|^{d-4}\hat{\rho}(k)\{(d-2)\hat{\rho}(k)+2v'(|k|)\}=0.\\
\lim_{|k|\ra\infty}|k|^{d-4}\hat{\rho}(k)\{(d-2)\hat{\rho}(k)+2v'(|k|)\}=0.
\end{align*}
\end{enumerate}
For example, one can easily check that the function
\begin{equation*}
\hat{\rho}(k):=\exp{\left(-\frac{1}{|k|^2}-|k|^2\right)},\ k\in\R^d\sla\{0\},\ \hat{\rho}(0):=0
\end{equation*}
satisfies the above conditions (2)'-(4)'.
\end{exam}

\section{Definitions and properties of some functions and operators}
In this section we introduce some functions and operators. 
We assume that $\msH$ is separable and Assumption \ref{ass} from this section to Section 6.
\subsection{Functions $D$ and $\Dpm$}
\begin{lemma}
\label{proD}
Let $D:\C\backslash(0,\infty)\ra\C$ be the function 
\begin{equation*}
D(z):=1+\la\int_{[E_0,\infty)}\dfrac{\mu}{\mu^2-E_0^2-z}d\|E(\mu)g\|^2
,\quad z\in\C\sla(0,\infty).\\
\end{equation*}
Then $D$ is well-defined and analytic in $\C\backslash[0,\infty)$. 
Moreover, the following hold$:$
\begin{enumerate}
\item
For all $\la>\la_c$, $D(z)$ has no zeros in $\C\backslash[0,\infty).$ 
\item
For all $\la<\la_c$, $D(z)$ has a unique simple zero in the negative real axis $(-\infty,0).$
\end{enumerate}
\end{lemma}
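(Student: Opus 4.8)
## Proof Plan for Lemma 2.9

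The plan is to study the function $D$ via the standard Herglotz/Nevanlinna calculus. First I would verify that $D$ is well-defined and analytic on $\C\sla[0,\infty)$: the integrand $\mu/(\mu^2-E_0^2-z)$ is, for each $z\in\C\sla[0,\infty)$, a bounded continuous function of $\mu$ on $[E_0,\infty)$ (the denominator $\mu^2-E_0^2-z$ vanishes only when $\mu^2-E_0^2=z\in[0,\infty)$, which is excluded), and the finite measure $d\|E(\mu)g\|^2$ has total mass $\|g\|^2<\infty$ by Assumption \ref{ass}(2). Analyticity follows by differentiating under the integral sign, which is justified by dominated convergence using the local boundedness of the integrand and its $z$-derivative $\mu/(\mu^2-E_0^2-z)^2$ on compact subsets of $\C\sla[0,\infty)$. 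I would note that $D$ extends continuously (even analytically) across $(-\infty,0)$ since the integrand remains well-behaved there.

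Next, for part (1), the key observation is a sign/monotonicity argument on the negative real axis together with control of the imaginary part off the real axis. Writing $F(z):=\int_{[E_0,\infty)}\mu(\mu^2-E_0^2-z)^{-1}\,d\|E(\mu)g\|^2$, one computes $\Im D(z)=\la\,\Im z\int \mu\,|\mu^2-E_0^2-z|^{-2}\,d\|E(\mu)g\|^2$, so for $\la>0$ and $\Im z\ne 0$, $\Im D(z)$ has the same sign as $\Im z$ and hence $D(z)\ne 0$; for $\la<0$ with $\la>\lac$ the sign is opposite to $\Im z$ but still nonzero, so in either case $D$ has no zeros with $\Im z\ne 0$. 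On the negative real axis $z=x<0$, the function $x\mapsto F(x)$ is real, strictly decreasing in $x$ (its derivative is $\int\mu(\mu^2-E_0^2-x)^{-2}d\|E(\mu)g\|^2>0$, positive since $\mu\ge E_0\ge 0$ and, using Assumption \ref{ass}(3), $\psi_g>0$ makes the measure nontrivial on $(E_0,\infty)$; note $\mu=E_0$ would need $\mu>0$, handled since the $\mu=E_0$ atom contributes nothing when $E_0=0$ and a finite positive term when $E_0>0$), and as $x\to-\infty$, $F(x)\to 0^+$ by dominated convergence, while $\lim_{x\uparrow 0}F(x)=\int\mu(\mu^2-E_0^2)^{-1}d\|E(\mu)g\|^2=-1/\lac$ (possibly $+\infty$ if $E_0=0$). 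Hence $D(x)=1+\la F(x)$ ranges monotonically, and for $\la>\lac$ one checks $1+\la F(x)>0$ for all $x<0$: when $\la\ge 0$ this is immediate from $F>0$; when $\lac<\la<0$ it follows because $F(x)<-1/\lac=-1/|\lac|$ gives $\la F(x)>\la\cdot(-1/\lac)=-\la/\lac>-1$ (using $\lac<\la<0$ so $0<-\la/\lac<1$... wait, signs) — more carefully, $\la F(x) > -1 \iff F(x) < -1/\la = 1/|\la|$, and since $F(x)<-1/\lac=1/|\lac|<1/|\la|$ when $|\la|<|\lac|$, i.e. $\la>\lac$, this holds; at $x=0^-$ one uses the limit value. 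Combined with the absence of off-axis zeros and the boundary behavior at $0$, $D$ has no zeros in $\C\sla[0,\infty)$.

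For part (2), when $\la<\lac<0$ the same monotonicity shows $x\mapsto D(x)=1+\la F(x)$ is real and strictly \emph{increasing} on $(-\infty,0)$ (since $\la<0$ and $F$ is strictly decreasing), with $D(x)\to 1>0$ as $x\to-\infty$ and $\lim_{x\uparrow 0}D(x)=1+\la\cdot(-1/\lac)=1-\la/\lac<0$ because $\la/\lac>1$ (both negative, $|\la|>|\lac|$). By the intermediate value theorem there is exactly one $z_0\in(-\infty,0)$ with $D(z_0)=0$, and it is simple since $D'(z_0)=\la F'(z_0)<0$... wait $\la<0$, $F'>0$, so $D'=\la F'<0$, contradicting increasing — let me recheck: $D'(x) = \la F'(x)$ with $F'(x)=\int\mu(\mu^2-E_0^2-x)^{-2}d\|E(\mu)g\|^2>0$, and $\la<0$ gives $D'(x)<0$, so $D$ is strictly \emph{decreasing}; then $D(-\infty)=1>0$ and $D(0^-)=1-\la/\lac$; since $\la<\lac<0$ we have $\la/\lac>1$ so $D(0^-)<0$, again giving a unique simple zero by IVT and $D'\ne 0$. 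The zero lies off $[0,\infty)$ as required, and the off-axis argument (now $\Im D$ and $\Im z$ have opposite signs but $\Im D\ne 0$ for $\Im z\ne 0$) shows there are no other zeros in $\C\sla[0,\infty)$.

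The main obstacle I anticipate is the careful bookkeeping of the limiting value $\lim_{x\uparrow 0}F(x)$ and the boundary behavior of $D$ near $z=0$, which bifurcates according to whether $E_0>0$ (so the integral $\int\mu(\mu^2-E_0^2)^{-1}d\|E(\mu)g\|^2$ converges, equal to $-1/\lac$, using $g\in D(\hat T^{-1/2})$) or $E_0=0$ (so $\lacz=\lac$ and the integral may diverge); one must confirm that Assumption \ref{ass}(2)--(3) guarantee exactly the integrability needed for $\lac$ to be well-defined and for the monotone limits to land where claimed. The rest is a routine Herglotz-function argument.
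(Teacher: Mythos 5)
Your proposal is correct and takes essentially the same route as the paper: show $\Im D(z)\neq 0$ off the real axis (for $\la\neq 0$), then use strict monotonicity of $D$ on $(-\infty,0)$ together with $D(0)=1-\la/\lac$ and $D(x)\to 1$ as $x\to-\infty$ to settle both cases. Note only that $F$ is increasing (your positive derivative computation is the correct statement, and your eventual sign conclusions, after the self-correction in part (2), agree with the paper's), and that the divergence you worry about at $z=0$ cannot occur, since $g\in D(\hat{T}^{-1/2})$ makes $\int_{[E_0,\infty)}\mu(\mu^2-E_0^2)^{-1}d\|E(\mu)g\|^2$ finite even when $E_0=0$.
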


\begin{proof}
If $\Im z\neq0$ (resp. $\Re z<0$), then for any $n\in\N$, 
\begin{equation*}
\int_{[E_0,\infty)}\left|\dfrac{\mu}{(\mu^2-E_0^2-z)^n}\right|d\|E(\mu)g\|^2
\leq c^{-n}\|T^{1/2}g\|^2<\infty,
\end{equation*}
where $c$ is $|\Re z|$ (resp. $|\Im z|$).
If $z=0$, then 
\begin{equation*}
\int_{[E_0,\infty)}\dfrac{\mu}{\mu^2-E_0^2}d\|E(\mu)g\|^2
\leq\|\hat{T}^{-1/2}g\|^2<\infty.
\end{equation*}
Thus, by using the Lebesgue 
dominated convergence theorem, $D$ is well-defined and analytic in $\C\backslash[0,\infty)$. 
\begin{enumerate}
\item
If $\la=0,$ then $D(z)=1$ for all $z\in\C\sla(0,\infty)$, so it has no zeros. 
Let $\la\neq0$ and $z=x+iy\in\C\sla(0,\infty)$. Then we see that  
\begin{equation*}
\Im\ D(z)=y\la\int_{[E_0,\infty)}\dfrac{\mu}{(\mu^2-E_0^2-x)^2+y^2}d\|E(\mu)g\|^2.
\end{equation*}
Thus $\Im\ D(z)=0$ is equivalent to $y=0.$ Therefore $D(z)=0$ if and only if $D(x)=0.$
Let $y=0.$ In the case $\la>0$, one has $D(x)>0$ for all $x\in(-\infty,0]$. Thus $D$ has no zeros. Next, we consider the case $\la<0$. We have for $x<0$,
\begin{equation*}
D'(x)=\la\int_{[E_0,\infty)}\dfrac{\mu}{(\mu^2-E_0^2-x)^2}d\|E(\mu)g\|^2<0.
\end{equation*}
Thus $D$ is monotone decreasing in $(-\infty,0)$. If $\la>\la_c$, then $D(0)>0$. 
Hence $D$ has no zeros. 
\item
Let $\la<\la_c$. We can see that 
\begin{equation*}
D(0)=1+\la\int_{[E_0,\infty)}\dfrac{\mu}{\mu^2-E_0^2}d\|E(\mu)g\|^2=1-\dfrac{\la}{\la_c}<0.
\end{equation*}
By the Lebesgue dominated convergence theorem $D(x)\ra1$ as $x\ra-\infty$. 
Since $D$ is monotone decreasing in $(-\infty,0)$, $D$ has a unique simple zero in $(-\infty,0)$.
\end{enumerate}
\end{proof}

Let 
\begin{equation*}
\phi_g(x):=\psi_g(\sqrt{x})\chi_{[E_0^2,\infty)}(x), x\in\R,
\end{equation*}
where $\chi_B$ is the characteristic function of $B\in\Borel$. 

\begin{lemma}
\label{hiltraconv}
The following hold$:$
\begin{enumerate}
\item
The function $\phi_g$ satisfies 
$\phi_g\in C^1(\R)\cap L^1(\R)\cap L^2(\R)$ 
and $\sup_{x\in\R}|\phi'_g(x)|<\infty$.
\item
Let 
\begin{align*}
A_{\vep}^{(1)}(x):=\frac{x}{\pi(x^2+\vep^2)},\ 
A_{\vep}^{(2)}(x):=\frac{\vep}{\pi(x^2+\vep^2)},\ x\in\R,\ \vep>0
\end{align*}
be the conjugate poisson kernel and the poisson kernel respectively and  $f*h$ denote the convolution of  functions $f$ and $h$. Let 
\begin{equation*}
(H_{\vep}f)(s):=\frac{1}{\pi}\int_{|x-s|\geq\vep}\frac{f(x)}{x-s}dx,\ 
(Hf)(s):=\lim_{\vep\dra0}(H_{\vep}f)(s),\   s\in\R,\ \vep>0, 
\end{equation*}
where $Hf$ is called the Hilbert transform of $f$. 
Then for all $x\in\R$, 
\begin{equation*}
\lim_{\vep\dra0}\left(A_{\vep}^{(1)}*\phi_g\right)(x)=(H\phi_g)(x), \quad 
\lim_{\vep\dra0}\left(A_{\vep}^{(2)}*\phi_g\right)(x)=\phi_g(x),
\end{equation*}
hold uniformly in $x$.
\end{enumerate}
\end{lemma}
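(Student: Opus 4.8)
The plan is to verify each of the two claims in Lemma~\ref{hiltraconv} by translating the analytic hypotheses on $\psi_g$ (Assumption~\ref{ass} (3)--(4)) into properties of $\phi_g$, and then to invoke classical theorems on Poisson and conjugate-Poisson means. For part (1), I would write $\phi_g(x)=\psi_g(\sqrt x)\chi_{[E_0^2,\infty)}(x)$ and check smoothness separately on $(E_0^2,\infty)$, at the boundary point $x=E_0^2$, and on $(-\infty,E_0^2)$. On the open half-line, $\phi_g$ is $C^1$ because $\psi_g\in C^1((E_0,\infty))$ by Assumption~\ref{ass} (4) and $x\mapsto\sqrt x$ is $C^1$ there, with $\phi_g'(x)=\psi_g'(\sqrt x)/(2\sqrt x)$; the hypothesis $\lim_{x\downarrow E_0}x^{-1}\psi_g'(x)=0$ gives $\phi_g'(x)\to 0$ as $x\downarrow E_0^2$ when $E_0>0$, and a separate short argument handles $E_0=0$ using that the same limit forces $\psi_g'(\sqrt x)/(2\sqrt x)$ to stay bounded — actually to vanish — near $0$. (I should be careful: for $E_0=0$ the substitution $x^{-1}\psi_g'(x)\to 0$ is exactly the quantity controlling $\phi_g'$ near the origin, so $C^1$-ness at $0$ follows, and $\phi_g(0)=\psi_g(0^+)$ matches the left limit $0$ only if $\psi_g$ is continuous up to $0$, which again follows from the stated boundedness $\sup_{E_0<x}x\psi_g(x)<\infty$ combined with Assumption~\ref{ass} (3)). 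The global bound $\sup_x|\phi_g'(x)|<\infty$ combines the behaviour near the endpoint with the behaviour at infinity: $\lim_{x\to\infty}x^{-1}\psi_g'(x)=0$ gives $\phi_g'(x)\to0$ as $x\to\infty$, and on compact subsets of $(E_0^2,\infty)$ continuity of $\phi_g'$ suffices. Finally $\phi_g\in L^1\cap L^2$ follows from the change of variables $\mu=\sqrt x$: $\int\phi_g(x)\,dx=2\int_{E_0}^\infty\mu\,\psi_g(\mu)\,d\mu$, which is $2\|T^{1/2}g\|^2<\infty$ (using $\psi_g$ as the density of $\|E(\cdot)g\|^2$), and similarly $\int\phi_g^2=\int_{E_0^2}^\infty\psi_g(\sqrt x)^2dx\le(\sup_x\psi_g)\int\phi_g<\infty$ by Remark~\ref{remass}.

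For part (2), I would recall the standard facts about approximate identities. The Poisson kernel $A_\vep^{(2)}$ is a nonnegative approximate identity of total mass $1$, so $A_\vep^{(2)}*\phi_g\to\phi_g$ pointwise everywhere and uniformly on $\R$ whenever $\phi_g$ is bounded and uniformly continuous; part (1) supplies exactly this (a $C^1$ function with globally bounded derivative, and which is $L^1$, is bounded and uniformly continuous). For the conjugate Poisson kernel, the classical theorem is that if $f\in L^p(\R)$ ($1\le p<\infty$) then $A_\vep^{(1)}*f\to Hf$ in $L^p$ and a.e., and moreover if $f$ additionally satisfies a uniform Dini/Lipschitz condition the convergence is uniform. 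Since $\phi_g$ is $C^1$ with bounded derivative and decays (being $L^1$), it is Lipschitz and one gets $\lim_{\vep\downarrow0}(A_\vep^{(1)}*\phi_g)(x)=(H\phi_g)(x)$ uniformly in $x$. Concretely I would cite (or reprove via a split of the defining singular integral into a principal-value piece near $x=s$, controlled by the Lipschitz bound, and a tail piece, controlled by $L^1$) the identity $(A_\vep^{(1)}*\phi_g)(x)=(H_\vep\phi_g)(x)+R_\vep(x)$ with $\sup_x|R_\vep(x)|\to0$, and then the uniform existence of $\lim_\vep(H_\vep\phi_g)(x)$.

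The main obstacle I anticipate is part~(2): establishing \emph{uniform} (rather than merely pointwise or $L^p$) convergence of the conjugate Poisson means to the Hilbert transform, and in the process confirming that $H\phi_g$ is even well-defined (as a genuine limit, not just a distributional object) at every point. This requires quantitative control: one must show that the "local" contribution $\frac1\pi\int_{|x-s|<\delta}\frac{\phi_g(x)-\phi_g(s)}{x-s}dx$ is small uniformly in $s$ (here the uniform Lipschitz bound from part~(1) is essential) and that the difference between the smooth kernel $A_\vep^{(1)}$ and the sharp-cutoff kernel $\frac1\pi\frac{\chi_{|x|\ge\vep}}{x}$, applied to $\phi_g$, tends to $0$ uniformly — again using $\phi_g\in L^1\cap L^\infty$ together with Lipschitz continuity. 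A secondary, more routine subtlety is the endpoint analysis of $\phi_g$ at $x=E_0^2$ when $E_0=0$ versus $E_0>0$, where the different scaling of the hypotheses in Assumption~\ref{ass} (3)--(4) must be tracked carefully to get genuine $C^1$-regularity across $0$ rather than only one-sided differentiability.
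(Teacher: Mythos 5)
Your overall route is the same as the paper's: for the Poisson kernel you use boundedness and uniform continuity of $\phi_g$, and for the conjugate Poisson kernel you compare $A^{(1)}_{\vep}*\phi_g$ with the truncated integral $H_{\vep}\phi_g$, controlling the near-singularity piece by the Lipschitz bound (mean value theorem) and the tail by integrability — this is exactly the paper's ``H\"older's inequality, mean value theorem, Titchmarsh Theorem 92'' argument, so part (2) is fine modulo the regularity input from part (1).

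The genuine gap is in part (1), in the massive case $E_0>0$: you never establish that $\phi_g$ is \emph{continuous} at the boundary point $x=E_0^2$, i.e.\ that $\psi_g(E_0^+)=0$. You check that $\phi_g'(x)=\psi_g'(\sqrt{x})/(2\sqrt{x})\to 0$ as $x\dra E_0^2$ (from Assumption 2.6 (4)) and you treat the value-matching only for $E_0=0$ (where $\sup_{x>0}x^{-1}\psi_g(x)<\infty$ indeed forces $\psi_g(x)\le Cx\to 0$; note it is the $x^{-1}$ bound, not the $x$ bound you cite, that does this). But for $E_0>0$ Assumptions (3) and (4) alone do \emph{not} force $\psi_g(E_0^+)=0$: they allow $\psi_g$ to tend to a positive constant at $E_0$ with vanishing derivative, in which case $\phi_g$ has a jump at $E_0^2$, so it is not even continuous — and then both the $C^1(\R)$ claim and the uniform convergence statements in (2) fail (a uniform limit of the continuous functions $A^{(j)}_{\vep}*\phi_g$ must be continuous). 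The missing input is the part of Assumption 2.6 (2) you never invoke, $g\in D(\hat{T}^{-1/2})$: since $\int_{E_0}^{\infty}\frac{\psi_g(\mu)}{\mu-E_0}\,d\mu=\|\hat{T}^{-1/2}g\|^2<\infty$ and $\psi_g$ extends continuously to $E_0$ (its derivative has a limit there by (4)), integrability of $\psi_g(\mu)/(\mu-E_0)$ near $\mu=E_0$ forces $\psi_g(E_0^+)=0$. This is precisely why the paper cites Assumption 2.6 (2) — not only (3) and (4) — for assertion (1); once this point is added, your endpoint analysis, the global bound on $\phi_g'$, and the change-of-variables computations giving $\phi_g\in L^1\cap L^2$ (using $g\in D(T^{1/2})$ and $\sup\psi_g<\infty$) are correct.
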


\begin{proof}
By Assumption \ref{ass} (2), (3) and (4), the assertion (1) holds. 
Next we consider the assertion (2). 
By (1), in particular, $\phi_g$ is bounded and uniformly continuous. 
Thus it is easy to see that $A_{\vep}^{(2)}*\phi_g$ converges uniformly to $\phi_g$. 
Moreover, by (1), Holder's inequality, the mean value theorem and a similar estimate to  the proof of \cite[Theorem 92.]{Tit1}, we can show that 
 $(A_{\vep}^{(1)}*\phi_g)(x)-(H_{\vep}\phi_g)(x)$ tends to $0$ uniformly in $x$ as $\vep\dra0$. 
Hence the assertion (2) holds.
\end{proof}

Detailed studies of the Hilbert transform are given in \cite{Tit1}. 

\begin{lemma}
\label{Dpm}
For all $s\geq0$, 
$\Dpm(s):=\lim_{\vep\dra0}D(s\pm i\vep)$ are uniformly convergent and continuous in $s\geq0$ with  
\begin{equation}
\label{eq:dpm}
\Dpm(s)=1+\dfrac{\la\pi}{2}(H\phi_g)(E_0^2+s)
 \pm i\dfrac{\la\pi}{2}\psi_g\left(\sqrt{E_0^2+s}\right),\quad s\geq0.
\end{equation}
\end{lemma}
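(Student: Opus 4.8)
The plan is to compute the boundary values of $D$ along the upper and lower half-planes by splitting $D(s\pm i\vep)-1$ into a principal-value part and a Poisson-type part, and then invoke Lemma \ref{hiltraconv}. First I would rewrite the defining integral in terms of the variable $\mu^2$: substituting and using the measure $d\|E(\mu)g\|^2$, the fact that $T$ is purely absolutely continuous (Assumption \ref{ass} (1)) together with Remark \ref{remass} shows that, after the change of variables $x=\mu^2$, the push-forward of $d\|E(\mu)g\|^2$ has density $\tfrac12\mu^{-1}\phi_g(x)$ (matching the definition $\phi_g(x)=\psi_g(\sqrt x)\chi_{[E_0^2,\infty)}(x)$), so that
\begin{equation*}
D(z)=1+\frac{\la}{2}\int_{[E_0^2,\infty)}\frac{\phi_g(x)}{x-E_0^2-z}\,dx
   =1+\frac{\la}{2}\int_{\R}\frac{\phi_g(x+E_0^2)}{x-z'}\,dx,\qquad z'=z,
\end{equation*}
i.e. $D(E_0^2+w)-1$ is $\tfrac{\la}{2}$ times the Cauchy/Stieltjes transform of $\phi_g(\cdot+E_0^2)$ evaluated at $w$. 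Here I should be careful that $\phi_g$ is only supported on $[E_0^2,\infty)$, but Lemma \ref{hiltraconv}(1) guarantees $\phi_g\in C^1(\R)\cap L^1\cap L^2$ with bounded derivative, which is exactly what is needed to extend the integral over all of $\R$ and to apply the kernel convergence results.

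Next I would identify $D(s\pm i\vep)-1$ with a convolution against the (conjugate) Poisson kernels. Writing $z=s\pm i\vep$ and $w=z-E_0^2$, a direct manipulation of $\dfrac{1}{x-w}=\dfrac{(x-s+E_0^2)\mp i\vep}{(x-s+E_0^2)^2+\vep^2}$ exhibits
\begin{equation*}
D(s\pm i\vep)-1=\frac{\la}{2}\Big[\pi\big(A^{(1)}_{\vep}*\phi_g\big)(E_0^2+s)\;\pm\; i\,\pi\big(A^{(2)}_{\vep}*\phi_g\big)(E_0^2+s)\Big],
\end{equation*}
where $A^{(1)}_\vep, A^{(2)}_\vep$ are exactly the conjugate Poisson and Poisson kernels of Lemma \ref{hiltraconv}(2) (up to the normalization built into those definitions). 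Then Lemma \ref{hiltraconv}(2) gives, uniformly in $s$, that $\big(A^{(1)}_\vep*\phi_g\big)(E_0^2+s)\to (H\phi_g)(E_0^2+s)$ and $\big(A^{(2)}_\vep*\phi_g\big)(E_0^2+s)\to \phi_g(E_0^2+s)=\psi_g(\sqrt{E_0^2+s})$ (the last equality since $E_0^2+s\ge E_0^2$). Passing to the limit yields formula \eqref{eq:dpm}, and the uniform convergence of the approximants, together with the continuity of $H\phi_g$ and of $\psi_g$ on $(E_0,\infty)$ (Assumption \ref{ass} (3),(4)) — extended continuously to $s=0$ using the boundedness of $\psi_g$ on $\sigma(T)$ from Remark \ref{remass} — gives that $\Dpm$ is continuous on $[0,\infty)$.

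The main obstacle I anticipate is the bookkeeping at the endpoint and the justification of the uniform limit rather than just a pointwise a.e. limit: one must check that the integral representation is legitimate uniformly down to $s=0$ (where the denominator $\mu^2-E_0^2$ can vanish at $\mu=E_0$), which is precisely why Assumption \ref{ass} was arranged so that $g\in D(\hat T^{-1/2})$ and $\psi_g(x)=O(1)$ near $E_0$; this makes $\phi_g$ genuinely $C^1$ across $x=E_0^2$ and keeps the Cauchy transform well-behaved. Apart from that, the argument is a routine application of the Poisson-kernel boundary-value theory already packaged in Lemma \ref{hiltraconv}, so I would keep the write-up short: establish the change of variables, identify the kernels, and quote Lemma \ref{hiltraconv}(2).
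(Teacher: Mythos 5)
Your proposal is essentially the paper's own proof: the paper likewise changes variables $x=\mu^2$ so that $D(s\pm i\vep)$ becomes $1$ plus $\tfrac{\la\pi}{2}$ times the conjugate-Poisson and Poisson convolutions of $\phi_g$ evaluated at $E_0^2+s$, and then quotes Lemma \ref{hiltraconv} (2) for the uniform limits, with continuity of $\Dpm$ following from the uniform convergence of the continuous functions $s\mapsto D(s\pm i\vep)$. Your intermediate bookkeeping has small slips (the shift $w=z-E_0^2$ and the sign $\mp i\vep$ in the split of the kernel), but the final identification and the appeal to Lemma \ref{hiltraconv} are exactly the argument in the paper.
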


\begin{proof}
For any $s\geq0$ and $\vep>0$, we have by change of variable
\begin{align*}
D(s\pm i\vep)=
\dfrac{\la\pi}{2}\left(A_{\vep}^{(1)}*\phi_g\right)(E_0^2+s)
\pm i\dfrac{\la\pi}{2}\left(A_{\vep}^{(2)}*\phi_g\right)(E_0^2+s).
\end{align*}
Thus, by Lemma \ref{hiltraconv}, $\Dpm$ converge uniformly in $s\geq0$ and \eqref{eq:dpm} holds. The continuity of $\Dpm$ is due to the uniform convergence. 
\end{proof}

\begin{rem}
For all $\mu\in[E_0,\infty)$, we have 
\begin{equation}
\label{eq:psigpm}
i\pi\la\psi_g(\mu)=\Dp(\mu^2-E_0^2)-\Dm(\mu^2-E_0^2).
\end{equation}
\end{rem}

\begin{lemma}
\label{Dpm>0}
Let $\la\neq\la_c$, then $\delta:=\inf_{s\geq0}|\Dpm(s)|>0$.
\end{lemma}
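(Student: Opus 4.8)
The plan is to show that $|\Dpm(s)|$ cannot be made arbitrarily small by combining two facts: the functions $\Dpm$ are continuous on $[0,\infty)$ (Lemma \ref{Dpm}), so on any compact set the infimum is attained and it suffices to rule out zeros there; and as $s\to\infty$ the real part $\Re\Dpm(s)$ tends to $1$, so $|\Dpm(s)|$ is bounded away from $0$ for large $s$. Concretely, I would first observe from \eqref{eq:dpm} that
\begin{equation*}
|\Dpm(s)|^2=\left(1+\tfrac{\la\pi}{2}(H\phi_g)(E_0^2+s)\right)^2+\left(\tfrac{\la\pi}{2}\psi_g\bigl(\sqrt{E_0^2+s}\bigr)\right)^2,
\end{equation*}
and then split into the imaginary part and the real part.

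For $s>0$ we have $\sqrt{E_0^2+s}>E_0$, so Assumption \ref{ass}(3) gives $\psi_g(\sqrt{E_0^2+s})>0$; hence for $\la\neq0$ the imaginary part of $\Dpm(s)$ is nonzero for every $s>0$, which already shows $\Dpm(s)\neq0$ on $(0,\infty)$. The remaining issue is therefore (a) the single point $s=0$, and (b) uniformity, i.e. that the infimum is strictly positive rather than merely positive. For (a): at $s=0$, by the correspondence in the proof of Lemma \ref{Dpm}, $\Dpm(0)=\lim_{\vep\dra0}D(\pm i\vep)$, and $D$ extends continuously and analytically through $0$ in $\C\setminus[0,\infty)$ with $D(0)=1-\la/\la_{\rm c}$ (this is exactly the computation in the proof of Lemma \ref{proD}(2)); since $\la\neq\la_{\rm c}$, $D(0)\neq0$, so $\Dpm(0)\neq0$. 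For (b): by Lemma \ref{hiltraconv}(1), $\phi_g\in L^1(\R)\cap L^2(\R)$, so by standard properties of the Hilbert transform $(H\phi_g)(x)\to 0$ as $|x|\to\infty$, and likewise $\psi_g(\sqrt{E_0^2+s})=\phi_g(E_0^2+s)\to 0$ as $s\to\infty$ (since $\phi_g\in L^1$ is also continuous and, by (1), has bounded derivative, hence vanishes at infinity). Therefore $\Dpm(s)\to 1$ as $s\to\infty$, so there is $R>0$ with $|\Dpm(s)|\geq 1/2$ for all $s\geq R$. On the compact interval $[0,R]$ the continuous function $s\mapsto|\Dpm(s)|$ attains its minimum, which is strictly positive by the zero-freeness just established. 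Taking the smaller of that minimum and $1/2$ gives $\de>0$, and the same $\de$ works for both signs (or one takes the minimum over the two).

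The main obstacle is the behavior at $s=0$ together with the passage from "no zeros" to "bounded below", i.e. making sure that the argument at $s=0$ genuinely uses $\la\neq\la_{\rm c}$ (not $\la\neq 0$), since the imaginary-part trick degenerates there as $\psi_g(\sqrt{E_0^2+0})=\psi_g(E_0)$ may vanish; this is handled by invoking the explicit value $D(0)=1-\la/\la_{\rm c}$ from Lemma \ref{proD}. One should also double check the claim $(H\phi_g)(x)\to 0$ at infinity; if a fully self-contained argument is preferred, one can instead note that, by Lemma \ref{hiltraconv}(1), $\phi_g\in C^1\cap L^1\cap L^2$ with bounded derivative, so a direct estimate of $(H_\vep\phi_g)(x)$ (splitting the principal-value integral near the singularity and using the decay of $\phi_g$) yields the same limit, in the spirit of the estimate from \cite{Tit1} already cited.
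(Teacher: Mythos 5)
Your proposal is correct and follows essentially the same route as the paper: non-vanishing at $s=0$ from $\Dpm(0)=D(0)=1-\la/\lac\neq0$, the decay $(H\phi_g)(x)\to0$ (so $\Re\Dpm\to1$) to handle large $s$, positivity of the imaginary part via Assumption \ref{ass} (3) on the remaining region, and continuity to conclude the infimum is positive. The only cosmetic difference is that you package the intermediate region as a compactness/minimum argument on $[0,R]$ (and you rightly flag that the decay of $H\phi_g$ needs the $C^1$/bounded-derivative estimate, which is exactly what the paper invokes), so no substantive gap.
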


\begin{proof}
If $\la=0$, then clearly $\Dpm(s)=1>0$ for all $s\in[0,\infty)$. 
Let $\la\neq0,\lac$. Then $\Dpm(0)=D(0)\neq0$. 
Hence, by the continuity of $\Dpm$, $\Dpm$ has no zeros near $s=0$. 
By the property that 
 $\phi'_g(x)\ra0$ as $x\ra\infty$ and some estimate of $H\phi_g$, we can see that $(H\phi_g)(x)\ra0$ as $x\ra\infty$. 
This fact implies that $\inf_{s_0\leq s}\Re\Dpm(s)>0$ for a sufficiently large number $s_0>0$. In addition, $\Im\Dpm(s)$ are positive for any closed interval included in $(0,\infty)$ by Assumption \ref{ass} (3) and the continuity of $\psi_g$. Hence we can see that $\inf_{s\geq0}|\Dpm(s)|>0$.
\end{proof}

\begin{rem}
By Lemmas \ref{Dpm} and \ref{Dpm>0}, we can see that there are constants $c,d,\vep_0>0$  with $0<c<d$ and $\vep_0>0$ such that 
\begin{equation}
\label{eq:dspmbdd}
c\leq\left|\frac{D(s\pm i\vep)}{\Dpm(s)}\right|\leq d
\end{equation}
for all $s\geq0, 0<\vep<\vep_0$.
\end{rem}
\subsection{Operators $\Rpm$}
\begin{lemma}
\label{Rbdd}
One can define bounded operators $\Rpm$ on $\msH$ as follows:
\begin{equation*}
\Rpm f:=-\la\lim_{\vep\dra0}\int_{[E_0,\infty)}
\dfrac{R_{\mu'^2\pm i\vep}(T^2)T^{1/2}g}
{\Dpm(\mu'^2-E_0^2)}d\inner<T^{1/2}g,E(\mu')f>,\quad f\in\msH,
\end{equation*}
where $R_z(A)$ is the resolvent of a linear operator $A$ at $z\in\rho(A)\ ($the  resolvent set of a linear operator $A )$.
\end{lemma}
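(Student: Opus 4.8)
The plan is to treat the integral first for each fixed $\vep>0$ — where it is an honest $\msH$-valued Bochner integral — then to rewrite it through the functional calculus of $T$, and finally to extract from the resulting scalar integral a bound and a Cauchy property that are uniform in $\vep$, using the $L^{2}$-theory of the Poisson and conjugate Poisson kernels and the Schwarz-type estimates of Remark \ref{remass}.

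First I would fix $\vep>0$. Writing $d\inner<T^{1/2}g,E(\mu')f>=\psi_{T^{1/2}g,f}(\mu')\,d\mu'$ (a finite complex measure of total variation $\le\|T^{1/2}g\|\,\|f\|$ by Remark \ref{remass}), the bounds $\|R_{\mu'^{2}\pm i\vep}(T^{2})\|\le1/\vep$ and $|\Dpm(\mu'^{2}-E_{0}^{2})|\ge\delta>0$ (Lemma \ref{Dpm>0}) make the integrand continuous in $\mu'$ with $\msH$-norm $\le\|T^{1/2}g\|/(\vep\delta)$, so $\Rpm^{(\vep)}f$ is a well-defined Bochner integral, linear and bounded in $f$. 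Using $R_{\mu'^{2}\pm i\vep}(T^{2})T^{1/2}g=h_{\mu',\vep}(T)g$ with $h_{\mu',\vep}(\nu)=\nu^{1/2}(\nu^{2}-\mu'^{2}\mp i\vep)^{-1}$ and a Fubini interchange (legitimate for fixed $\vep$, the relevant double integral converging absolutely by Cauchy--Schwarz and $g\in D(T^{1/2})$), I would rewrite $\Rpm^{(\vep)}f=F_{\vep,f}(T)g$, where
\[
F_{\vep,f}(\nu)=-\la\int_{[E_{0},\infty)}\frac{\nu^{1/2}}{(\nu^{2}-\mu'^{2}\mp i\vep)\,\Dpm(\mu'^{2}-E_{0}^{2})}\,\psi_{T^{1/2}g,f}(\mu')\,d\mu',
\]
so that $\|\Rpm^{(\vep)}f\|^{2}=\int_{[E_{0},\infty)}|F_{\vep,f}(\nu)|^{2}\,\psi_{g}(\nu)\,d\nu$ (meaningful since $g\in D(T^{1/2})$).

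Then I would substitute $s=\nu^{2}-E_{0}^{2}$, $t=\mu'^{2}-E_{0}^{2}$; this gives $F_{\vep,f}(\sqrt{E_{0}^{2}+s})=-\la(E_{0}^{2}+s)^{1/4}(K_{\vep}q_{f})(s)$ with $(K_{\vep}q)(s):=\int_{\R}q(t)(s-t\mp i\vep)^{-1}\,dt$ and $q_{f}(t):=\psi_{T^{1/2}g,f}(\sqrt{E_{0}^{2}+t})\bigl(2\sqrt{E_{0}^{2}+t}\,\Dpm(t)\bigr)^{-1}\chi_{[0,\infty)}(t)$, and the change of variables converts the norm identity into
\[
\|\Rpm^{(\vep)}f\|^{2}=\frac{|\la|^{2}}{2}\int_{0}^{\infty}\bigl|(K_{\vep}q_{f})(s)\bigr|^{2}\psi_{g}(\sqrt{E_{0}^{2}+s})\,ds.
\]
Here $\psi_{g}(\sqrt{E_{0}^{2}+\cdot\,})$ is bounded (Assumption \ref{ass}(3), Remark \ref{remass}); and from $|\psi_{T^{1/2}g,f}(\nu)|^{2}\le\psi_{T^{1/2}g}(\nu)\psi_{f}(\nu)=\nu\,\psi_{g}(\nu)\psi_{f}(\nu)$ (Remark \ref{remass}), $\sup_{x}\psi_{g}(x)<\infty$, $|\Dpm|\ge\delta$ and $\int_{\R}\psi_{f}\,d\nu=\|f\|^{2}$, one gets $q_{f}\in L^{2}(\R)$ with $\|q_{f}\|_{L^{2}}\le C\|f\|$, $C$ independent of $\vep$ (note $q_{f}$ itself does not depend on $\vep$). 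Finally $(K_{\vep}q_{f})(s)=\pi(A_{\vep}^{(1)}*q_{f})(s)\pm i\pi(A_{\vep}^{(2)}*q_{f})(s)$, which by the $L^{2}$-boundedness of the conjugate Poisson and Poisson integrals is bounded in $L^{2}(ds)$ by $2\pi\|q_{f}\|_{L^{2}}$ uniformly in $\vep$, and converges in $L^{2}(ds)$ as $\vep\dra0$ to $\pi(Hq_{f})(s)\pm i\pi q_{f}(s)$ (cf. Lemma \ref{hiltraconv} and the $L^{2}$-boundedness of the Hilbert transform $H$). Inserting these facts into the last display yields $\|\Rpm^{(\vep)}f\|\le C'\|f\|$ uniformly in $\vep$ and, applied to the difference of the $\vep$- and $\vep'$-integrals, shows $\{\Rpm^{(\vep)}f\}_{\vep}$ is Cauchy in $\msH$ as $\vep\dra0$; hence the stated limit exists and $\Rpm$ is bounded.

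The main obstacle I expect is the bookkeeping rather than any single estimate: justifying the Fubini interchange for fixed $\vep$ (the absolute-convergence check), carrying the substitution $\mu'\mapsto\mu'^{2}-E_{0}^{2}$ cleanly through the endpoint $E_{0}$ and the point at infinity, and pinning down exactly which weighted $L^{1}/L^{2}$ bounds on $\psi_{g}$ and $\psi_{T^{1/2}g,f}$ are required — all of them available from Assumption \ref{ass}(3) and the Schwarz bound $|\psi_{h,k}|^{2}\le\psi_{h}\psi_{k}$ of Remark \ref{remass}. The analytic core — the $L^{2}$-boundedness and boundary behaviour of the Hilbert transform — is classical. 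If one only needs $\Rpm$ defined through a weak limit, the second and third paragraphs shorten to a direct computation of $\inner<h,\Rpm^{(\vep)}f>$ as a double integral, which yields the same uniform bilinear bound and pointwise-in-$(h,f)$ convergence.
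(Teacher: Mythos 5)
Your proposal is correct and follows essentially the same route as the paper: fix $\vep>0$ to get a Bochner integral, rewrite it through the spectral measure of $T$ (your $q_f$ is, up to the translation $t=x-E_0^2$ and a factor $\tfrac12$, exactly the paper's $\phi_{g,f}^{\pm}$), express it via Poisson and conjugate Poisson convolutions, and conclude with the $L^2$-theory of the Hilbert transform together with $\sup\psi_g<\infty$ and the Schwarz bound $|\psi_{h,f}|^2\le\psi_h\psi_f$. The only (harmless) difference is that you establish existence of the strong limit by a uniform bound plus a Cauchy-in-$\vep$ argument, whereas the paper identifies the limit explicitly as $(\pi\la/2)(H\phi^{\pm}_{g,f})(T^2)T^{1/2}g\mp\tfrac12 A_{\pm}f$, a formula it reuses later.
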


\begin{proof}
For a fixed $\vep>0$ and any $f\in\msH$,
\begin{equation*}
\int_{[E_0,\infty)}
\left\|\dfrac{R_{\mu'^2\pm i\vep}(T^2)T^{1/2}g}{\Dpm(\mu'^2-E_0^2)}\right\|d\|E(\mu')f\|^2
\leq\dfrac{\|f\|^2\|T^{1/2}g\|}{\delta\vep}<\infty
\end{equation*}
by Lemma \ref{Dpm>0} and a property of a resolvent. Thus we can define linear operators $R_{\pm}^{(\vep)}$ on $\msH$ by  
\begin{equation*}
\Rpm^{(\vep)}f:=-\la\int_{[E_0,\infty)}
\dfrac{R_{\mu'^2\pm i\vep}(T^2)T^{1/2}g}{\Dpm(\mu'^2-E_0^2)}d\inner<T^{1/2}g,E(\mu')f>
\end{equation*}
in the sense of Bochner integral with the polarization identity. For any $h,f\in\msH$,
\begin{align*}
&\inner<h,R_{\pm}^{(\vep)}f>\nonumber\\
=&-\la\int_{[E_0,\infty)}
\dfrac{\inner<h,R_{\mu'^2\pm i\vep}(T^2)T^{1/2}g>}
{\Dpm(\mu'^2-E_0^2)}d\inner<T^{1/2}g,E(\mu')f>\\
=&-\la\int_{[E_0,\infty)}\int_{[E_0,\infty)}
\dfrac{\mu^{1/2}}{(\mu^2-\mu'^2\mp i\vep)D_{\pm}(\mu'^2-E_0^2)}
d\inner<h,E(\mu)g>d\inner<T^{1/2}g,E(\mu')f>,
\end{align*}
where we have used the functional calculus. By change of variables in the Lebesgue-Stieltjes integration, functional calculus and Fubini's theorem, we have 
\begin{align*}
\inner<h,\Rpm^{(\vep)}f>=\dfrac{\la\pi}{2}\int_{[E_0,\infty)}
\left(A_{\vep}^{(1)}*\phi_{g,f}^{\pm}\right)(\mu^2)\mu^{1/2}
\mp i\left(A_{\vep}^{(2)}*\phi_{g,f}^{\pm}\right)(\mu^2)\mu^{1/2}d\inner<h,E(\mu)g>,
\end{align*}
where $\phi_{g,f}^{\pm}(x)=\psi_{g,f}(\sqrt{x})x^{-1/4}D_{\pm}(x-E_0^2)^{-1}\chi_{[E_0^2,\infty)}(x), x\in\R$. 
We have $\phi_{g,f}^{\pm}\in\Ltr$ by Remark \ref{remass}, and the function  $\left(A_{\vep}^{(j)}*\phi_{g,f}^{\pm}\right)(\mu^2)\mu^{1/2}\ (\mu\in\R)$ is in $\Ltr$ for each $j=1,2$. 
Thus we have 
\begin{equation*}
\Rpm^{(\vep)}f\ra (\pi\la/2)(H\phi_{g,f}^{\pm})(T^2)T^{1/2}g\mp (1/2)A_{\pm}f\ 
\text{as}\ \vep\dra0
\end{equation*}
by a property of Hilbert transform and the continuity of the inner product with $\Ltr$, where the linear operators 
\begin{equation*}
A_{\pm}f:= i\pi\la\psi_{g,f}(T)D_{\pm}(T^2-E_0^2)^{-1}g, f\in\msH
\end{equation*}
are well-defined (see Remark \ref{remass} and Lemma \ref{Dpm>0}). 
Moreover, by change of variables, the isometry of Hilbert transform and Remark \ref{remass},  we can show that the inequalities 
\begin{equation*}
\label{eq:bddhapm}
\left\|(H\phi^{\pm}_{g,f})(T^2)T^{1/2}g\right\|\leq\frac{c_g}{\delta}\|f\|,\quad 
\|A_{\pm}f\|\leq\frac{2\pi|\la|c_g}{\delta}\|f\|
\end{equation*}
hold for all $f\in\msH$ with constant $c_g:=\sup_{\sigma(T)}\psi_g$. 
Hence $\Rpm$ are bounded.
\end{proof}

It is easy to see that $\Rpm^*:=(\Rpm)^*$ are given as follows: for $f\in\msH$,
\begin{align}
\Rpm^{(\vep)*}f&=\la\int_{[E_0,\infty)}R_{\mu'^2\pm i\vep}(T^2)
D_{\mp}(T^2-E_0^2)^{-1}T^{1/2}g \ d\inner<T^{1/2}g,E(\mu')f>.\nonumber\\
\Rpm^*f&=\lim_{\vep\dra0}\Rpm^{(\vep)*}f.\label{eq:rpm*}
\end{align}
For a densely defined linear operator $A$ on a Hilbert space, we denote by $A^{\sharp}$ $A$ or $A^*$.

\begin{lemma}
\label{ranrpm}
The ranges of $\Rpm^{\sharp}$ are included in $D(T^{-1})\cap D(T)$ and $\Rpm^{\sharp}$ map $D(T)$ into $D(T^2)$.
\end{lemma}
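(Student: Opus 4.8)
The plan is to reduce everything to a statement about functional calculus for $T$ applied to the vector $T^{1/2}g$, since the $\vep\dra0$ limits appearing in the definitions of $\Rpm$ and $\Rpm^*$ (see Lemma \ref{Rbdd} and \eqref{eq:rpm*}) have already been shown in the proof of Lemma \ref{Rbdd} to equal concrete expressions built from $\psi_{g,f}(T)$, the operator $D_{\pm}(T^2-E_0^2)^{-1}$, the Hilbert transform, and the vector $g$ (respectively $T^{1/2}g$). Concretely, from the proof of Lemma \ref{Rbdd} one has, for each $f\in\msH$,
\begin{equation*}
\Rpm f=\tfrac{\pi\la}{2}\,(H\phi^{\pm}_{g,f})(T^2)T^{1/2}g\mp\tfrac12\,i\pi\la\,\psi_{g,f}(T)D_{\pm}(T^2-E_0^2)^{-1}g,
\end{equation*}
and a similar closed form for $\Rpm^*f$ obtained by taking the limit in \eqref{eq:rpm*}, namely a vector of the form $\Psi_{\pm}(T)\,T^{1/2}g$ for a suitable bounded Borel function $\Psi_{\pm}$ depending on $f$. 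So in every case $\Rpm^\sharp f=G_{\pm,f}(T)\,g$ where $G_{\pm,f}$ is a bounded Borel function, using $g\in D(T^{1/2})$ to absorb the stray $T^{1/2}$. Thus it suffices to show: if $G$ is a bounded Borel function on $\sigma(T)$ and $h=G(T)g$, then $h\in D(T^{-1})\cap D(T)$, and moreover $h\in D(T^2)$ when an extra factor of $T$ is available.

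First I would record the mapping properties of the two operator-valued symbols. By Assumption \ref{ass}(3), $\sup_{x\in\sigma(T)}\psi_g(x)<\infty$ and $\sup_{E_0<x}x^{\pm1}\psi_g(x)<\infty$; combined with $|\psi_{g,f}(\mu)|^2\le\psi_g(\mu)\psi_f(\mu)$ (Remark \ref{remass}) this gives $\sup_{x}x^{\pm1}|\psi_{g,f}(x)|^2\le\bigl(\sup_x x^{\pm1}\psi_g(x)\bigr)\psi_f(x)$, which is integrable against $\|E(\cdot)f\|^2$. Next, by Lemma \ref{Dpm>0} the function $D_{\pm}(x^2-E_0^2)^{-1}$ is bounded on $\sigma(T)$ by $\delta^{-1}$, so it contributes no growth or decay. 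The only potentially singular factor is $x^{-1}$ near $E_0$, which is harmless when $E_0>0$ and, when $E_0=0$, is controlled precisely by the weight $x\psi_g(x)$ appearing in the estimate just mentioned; the factor $x$ near infinity is controlled by $x^{-1}\psi_g(x)$. Carrying this bookkeeping through for each of the (finitely many) terms in $\Rpm f$ and $\Rpm^*f$ shows that $\int_{\sigma(T)}(\mu^{-2}+\mu^{2})\,d\|E(\mu)\Rpm^\sharp f\|^2<\infty$, which is exactly the statement $\Rpm^\sharp f\in D(T^{-1})\cap D(T)$. For the last assertion, on $D(T)$ one has an extra factor of $T^{1/2}$ at one's disposal (indeed the integrand in, e.g., $\Rpm f$ for $f\in D(T)$ carries $T^{1/2}g$ paired against $d\langle T^{1/2}g,E(\mu')f\rangle$, so differentiating the spectral parametrization in $f$ produces an additional power $\mu^{1/2}$), and repeating the same estimate with two more powers of $\mu$ in the numerator yields $\int_{\sigma(T)}\mu^{4}\,d\|E(\mu)\Rpm^\sharp f\|^2<\infty$, i.e.\ $\Rpm^\sharp f\in D(T^2)$.

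The main obstacle I anticipate is not any single deep step but rather the careful tracking, near the two endpoints $x\dra E_0$ (relevant only in the massless case $E_0=0$) and $x\ra\infty$, of how the weights $x^{\pm1}\psi_g(x)$ from Assumption \ref{ass}(3) exactly cancel the $x^{-1}$ and $x$ demanded by membership in $D(T^{-1})$ and $D(T)$ — and, for the $D(T^2)$ claim, whether two extra powers are genuinely available on $D(T)$, which forces one to go back to the Bochner-integral definition in Lemma \ref{Rbdd} rather than the reduced closed form. I would therefore organize the proof around a single lemma-internal estimate of the type $\|T^{a}\,G_{\pm,f}(T)g\|\le C_a\|f\|$ (valid for $a=-1,0,1$, and for $a=2$ when $f\in D(T)$ with $\|f\|$ replaced by $\|T^{1/2}f\|+\|f\|$), proved once by splitting $\sigma(T)$ into a neighbourhood of $E_0$, a compact middle piece, and a neighbourhood of infinity, and then read off both conclusions of the lemma as instances of it.
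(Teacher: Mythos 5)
There is a genuine gap, and it sits at the very first step of your reduction. The statement you reduce to --- ``if $G$ is a bounded Borel function on $\sigma(T)$ and $h=G(T)g$, then $h\in D(T^{-1})\cap D(T)$'' --- is false as stated: taking $G\equiv1$ gives $h=g$, and $g$ is only assumed to lie in $D(\hat{T}^{-1/2})\cap D(T^{1/2})$. Moreover the symbol you actually have, $G_{\pm,f}(\mu)=\tfrac{\la\pi}{2}\bigl((H\phi^{\pm}_{g,f})(\mu^2)\mu^{1/2}\mp i\,\psi_{g,f}(\mu)\Dpm(\mu^2-E_0^2)^{-1}\bigr)$, is not bounded in general, since the Hilbert transform of an $L^2$ function need not be bounded. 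What the lemma really requires is the weighted condition $\int\mu^{2a}|G_{\pm,f}(\mu)|^2\psi_g(\mu)\,d\mu<\infty$ for $a=-1,1$ (and $a=2$ when $f\in D(T)$). Your endpoint bookkeeping with $\sup_x x^{\pm1}\psi_g(x)<\infty$, $|\psi_{g,f}|^2\le\psi_g\psi_f$ and $|\Dpm|\ge\delta$ does dispose of the $A_{\pm}$-type term, but it cannot control the Hilbert-transform term: after using the boundedness of $x^{\pm1}\psi_g(x)$ you are left needing $\mu^{a}(H\phi^{\pm}_{g,f})(\mu^2)\in L^2(d\mu)$, and this does not follow from the mere $L^2$-ness of $H\phi^{\pm}_{g,f}$ in the variable $x=\mu^2$ (indeed, near $x=E_0^2$ in the massless case and, for $a=1,2$, at infinity, nothing in your estimate prevents divergence). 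The paper resolves precisely this point by the commutation/change-of-variable identities \eqref{eq:hil}, \eqref{eq:muhil} and their $\mu^2$ analogue, which rewrite $\mu^{a}(H\phi^{\pm}_{g,f})(\mu^2)$ as Hilbert transforms of the $L^2$ functions $\psi_{T^{\mp1/2}g,f}\,\Dpm(\cdot^2-E_0^2)^{-1}\chi_{[E_0,\infty)}$, so that square-integrability in $\mu$ comes from the $L^2$-isometry of $H$ rather than from pointwise weights.

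The same issue makes your treatment of the $D(T^2)$ claim insufficient. The mechanism is not that pairing against $d\inner<T^{1/2}g,E(\mu')f>$ with $f\in D(T)$ ``produces an additional power $\mu^{1/2}$'': you need two extra powers of $\mu$, and in the paper they are obtained by moving $\mu^2$ through the Hilbert transform at the cost of replacing $f$ by $Tf$ (this is exactly where $f\in D(T)$ enters, via $\psi_{T^{1/2}g,Tf}$) and of an additive constant $-\tfrac{2}{\pi}\int\psi^{\pm}_{T^{1/2}g,f}(x)\,dx$, the constant term being harmless because $g\in D(T^{1/2})$. Your overall skeleton --- read off domain membership from a weighted spectral integral for $G_{\pm,f}(T)g$, then repeat for $\Rpm^*$ --- is the same as the paper's, but without these Hilbert-transform identities (or an equivalent substitute) the central estimate $\|T^{a}G_{\pm,f}(T)g\|\le C_a\|f\|$ that your plan hinges on cannot be established, so the proposal as written does not prove the lemma.
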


\begin{proof}
For any $f,h\in\msH$, we have 
\begin{align}
\inner<h,R_{\pm}f>
=&\dfrac{\la\pi}{2}\int_{[E_0,\infty)}
\left(H\phi_{g,f}^{\pm}\right)(\mu^2)\mu^{1/2}
\mp i\dfrac{\psi_{g,f}(\mu)}{D_{\pm}(\mu^2-E_0^2)}d\inner<h,E(\mu)g>.
\label{eq:rpm}
\end{align}
By change variable, we have  
\begin{align}
(H\phi_{g,f}^{\pm})(\mu^2)
=\left(H\psi_{T^{-1/2}g,f}^{\pm}\right)(\mu)+\left(H\psi_{T^{-1/2}g,f}^{\pm}\right)(-\mu), 
 \mu\in\R,
\label{eq:hil}
\end{align}
where $\psi_{h,f}^{\pm}(x):=\psi_{h,f}(x)\Dpm(x^2-E_0)^{-1}\chi_{[E_0,\infty)}(x), x\in\R$ for $h,f\in\msH$. 
Thus we see by Assumption \ref{ass} (3) and functional calculus that $\Ran(\Rpm)\subset D(T^{-1})$. 
The equation
\begin{equation}
\label{eq:muhil}
\mu\left(H\phi_{g,f}^{\pm}\right)(\mu^2)
=\left(H\psi_{T^{1/2}g,f}^{\pm}\right)(\mu)-\left(H\psi_{T^{1/2}g,f}^{\pm}\right)(-\mu),\ \mu\in\R,
\end{equation}
operational calculus for \eqref{eq:rpm} and Assumption \ref{ass} (3) imply that  $\Ran(\Rpm)\subset D(T)$. For any $f\in D(T)$ and $\mu\in\R$,
\begin{equation*}
\label{eq:mu^2hil}
\mu^2\left(H\phi_{g,f}^{\pm}\right)(\mu^2)=
\left(H\psi_{T^{1/2}g,Tf}^{\pm}\right)(\mu)+\left(H\psi_{T^{1/2}g,Tf}^{\pm}\right)(-\mu)
-\dfrac{2}{\pi}\int_{[E_0,\infty)}\psi_{T^{1/2}g,f}^{\pm}(x)\ dx.
\end{equation*}
Hence $\Rpm f\in D(T^2)$ and the following equation holds for any $h\in\msH$,	
\begin{align*}
\inner<h,T^2\Rpm f>=&\dfrac{\la\pi}{2}\int_{[E_0,\infty)}
\left\{\left(H\psi_{T^{1/2}g,Tf}^{\pm}\right)(\mu)
+\left(H\psi_{T^{1/2}g,Tf}^{\pm}\right)(-\mu)-\frac{2c}{\pi}\right\}\mu^{1/2}
d\inner<h,E(\mu)g>\nonumber\\
&\mp i\dfrac{\la\pi}{2}\int_{[E_0,\infty)}
\psi_{T^{1/2}g,Tf}^{\pm}(\mu)\mu^{1/2}\ d\inner<h,E(\mu)g>,
\end{align*}
where $c:=\int_{\R}\psi_{T^{1/2}g,f}^{\pm}(x)dx$. 
In quite the same manner as in the case of $\Rpm$, we can prove the statement for $\Rpm^*$. 
\end{proof}

\begin{lemma}
The operator equations $(\Rpm)_J=\Rmp$ hold.
\end{lemma}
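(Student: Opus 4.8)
The plan is to reduce the claimed operator identity $(\Rpm)_J=\Rmp$, that is $J\Rpm J=\Rmp$, to a verification on an arbitrary vector $f\in\msH$ — legitimate since both sides are everywhere-defined bounded operators on $\msH$ by Lemma \ref{Rbdd} — and then to push the conjugation $J$ through the Bochner integral defining $\Rpm$. First I would assemble the algebraic ingredients: $J$ is an isometric anti-linear involution, so $\inner<Ju,Jv>=\inner<v,u>$ for all $u,v\in\msH$; from $JT\subset TJ$ one gets $JT^2\subset T^2J$, so Lemma \ref{f(T)_J=f(T)} applies to both $T$ and $T^2$ and yields $E(B)_J=E(B)$, $(T^{1/2})_J=T^{1/2}$ (hence $J(T^{1/2}g)=T^{1/2}g$, using $Jg=g$), and $(R_z(T^2))_J=R_{\bar z}(T^2)$; and from the explicit formula \eqref{eq:dpm}, together with the fact that $H\phi_g$ and $\psi_g$ are real-valued, one gets $\overline{\Dp(s)}=\Dm(s)$ for all $s\geq0$.

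Next I would work with the approximants $\Rpm^{(\vep)}$ used in the proof of Lemma \ref{Rbdd}. The key preliminary step is to check that the driving scalar spectral measure transforms correctly: using $J(T^{1/2}g)=T^{1/2}g$, $E(B)_J=E(B)$ and $\inner<Ju,Jv>=\inner<v,u>$ one finds $\inner<T^{1/2}g,E(B)Jf>=\overline{\inner<T^{1/2}g,E(B)f>}$, so the complex measure behind $\Rpm^{(\vep)}(Jf)$ is the complex conjugate of the one behind $\Rpm^{(\vep)}f$. Then, applying $J$ to $\Rpm^{(\vep)}(Jf)$ and carrying it inside the integral (anti-linear and bounded), the integrand $R_{\mu'^2\pm i\vep}(T^2)T^{1/2}g/\Dpm(\mu'^2-E_0^2)$ is turned into $R_{\mu'^2\mp i\vep}(T^2)T^{1/2}g/\Dmp(\mu'^2-E_0^2)$, the conjugate measure is turned back into $\inner<T^{1/2}g,E(\cdot)f>$, and the real constant $-\la$ is left untouched; the outcome is exactly $\Rmp^{(\vep)}f$. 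Finally I would let $\vep\dra0$, invoking the convergence $\Rpm^{(\vep)}\to\Rpm$ (and $\Rmp^{(\vep)}\to\Rmp$) from the proof of Lemma \ref{Rbdd} and the continuity of $J$, to conclude $J\Rpm(Jf)=\Rmp f$ for every $f\in\msH$.

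The one point needing real care is the interchange of $J$ with a Bochner integral against a complex measure: since $J$ is anti-linear it simultaneously conjugates the vector-valued integrand and the complex measure, and I would justify this by approximating the integral by finite sums $\sum_j v_j\,m(B_j)$, applying $J$ termwise to obtain $\sum_j (Jv_j)\,\overline{m(B_j)}$, and passing to the limit using the boundedness of $J$ and the estimates already established in the proof of Lemma \ref{Rbdd}. This is the main obstacle, though a routine one. As an alternative that avoids Bochner integrals, one could instead test against an arbitrary $h\in\msH$ and use the closed-form expression \eqref{eq:rpm}, but that route requires tracking how $\psi_{g,f}$, the functions $\phi^{\pm}_{g,f}$ and their Hilbert transforms behave under $f\mapsto Jf$ (for instance $\psi_{g,Jf}=\overline{\psi_{g,f}}$, $\phi^{\pm}_{g,Jf}=\overline{\phi^{\mp}_{g,f}}$, and $H\overline{u}=\overline{Hu}$), which is the same computation in different clothing.
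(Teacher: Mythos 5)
Your proposal is correct and is essentially the paper's own argument: the paper's proof just invokes Assumption \ref{ass} (1)--(2) and Lemma \ref{f(T)_J=f(T)} (i.e.\ $E(B)_J=E(B)$, $R_z(T^2)_J=R_{\bar z}(T^2)$, $Jg=g$, and $\overline{\Dpm}=\Dmp$), which is exactly what you use when pushing $J$ through the approximants $\Rpm^{(\vep)}$ and the limit $\vep\dra0$. You merely spell out the details (conjugation of the complex spectral measure, interchange of $J$ with the Bochner integral, continuity of $J$) that the paper leaves implicit, and these steps are all sound.
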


\begin{proof}
This follows from Assumption \ref{ass} (1) and Theorem \ref{f(T)_J=f(T)}.
\end{proof}

\begin{lemma}
The operator equation $\Rm=\Rp\ga+A_{-}$ holds,
where 
\begin{equation*}
\ga:=\Dp(T^2-E_0^2)D_{-}(T^2-E_0^2)^{-1}
\end{equation*}
is a bounded operator.
\end{lemma}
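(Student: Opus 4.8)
The plan is to establish the identity by pairing both sides with an arbitrary vector $h\in\msH$ and reducing everything to the explicit spectral-integral formula \eqref{eq:rpm} for $\inner<h,\Rpm f>$ obtained in the proof of Lemma~\ref{ranrpm}. Before that I would record that $\ga$ is a bounded operator: since $T\geq0$ we have $\sigma(T^2-E_0^2)\subset[0,\infty)$, by Lemma~\ref{Dpm} the functions $\Dp,\Dm$ are continuous and bounded on $[0,\infty)$, and by Lemma~\ref{Dpm>0} (which applies as $\la\neq\lac$) we have $\inf_{s\geq0}|\Dm(s)|=\delta>0$; hence, via the functional calculus, $D_-(T^2-E_0^2)$ has a bounded inverse and $\ga=\Dp(T^2-E_0^2)D_-(T^2-E_0^2)^{-1}$ is bounded. (Since moreover $\overline{\Dp(s)}=\Dm(s)$ by \eqref{eq:dpm}, one has $|\Dp(s)/\Dm(s)|\equiv1$, so $\ga$ is in fact unitary, but only boundedness is used below.)

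Next I would compute how the data of $\Rp$ transform under $\ga$. Writing $\ga=q(T)$ with $q(\mu):=\Dp(\mu^2-E_0^2)\Dm(\mu^2-E_0^2)^{-1}$, and using that $\ga$ is a function of $T$ together with the pure absolute continuity of $T$, for every $f\in\msH$ the complex measure $\inner<g,E(\cdot)\ga f>$ has Radon--Nikodym derivative $q(\mu)\psi_{g,f}(\mu)$, i.e.\ $\psi_{g,\ga f}(\mu)=q(\mu)\psi_{g,f}(\mu)$ almost everywhere. Substituting this into the quantities occurring in \eqref{eq:rpm} gives
\begin{equation*}
\frac{\psi_{g,\ga f}(\mu)}{\Dp(\mu^2-E_0^2)}=\frac{\psi_{g,f}(\mu)}{\Dm(\mu^2-E_0^2)},\qquad \phi_{g,\ga f}^{+}=\phi_{g,f}^{-}.
\end{equation*}

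Now I would apply \eqref{eq:rpm} with the upper sign and with $f$ replaced by $\ga f\in\msH$, and insert the two identities above, to obtain
\begin{equation*}
\inner<h,\Rp\ga f>=\frac{\la\pi}{2}\int_{[E_0,\infty)}\left\{(H\phi_{g,f}^{-})(\mu^2)\mu^{1/2}-i\,\frac{\psi_{g,f}(\mu)}{\Dm(\mu^2-E_0^2)}\right\}d\inner<h,E(\mu)g>,
\end{equation*}
whose first term coincides with the first term in the formula \eqref{eq:rpm} for $\inner<h,\Rm f>$ (lower sign). Subtracting, the contributions of $H\phi_{g,f}^{-}$ cancel and twice the remaining imaginary term survives; using $d\inner<h,E(\mu)g>=\psi_{h,g}(\mu)\,d\mu$ (an integrable density by Remark~\ref{remass} and Schwarz's inequality, which also justifies the termwise subtraction) together with the functional calculus, this yields
\begin{multline*}
\inner<h,(\Rm-\Rp\ga)f>=i\pi\la\int_{[E_0,\infty)}\frac{\psi_{g,f}(\mu)}{\Dm(\mu^2-E_0^2)}\,d\inner<h,E(\mu)g>\\
=i\pi\la\,\inner<h,\psi_{g,f}(T)D_-(T^2-E_0^2)^{-1}g>=\inner<h,A_{-}f>,
\end{multline*}
the last equality being the definition of $A_\pm$ recorded in the proof of Lemma~\ref{Rbdd}. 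Since $h\in\msH$ and $f\in\msH$ are arbitrary, we conclude $\Rm=\Rp\ga+A_{-}$.

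I do not anticipate a genuine obstacle: the argument is essentially bookkeeping on top of the already-established formula \eqref{eq:rpm}, and the only point requiring care is the termwise handling of the integrals, which is controlled by the integrability facts (boundedness of $f\mapsto\psi_{g,f}$ into $\Ltr$ from Remark~\ref{remass}, and $\delta>0$ from Lemma~\ref{Dpm>0}) already used in the proofs of Lemmas~\ref{Rbdd} and \ref{ranrpm}. If one prefers to avoid \eqref{eq:rpm}, an alternative is to work directly from the resolvent-integral definition of $\Rpm$ in Lemma~\ref{Rbdd}: then $\Rm-\Rp\ga$ becomes the $\vep\dra0$ limit of an integral containing $R_{\mu'^2-i\vep}(T^2)-R_{\mu'^2+i\vep}(T^2)=-2i\vep\bigl((T^2-\mu'^2)^2+\vep^2\bigr)^{-1}$, a multiple of the Poisson kernel, whose limit is evaluated by Lemma~\ref{hiltraconv} to give the same $A_{-}$; the route through \eqref{eq:rpm} is shorter.
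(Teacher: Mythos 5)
Your argument is correct, but it is not the paper's proof: the two differ in where the cancellation is performed. The paper stays at the level of the regularized operators $\Rpm^{(\vep)}$, applies the first resolvent formula to split $\Rm^{(\vep)}f$ into a term equal to $\Rp^{(\vep)}\ga f$ plus a Poisson-kernel term $i\pi\la\int_{[E_0,\infty)}\bigl(A^{(2)}_{\vep}*\phi^{-}_{g,f}\bigr)(\mu^2)\mu^{1/2}\,d\inner<h,E(\mu)g>$, and then lets $\vep\dra0$, identifying the limit with $\inner<h,A_{-}f>$ via the $L^2$-convergence of the Poisson approximation --- this is exactly the ``alternative'' you sketch at the end. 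You never touch $\vep$: writing $\ga=q(T)$ with $q(\mu)=\Dp(\mu^2-E_0^2)\Dm(\mu^2-E_0^2)^{-1}$, the functional calculus gives $\psi_{g,\ga f}=q\,\psi_{g,f}$, hence $\phi_{g,\ga f}^{+}=\phi_{g,f}^{-}$ and $A_{+}(\ga f)=A_{-}f$, so the ``$+$'' boundary formula \eqref{eq:rpm} evaluated at $\ga f$ coincides with the ``$-$'' formula for $f$ up to the sign of the $A$-term, and subtraction yields $\Rm-\Rp\ga=A_{-}$ with all limits inherited from Lemma \ref{Rbdd}. What your route buys is that no new limiting (resolvent/kernel) argument is needed --- it is pure bookkeeping on the boundary values --- at the price of leaning on the precise form of \eqref{eq:rpm}, including which term carries the factor $\mu^{1/2}$; the paper's route needs only the resolvent identity and the convergence machinery of Lemma \ref{hiltraconv}, so it is independent of how \eqref{eq:rpm} is displayed. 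Two minor points: the boundedness (indeed unitarity) of $\ga$ rests on $\overline{\Dp(s)}=\Dm(s)$ together with $\inf_{s\geq0}|\Dm(s)|=\delta>0$ from Lemma \ref{Dpm>0} (so the standing hypothesis $\la\neq\lac$ of this subsection is what is really used; Lemma \ref{Dpm} alone does not assert boundedness), and the termwise subtraction is legitimate because each of the vectors $(H\phi^{-}_{g,f})(T^2)T^{1/2}g$ and $A_{-}f$ is separately well defined by the estimates in the proof of Lemma \ref{Rbdd}; in fact your computation is cleanest at the vector level, $\Rp\ga f=\tfrac{\pi\la}{2}(H\phi^{-}_{g,f})(T^2)T^{1/2}g-\tfrac{1}{2}A_{-}f$ versus $\Rm f=\tfrac{\pi\la}{2}(H\phi^{-}_{g,f})(T^2)T^{1/2}g+\tfrac{1}{2}A_{-}f$, which gives the identity without pairing against $h$.
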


\begin{proof}
The first resolvent formula gives that, for any $\mu',\mu''\in\R, \vep>0$,
\begin{equation*}
R_{\mu'^2-i\vep}(T^2)-R_{\mu'^2+i\vep}(T^2)
=-2i\vep R_{\mu'^2-i\vep}(T^2)R_{\mu'^2+i\vep}(T^2).
\end{equation*}
Then, for any $f\in\msH$,
\begin{align*}
\Rm^{(\vep)}f=&-\la\int_{[E_0,\infty)}
\dfrac{R_{\mu'^2+i\vep}(T^2)T^{1/2}g}{\Dm(\mu'^2-E_0^2)}d\inner<T^{1/2}g,E(\mu')f>\\
&+2i\la\vep\int_{[E_0,\infty)}
\dfrac{R_{\mu'^2+i\vep}(T^2)R_{\mu'^2-i\vep}(T^2)T^{1/2}g}
{\Dm(\mu'^2-E_0^2)}d\inner<T^{1/2}g,E(\mu')f>.
\end{align*}
Thus, by change of variable, we have for any $h\in\msH$
\begin{align*}
\inner<h,\Rm^{(\vep)}f>=&\inner<h,\Rp^{(\vep)}\ga f>
+2i\la\int_{[E_0,\infty)}\int_{[E_0,\infty)}d\inner<h,E(\mu)g>d\inner<T^{1/2}g,E(\mu')f>\\
&\times
\dfrac{\mu^{1/2}\vep}{\{(\mu^2-\mu'^2)^2+\vep^2\}\Dm(\mu'^2-E_0^2)}\\
=&\inner<h,R_{+}^{(\vep)}\ga f>
+i\pi\la\int_{[E_0,\infty)}\left(A_{\vep}^{(2)}*\phi_{g,f}^{-}\right)(\mu^2)\mu^{1/2}d\inner<h,E(\mu)g>.
\end{align*}
By a property of the Poisson kernel, the function $\left(A_{\vep}^{(2)}*\phi_{g,f}^{-}\right)(\mu^2)\mu^{1/2}\ (\mu\in\R)$ converges to $\psi_{g,f}(\mu)/\Dm(\mu^2-E_0^2)$ as $\vep\ra+ 0$ in the sense of $L^2(\R)$. Hence the continuity of inner product with $\Ltr$ implies that
\begin{align*}
\inner<h,\Rm f>
&=\inner<h,\Rp\ga f>+i\pi\la\int_{[E_0,\infty)}
\dfrac{\psi_{g,f}(\mu)}{\Dm(\mu^2-E_0^2)}d\inner<h,E(\mu)g>\\
&=\inner<h,\Rp\ga f>+\inner<h,A_{-}f>.
\end{align*}
Since $f$ and $h$ are arbitrary, one obtains the conclusion.
\end{proof}
It is easy to see that 
\begin{equation*}
(A_{-})^*=-A_{+}.
\end{equation*}

\begin{lemma}
\label{AF(T)}
For any Borel measurable function $F$ on $\R$, 
$A_{\pm}F(T)\subset F(T)A_{\pm}$. 
\end{lemma}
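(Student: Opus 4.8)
The plan is to reduce the assertion to the spectral calculus of the single self-adjoint operator $T$, together with the measure-theoretic identity $\psi_{g,F(T)f}=F\psi_{g,f}$ (a.e.). Since $A_{\pm}$ is a bounded operator defined on all of $\msH$, one has $D(A_{\pm}F(T))=D(F(T))$, so it suffices to fix $f\in D(F(T))$ and verify that $A_{\pm}f\in D(F(T))$ and $A_{\pm}F(T)f=F(T)A_{\pm}f$. First I would record that, for every $B\in\Borel$,
\begin{equation*}
\inner<g,E(B)F(T)f>=\int_{B}F(\mu)\,d\inner<g,E(\mu)f>=\int_{B}F(\mu)\psi_{g,f}(\mu)\,d\mu,
\end{equation*}
which shows $\psi_{g,F(T)f}=F\psi_{g,f}$ Lebesgue-a.e., hence $\psi_{g,F(T)f}(T)=(F\psi_{g,f})(T)$; writing $h:=\Dpm(T^2-E_0^2)^{-1}g$ this gives $A_{\pm}F(T)f=i\pi\la\,(F\psi_{g,f})(T)h$.

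Next I would settle the domain issue. The vector $h$ is well-defined with $\|E(\cdot)h\|^{2}\le\delta^{-2}\|E(\cdot)g\|^{2}$ by Lemma \ref{Dpm>0} (where $\delta:=\inf_{s\ge0}|\Dpm(s)|>0$), and $h\in D(\psi_{g,f}(T))$ because $A_{\pm}f=i\pi\la\,\psi_{g,f}(T)h$ is an already well-defined vector. Combining $d\|E(\mu)h\|^{2}\le\delta^{-2}\psi_{g}(\mu)\,d\mu$, the bound $|\psi_{g,f}(\mu)|^{2}\le\psi_{g}(\mu)\psi_{f}(\mu)$, the boundedness of $\psi_{g}$ on $\sigma(T)$ (all from Remark \ref{remass}), and $\|F(T)f\|^{2}=\int_{\R}|F(\mu)|^{2}\psi_{f}(\mu)\,d\mu$, one obtains
\begin{equation*}
\int_{\R}|F(\mu)|^{2}\,d\|E(\mu)A_{\pm}f\|^{2}=\pi^{2}\la^{2}\int_{\R}|F(\mu)|^{2}|\psi_{g,f}(\mu)|^{2}\,d\|E(\mu)h\|^{2}\le\frac{\pi^{2}\la^{2}}{\delta^{2}}\Big(\sup_{\sigma(T)}\psi_{g}\Big)^{2}\|F(T)f\|^{2}<\infty,
\end{equation*}
so $A_{\pm}f\in D(F(T))$ and in particular $\psi_{g,f}(T)h\in D(F(T))$.

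Finally, since $h\in D(\psi_{g,f}(T))$ and $\psi_{g,f}(T)h\in D(F(T))$, the composition rule of the spectral calculus yields $(F\psi_{g,f})(T)h=F(T)\psi_{g,f}(T)h$, whence
\begin{equation*}
A_{\pm}F(T)f=i\pi\la\,F(T)\psi_{g,f}(T)h=F(T)\big(i\pi\la\,\psi_{g,f}(T)h\big)=F(T)A_{\pm}f.
\end{equation*}
As $f\in D(F(T))$ was arbitrary, this proves $A_{\pm}F(T)\subset F(T)A_{\pm}$. The main obstacle is the domain bookkeeping — verifying $A_{\pm}f\in D(F(T))$ and that the functional-calculus manipulations are legitimate on the domains in play; once the Schwarz-type bound for $\psi_{g,f}$, the boundedness of $\psi_{g}$, and the lower bound $\delta>0$ for $\Dpm$ are invoked, the remaining steps are routine.
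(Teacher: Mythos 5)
Your proof is correct and follows essentially the same route as the paper: both rest on the identity $\psi_{g,F(T)f}=F\psi_{g,f}$, the lower bound $\delta>0$ for $\Dpm$ from Lemma \ref{Dpm>0}, the facts collected in Remark \ref{remass}, and the composition rule of the spectral calculus to conclude $A_{\pm}f\in D(F(T))$ and $F(T)A_{\pm}f=A_{\pm}F(T)f$. Your explicit bound $\int|F|^2\,d\|E(\mu)A_{\pm}f\|^2\leq \pi^2\la^2\delta^{-2}(\sup_{\sigma(T)}\psi_g)^2\|F(T)f\|^2$ merely spells out the domain verification the paper leaves implicit.
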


\begin{proof}
It is easy to see that for any $f\in D(F(T))$, $\psi_{g,F(T)f}=F\psi_{g,f}\in L^2(\R)$. 
This fact and Lemma \ref{Dpm>0} imply that $\psi_{g,f}(T)g\in D(F(T))$ and $F(T)\psi_{g,f}(T)g=\psi_{g,F(T)f}(T)g$. Hence 
$A_{\pm}f\in D(F(T))$ and $F(T)A_{\pm}f=A_{\pm}F(T)f$.
\end{proof}

\begin{lemma}
\label{ARpm}
The following operator equations hold:
\begin{equation*}
A_{-}\Rpm^*=(\ga-I)\Rpm^*,\quad A_{-}(A_{-})^*=-A_{-}-(A_{-})^*.
\end{equation*}
\end{lemma}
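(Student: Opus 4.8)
The plan is to reduce both identities to the functional calculus for $T$, by means of two bookkeeping facts: a closed form for $\ga-I$, and the action of $A_-$ on vectors of the form $F(T)g$. \emph{Closed form of $\ga-I$.} Since $\sigma(T)\subset[E_0,\infty)$ and $T$ has no eigenvalue at $E_0$, the functional calculus applied to \eqref{eq:psigpm} gives $\Dp(T^2-E_0^2)-\Dm(T^2-E_0^2)=i\pi\la\,\psi_g(T)$, and hence
\[
\ga-I=\bigl(\Dp(T^2-E_0^2)-\Dm(T^2-E_0^2)\bigr)\Dm(T^2-E_0^2)^{-1}=i\pi\la\,\psi_g(T)\,\Dm(T^2-E_0^2)^{-1},
\]
which is bounded because $\psi_g$ is bounded on $\sigma(T)$ and $|\Dm|\ge\delta>0$ by Lemma \ref{Dpm>0}.

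\emph{Action of $A_-$ on $F(T)g$.} For a bounded Borel function $F$ on $\R$ one has, for every $B\in\Borel$, $\inner<g,E(B)F(T)g>=\int_B F(\mu)\,d\|E(\mu)g\|^2=\int_B F(\mu)\psi_g(\mu)\,d\mu$, so the Radon--Nikodym derivative satisfies $\psi_{g,F(T)g}=F\psi_g$ almost everywhere. Inserting this into the definition $A_-h=i\pi\la\,\psi_{g,h}(T)\Dm(T^2-E_0^2)^{-1}g$ and commuting the bounded operator $\psi_g(T)\Dm(T^2-E_0^2)^{-1}$ past $F(T)$ gives $A_-\bigl(F(T)g\bigr)=i\pi\la\,\psi_g(T)\Dm(T^2-E_0^2)^{-1}F(T)g=(\ga-I)F(T)g$ by the first step. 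The same computation is valid verbatim for $F\in\Ltr$ as well, since then $g\in D(F(T))$ by Remark \ref{remass} and $F\psi_g\in\Ltr$ because $\psi_g$ is bounded.

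\emph{The first identity.} By \eqref{eq:rpm*}, $\Rpm^{(\vep)*}f$ is a Bochner integral, against the complex measure $\inner<T^{1/2}g,E(\cdot)f>$, of vectors $R_{\mu'^2\pm i\vep}(T^2)\Dmp(T^2-E_0^2)^{-1}T^{1/2}g$, each of which is of the form $F(T)g$; hence $\Rpm^{(\vep)*}f$, and therefore also $\Rpm^*f=\lim_{\vep\dra0}\Rpm^{(\vep)*}f$, lie in the closed cyclic subspace $\msH_g:=\ovl{\{F(T)g:F\ \text{bounded Borel}\}}$. Both $A_-$ (bounded, by the proof of Lemma \ref{Rbdd}) and $\ga-I$ (bounded, by the first step) coincide on the spanning subspace $\{F(T)g\}$ of $\msH_g$, by the second step, hence coincide on all of $\msH_g$, in particular on $\Ran(\Rpm^*)$. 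This gives $A_-\Rpm^*=(\ga-I)\Rpm^*$. The one nonroutine point here is precisely this range control: one must check that the vector-valued integrals defining $\Rpm^{(\vep)*}f$ stay inside $\msH_g$ and that this is inherited by the $\vep\dra0$ limit, which it is since $\msH_g$ is closed and $A_-,\ga-I$ are bounded.

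\emph{The second identity.} Since $(A_-)^*=-A_+$, the claim $A_-(A_-)^*=-A_--(A_-)^*$ is equivalent to $A_-A_+=A_--A_+$. Now $A_+f=i\pi\la\,\psi_{g,f}(T)\Dp(T^2-E_0^2)^{-1}g$ is of the form $F(T)g$ with $F(\mu)=i\pi\la\,\psi_{g,f}(\mu)\Dp(\mu^2-E_0^2)^{-1}\in\Ltr$, so the second step (in its $\Ltr$ version) yields $A_-A_+f=(\ga-I)A_+f=(i\pi\la)^2\psi_g(T)\psi_{g,f}(T)\Dm(T^2-E_0^2)^{-1}\Dp(T^2-E_0^2)^{-1}g$. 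On the other hand $A_-f-A_+f=i\pi\la\,\psi_{g,f}(T)\bigl(\Dm(T^2-E_0^2)^{-1}-\Dp(T^2-E_0^2)^{-1}\bigr)g$, and $\Dm(T^2-E_0^2)^{-1}-\Dp(T^2-E_0^2)^{-1}=\bigl(\Dp(T^2-E_0^2)-\Dm(T^2-E_0^2)\bigr)\Dp(T^2-E_0^2)^{-1}\Dm(T^2-E_0^2)^{-1}=i\pi\la\,\psi_g(T)\Dp(T^2-E_0^2)^{-1}\Dm(T^2-E_0^2)^{-1}$ by \eqref{eq:psigpm}. Since all the factors are functions of $T$ and commute, the two expressions coincide, which proves the identity.
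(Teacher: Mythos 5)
Your proposal is correct, and for the first identity it takes a genuinely different (and more structural) route than the paper. The paper proves $A_{-}\Rpm^*=(\ga-I)\Rpm^*$ by computing the sesquilinear form $\inner<(A_{-})^*h,\Rpm^{(\vep)*}f>$ as an iterated integral, invoking the commutation $A_{\pm}E(B)=E(B)A_{\pm}$ (Lemma \ref{AF(T)} with $F=\chi_B$) together with \eqref{eq:psigpm} to identify the result with $\inner<h,(\ga-1)\Rpm^{(\vep)*}f>$, and then letting $\vep\dra0$. You instead extract the closed form $\ga-I=i\pi\la\,\psi_g(T)\Dm(T^2-E_0^2)^{-1}$ and observe that $A_{-}$ coincides with this bounded operator on every vector $F(T)g$, hence, both operators being bounded, on the whole closed cyclic subspace generated by $g$ under functions of $T$; since each $\Rpm^{(\vep)*}f$ is a Bochner integral of vectors of that form and the subspace is closed, $\Ran(\Rpm^*)$ lies in it and the identity follows with no $\vep$-regularized integral manipulation at all. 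What your route buys is a conceptual explanation (on the relevant subspace $A_{-}$ simply equals $\ga-I$), at the modest cost of the range-containment check, which you correctly flag and justify; the paper's computation is instead self-contained at the level of the regularized integrals it has already set up. For the second identity the two arguments are essentially the same application of \eqref{eq:psigpm}, yours written at the operator level via $(A_{-})^*=-A_{+}$ and commuting functions of $T$, the paper's in weak form; the domain points you rely on ($g\in D(F(T))$ for $F\in\Ltr$, boundedness of $\psi_g(T)$ and of $\Dpm(T^2-E_0^2)^{-1}$, boundedness of $A_{\pm}$) are all supplied by Remark \ref{remass}, Lemma \ref{Dpm>0} and the proof of Lemma \ref{Rbdd}, as you note.
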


\begin{proof}
By applying Lemma \ref{AF(T)} to the case $F=\chi_B$, one can easily see that $A_{\pm}E(B)=E(B)A_{\pm}$ holds for any $B\in\Borel$. 
For any $f,h\in\msH$, we have 
\begin{align*}
&\inner<(A_{-})^*h,\Rpm^{(\vep)*}f>\\
=&\ i\pi\la^2\int_{[E_0,\infty)}\int_{[E_0,\infty)}
\dfrac{\mu^{1/2}\psi_g(\mu)}
{(\mu^2-\mu'^2\mp i\vep)\Dmp(\mu^2-E_0^2)\Dm(\mu^2-E_0^2)}d\inner<h,E(\mu)g>d\inner<T^{1/2}g,E(\mu')f>.
\end{align*}
Then, since $\ga$ and $E(B)$ commute on $\msH$ for any $B\in\Borel$, 
\eqref{eq:psigpm} gives 
\begin{align*}
&\ \inner<(A_{-})^*h,\Rpm^{(\vep)*}f>\\
=&\ \la\int_{[E_0,\infty)}\int_{[E_0,\infty)}
\dfrac{\mu^{1/2}}{(\mu^2-\mu'^2\mp i\vep)\Dmp(\mu^2-E_0^2)}
d\inner<h,E(\mu)(\ga-1)g>d\inner<T^{1/2}g,E(\mu')f>\\
=&\ \inner<h,(\ga-1)\Rpm^{(\vep)*}f>.
\end{align*}
Thus, by a limit argument, we obtain $A_{-}\Rpm^*=(\ga-1)\Rpm^*$. 
Moreover, \eqref{eq:psigpm} and the equation $(A_{-})^*=-A_{+}$ imply that 
\begin{align*}
\inner<h,A_{-}(A_{-})^*f>&=-(i\pi\la)^2\int_{[E_0,\infty)}
\dfrac{\psi_{g,f}(\mu)\psi_g(\mu)}{\Dp(\mu^2-E_0^2)D_{-}(\mu^2-E_0^2)}d\inner<h,E(\mu)g>\\
&=-i\pi\la\int_{[E_0,\infty)}
\dfrac{(\Dp(\mu^2-E_0^2)-\Dm(\mu^2-E_0^2))\psi_{g,f}(\mu)}
{\Dp(\mu^2-E_0^2)\Dm(\mu^2-E_0^2)}d\inner<h,E(\mu)g>\\
&=-\inner<h,(A_{-})^*f+A_{-}f>.
\end{align*}
Hence the equation $A_{-}(A_{-})^*=-A_{-}-(A_{-})^*$ holds.
\end{proof}

\subsection{Operators $\Ompm$}
In this subsection we consider the bounded operators 
\begin{equation*}
\Ompm:=I+\Rpm.
\end{equation*}
Let $x_0<0$ be the zero of $D(z)$ given in Lemma \ref{proD} (2) and 
\begin{equation*}
\Ub:=\sqrt{\frac{\la}{D'(x_0)}}R_{E_0^2+x_0}(T^2)T^{1/2}g, \ P:=\inner<\Ub,\cdot>\Ub.
\end{equation*}
Then it is easy to see that $\|\Ub\|=1,\Ub\in D(T^{-1})\cap D(T^2)$ and  
\begin{equation*}
T\Ub=\sqrt{\la/D'(x_0)}T^{-1/2}g+(E_0^2+x_0)T^{-1}\Ub.
\end{equation*}
Hence $P$ is a projection operator.
\begin{lemma}
\label{omom=i}
Let $\la\neq\lac$. Then the following equations hold:
\begin{align}
\Ompm^*\Ompm&=I,\label{eq:om*om}\\
\Ompm\Ompm^*&=I-\theta(\la_c-\la)P,\label{eq:omom*}
\end{align}
where $\theta$ is the Heaviside function:
\begin{equation*}
\theta(t)=\left\{
\begin{array}{cl}
1&\text{if} \quad t>0,\\
0&\text{if} \quad t<0. 
\end{array}
\right.
\end{equation*}
\end{lemma}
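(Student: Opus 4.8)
Since $\Ompm=I+\Rpm$, we have $\Ompm^*\Ompm=I+\Rpm+\Rpm^*+\Rpm^*\Rpm$ and $\Ompm\Ompm^*=I+\Rpm+\Rpm^*+\Rpm\Rpm^*$, so \eqref{eq:om*om} and \eqref{eq:omom*} are equivalent to the operator identities $\Rpm^*\Rpm=-(\Rpm+\Rpm^*)$ and $\Rpm\Rpm^*=-(\Rpm+\Rpm^*)-\theta(\lac-\la)P$. The plan is to verify these by computing the two products directly from the explicit formulas for $\Rpm$ in Lemma \ref{Rbdd} and for $\Rpm^*$ in \eqref{eq:rpm*}. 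The two signs are handled in parallel; one may also halve the work by reducing to a single sign, using the relation $\Rm=\Rp\ga+A_{-}$, the unitarity of $\ga$ (valid because $\Dp$ and $\Dm$ are mutual complex conjugates on $[0,\infty)$ by Lemma \ref{Dpm}, so $|\Dp(\mu^2-E_0^2)/\Dm(\mu^2-E_0^2)|\equiv 1$), the identity $(A_{-})^*=-A_{+}$, Lemma \ref{ARpm}, and the commutation $A_{\pm}F(T)\subset F(T)A_{\pm}$ of Lemma \ref{AF(T)}.

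For $\Rpm^*\Rpm$ I would insert the $\vep$-regularizations $\Rpm^{(\vep)}$ and $\Rpm^{(\vep)*}$, compose them, and reorganize the products of resolvent kernels by the first resolvent formula $R_z(T^2)R_w(T^2)=(z-w)^{-1}(R_z(T^2)-R_w(T^2))$ (equivalently, partial fractions in the spectral variables); the functional calculus and Fubini's theorem then reduce the resulting multiple Lebesgue--Stieltjes integral to a scalar expression in one spectral variable. Passage to $\vep\dra0$ is legitimate because $\Rpm^{(\vep)}\to\Rpm$ and $\Rpm^{(\vep)*}\to\Rpm^*$ strongly with uniformly bounded norms (Lemma \ref{Rbdd}), whence $\Rpm^{(\vep)*}\Rpm^{(\vep)}\to\Rpm^*\Rpm$ strongly; in the limit the Sokhotski--Plemelj formula, the boundary relation $\Dp-\Dm=i\pi\la\psi_g$ of \eqref{eq:psigpm}, and $|\Dpm|^2=\Dp\Dm$ collapse the remaining scalar identity to an elementary computation, giving $\Rpm^*\Rpm=-(\Rpm+\Rpm^*)$.

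For $\Rpm\Rpm^*$ the same procedure produces one additional term, whose source is this: by the rank-one resolvent formula, the kernel $R_{\mu'^2\pm i\vep}(T^2)T^{1/2}g/\Dpm(\mu'^2-E_0^2)$ appearing in $\Rpm$ is, in the limit $\vep\dra0$, a boundary value of $(A_\la-z)^{-1}T^{1/2}g$, where $A_\la:=T^2-E_0^2+\la\inner<T^{1/2}g,\cdot>T^{1/2}g$ is the rank-one perturbation of $T^2-E_0^2$ whose perturbation determinant is $D$. If $\la>\lac$, then $D$ has no zeros (Lemma \ref{proD}\,(1)) and the computation of $\Rpm\Rpm^*$ closes exactly as above, giving the $\theta(\lac-\la)=0$ case. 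If $\la<\lac$, then $1/D$ has a simple pole at the unique zero $x_0\in(-\infty,0)$ of $D$ (Lemma \ref{proD}\,(2)); deforming the spectral-variable contour to evaluate the integral then picks up the residue there, and since $1/D$ has residue $1/D'(x_0)$ at $x_0$ and $\|R_{E_0^2+x_0}(T^2)T^{1/2}g\|^2=D'(x_0)/\la$ (differentiate $D$ under the integral sign), that residue equals $\la\,D'(x_0)^{-1}\inner<R_{E_0^2+x_0}(T^2)T^{1/2}g,\cdot>R_{E_0^2+x_0}(T^2)T^{1/2}g=P$. Hence $\Rpm\Rpm^*=-(\Rpm+\Rpm^*)-\theta(\lac-\la)P$, consistently with the projection property of $P$ and $\|\Ub\|=1$ noted just before the lemma.

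I expect the hard part to be precisely this last step: justifying the contour deformation and identifying the pole at $x_0$ as the only extra boundary contribution when $\la<\lac$, and verifying that the normalizing constant $\sqrt{\la/D'(x_0)}$ built into $\Ub$ is exactly what makes the residue the rank-one orthogonal projection $P$; the rest is routine bookkeeping with resolvent identities and Sokhotski--Plemelj. As a guiding picture, $\Ompm$ is the stationary M{\o}ller wave operator for the pair $(T^2-E_0^2,\,A_\la)$: isometry of the wave operator yields \eqref{eq:om*om}, and asymptotic completeness for this rank-one perturbation yields \eqref{eq:omom*}, with $I-P$ the projection onto the absolutely continuous subspace of $A_\la$ and $\Ub$ spanning its only eigenspace, which is nontrivial exactly when $\la<\lac$ by Lemma \ref{proD}.
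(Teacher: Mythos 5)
Your proposal is correct and follows essentially the same route as the paper: reduce \eqref{eq:om*om} and \eqref{eq:omom*} to $\Rpm^*\Rpm=-(\Rpm+\Rpm^*)$ and $\Rpm\Rpm^*=-(\Rpm+\Rpm^*)-\theta(\lac-\la)P$, compute with the $\vep$-regularized operators via resolvent identities, Fubini and the boundary values $\Dpm$, and, for the second identity with $\la<\lac$, pick up the residue of $1/D$ at its simple zero $x_0$, which the normalization $\|R_{E_0^2+x_0}(T^2)T^{1/2}g\|^2=D'(x_0)/\la$ turns into exactly the rank-one projection $P$ (this contour/residue step is precisely how the paper argues). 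Your wave-operator interpretation and the possible reduction of one sign to the other via $\Rm=\Rp\ga+A_-$ are consistent extras, and the only technical point to tidy is the passage $\vep\dra0$, which is most cleanly done (as in the paper) on matrix elements $\inner<\Rpm^{(\vep)\sharp}h,\Rpm^{(\vep)\sharp}f>$ rather than by asserting uniform operator bounds.
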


\begin{rem}
Lemma \ref{omom=i} implies that $\Ompm$ are unitary operators if $\la>\lac$ and partial  isometries with their final subspace $\Ran(I-P)$ if $\la<\lac$. 
\end{rem}

\begin{proof}
\begin{enumerate}
\item
We first prove \eqref{eq:om*om}.\\
It is sufficient to prove that $\Rpm^*\Rpm=-(\Rpm+\Rpm^*)$ hold. 
 For any $f,h\in\msH$ and $\vep>0$,
\begin{align*}
\inner<\Rpm^{(\vep)}h,\Rpm^{(\vep)}f>=&
\ \la^2\int_{[E_0,\infty)}\int_{[E_0,\infty)}d\inner<h,E(\mu')T^{1/2}g>d\inner<T^{1/2}g,E(\mu'')f>\\
&\ \times
\inner<\dfrac{R_{\mu'^2\pm i\vep}(T^2)T^{1/2}g}{\Dpm(\mu'^2-E_0^2)},
\dfrac{R_{\mu''^2\pm i\vep}(T^2)T^{1/2}g}{\Dpm(\mu''^2-E_0^2)}>.
\end{align*}
By the definition of the function $D$, we have 
\begin{equation*}
\la\inner<T^{1/2}g,R_z(T^2)T^{1/2}g>=D(z-E_0^2)-1,\ z\in\C\sla(E_0^2,\infty).
\end{equation*}
By this formula and a resolvent identity, we obtain  
\begin{align*}
\inner<\Rpm^{(\vep)}h,\Rpm^{(\vep)}f>
=&\ \la\int_{[E_0,\infty)}\int_{[E_0,\infty)}d\inner<h,E(\mu')T^{1/2}g>d\inner<T^{1/2}g,E(\mu'')f>\\
&\ \times
\dfrac{D(\mu'^2-E_0^2\mp i\vep)-D(\mu''^2-E_0^2\pm i\vep)}
{(\mu'^2-\mu''^2\mp2i\vep)D_{\mp}(\mu'^2-E_0^2)D_{\pm}(\mu''^2-E_0^2)}.\\
=&\ -\inner<E^{(\vep)}_{\pm}h,\Rpm^{(2\vep)} f>
-\inner<\Rpm^{(2\vep)} h,E^{(\vep)}_{\pm}f>,
\end{align*}
where the operators $E^{(\vep)}_{\pm}$ on $\msH$ are given as follows:
\begin{equation*}
\label{eq:evep}
\inner<h,E^{(\vep)}_{\pm}f>:=\int_{[E_0,\infty)}\frac{D(\mu^2-E_0^2\pm i\vep)}{\Dpm(\mu^2-E_0^2)}d\inner<h,E(\mu)f>,\quad h,f\in\msH.
\end{equation*}
The inequality \eqref{eq:dspmbdd} implies that $E^{(\vep)}_{\pm}$ are bounded for all $0<\vep<\vep_0$. 
Thus, by the Lebesgue dominated convergence theorem, 
we have $\slim_{\vep\dra0}E^{(\vep)}_{\pm}=I$. Hence we obtain that  $\Rpm^*\Rpm=-(\Rpm+\Rpm^*)$. 
\item
We next prove  \eqref{eq:omom*} for $\la\neq\lac$.\\
It is sufficient to prove that $\Rpm\Rpm^*=-(\Rpm+\Rpm^*)-\theta(\la_c-\la)P$ hold.
For any $f,h\in\msH$ and a fixed $\vep>0$, \eqref{eq:rpm*} implies 
\begin{align}
\label{eq:rpmrpm*}
&\inner<\Rpm^{(\vep)*}h,\Rpm^{(\vep)*}f>\nonumber\\
=&\ \la^2\int_{[E_0,\infty)}\int_{[E_0,\infty)}
d\inner<h,E(\mu)T^{1/2}g>d\inner<T^{1/2}g,E(\mu')f>\nonumber\\
&\ \times
\inner<R_{\mu^2\pm i\vep}(T^2)\Dmp(T^2-E_0^2)^{-1}T^{1/2}g,
R_{\mu'^2\pm i\vep}(T^2)\Dmp(T^2-E_0^2)^{-1}T^{1/2}g>\nonumber\\
=&\ \la^2\int_{[E_0,\infty)}\int_{[E_0,\infty)}\int_{[E_0,\infty)}
d\inner<h,E(\mu)T^{1/2}g>d\inner<T^{1/2}g,E(\mu')f>d\|E(\mu'')g\|^2\nonumber\\
&\ \times
\dfrac{\mu''}{(\mu''^2-\mu^2\pm i\vep)(\mu''^2-\mu'^2\mp i\vep)
\Dpm(\mu''^2-E_0^2)D_{\mp}(\mu''^2-E_0^2)}\nonumber\\
=&\ \la\int_{[E_0,\infty)}\int_{[E_0,\infty)}
\dfrac{1}{\mu^2-\mu'^2\mp2i\vep}
J_{\vep}^{\pm}(\mu,\mu')d\inner<h,E(\mu)T^{1/2}g>d\inner<T^{1/2}g,E(\mu')f>,
\end{align}
where, for any $\mu,\mu'\in[E_0,\infty)$,
\begin{align*}
&J_{\vep}^{\pm}(\mu,\mu')\\
=&\int_{[E_0,\infty)}
\dfrac{\la\mu''}{\Dpm(\mu''^2-E_0^2)D_{\mp}(\mu''^2-E_0^2)}
\left(\dfrac{1}{\mu''^2-\mu^2\pm i\vep}-\dfrac{1}{\mu''^2-\mu'^2\mp i\vep}\right)
d\|E(\mu'')g\|^2.
\end{align*}
Then, by change of variable and \eqref{eq:psigpm}, one can show that 
\begin{equation*}
J_{\vep}^{\pm}(\mu,\mu')=\lim_{R\ra\infty}\dfrac{1}{2\pi i}I_{\vep,R}^{\pm}(\mu,\mu'),
\end{equation*}
where, for $R>0$, 
\begin{equation*}
I_{\vep,R}^{\pm}(\mu,\mu')=\int_0^R
\left(\dfrac{1}{\Dp(s)}-\dfrac{1}{\Dm(s)}\right)
G_{\mu,\mu'}^{\vep,\pm}(s)ds
\end{equation*}
and 
\begin{equation*}
G_{\mu,\mu'}^{\vep,\pm}(z):=
\dfrac{1}{z-\mu'^2+E_0^2\mp i\vep}-\dfrac{1}{z-\mu^2+E_0^2\pm i\vep}, \ z\in\C. 
\end{equation*}
For $0<\eta<\vep$ and $R>0$, let $C_{i}\ (i=1,2,3)$ be the curve given as follows:
\begin{equation*}
\begin{array}{ll}
C_1:\ \theta_1(t)=R-t-i\eta,&t: 0 \ra R,\\
C_2:\ \theta_2(t)=\eta e^{-it},& t:\pi/2 \ra (3\pi)/2,\\
C_3:\ \theta_3(t)=t+i\eta,& t: 0 \ra R.
\end{array}
\end{equation*}
Then, for $C=C_1+C_2+C_3$, we have by the Lebesgue dominated convergence theorem, 
\begin{equation*}
I_{\vep,R}^{\pm}(\mu,\mu')=\lim_{\eta\dra0}\int_{C}\dfrac{1}{D(z)}G_{\mu,\mu'}^{\vep,\pm}(z)dz.
\end{equation*}
We take $R$ such that $R>\max\{\mu^2-E_0^2,\mu'^2-E_0^2\}$ and 
define a curve $C_4:\theta_4(t)=\sqrt{\eta^2+R^2} e^{-it},  t: t_s \ra t_f$, 
for $t_s:=\arctan(\eta/R)$ and $t_f=2\pi-t_s$. 
We consider two cases separately.
\begin{enumerate}
\item
The case $\la>\lac$. In this case, the function $G_{\mu,\mu'}^{\vep,\pm}(z)/D(z), z\in\C\sla(0,\infty)$ has two simple poles at 
$z=\mu^2-E_0^2\mp i\vep, z=\mu'^2-E_0^2\pm i\vep$. 
Then, by the residue theorem, we have 
\begin{align*}
\int_{C}\dfrac{1}{D(z)}G_{\mu,\mu'}^{\vep,\pm}(z)dz=&\ 
2\pi i\left(\dfrac{1}{D(\mu'^2-E_0^2\pm i\vep)}-\dfrac{1}{D(\mu^2-E_0^2\mp i\vep)}\right)\\
&-\int_{C_4}\dfrac{1}{D(z)}G_{\mu,\mu'}^{\vep,\pm}(z)dz.
\end{align*}
Thus, as $\eta$ tends to 0, we have  
\begin{align*}
I_{\vep,R}^{\pm}(\mu,\mu')=&\ 
2\pi i\left(\dfrac{1}{D(\mu'^2-E_0^2\pm i\vep)}-\dfrac{1}{D(\mu^2-E_0^2\mp i\vep)}\right)\\
&-\lim_{\eta\dra0}\int_{C_4}\dfrac{1}{D(z)}G_{\mu,\mu'}^{\vep,\pm}(z)dz.
\end{align*}
The definition of line integral implies 
\begin{equation*}
\int_{C_4}\dfrac{1}{D(z)}G_{\mu,\mu'}^{\vep,\pm}(z)dz=
-i\int_{t_s}^{2\pi-t_s}\dfrac{G_{\mu,\mu'}^{\vep,\pm}(\sqrt{\eta^2+R^2} e^{-it})\sqrt{\eta^2+R^2} e^{-it}}{D(\sqrt{\eta^2+R^2} e^{-it})}dt.
\end{equation*}
By the triangle inequality, for any $t\in[t_s,t_f]$, 
\begin{align*}
&|G_{\mu,\mu'}^{\vep,\pm}(\sqrt{\eta^2+R^2} e^{-it})|
\leq
\dfrac{|\mu^2-\mu'^2\pm2i\vep|}{(R-|\mu^2-E_0^2\pm i\vep|)(R-|\mu'^2-E_0^2\mp i\vep|)}.
\end{align*}
On the other hand, by Lemma \ref{Dpm>0}, \eqref{eq:dspmbdd} and the Lebesgue dominated convergence theorem, there are constants $\tilde{R}>0$ and $c_0>0$ such that $|D(z)|\geq c_0$ for all $|z|\geq\tilde{R}$.
Thus we have
\begin{equation*}
I_{\vep,R}^{\pm}(\mu,\mu')=2\pi i\left(\dfrac{1}{D(\mu'^2-E_0^2\pm i\vep)}
-\dfrac{1}{D(\mu^2-E_0^2\mp i\vep)}\right)+O(R^{-1}) \ (R\ra\infty),
\end{equation*}
where $O(\cdot)$ stands for the well known Landau symbol.
Therefore we have 
\begin{equation*}
J_{\vep}^{\pm}(\mu,\mu')
=\dfrac{1}{D(\mu'^2-E_0^2\pm i\vep)}-\dfrac{1}{D(\mu^2-E_0^2\mp i\vep)}
\end{equation*}
for each $\mu,\mu'\in[E_0,\infty)$. Thus, by \eqref{eq:rpmrpm*}, we have
\begin{align*}
\inner<\Rpm^{(\vep)*}h,\Rpm^{(\vep)*}f>=
-\inner<\left(\Rpm^{(2\vep)}\right)^*h,\left(E^{(\vep)}_{\pm}\right)^{-1}f>
-\inner<\left(E^{(\vep)}_{\pm}\right)^{-1}h,\left(\Rpm^{(2\vep)}\right)^*f>.
\end{align*}
As in the proof in (1), we obtain $\slim_{\vep\dra0}\left(E^{(\vep)}_{\pm}\right)^{-1}=I$. Therefore we obtain 
\begin{equation*}
\lim_{\vep\dra0}\inner<\Rpm^{(\vep)*}h,\Rpm^{(\vep)*}f>
=-\inner<\Rpm^*h,f>-\inner<h,\Rpm^*f>. 
\end{equation*}
Thus we obtain the desired result.
\item
The case $\la<\lac$. 
In this case, $G_{\mu,\mu'}^{\vep,\pm}(z)/D(z)$ has a simple pole at $z=x_0$ in addition to
$z=\mu^2-E_0^2\mp i\vep, z=\mu'^2-E_0^2\pm i\vep$. 
The residue $R_0$ of $G_{\mu,\mu'}^{\vep,\pm}(z)/D(z)$ at $z=x_0$ is give by 
\begin{equation*}
R_0=\dfrac{1}{D'(x_0)}\dfrac{\mu'^2-\mu^2\pm2i\vep}
{(x_0-\mu'^2+E_0^2\mp i\vep)(x_0-\mu^2+E_0^2\pm i\vep)}.
\end{equation*}
Thus we have 
\begin{equation*}
J_{\vep}^{\pm}(\mu,\mu')
=\dfrac{1}{D(\mu'^2-E_0^2\pm i\vep)}-\dfrac{1}{D(\mu^2-E_0^2\mp i\vep)}+R_0
\end{equation*}
and also 
\begin{equation*}
\dfrac{\la}{\mu^2-\mu'^2\mp2i\vep}R_0
=-\dfrac{\la}{D'(x_0)}\dfrac{1}
{(\mu'^2-E_0^2-x_0\pm i\vep)(\mu^2-E_0^2-x_0\mp i\vep)}.
\end{equation*}
This implies  that 
\begin{align*}
&\la\lim_{\vep\dra0}\int_{[E_0,\infty)}\int_{[E_0,\infty)}
\dfrac{1}{\mu^2-\mu'^2\mp2i\vep}R_0\ 
d\inner<h,E(\mu)T^{1/2}g>d\inner<T^{1/2}g,E(\mu')f>\\
=&-\inner<h,\Ub>\inner<\Ub,f>=-\inner<h,Pf>,
\end{align*}
Thus we obtain the desired result.
\end{enumerate}
\end{enumerate}
\end{proof}

\subsection{Operators $U$ and $V$}
In this subsection, we investigate the operators $U$ and $V$ defined as follows:
\begin{equation*}
U:=\dfrac{1}{2}(T^{-1/2}\Omp T^{1/2}+T^{1/2}\Omp T^{-1/2}), 
V:=\dfrac{1}{2}(T^{-1/2}\Omp T^{1/2}-T^{1/2}\Omp T^{-1/2}),
\end{equation*}
which are used to construct a Bogoliubov transformation. 
Then, by Lemma \ref{ranrpm}, one can easily see that 
$D(U)=D(V)=D(T^{-1/2})\cap D(T^{1/2})$.
\begin{lemma}
\label{uvbdd}
The operators $U$ and $V$ are bounded.
\end{lemma}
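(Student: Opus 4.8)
The plan is to rewrite $U$ and $V$ so that the only worrying factors — the unbounded operators $T^{\pm 1/2}$ sandwiching $\Omp = I + \Rp$ — are absorbed into bounded pieces. First I would split off the identity part: since $T^{-1/2}IT^{1/2}$ and $T^{1/2}IT^{-1/2}$ both restrict to $I$ on $D(T^{-1/2})\cap D(T^{1/2})$, we get $U = I + \tfrac12(T^{-1/2}\Rp T^{1/2} + T^{1/2}\Rp T^{-1/2})$ and $V = \tfrac12(T^{-1/2}\Rp T^{1/2} - T^{1/2}\Rp T^{-1/2})$ on that common domain. Thus it suffices to show that each of $T^{-1/2}\Rp T^{1/2}$ and $T^{1/2}\Rp T^{-1/2}$ extends to a bounded operator on $\msH$.

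For the term $T^{1/2}\Rp T^{-1/2}$: by Lemma \ref{ranrpm}, $\Ran(\Rp)\subset D(T)\subset D(T^{1/2})$, so $T^{1/2}\Rp T^{-1/2}$ is well-defined on $D(T^{-1/2})$. Using the explicit formula \eqref{eq:rpm} for $\inner<h,\Rp f>$ together with the identity \eqref{eq:muhil} that converts the factor $\mu$ (coming from $T^{1/2}$ acting in the $d\inner<h,E(\mu)g>$ integral) into $(H\psi^{\pm}_{T^{1/2}g,f})(\pm\mu)$, I would obtain a closed-form expression for $\inner<T^{1/2}h,\Rp T^{-1/2}f>$ whose kernel involves only $\psi_{T^{1/2}g,\,T^{-1/2}f}^{\pm}$, its Hilbert transform, and the bounded factor $\Dpm(\cdot)^{-1}$ (bounded by Lemma \ref{Dpm>0}). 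Then the $L^2$-boundedness (isometry, up to constants) of the Hilbert transform on $\Ltr$, the bound $\sup_{\sigma(T)}\psi_g = c_g < \infty$, and the boundedness of the maps $f\mapsto \psi_{g,f}$, $\psi_{T^{\pm 1/2}g,f}$ from Remark \ref{remass} give an estimate $|\inner<h,T^{1/2}\Rp T^{-1/2}f>| \le C\|h\|\,\|f\|$ for all $h,f$ in the relevant dense domains — exactly as in the proof of Lemma \ref{Rbdd}. This yields a bounded extension.

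For the term $T^{-1/2}\Rp T^{1/2}$ I would argue by duality to avoid dealing with the inner factor $T^{1/2}$ directly: $(T^{-1/2}\Rp T^{1/2})^* \supset T^{1/2}\Rp^* T^{-1/2}$, and the latter is bounded by exactly the same computation as above applied to $\Rp^*$ (whose explicit form is \eqref{eq:rpm*}, with $\Ran(\Rp^*)\subset D(T)$ again by Lemma \ref{ranrpm} and $\Dmp(T^2-E_0^2)^{-1}$ bounded). Hence $T^{-1/2}\Rp T^{1/2}$, being densely defined with a bounded adjoint, is bounded. Combining the two estimates, $U$ and $V$ are bounded on their common core $D(T^{-1/2})\cap D(T^{1/2})$ and extend to bounded operators on $\msH$.

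The main obstacle is the bookkeeping in the middle step: one must carefully track how the factors $T^{\pm 1/2}$ — one acting through the spectral integral $d\inner<h,E(\mu)g>$ and one through $d\inner<T^{1/2}g,E(\mu')f>$ in the double-integral representation of $\Rp$ — get converted, via the Hilbert-transform identities \eqref{eq:hil} and \eqref{eq:muhil} of Lemma \ref{ranrpm}, into Hilbert transforms of the functions $\psi^{\pm}_{T^{a}g,\,T^{b}f}$ with $a,b\in\{-\tfrac12,\tfrac12\}$; one needs to check in each of the four cross-terms that the resulting function lies in $\Ltr$ with norm controlled by $\|f\|$, which is precisely where Assumption \ref{ass} (3) (through Remark \ref{remass}) is used. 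The rest is the same Hilbert-transform isometry argument already carried out for Lemma \ref{Rbdd}.
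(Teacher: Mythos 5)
Your proposal is correct and follows essentially the paper's own argument: you reduce the claim to boundedness of $T^{-1/2}\Rp T^{1/2}$ and $T^{1/2}\Rp T^{-1/2}$ and estimate these through the explicit representation \eqref{eq:rpm} combined with \eqref{eq:hil}, \eqref{eq:muhil}, Assumption \ref{ass} (3) and Lemma \ref{Dpm>0}, exactly as in the proof of Lemma \ref{Rbdd}, which is what the paper does. The only deviation is that you handle $T^{-1/2}\Rp T^{1/2}$ by passing to the adjoint $T^{1/2}\Rp^{*}T^{-1/2}$ instead of estimating it directly via \eqref{eq:t-1/2rpmt1/2}; this is a harmless variant, since the same Hilbert-transform computation applies to $\Rp^{*}$ (as the paper itself notes immediately after the lemma).
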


\begin{proof} 
By \eqref{eq:rpm} and Lemma \ref{ranrpm} we have 
\begin{align}
\inner<h,T^{-1/2}\Rpm T^{1/2}f>
&=\dfrac{\la\pi}{2}\int_{[E_0,\infty)}
\left(H\phi_{T^{1/2}g,f}^{\pm}\right)(\mu^2)
\mp i\dfrac{\psi_{g,f}(\mu)}{\Dpm(\mu^2-E_0^2)}\ d\inner<h,E(\mu)g>,\label{eq:t-1/2rpmt1/2}\\
\inner<h,T^{1/2}\Rpm T^{-1/2}f>
&=\dfrac{\la\pi}{2}\int_{[E_0,\infty)}
\left(H\phi_{T^{-1/2}g,f}^{\pm}\right)(\mu^2)\mu \mp i 
\dfrac{\psi_{g,f}(\mu)}{\Dpm(\mu^2-E_0^2)}\ d\inner<h,E(\mu)g>.\label{eq:t1/2rpmt-1/2}
\end{align}
By Assumption \ref{ass} (3), \eqref{eq:hil}, \eqref{eq:muhil} and a property of Hilbert transform,  
we can show that 
\begin{equation*}
\|T^{-1/2}\Rpm T^{1/2}f\|,\|T^{1/2}\Rpm T^{-1/2}f\|\leq\frac{|\la|\pi(C_g+c_g)}{2\delta}\|f\|,
\end{equation*}
where $C_g:=(\sup_{E_0<x}x^{-1}\psi_g(x))^{1/2}(\sup_{E_0<x}x\psi_g(x))^{1/2}$.  
Hence the operators $T^{-1/2}\Rpm T^{1/2}$ and $T^{1/2}\Rpm T^{-1/2}$ are bounded.
\end{proof}

In the same way as in the proof of Lemma \ref{uvbdd}, we see that $T^{-1/2}\Rpm^*T^{1/2}$ and  $T^{1/2}\Rpm^*T^{-1/2}$ are bounded on each domain $D(T^{1/2})$ and $D(T^{-1/2})$. 
In what follows, we write the bounded extension of $U$ and $V$ by the same symbol respectively. Then 
\begin{equation*}
U^*=\frac{1}{2}(\overline{T^{-1/2}\Omp^*T^{1/2}}+\overline{T^{1/2}\Omp^*T^{-1/2}}).
\end{equation*}

\begin{lemma}
\label{uvli}
The operators $U^{\sharp}$ and $V^{\sharp}$ leave $D(T^{-1/2})\ ($resp. $D(T^{1/2}), D(T))$ invariant.
\end{lemma}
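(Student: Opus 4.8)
The strategy is to reduce each of the three claims (invariance of $D(T^{-1/2})$, of $D(T^{1/2})$, and of $D(T)$) to the corresponding invariance statement for the operators $T^{-1/2}\Rpm T^{1/2}$ and $T^{1/2}\Rpm T^{-1/2}$ (and their adjoints), since by definition $U$ and $V$ are half-sums/differences of these. So it suffices to show: for each $h\in D(T^{-1/2})$ (resp.\ $D(T^{1/2})$, resp.\ $D(T)$), the vectors $T^{-1/2}\Rpm T^{1/2}h$ and $T^{1/2}\Rpm T^{-1/2}h$ lie in the same domain. The key input is the integral representations \eqref{eq:t-1/2rpmt1/2} and \eqref{eq:t1/2rpmt-1/2} established in the proof of Lemma \ref{uvbdd}, together with Lemma \ref{ranrpm} on the ranges of $\Rpm^{\sharp}$ and the identities \eqref{eq:hil}, \eqref{eq:muhil} expressing the relevant Hilbert transforms. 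Recall that $g\in D(F(T))$ for every $F\in L^2(\R)$ (Remark \ref{remass}), so the only issue is producing the right $L^2$ bounds after multiplication by powers of $\mu$.

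First I would treat invariance of $D(T^{-1/2})$. Applying $T^{-1/2}$ formally to \eqref{eq:t-1/2rpmt1/2}, i.e.\ testing against $h$ with $\mu^{-1/2}$ inserted, one must check that the resulting integrand is still in $\Ltr$ against $d\inner<h,E(\mu)g>$; here one uses that $(H\phi^{\pm}_{T^{1/2}g,f})(\mu^2)$ and $\psi_{g,f}(\mu)\Dpm(\mu^2-E_0^2)^{-1}$ are already bounded and $L^2$-controlled by Assumption \ref{ass} (3), and that $g\in D(T^{-1/2})$ absorbs the extra $\mu^{-1/2}$. For $D(T^{1/2})$ one inserts $\mu^{1/2}$ and uses \eqref{eq:muhil}, which converts multiplication by $\mu$ into differences of Hilbert transforms of $\psi_{T^{1/2}g,f}^{\pm}$ — these are again in $\Ltr$ by the isometry of the Hilbert transform on $L^2$ and the bounds in Remark \ref{remass}. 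For $D(T)$ one iterates: by Lemma \ref{ranrpm}, $\Rpm^{\sharp}$ already maps $D(T)$ into $D(T^2)$, so multiplication by $T^{-1/2}$ lands in $D(T^{3/2})\subset D(T^{1/2})$ and multiplication by $T^{1/2}$ lands in $D(T^{1})$, and the half-sum/difference that defines $U,V$ stays in $D(T)$. The same argument applies verbatim to $U^*$ and $V^*$ using the representation \eqref{eq:rpm*} for $\Rpm^*$ and the fact (noted after the proof of Lemma \ref{uvbdd}) that $T^{-1/2}\Rpm^*T^{1/2}$ and $T^{1/2}\Rpm^*T^{-1/2}$ admit bounded extensions on $D(T^{1/2})$ and $D(T^{-1/2})$ respectively.

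The main obstacle is bookkeeping: one must verify in each case that inserting the appropriate power of $\mu$ into the spectral integrals of \eqref{eq:t-1/2rpmt1/2}–\eqref{eq:t1/2rpmt-1/2} keeps every term in $\Ltr$, which requires juggling the Hilbert-transform identities \eqref{eq:hil} and \eqref{eq:muhil} (and, for $D(T)$, the second-order analogue appearing in the proof of Lemma \ref{ranrpm}) against the boundedness and differentiability hypotheses in Assumption \ref{ass} (3)–(4). Since $T$ is purely absolutely continuous there are no point-mass complications, and the needed bounds are of exactly the same flavor as those already carried out in Lemmas \ref{Rbdd}, \ref{ranrpm} and \ref{uvbdd}; hence the proof is essentially a routine repetition of those estimates with one extra factor of $T^{\pm1/2}$, rather than anything genuinely new.
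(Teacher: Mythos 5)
Your proposal takes essentially the same route as the paper: the paper's proof just writes $U^{\sharp}=I+\tfrac{1}{2}\left(\overline{T^{-1/2}\Rp^{\sharp}T^{1/2}}+\overline{T^{1/2}\Rp^{\sharp}T^{-1/2}}\right)$ (and similarly reduces $V^{\sharp}$ to the same two operators) and then invokes Lemma \ref{ranrpm}, which is exactly your reduction to the $\Rp$-parts combined with the spectral-integral and Hilbert-transform estimates of \eqref{eq:t-1/2rpmt1/2}, \eqref{eq:t1/2rpmt-1/2}, \eqref{eq:hil}, \eqref{eq:muhil} that underlie that lemma. Your more explicit bookkeeping (inserting powers of $\mu$ and checking $L^2$ bounds, with the closure/extension handled by the boundedness of $f\mapsto\psi_{g,f}$) is just a spelled-out version of what the paper dismisses as ``one can easily see,'' so the proposal is correct and matches the paper's argument.
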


\begin{proof}
By applying Lemma \ref{ranrpm} and using the equations 
\begin{equation*}
U^{\sharp}=I+\frac{1}{2}
\left(\overline{T^{-1/2}\Rp^{\sharp}T^{1/2}}+\overline{T^{1/2}\Rp^{\sharp}T^{-1/2}}\right),
\end{equation*}
one can easily see that the assertion for $U^{\sharp}$ is true. 
Similarly one can prove the statement for $V^{\sharp}$.
\end{proof}

\begin{lemma}
\label{omf(T)om}
Let $F(x)=x^{\pm1/2},x^{\pm1}, a.e.\ x\in (0,\infty)$.
Then 
\begin{equation}
\label{eq:omf(T)om}
\Omp F(T)\Omp^*=(\Omp)_J F(T)(\Omp^*)_J\quad \text{on}\ D(F(T)).
\end{equation}
\end{lemma}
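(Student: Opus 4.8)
The plan is to reduce \eqref{eq:omf(T)om} to the operator identity $\Omp F(T)\Omp^*=\Omm F(T)\Omm^*$ on $D(F(T))$, and then prove the latter by a direct computation built on the structural lemmas already established. For the reduction: since $T$ is injective and non-negative (Remark \ref{remass}), $T^{-1/2}$ and $T^{-1}$ are bona fide self-adjoint operators, and each of $F(x)=x^{\pm1/2},x^{\pm1}$ is real-valued, so $F^*(T)=F(T)$ and hence $F(T)_J=F(T)$ by Lemma \ref{f(T)_J=f(T)}(2). Combining the already-proved relation $(\Rpm)_J=\Rmp$ with the elementary identity $(A_J)^*=(A^*)_J$ for a conjugation $J$ gives $(\Omp)_J=I+(\Rp)_J=\Omm$ and $(\Omp^*)_J=(\Omp_J)^*=\Omm^*$, so the right-hand side of \eqref{eq:omf(T)om} equals $\Omm F(T)\Omm^*$. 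Moreover $\Ran\Rpm^{\sharp}\subset D(T^{-1})\cap D(T)\subset D(F(T))$ by Lemma \ref{ranrpm}, so both $\Omp F(T)\Omp^*$ and $\Omm F(T)\Omm^*$ have domain exactly $D(F(T))$; thus it suffices to prove the two operators agree there.

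Next I would collect the algebraic facts. From \eqref{eq:dpm} one has $\Dm=\overline{\Dp}$, whence $\Dm(T^2-E_0^2)=\Dp(T^2-E_0^2)^*$ and $|\gamma(\mu)|\equiv1$, so $\gamma$ is unitary with $\gamma^*=\gamma^{-1}$. A one-line functional-calculus computation from the definition of $A_{\pm}$ gives $\gamma A_+=A_-$, equivalently $A_+=\gamma^{-1}A_-$; Lemma \ref{ARpm} together with $(A_-)^*=-A_+$ gives $A_-A_+=A_--A_+$; and Lemma \ref{AF(T)} shows that $A_{\pm}$ and $\gamma^{\pm1}$ (bounded Borel functions of $T$) commute with $F(T)$ on $D(F(T))$, and with each other. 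The one substantive point is the cancellation identity $\Rp[(I-\gamma)+A_-]=0$: starting from $\Rm=\Rp\gamma+A_-$ (just proved) and its adjoint $\Rm^*=\gamma^{-1}\Rp^*-A_+$, I compute $\Rm\gamma^{-1}=\Rp+A_+$; taking adjoints in $A_-\Rm^*=(\gamma-I)\Rm^*$ (Lemma \ref{ARpm}) gives $\Rm A_+=\Rm-\Rp-A_+$; and then $\Rp A_-=\Rp\gamma A_+=(\Rm-A_-)A_+=\Rm A_+-A_-A_+=\Rm-\Rp-A_-$, which is precisely $\Rp[(I-\gamma)+A_-]=0$.

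For the final step, write $\Omp=I+\Rp$, $\Omm=I+\Rm$, substitute $\Rm=\Rp\gamma+A_-$ and $\Rm^*=\gamma^{-1}\Rp^*-A_+$, and expand $\Omp F(T)\Omp^*-\Omm F(T)\Omm^*$. Because $\gamma$ commutes with $F(T)$, the term $\Rp\gamma F(T)\gamma^{-1}\Rp^*$ equals $\Rp F(T)\Rp^*$, so the principal quadratic parts cancel, and the remaining difference splits into three groups: (a) a term equal to $\Rp[(I-\gamma)+A_-]F(T)$, which vanishes by the cancellation identity; (b) the pair $F(T)(I-\gamma^{-1})\Rp^*-A_-F(T)\gamma^{-1}\Rp^*$, which vanishes because $A_-F(T)\gamma^{-1}\Rp^*=F(T)\gamma^{-1}A_-\Rp^*=F(T)\gamma^{-1}(\gamma-I)\Rp^*=F(T)(I-\gamma^{-1})\Rp^*$; and (c) a rank-one group $-A_-F(T)+F(T)A_++A_-F(T)A_+$, which vanishes once $A_-A_+=A_--A_+$ and the commutation of $A_{\pm}$ with $F(T)$ are used. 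I expect the main obstacle to be essentially bookkeeping --- keeping track of which operators are bounded and checking that every vector produced along the way stays in $D(F(T))$, so that Lemma \ref{AF(T)} may be applied legitimately --- while the genuinely non-elementary input is the identity $\Rp[(I-\gamma)+A_-]=0$, which is where the decomposition $\Rm=\Rp\gamma+A_-$ and Lemma \ref{ARpm} do the real work.
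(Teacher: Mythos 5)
Your proposal is correct and takes essentially the same route as the paper: both arguments rest on rewriting $(\Omp)_J=I+\Rm=I+\Rp\ga+A_{-}$ (and its adjoint), expanding against $F(T)$, and cancelling via Lemma \ref{AF(T)} and Lemma \ref{ARpm}. Your ``cancellation identity'' $\Rp[(I-\ga)+A_{-}]=0$ is just a repackaging of the adjoint of $A_{-}\Rp^*=(\ga-I)\Rp^*$ together with $\ga A_{+}=A_{-}$ and $|\ga|=1$, which is exactly what the paper's terser four-group computation uses implicitly, so the difference is only bookkeeping.
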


\begin{proof}
By Lemma \ref{ranrpm}, the domain of each side of \eqref{eq:omf(T)om} includes $D(F(T))$.  
By Lemmas \ref{AF(T)} and \ref{ARpm}, we have 
\begin{align*}
(\Omp)_J F(T)(\Omp^*)_J
=&\ \Rp F(T)\Rp^*+\Rp\{(A_{-})^*+I\} F(T)\ga+F(T)\ga^*(A_{-}+I)\Rp^*\\
&+F(T)\{A_{-}(A_{-})^*+A_{-}+(A_{-})^*+I\}\\
=&\ \Rp F(T)\Rp^*+\Rp F(T)+F(T)\Rp^*+F(T)\\
=&\ \Omp F(T)\Omp^*.
\end{align*}
\end{proof}

\section{Commutation relations}
In this section, we prove that the pair $(U,V)$ satisfies the condition \eqref{eq:uvcondition}, $V$ is Hilbert-Schmidt and 
\begin{equation*}
B(f):=A(Uf)+A(JVf)^*,\ f\in\msH
\end{equation*}
satisfies some commutation relations with $H(\lambda)$. 
We denote the closure  of $B(f)$ by the same symbol. 
By Lemma \ref{uvli}, we have $D(\SQ(T)^{1/2})\subset D(B(f))\cap D(B(f)^*)$ for all $f\in D(T^{-1/2})$.
\begin{thm}
\label{CR}
The following commutation relations hold:
\begin{enumerate}
\item
For any $f\in D(T)$ and $\psi\in\fbfin(D(T))$,
\begin{equation}
\label{eq:cronfbfin}
[H(\la),B(f)]\psi=-B(Tf)\psi.
\end{equation}
\item
For any $f\in D(T^{-1/2})\cap D(T)$ and $\psi,\phi\in D(\SQ(T))$,
\begin{equation}
\label{eq:cronsqt}
\inner<H(\la)\phi,B(f)\psi>-\inner<B(f)^*\phi,H(\la)\psi>
=-\inner<\phi,B(Tf)\psi>.
\end{equation}
\item
For any $f\in D(T^{-1/2})\cap D(T)$, $B(f)$ maps $D(\SQ(T)^{3/2})$ into $D(\SQ(T))$ and for any $\psi\in D(\SQ(T)^{3/2})$,
\begin{equation}
\label{eq:cronsqt3/2}
[H(\la),B(f)]\psi=-B(Tf)\psi.
\end{equation}
\end{enumerate}
\end{thm}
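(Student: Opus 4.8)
The plan is to establish the three commutation relations in increasing order of domain generality, starting from the algebraic identity on the finite-particle space and then bootstrapping to the larger operator domains by approximation and the Nelson-type estimates already available.

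First I would prove (1) by a direct computation on $\fbfin(D(T))$. On this dense core all of $\SQ(T)$, $\Segal(g)^2$, $A(Uf)$ and $A(JVf)^*$ act as honest (unbounded) operators, and the CCR \eqref{eq:ccr} together with the standard commutators of $\SQ(T)$ with annihilation/creation operators (the relations labelled \eqref{eq:crsq}, \eqref{eq:ani}, \eqref{eq:cre} in the paper) reduce everything to computing $[\SQ(T),B(f)]$ and $[\Segal(g)^2,B(f)]$. Using $\SQ(T)A(h)^* = A(h)^*\SQ(T) + A(Th)^*$ and $\SQ(T)A(h) = A(h)\SQ(T) - A(Th)$, one gets $[\SQ(T),B(f)] = [\SQ(T),A(Uf)]+[\SQ(T),A(JVf)^*] = -A(TUf) + A(JTVf)^* = -\big(A(TUf) + A(J(TVf))^*\big)$; here one needs the key algebraic fact that $TU = UT$ and $TV=VT$ in the appropriate sense on $D(T)$-vectors, which follows from Lemma~\ref{uvli} and the functional-calculus identities underlying the definitions of $U,V$ in terms of $T^{\pm1/2}\Omp T^{\mp1/2}$, and from $TJ=JT$ (Assumption \ref{ass}(2)). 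The $\Segal(g)^2$ contribution is the delicate part: $[\Segal(g)^2, B(f)] = \Segal(g)[\Segal(g),B(f)] + [\Segal(g),B(f)]\Segal(g)$, and $[\Segal(g),B(f)]$ is a scalar by CCR, equal to $\tfrac{1}{\sqrt2}\big(\inner<g,Uf> - \inner<JVf,g>\big)$ (using $Jg=g$). For \eqref{eq:cronfbfin} to hold, this scalar must combine with the $[\SQ(T),B(f)]$ term so that the total equals $-B(Tf)$; this is precisely where the conditions \eqref{eq:uvcondition} on $(U,V)$ enter (established in the preceding section), since $B(Tf) = A(UTf)+A(JVTf)^*$ and one must match $A$-parts and $A^*$-parts separately. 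I expect this matching of the scalar term against the definition of $U$ and $V$ via the resolvent operators $\Rpm$ and the function $D$ to be the main obstacle: one has to show an operator identity of the form $\tfrac{\la}{2}\cdot(\text{something built from }g) = -(TU - UT \text{ correction})$, which is really the statement that $U,V$ were correctly constructed to diagonalize $H(\la)$.

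Next, for (2), I would extend \eqref{eq:cronfbfin} from the commutator on $\fbfin(D(T))$ to the sesquilinear identity \eqref{eq:cronsqt} on $D(\SQ(T))$. The idea is that both sides of \eqref{eq:cronsqt} are continuous in $\phi,\psi$ with respect to the graph norm of $\SQ(T)$: indeed $\Segal(g)^2$ is $\SQ(T)$-bounded (this is used repeatedly in the proof of Theorem~\ref{sa}), and $B(f), B(f)^*, B(Tf)$ are all $\SQ(T)^{1/2}$-bounded by Lemma~\ref{uvli} and the standard $N^{1/2}$-bounds on field operators. Since $\fbfin(D(T))$ is a core for $\SQ(T)$, one approximates $\phi,\psi\in D(\SQ(T))$ by sequences in $\fbfin(D(T))$ in graph norm, writes \eqref{eq:cronfbfin} as the sesquilinear form identity $\inner<H(\la)\phi_n,B(f)\psi_n> - \inner<B(f)^*\phi_n, H(\la)\psi_n> = -\inner<\phi_n,B(Tf)\psi_n>$ (valid since on $\fbfin(D(T))$ the adjoint $B(f)^*$ agrees with $A(Uf)^*+A(JVf)$ and $H(\la)$ is symmetric), and passes to the limit. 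The slight subtlety is checking that $f\in D(T^{-1/2})\cap D(T)$ suffices for all the field-operator bounds — $f\in D(T^{-1/2})$ gives $Uf,Vf\in\msH$ so $B(f)$ is $\SQ(T)^{1/2}$-bounded, while $f\in D(T)$ makes $Tf$ a legitimate test vector so $B(Tf)$ is defined, and $f\in D(T^{-1/2})\cap D(T)$ together with Lemma~\ref{uvli} gives $Uf,Vf\in D(T^{-1/2})$ hence $B(f)$ also maps into a controllable space.

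Finally, for (3), I would upgrade the sesquilinear identity to a genuine operator identity on $D(\SQ(T)^{3/2})$. The point is that for $\psi\in D(\SQ(T)^{3/2})$ one has $B(f)\psi \in D(\SQ(T))$: heuristically $B(f)$ shifts particle number by one and $B(f)$ maps $D(N^{3/2})$ into $D(N)$, and combined with the operator-boundedness relations from Lemma~\ref{uvli} (which let one trade $T^{\pm1/2}$ factors for the field vectors) one checks $\|\SQ(T)B(f)\psi\| \lesssim \|\SQ(T)^{3/2}\psi\| + \|\SQ(T)\psi\| + \|\psi\|$. Once $B(f)\psi\in D(\SQ(T))$ and $\psi\in D(\SQ(T))$, the sesquilinear identity \eqref{eq:cronsqt} applied with $\phi$ ranging over $D(\SQ(T))$ — actually one should use that $H(\la)$ is self-adjoint (Theorem~\ref{sa}) so that $\inner<H(\la)\phi,B(f)\psi> = \inner<\phi,H(\la)B(f)\psi>$ once $B(f)\psi\in D(H(\la)) = D(\SQ(T))$ — forces $H(\la)B(f)\psi - B(f)H(\la)\psi = -B(Tf)\psi$ as vectors in $\msF_{\rm b}(\msH)$, which is exactly \eqref{eq:cronsqt3/2}. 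The main work here is the domain bookkeeping: verifying that $B(f)$ genuinely maps $D(\SQ(T)^{3/2})$ into $D(\SQ(T))$, and that $B(f)^*$ likewise behaves so that the pairing in \eqref{eq:cronsqt} can be reorganized. I do not expect new ideas beyond the estimates already in hand; the bottleneck remains the algebraic identity in step (1), since (2) and (3) are essentially soft continuity-and-domain arguments built on top of it.
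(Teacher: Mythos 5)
There is a genuine gap in your step (1), and it sits exactly at the point you yourself call the main obstacle. You assert that $TU=UT$ and $TV=VT$ ``in the appropriate sense on $D(T)$-vectors'' and that this follows from Lemma \ref{uvli} and the functional calculus behind the definitions of $U,V$. This is false: Lemma \ref{uvli} only gives domain invariance, and in fact $U$ and $T$ do \emph{not} commute. The identity the argument actually needs is the paper's Lemma \ref{cr}, namely
\begin{equation*}
[U,T]f=(VT+TV)f=\dfrac{\la}{2}\inner<\Dm(T^2-E_0^2)^{-1}g, f>g,
\qquad (V^*J-U^*)g=-\Dm(T^2-E_0^2)^{-1}g ,
\end{equation*}
i.e.\ the commutator $[U,T]$ and the \emph{anti}commutator $VT+TV$ are both equal to a specific rank-one operator built from $\Dm(T^2-E_0^2)^{-1}g$. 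The sign structure already shows why commutativity cannot be the right mechanism: since $[\SQ(T),A(h)^*]=A(Th)^*$ while $-B(Tf)$ contains $-A(JVTf)^*$, the creation parts can only match if $TV+VT$ (not $TV-VT$) reduces to something proportional to $g$ that is then cancelled by the scalar $\Segal(g)$ term coming from $[\Segal(g)^2,B(f)]$, via $(V^*J-U^*)g=-\Dm(T^2-E_0^2)^{-1}g$. Moreover these identities do not follow from the relations \eqref{eq:uvcondition}, which you invoke: \eqref{eq:uvcondition} holds for any CCR-preserving, implementable Bogoliubov pair and carries no information about $H(\la)$; what is needed is the specific link between $U,V$ and $(T,g)$, and in the paper this is proved by a separate $\vep\downarrow0$ limiting computation with the regularized resolvent operators $\Rpm^{(\vep)}$ and the boundary values $\Dpm$. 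Since you neither state the correct identity nor indicate how to prove it (and partly mis-state it as commutativity), step (1) is not established.

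Your steps (2) and (3) are fine and coincide with the paper's route: (2) is the graph-norm approximation of $\phi,\psi\in D(\SQ(T))$ by vectors in the core $\fbfin(D(T))$, using the $\SQ(T)$-boundedness of $\Segal(g)^2$ and the $\SQ(T)^{1/2}$-bounds \eqref{eq:ani}, \eqref{eq:cre} together with Lemma \ref{uvli}; and (3) is the observation that $B(f)$ maps $D(\SQ(T)^{3/2})$ into $D(\SQ(T))$, after which the sesquilinear identity \eqref{eq:cronsqt} and density upgrade to the operator identity \eqref{eq:cronsqt3/2}. These parts need no change once (1) is repaired by proving Lemma \ref{cr}.
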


The both sides of \eqref{eq:cronfbfin},\eqref{eq:cronsqt} and \eqref{eq:cronsqt3/2} have meaning by Lemma \ref{uvli}. 
To prove this theorem, we prove the following lemma:

\begin{lemma}
\label{cr}
For any $f\in D(T)$, the following equations hold:
\begin{align}
[U,T]f=(VT+TV)f&=\dfrac{\la}{2}\inner<\Dm(T^2-E_0^2)^{-1}g, f>g,\label{eq:cruv1}\\
(V^*J-U^*)g&=-\Dm(T^2-E_0^2)^{-1}g.\label{eq:cruv2}
\end{align}
\end{lemma}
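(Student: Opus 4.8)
The plan is to verify the two operator identities \eqref{eq:cruv1} and \eqref{eq:cruv2} by direct computation, working against an arbitrary vector $h\in\msH$ in the inner product and using the explicit integral representations of $\Rpm^{\sharp}$ built up in Section 3. First I would unpack the definitions: since $U=I+\tfrac12(\overline{T^{-1/2}\Rp T^{1/2}}+\overline{T^{1/2}\Rp T^{-1/2}})$ and $V=\tfrac12(\overline{T^{-1/2}\Rp T^{1/2}}-\overline{T^{1/2}\Rp T^{-1/2}})$, the commutator $[U,T]f$ and the sum $(VT+TV)f$ both reduce, after the leading $I$-term cancels in the commutator, to expressions in $T^{\pm1/2}\Rp T^{\mp1/2}$ applied to $Tf$ together with their "commuted" counterparts; the key algebraic fact is that $[T^{-1/2}\Rp T^{1/2},T]=T^{-1/2}\Rp T^{1/2}T-TT^{-1/2}\Rp T^{1/2}$, and similarly for the other summand, so the two sides of \eqref{eq:cruv1} are really statements about how $\Rp$ fails to commute with $T^2$. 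Using the functional-calculus formula $\la\inner<T^{1/2}g,R_z(T^2)T^{1/2}g>=D(z-E_0^2)-1$ together with the resolvent identity, I expect the non-commutativity of $\Rp$ with $T^2$ to collapse (after the $\vep\dra0$ limit and an application of $\slim_{\vep\dra0}E_\pm^{(\vep)}=I$, exactly as in the proof of Lemma \ref{omom=i}) to the rank-one operator $f\mapsto \tfrac{\la}{2}\inner<\Dm(T^2-E_0^2)^{-1}g,f>g$ on the right-hand side.

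For \eqref{eq:cruv2}, I would compute $(V^*J-U^*)g$ again by pairing with $h\in\msH$. Since $Jg=g$ by Assumption \ref{ass} (2), and using Lemma \ref{f(T)_J=f(T)} to handle $(\Rp^*)_J=\Rm^*$ and $(\Rp)_J=\Rm$, the combination $V^*J-U^*$ applied to $g$ should, after expanding $U^*=I+\tfrac12(\overline{T^{-1/2}\Rp^*T^{1/2}}+\overline{T^{1/2}\Rp^*T^{-1/2}})$ and the analogous $V^*$, leave a leading $-g$ term plus contributions from $T^{\pm1/2}\Rm^*T^{\mp1/2}g$. The explicit formula \eqref{eq:rpm*} for $\Rpm^{(\vep)*}$, evaluated on $g$ (which lies in $D(T^{\pm1/2})$ and in $D(F(T))$ for the relevant $F$ by Remark \ref{remass}), together with the identity $\la\inner<T^{1/2}g,R_z(T^2)T^{1/2}g>=D(z-E_0^2)-1$, should make the $T^2$-spectral integral telescope so that everything beyond $-\Dm(T^2-E_0^2)^{-1}g$ cancels. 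The factor $\Dm(T^2-E_0^2)^{-1}$ appears naturally because it is already sitting inside the definition of $\Rpm^*$.

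The main obstacle, as usual in this paper, will be justifying the interchange of limits and integrals: the operators $\Rpm$ are defined only as $\vep\dra0$ limits of Bochner integrals $\Rpm^{(\vep)}$, so each identity has to be proved first at the level of the $\vep$-regularized operators $\Rpm^{(\vep)},\Rpm^{(\vep)*}$ — where one genuinely has absolutely convergent double Lebesgue–Stieltjes integrals and can apply Fubini and the resolvent identity freely — and only then passed to the limit. The delicate point is that on the right-hand side of \eqref{eq:cruv1} the operator $\Dm(T^2-E_0^2)^{-1}$ is bounded (by Lemma \ref{Dpm>0}), so the limit is clean, but one must be careful that the error terms produced by the resolvent identity are exactly of the form $\inner<E_\pm^{(\vep)}h,\Rpm^{(k\vep)}f>$ handled in Lemma \ref{omom=i}, so that their $\vep\dra0$ limits are controlled by the already-established strong convergence $E_\pm^{(\vep)}\to I$ and the boundedness of $\Rpm$. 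Once the regularized identities are in hand with all error terms in this controlled form, the limit $\vep\dra0$ is routine and both \eqref{eq:cruv1} and \eqref{eq:cruv2} follow.
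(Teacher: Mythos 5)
Your overall framework (prove the identities at the level of the regularized operators $\Rpm^{(\vep)}$ via their spectral-integral representations, then let $\vep\dra0$) is the paper's, but for \eqref{eq:cruv1} the central cancellation is only asserted, and the tools you name to produce it are the wrong ones. The first resolvent formula, the identity $\la\inner<T^{1/2}g,R_z(T^2)T^{1/2}g>=D(z-E_0^2)-1$, and the operators $E^{(\vep)}_{\pm}$ with $\slim_{\vep\dra0}E^{(\vep)}_{\pm}=I$ all belong to the computation of $\Rpm^{*}\Rpm$ and $\Rpm\Rpm^{*}$ in Lemma \ref{omom=i}, where a \emph{product of two resolvents at different spectral parameters} occurs. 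In $[U,T]f=(VT+TV)f=\tfrac12\,T^{-1/2}(\Rp T^2-T^2\Rp)T^{-1/2}f$ the operator $\Rp$ enters \emph{linearly}, so no such product appears, no $E^{(\vep)}_{\pm}$-type error terms arise, and your claim that "the error terms produced by the resolvent identity are exactly of the form $\inner<E_\pm^{(\vep)}h,\Rpm^{(k\vep)}f>$" is not what happens. The actual mechanism is elementary: writing the difference of quadratic forms for $f,h\in D(T^{-1/2})\cap D(T^{3/2})$ one obtains the kernel $\dfrac{\mu'^2-\mu^2}{(\mu'^2-\mu^2\pm i\vep)\,\Dpm(\mu'^2-E_0^2)}$, splits $\dfrac{\mu'^2-\mu^2}{\mu'^2-\mu^2\pm i\vep}=1\mp\dfrac{i\vep}{\mu'^2-\mu^2\pm i\vep}$, and observes that the "$1$" part factorizes immediately into $\la\inner<h,g>\inner<\Dm(T^2-E_0^2)^{-1}g,f>$ (equivalently, at the operator level one only needs $T^2R_z(T^2)=I+zR_z(T^2)$), while the remainder is $\mp i\vep\inner<T^{-1/2}h,\Rp^{(\vep)}T^{-1/2}f>\ra0$ because $\Rp^{(\vep)}T^{-1/2}f$ converges. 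The defining integral of $D$ is not used at all in \eqref{eq:cruv1}. You also need the domain step you skipped: "pairing against arbitrary $h\in\msH$" is not available since $\inner<Th,\cdot>$ occurs; the identity is first proved as a form identity on $D(T^{-1/2})\cap D(T^{3/2})$ and then extended to $f\in D(T)$ because that subspace is a core of $T$.

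For \eqref{eq:cruv2} your sketch is essentially the paper's: since $Jg=g$, the algebra collapses to $(V^*J-U^*)g=-T^{-1/2}\Omp^*T^{1/2}g$, and the inner $\mu'$-integral telescopes through $\la\int_{[E_0,\infty)}\frac{\mu'}{\mu'^2-\mu^2+i\vep}\,d\|E(\mu')g\|^2\ra \Dm(\mu^2-E_0^2)-1$, leaving $-\Dm(T^2-E_0^2)^{-1}g$. One caution: no $J$-conjugation of $\Rp^{*}$ is needed or wanted here — $J$ acts on $g$ directly, and the operator that survives is $\Rp^{*}$, whose representation \eqref{eq:rpm*} already carries $\Dm(T^2-E_0^2)^{-1}$. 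Routing the computation through $(\Rp^{*})_J=\Rm^{*}$, as you propose, invites a $\Dp$/$\Dm$ mix-up, since by Lemma \ref{f(T)_J=f(T)} conjugation by $J$ exchanges $\Dm(T^2-E_0^2)^{-1}$ and $\Dp(T^2-E_0^2)^{-1}$.
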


\begin{proof}
For any $f,h\in D(T^{-1/2})\cap D(T^{3/2})$, we obtain 
\begin{equation*}
\inner<h,[U,T]f>
=\dfrac{1}{2}\left(\inner<T^{1/2}\Rp^*T^{-1/2}h,Tf>-\inner<Th,T^{1/2}\Rp T^{-1/2}f>\right).
\end{equation*}
Then, for each $\vep>0$, we have  
\begin{align*}
&\inner<T^{1/2}\Rpm^{(\vep)*}T^{-1/2}h,Tf>-\inner<Th,T^{1/2}\Rpm^{(\vep)}T^{-1/2}f>\\
=&\ \la\int_{[E_0,\infty)}\int_{[E_0,\infty)}
\dfrac{\mu'^2-\mu^2}{(\mu'^2-\mu^2\pm i\vep)\Dpm(\mu'^2-E_0^2)}
d\inner<h,E(\mu)g>d\inner<g,E(\mu')f>\\
=&\ \la\int_{[E_0,\infty)}\int_{[E_0,\infty)}
\dfrac{1}{\Dpm(\mu'^2-E_0^2)}
d\inner<h,E(\mu)g>d\inner<E(\mu')g,f>\mp i\vep\inner<T^{-1/2}h,\Rpm^{(\vep)}T^{-1/2}f>.
\end{align*}
Taking the limit $\vep\dra0$, we have 
\begin{equation*}
\inner<T^{1/2}\Rpm^*T^{-1/2}h,Tf>-\inner<Th,T^{1/2}\Rpm T^{-1/2}f>
=\inner<h,\la\inner<\Dmp(T^2-E_0^2)^{-1}g, f>g>.
\end{equation*}
Thus we have 
\begin{equation*}
\inner<h,[U,T]f>=\dfrac{\la}{2}\inner<h,\inner<\Dm(T^2-E_0^2)^{-1}g, f>g>.
\end{equation*}
Since $D(T^{-1/2})\cap D(T^{3/2})$ is a core of $T$, the equation \eqref{eq:cruv1} holds for $f\in D(T)$. To prove \eqref{eq:cruv2}, we note that 
\begin{align*}
(V^*J-U^*)g&=\dfrac{1}{2}(T^{1/2}\Omp^* T^{-1/2}J-T^{-1/2}\Omp^* T^{1/2}J
-T^{1/2}\Omp^*T^{-1/2}-T^{-1/2}\Omp^* T^{1/2})g\\
&=-T^{-1/2}\Omp^* T^{1/2}g,
\end{align*}
where we have used $Jg=g$. Thus, for any $f\in\msH$, we obtain 
\begin{align*}
&\inner<f,(V^*J-U^*)g>\\
=&-\inner<f,g>
-\la\lim_{\vep\dra0}\int_{[E_0,\infty)}
\inner<f,R_{\mu'^2+i\vep}(T^2)D_{-}(T^2-E_0^2)^{-1}g>d\|E(\mu')T^{1/2}g\|^2\\
=&-\inner<f,g>
+\la\lim_{\vep\dra0}\int_{[E_0,\infty)}
\int_{[E_0,\infty)}\dfrac{\mu'}{\mu'^2-\mu^2+ i\vep}d\|E(\mu')g\|^2
\dfrac{1}{\Dm(\mu^2-E_0^2)}\inner<f,E(\mu)g>\\
=&-\inner<f,g>
+\int_{[E_0,\infty)}\dfrac{\Dm(\mu^2-E_0^2)-1}{\Dm(\mu^2-E_0^2)}
d\inner<f,E(\mu)g>\\
=&-\inner<f,\Dm(T^2-E_0^2)^{-1}g>.
\end{align*}
Hence \eqref{eq:cruv2} holds. 
\end{proof}

{\it Proof of Theorem \ref{CR}}.
\begin{enumerate}
\item
By Lemma \ref{uvli}, for any $f\in D(T)$, $B(f)$ leaves $\fbfin(D(T))$ invariant and $H(\la)$ maps $\fbfin(D(T))$ into $\fbfin(\msH)\subset D(B(f))$.
Thus, by using \eqref{eq:ccr}, \eqref{eq:crsq}, we have for any $\psi\in\fbfin(D(T))$,
\begin{equation*}
[H(\la),B(f)]\psi=\left\{-A(TUf)+A(TJVf)^*-\frac{\la}{\sqrt2}\inner<f,(V^*J-U^*)g>\Segal(g)\right\}\psi.
\end{equation*}
Hence by Lemma \ref{cr}, \eqref{eq:cronfbfin} holds. 
\item
By Lemma \ref{uvli} and fundamental properties of the annihilation operators and creation operators, we can see that, for any $f\in D(T^{-1/2})$, $D(\SQ(T)^{1/2})\subset D(B(f))$. 
For any $\psi,\phi\in D(\SQ(T))$, there are sequences 
$\psi_{n},\phi_n\in\fbfin(D(T)), n\in\N$ such that 
$\psi_{n}\ra\psi,\phi_{n}\ra\phi,\SQ(T)\psi_{n}\ra\SQ(T)\psi,\SQ(T)\phi_{n}\ra\SQ(T)\phi$ as $n\ra\infty$, since $\fbfin(D(T))$ is a core of $\SQ(T)$. By (1), we have 
\begin{equation*}
\inner<H(\la)\phi_{n},B(f)\psi_{k}>-\inner<B(f)^*\phi_{n},H(\la)\psi_{k}>
=-\inner<\phi_{n},B(Tf)\psi_{k}>
\end{equation*}
for all $n,k\in\N$ and $f\in D(T^{-1/2})\cap D(T)$. 
By the fundamental inequalities \eqref{eq:ani} and \eqref{eq:cre} and the $\SQ(T)$-boundedness of $\Segal(g)^2$, we can see that $\{B(f)\psi_{n}\}_{n=1}^{\infty}, \{B(f)\phi_{n}\}_{n=1}^{\infty}$, $\{\Segal(g)^2\psi_{n}\}_{n=1}^{\infty}, \{\Segal(g)^2\phi_{n}\}_{n=1}^{\infty}$ and $\{B(Tf)\psi_{n}\}_{n=1}^{\infty}$ converge. 
Hence we obtain \eqref{eq:cronsqt}. 
\item
By Lemma \ref{uvli} and fundamental properties of the annihilation operators and creation operators, we see that, for any $f\in D(T^{-1/2})\cap D(T)$, $B(f)$ maps $D(\SQ(T)^{3/2})$ into $D(\SQ(T))$. Therefore, by \eqref{eq:cronsqt} and the density of $D(\SQ(T))$, we have \eqref{eq:cronsqt3/2}.  $\ \hfill\Box$
\end{enumerate}
\subsection{Relations of $U$ and $V$}

\begin{lemma}
\label{u*u-v*v}
Let $\la\neq\lac$. Then the following equations hold:
\begin{equation}
\label{eq:propercondition}
\left\{
\begin{array}{rl}
U^*U-V^*V&=I,\\
U^*_JV-V^*_JU&=0,\\
UU^*-V_JV^*_J&=I-\theta(\la_c-\la)Q_{+},\\
UV^*-V_JU^*_J&=\theta(\la_c-\la)Q_{-},
\end{array}
\right.
\end{equation}
where 
\begin{equation*}
Q_{\pm}:=\frac{1}{2}\left(\inner<T^{1/2}\Ub,\cdot>T^{-1/2}\Ub\pm
\inner<T^{-1/2}\Ub,\cdot>T^{1/2}\Ub\right)
\end{equation*}
are bounded operators on $\msH$.
\end{lemma}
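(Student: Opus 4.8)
The plan is to reduce everything to the identities already established for $\Ompm$ and to the commutation relations in Lemma \ref{cr}, then translate between the "$\Ompm$-picture" and the "$(U,V)$-picture" via the $4\times 4$ block matrix
\[
M := \begin{pmatrix} U & V_J \\ V & U_J \end{pmatrix}, \qquad
\begin{pmatrix} U \\ V \end{pmatrix}
= \frac{1}{2}\begin{pmatrix} T^{-1/2} & T^{1/2} \\ T^{-1/2} & -T^{1/2} \end{pmatrix}
\begin{pmatrix} \Omp T^{1/2} \\ \Omp T^{-1/2} \end{pmatrix}.
\]
Writing $S := \tfrac12\begin{pmatrix} T^{-1/2} & T^{1/2}\\ T^{-1/2} & -T^{1/2}\end{pmatrix}$ and noting $S^{-1} = \begin{pmatrix} T^{1/2} & T^{1/2}\\ T^{-1/2} & -T^{-1/2}\end{pmatrix}$, one checks $S J_2 S^* = \begin{pmatrix} 0 & I \\ I & 0 \end{pmatrix}$ where $J_2 = \mathrm{diag}(J,-J)$-type pairing; the point is that the required relations \eqref{eq:propercondition} are exactly the statements $M^* \Sigma M = \Sigma$ and $M \Sigma M^* = \Sigma - \theta(\lac-\la)(\text{rank terms})$ for the indefinite form $\Sigma = \mathrm{diag}(I,-I)$, and under the substitution $M = (\text{something})\,\mathrm{diag}(\Omp,\Omp_J)\,(\text{something})$ these collapse to $\Ompm^*\Ompm = I$ and $\Ompm\Ompm^* = I - \theta(\lac-\la)P$ from Lemma \ref{omom=i}, conjugated and sandwiched by the (unbounded but, by Lemma \ref{uvli} and Lemma \ref{ranrpm}, harmlessly cancelling) powers of $T$.

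Concretely I would proceed as follows. First, the two equations not involving $\theta$: $U^*U - V^*V = I$ and $U^*_J V - V^*_J U = 0$. For these I would compute $U^*U - V^*V$ and $U_J^* V - V_J^* U$ directly on the core $D(T^{-1/2})\cap D(T^{3/2})$ using the definitions of $U,V$, the formulas \eqref{eq:t-1/2rpmt1/2}, \eqref{eq:t1/2rpmt-1/2}, the relation $(\Ompm)_J = \Ommp$ (established earlier), and Lemma \ref{omf(T)om} in the form $\Omp T^{\pm1}\Omp^* = \Ommp^{-1}\!$-type cancellations; the cross terms $T^{1/2}\cdots T^{-1/2}$ against $T^{-1/2}\cdots T^{1/2}$ telescope, and $\Omp^*\Omp = I$ (Lemma \ref{omom=i}, valid for all $\la\neq\lac$) kills the remainder, leaving $I$ and $0$ respectively. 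Because all operators in sight are bounded (Lemma \ref{uvbdd}) the identities extend from the core to all of $\msH$.

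Next, the two equations with the $\theta(\lac-\la)$ term: $UU^* - V_J V_J^* = I - \theta(\lac-\la)Q_+$ and $UV^* - V_J U_J^* = \theta(\lac-\la)Q_-$. Here the same block computation applies but one uses $\Omp\Omp^* = I - \theta(\lac-\la)P$ instead of $\Omp^*\Omp = I$. Expanding $UU^* - V_J V_J^*$ in terms of $\tfrac14(T^{-1/2}\Omp T^{1/2} \pm T^{1/2}\Omp T^{-1/2})$ and their adjoints, and using Lemma \ref{omf(T)om} to move $T$-powers past $\Omp,\Omp^*$, the bulk collapses to $I$ via the CCR-type cancellation, and the leftover is precisely $-\theta(\lac-\la)$ times the cross-conjugation of $P = \inner<\Ub,\cdot>\Ub$ by $\tfrac12(T^{-1/2}\cdot + T^{1/2}\cdot)$-type maps, which is exactly $Q_+$; similarly for $Q_-$ with the minus sign in the bracket. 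The boundedness of $Q_\pm$ follows from $\Ub \in D(T^{-1})\cap D(T^2)$, noted just before Lemma \ref{omom=i}, so $T^{\pm1/2}\Ub$ make sense, and $Q_\pm$ are manifestly rank $\le 2$.

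The main obstacle I anticipate is bookkeeping with the unbounded intertwiners $T^{\pm1/2}$: one must be careful that every manipulation of the form "$\Omp$ commutes with $T$ up to the correction in Lemma \ref{cr}" is applied on a genuine common core, that Lemma \ref{omf(T)om}'s identity $\Omp F(T)\Omp^* = \Omp_J F(T)\Omp_J^*$ is invoked with the correct $F$ (namely $F(x)=x,x^{-1},x^{\pm1/2}$) and on $D(F(T))$, and that the final bounded identities are obtained by taking closures only after the algebra is done on the core. Apart from that, the proof is a (lengthy but routine) matching of four scalar-operator identities against the two already proved in Lemma \ref{omom=i}.
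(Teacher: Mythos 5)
Your proposal is correct and follows essentially the same route as the paper: write $U,V=\tfrac12(T^{-1/2}\Omp T^{1/2}\pm T^{1/2}\Omp T^{-1/2})$, reduce the four identities on the core $D(T^{-1/2})\cap D(T^{1/2})$ to $\Omp^*\Omp=I$ and $\Omp\Omp^*=I-\theta(\lac-\la)P$ from Lemma \ref{omom=i}, handle the $J$-conjugated products with Lemmas \ref{f(T)_J=f(T)} and \ref{omf(T)om}, and read off $Q_{\pm}$ as $\tfrac12(T^{-1/2}PT^{1/2}\pm T^{1/2}PT^{-1/2})$ before extending by boundedness. The only step to phrase carefully is $U_J^*V-V_J^*U=0$, where $\Omp^*\Omp=I$ cannot be applied directly to $(\Omp^*)_J\Omp$; as in the paper one left-multiplies by $(\Omp)_J$ (an isometry, hence injective, by \eqref{eq:om*om}) and cancels via Lemma \ref{omf(T)om} — exactly the tools you cite.
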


\begin{proof}
It is sufficient to prove \eqref{eq:propercondition} on $D(T^{-1/2})\cap D(T^{1/2})$. 
Using \eqref{eq:om*om}, one can show that the first equation in \eqref{eq:om*om} hold. 
We have 
\begin{align*}
U_J^* V-V_J^* U
=\dfrac{1}{2}(-T^{1/2}(\Omp^*)_J\Omp T^{-1/2}+T^{-1/2}(\Omp^*)_J\Omp T^{1/2}).
\end{align*}
Multiplying the equation by $(\Omp)_J$ from the left, and using Lemma \ref{omf(T)om}, 
we obtain 
\begin{align*}
(\Omp)_J(U_J^* V-V_J^* U)&=
(\Omp)_J(-T^{1/2}(\Omp^*)_J\Omp T^{-1/2}+T^{-1/2}(\Omp^*)_J\Omp T^{1/2})\\
&=\Omp(-T^{1/2}\Omp^*\Omp T^{-1/2}+T^{-1/2}\Omp^*\Omp T^{1/2})=0.
\end{align*}
By \eqref{eq:om*om}, this implies that $U_J^* V-V_J^* U=0$. 
By Lemma \ref{f(T)_J=f(T)} and Lemma \ref{omf(T)om}, we have 
\begin{align*}
V_JV^*_J=&\ \dfrac{1}{4}\{T^{-1/2}(\Omp T\Omp^*)_J T^{-1/2}
-T^{-1/2}(\Omp\Omp^*)_J T^{1/2}\\
&\quad-T^{1/2}(\Omp\Omp^*)_J T^{-1/2}
+T^{1/2}(\Omp T^{-1}\Omp^*)_J T^{1/2}\}\\
=&\ \dfrac{1}{4}(T^{-1/2}\Omp T\Omp^*T^{-1/2}-T^{-1/2}\Omp\Omp^*T^{1/2}\\
&\quad-T^{1/2}\Omp\Omp^*T^{-1/2}+T^{1/2}\Omp T^{-1}\Omp^*T^{1/2})\\
=&\ VV^*.
\end{align*}
Hence, by direct calculations and \eqref{eq:omom*}, 
one obtains $UU^*-V_JV_J^*=I-\theta(\la_c-\la)Q_{+}$. 
Similarly one can prove the last equation in \eqref{eq:propercondition} (note that $P_J=P$).
\end{proof}

\subsection{Hilbert-Schmidtness of $V$}
In this subsection, we show that $V$ is Hilbert-Schmidt. 
Then we can use Theorem \ref{PC} in the case of $\la>\lac$. 
\begin{lemma}
\label{hs}
The operator $V$ is Hilbert-Schmidt.
\end{lemma}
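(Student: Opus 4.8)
The goal is to show that $V = \tfrac12\big(\overline{T^{-1/2}\Omp T^{1/2}} - \overline{T^{1/2}\Omp T^{-1/2}}\big)$ is Hilbert--Schmidt. Since $\Omp = I + \Rp$ and the identity contributions cancel in the antisymmetric combination, we have
\begin{equation*}
V = \tfrac12\big(\overline{T^{-1/2}\Rp T^{1/2}} - \overline{T^{1/2}\Rp T^{-1/2}}\big)
\end{equation*}
as bounded operators, so it suffices to prove that each of $T^{-1/2}\Rp T^{1/2}$ and $T^{1/2}\Rp T^{-1/2}$ is Hilbert--Schmidt. The strategy is to compute the Hilbert--Schmidt norm directly from the explicit integral kernels already derived in \eqref{eq:t-1/2rpmt1/2} and \eqref{eq:t1/2rpmt-1/2}, using the spectral representation of $T$. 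Concretely, I would fix an orthonormal basis and use that $\sum_n \|W e_n\|^2 = \sum_n \inner<e_n, W^*W e_n>$; since the right-hand sides of \eqref{eq:t-1/2rpmt1/2}--\eqref{eq:t1/2rpmt-1/2} express $\inner<h, W f>$ as an integral against $d\inner<h,E(\mu)g>$ with an explicit $L^2$-density in $f$, the operator $W$ factors through the one-dimensional-range-type structure controlled by $g$ and the Radon--Nikodym densities $\psi_{T^{\pm1/2}g,f}$.

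The key structural point is that $\Rp$ (hence its conjugations by powers of $T$) is, morally, built from rank-one-like pieces: the vector $g$ on the output side and the map $f \mapsto \psi_{g,f}$ on the input side, dressed by functional calculus in $T$. Because $T$ is purely absolutely continuous, one should pass to a direct integral / spectral multiplicity representation $\msH \cong \int_{\sigma(T)}^\oplus \msH_\mu \, d\mu$ (here is where separability, assumed throughout Section 3, is used), in which $g$ corresponds to a measurable field $\mu \mapsto g(\mu)$ with $\|g(\mu)\|_{\msH_\mu}^2 = \psi_g(\mu)$ a.e. In this picture $T^{-1/2}\Rp T^{1/2}$ and $T^{1/2}\Rp T^{-1/2}$ become integral operators in the $\mu$-variable with operator-valued kernels of the form $K(\mu,\mu') = (\text{scalar kernel}) \cdot \ket{g(\mu)}\bra{g(\mu')}$, each rank-one-valued fibre-wise. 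The Hilbert--Schmidt norm squared is then $\int\!\!\int |k(\mu,\mu')|^2 \,\psi_g(\mu)\psi_g(\mu')\, d\mu\, d\mu'$ for the appropriate scalar kernel $k$ coming from the Hilbert-transform expressions; Assumption \ref{ass} (3) guarantees $\psi_g$ is bounded and that $\psi_g$ and $x^{\pm1}\psi_g$ are integrable-enough, and Lemma \ref{Dpm>0} gives the uniform lower bound $\delta$ on $|\Dpm|$, so the scalar kernel is controlled.

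The main obstacle is the double-integral estimate itself: the kernels involve the Hilbert transform $H\phi_{T^{\pm1/2}g,f}$, which is a singular (Calderón--Zygmund) operator, not a bounded pointwise multiplier, so one cannot simply bound $|k(\mu,\mu')|$ crudely and integrate. I would instead exploit the $L^2$-boundedness (in fact isometry up to constants) of $H$ on $\Ltr$: view the $\mu'$-integration as an inner product in $\Ltr$ against an $L^2$ function whose norm is $O(\|f\|)$ uniformly (as already recorded in the proof of Lemma \ref{uvbdd} and in the displayed bounds $\|(H\phi^{\pm}_{g,f})(T^2)T^{1/2}g\| \le (c_g/\delta)\|f\|$), and then sum over an orthonormal basis $\{e_n\}$ using $\sum_n \|\psi_{g,e_n}\|_{L^2}^2 < \infty$, which holds because $\sum_n \|E(\cdot)e_n\|^2$ integrates the spectral multiplicity against $\psi_g \in L^1 \cap L^\infty$. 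Equivalently, I expect the cleanest route is: write $V = c\, \Theta(T)\, g \otimes (\text{bounded map to } \Ltr)$-type factorization where the "$g\otimes$" factor is automatically Hilbert--Schmidt on the $\Ltr$-side precisely because $\psi_g$ is integrable (finite "trace"), and conclude by the ideal property of Hilbert--Schmidt operators under composition with bounded operators. Assembling this factorization carefully — tracking which power of $T$ lands on which side and confirming the relevant density is genuinely in $\Ltr$ with controlled norm — is the technical heart of the argument; everything else is bookkeeping with the formulas from Section 3.
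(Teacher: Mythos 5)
The decomposition $V=\tfrac12\bigl(\overline{T^{-1/2}\Rp T^{1/2}}-\overline{T^{1/2}\Rp T^{-1/2}}\bigr)$ is correct, but your very first reduction --- that it suffices to prove each of $T^{-1/2}\Rp T^{1/2}$ and $T^{1/2}\Rp T^{-1/2}$ is Hilbert--Schmidt --- is false, and this is exactly where the difficulty of the lemma sits. By \eqref{eq:t-1/2rpmt1/2} and \eqref{eq:t1/2rpmt-1/2}, each of these operators splits into a Hilbert-transform (principal-value) part plus the same term $-\tfrac12 A_{+}f$, with $A_{+}$ the operator from the proof of Lemma \ref{Rbdd}. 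The $A_{+}$ part is of multiplication type (in a multiplicity-one spectral representation of $T$ it is multiplication by $i\pi\la\,\psi_g(\mu)\Dp(\mu^2-E_0^2)^{-1}$), hence not compact, let alone Hilbert--Schmidt, because $\psi_g>0$ on $(E_0,\infty)$ by Assumption \ref{ass} (3); and the principal-value part has a distribution kernel behaving like $(\mu-\mu')^{-1}g(\mu)\overline{g(\mu')}$ near the diagonal, which is not square integrable there. So neither summand is Hilbert--Schmidt; $V$ is Hilbert--Schmidt only because of a cancellation that your plan never establishes: the $A_{+}$ contributions in \eqref{eq:t-1/2rpmt1/2} and \eqref{eq:t1/2rpmt-1/2} are identical and drop out of the difference (equivalently, $A_{+}$ commutes with powers of $T$ by Lemma \ref{AF(T)}), while the two singular kernels combine to produce a factor $\mu'-\mu$ that removes the diagonal singularity and leaves a kernel of size $\bigl((\mu+\mu')\sqrt{\mu\mu'}\bigr)^{-1}$. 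The paper implements this cancellation operator-theoretically rather than kernel-by-kernel: using $\Rp^*\Rp=-(\Rp+\Rp^*)$ (established inside Lemma \ref{omom=i}), $V^*V$ collapses to $\tfrac14\bigl(T^{1/2}[\Rp^*,T^{-1}]\Rp T^{1/2}+T^{-1/2}[\Rp^*,T]\Rp T^{-1/2}\bigr)$, the commutators supply the factor $(\mu-\mu'')$ that tames $(\mu''^2-\mu^2\mp i\vep)^{-1}$, and summation over a CONS then yields the finite expression \eqref{eq:trvv}, bounded by $\tfrac{\la^2}{16\delta^2}\|T^{-1/2}g\|^4$ via Lemma \ref{Dpm>0}. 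Some argument of this kind is the heart of the proof and is missing from your proposal.

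Your fallback factorization also rests on a false claim. The map $W\colon f\mapsto\psi_{g,f}$ from $\msH$ to $\Ltr$ is bounded (Remark \ref{remass}) but is never Hilbert--Schmidt for $g\neq0$: one computes $W^*u=u(T)g$ and $WW^*=$ multiplication by $\psi_g$ on $\Ltr$, whose trace is infinite even though $\int\psi_g\,d\mu=\|g\|^2<\infty$; hence $\sum_n\|\psi_{g,e_n}\|_{\Ltr}^2=\Tr(W^*W)=\infty$ for every CONS, contrary to what you assert, and the factor $u\mapsto u(T)g$ (which is $W^*$) is likewise not Hilbert--Schmidt merely because $\psi_g\in L^1$. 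What is true --- and is what the paper's computation ultimately delivers --- is that an operator whose kernel is a genuine function of the form $k(\mu,\mu')g(\mu)\overline{g(\mu')}$ with $k$ bounded is Hilbert--Schmidt with squared norm $\int\!\!\int|k(\mu,\mu')|^2\psi_g(\mu)\psi_g(\mu')\,d\mu\,d\mu'$; your fibre-wise rank-one picture is the correct heuristic for the final formula \eqref{eq:trvv}. But to put $V$ into that form one must first prove that the principal-value and delta parts cancel, so as written the proposal assumes the crucial point rather than proving it.
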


\begin{proof}
On $D(T^{-1/2})\cap D(T^{1/2})$, $V^*V$ is calculated as follows: 
\begin{align*}
V^*V
=&\ \dfrac{1}{4}(T^{-1/2}\Rp T^{1/2}+T^{1/2}\Rp^* T^{-1/2}+T^{1/2}[\Rp^*,T^{-1}]\Rp T^{1/2}\\
&\quad+T^{1/2}\Rp T^{-1/2}+T^{-1/2}\Rp^* T^{1/2}+T^{-1/2}[\Rp^*,T]\Rp T^{-1/2}\\
&\quad+T^{1/2}\Rp^*\Rp T^{-1/2}+T^{-1/2}\Rp^*\Rp T^{1/2})\\
=&\ \dfrac{1}{4}(T^{1/2}[\Rp^*,T^{-1}]\Rp T^{1/2}+T^{-1/2}[\Rp^*,T]\Rp T^{-1/2}),
\end{align*}
where we have used the formula $\Rp^*\Rp=-(\Rp+\Rp^*)$ in the proof of Lemma \ref{omom=i} and Lemma \ref{ranrpm}. 
Thus, for any $f\in D(T^{-1/2})\cap D(T^{1/2})$ and $\vep>0$, we have 
\begin{align*}
&\inner<f,(T^{1/2}[\Rp^{(\vep)*},T^{-1}]\Rp^{(\vep)} T^{1/2}+T^{-1/2}
[\Rp^{(\vep)*},T]\Rp^{(\vep)} T^{-1/2})f>\\
=&\ \la\int_{[E_0,\infty)}\int_{[E_0,\infty)}
\dfrac{\mu'}{(\mu'^2-\mu^2+i\vep)\Dp(\mu'^2-E_0^2)}
d\inner<[T^{-1},\Rp^{(\vep)}]T^{1/2}f,E(\mu)T^{1/2}g>d\inner<E(\mu')g,f>\\
&+\la\int_{[E_0,\infty)}\int_{[E_0,\infty)}
\dfrac{\mu}{(\mu'^2-\mu^2+i\vep)\Dp(\mu'^2-E_0^2)}
d\inner<[T,\Rp^{(\vep)}]T^{-1/2}f,E(\mu)T^{-1/2}g>d\inner<E(\mu')g,f>.
\end{align*}
Then, for any $B\in\Borel$, we can see 
\begin{align}
&\inner<[T^{-1},\Rp^{(\vep)}]T^{1/2}f,E(B)T^{1/2}g>\nonumber\\
=&\ \la\int_{B}\int_{[E_0,\infty)}
\dfrac{\mu''-\mu}{(\mu''^2-\mu^2-i\vep)\Dm(\mu''^2-E_0^2)}
d\inner<f,E(\mu'')g>d\|E(\mu)g\|^2.\label{eq:cvformulaofLSI}
\end{align}
Similarly, we obtain 
\begin{align*}
&\inner<[T,\Rp^{(\vep)}]T^{-1/2}f,E(B)T^{-1/2}g>\\
=&\ \la\int_{B}\int_{[E_0,\infty)}
\dfrac{\mu-\mu''}{(\mu''^2-\mu^2-i\vep)\Dm(\mu''^2-E_0^2)}
d\inner<f,E(\mu'')g>d\|E(\mu)g\|^2.
\end{align*}
Thus, by a formula of change of variable in Lebesgue-Stieltjes integration and Fubini's theorem, 
we have 
\begin{align*}
&\inner<f,(T^{1/2}[\Rp^{(\vep)*},T^{-1}]\Rp^{(\vep)} T^{1/2}+T^{-1/2}
[\Rp^{(\vep)*},T]\Rp^{(\vep)} T^{-1/2})f>\\
=&\ \la^2\int_{[E_0,\infty)}\int_{[E_0,\infty)}\int_{[E_0,\infty)}
d\|E(\mu)g\|^2d\inner<f,E(\mu'')g>d\inner<E(\mu')g,f>\\
&\quad\times\dfrac{(\mu-\mu')(\mu-\mu'')
}{(\mu'^2-\mu^2+i\vep)(\mu''^2-\mu^2-i\vep)\Dp(\mu'^2-E_0^2)D_{-}(\mu''^2-E_0^2)}.
\end{align*}
Then it is easy to see that for any $\mu,\mu',\mu''\in[E_0,\infty)$ 
\begin{align*}
&\lim_{\vep\dra0}\dfrac{(\mu-\mu')(\mu-\mu'')
}{(\mu'^2-\mu^2+i\vep)(\mu''^2-\mu^2-i\vep)\Dp(\mu'^2-E_0^2)D_{-}(\mu''^2-E_0^2)}\\
=&\ \dfrac{1}{(\mu'+\mu)(\mu''+\mu)\Dp(\mu'^2-E_0^2)D_{-}(\mu''^2-E_0^2)}.
\end{align*}
For any $\vep>0$ and $\mu,\mu',\mu''\in[E_0,\infty)$, we have by Lemma \ref{Dpm>0} and  the arithmetic-geometric mean inequality
\begin{equation*}
\left|\dfrac{(\mu-\mu')(\mu-\mu'')
}{(\mu'^2-\mu^2+i\vep)(\mu''^2-\mu^2-i\vep)\Dp(\mu'^2-E_0^2)D_{-}(\mu''^2-E_0^2)}\right|
\leq\dfrac{1}{4\delta^2\mu\sqrt{\mu'\mu''}}.
\end{equation*}
On the other side, for any $\alpha, \beta\in\C$, we see 
\begin{align*}
&\int_{[E_0,\infty)}\int_{[E_0,\infty)}\int_{[E_0,\infty)}
\frac{1}{\mu\sqrt{\mu'\mu''}}
d\|E(\mu)g\|^2d\|E(\mu'')(f+\alpha g)\|^2 d\|E(\mu')(f+\beta g)\|^2\\
=&\ \|T^{-1/2}g\|^2\|T^{-1/4}(f+\alpha g)\|^2\|T^{-1/4}(f+\beta g)\|^2<\infty.
\end{align*}
Thus, by the Lebesgue dominated convergence theorem, we have 
\begin{align*}
&\lim_{\vep\dra0}\int_{[E_0,\infty)}\int_{[E_0,\infty)}\int_{[E_0,\infty)}
d\|E(\mu)g\|^2d\|E(\mu'')(f+\alpha g)\|^2 d\|E(\mu')(f+\beta g)\|^2\\
&\qquad\times\dfrac{(\mu-\mu')(\mu-\mu'')
}{(\mu'^2-\mu^2+i\vep)(\mu''^2-\mu^2-i\vep)\Dp(\mu'^2-E_0^2)D_{-}(\mu''^2-E_0^2)}\\
=&\int_{[E_0,\infty)}\int_{[E_0,\infty)}\int_{[E_0,\infty)}
d\|E(\mu)g\|^2d\|E(\mu'')(f+\alpha g)\|^2 d\|E(\mu')(f+\beta g)\|^2\\
&\qquad\times\dfrac{1}{(\mu'+\mu)(\mu''+\mu)\Dp(\mu'^2-E_0^2)D_{-}(\mu''^2-E_0^2)}.
\end{align*}
In particular, for each $\alpha,\beta=\pm1,\pm i$, the polarization identity and Fubini's theorem give 
\begin{align*}
\inner<f,V^*Vf>
=\dfrac{\la^2}{4}\int_{[E_0,\infty)}
\left|\inner<f,R_{-\mu}(T)\Dm(T^2-E_0^2)^{-1}g>\right|^2d\|E(\mu)g\|^2.
\end{align*}
Let $\{e_n\}_{n=1}^{\infty}\subset D(T^{-1/2})\cap D(T^{1/2})$ be a CONS of $\msH.$
The termwise integration implies that 
\begin{align}
\sum_{n=1}^{\infty}\inner<e_n,V^*V e_n>
=&\dfrac{\la^2}{4}\int_{[E_0,\infty)}
\|R_{-\mu}(T)\Dm(T^2-E_0^2)^{-1}g\|^2 d\|E(\mu)g\|^2\nonumber\\
\label{eq:trvv}
=&\dfrac{\la^2}{4}\int_{[E_0,\infty)}\int_{[E_0,\infty)}
\dfrac{1}{(\mu'+\mu)^2|\Dm(\mu'^2-E_0^2)|^2}d\|E(\mu')g\|^2d\|E(\mu)g\|^2\\
\leq&\dfrac{\la^2}{16\delta^2}\int_{[E_0,\infty)}\int_{[E_0,\infty)}
\dfrac{1}{\mu'\mu}d\|E(\mu')g\|^2d\|E(\mu)g\|^2<\infty,\nonumber
\end{align}
where we have used the arithmetic-geometric mean and Lemma \ref{Dpm>0}. 
Hence $V$ is Hilbert-Schmidt.
\end{proof}

\begin{lemma}
\label{UB(f)U}
If $\la>\lac$, then there is a unitary operator $\U$ on $\fb(\msH)$ such that for all $f\in \msH$,
\begin{equation*}
\U B(f)\U^{-1}=A(f).
\end{equation*}
\end{lemma}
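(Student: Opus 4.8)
The plan is to reduce the assertion to Theorem \ref{PC}, all of whose hypotheses have by now been verified piecemeal in the preceding lemmas. Recall that $\msH$ is assumed separable throughout this section, that by Lemma \ref{uvbdd} the operators $U$ and $V$ are bounded on $\msH$, and hence that $B(f)=A(Uf)+A(JVf)^*$ genuinely comes from a Bogoliubov transformation in the sense defined earlier; moreover $B(f)$ is densely defined and closable, and we continue to write its closure as $B(f)$.

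First I would confirm that the pair $(U,V)$ satisfies \eqref{eq:uvcondition}. Since $\la>\lac$ we in particular have $\la\neq\lac$, so Lemma \ref{u*u-v*v} is available; and since $\la_c-\la<0$, the Heaviside factor $\theta(\la_c-\la)$ vanishes, so that the system \eqref{eq:propercondition} reduces to exactly $U^*U-V^*V=I$, $U^*_JV-V^*_JU=0$, $UU^*-V_JV^*_J=I$, $UV^*-V_JU^*_J=0$, which is precisely \eqref{eq:uvcondition}. Conceptually this is the observation recorded in the remark following Lemma \ref{omom=i}, namely that $\Omp$ is a unitary operator exactly when $\la>\lac$, which is what makes the ``defect'' operators $Q_\pm$ drop out.

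Then I would invoke Lemma \ref{hs}, which gives that $V$ is Hilbert-Schmidt. Now separability of $\msH$, the relations \eqref{eq:uvcondition}, and the Hilbert-Schmidtness of $V$ are exactly the input required by the ``if'' direction of Theorem \ref{PC}, which therefore furnishes a unitary operator $\U$ on $\fb(\msH)$ with $\U\,\overline{B(f)}\,\U^{-1}=A(f)$ for every $f\in\msH$; by the closure convention $\overline{B(f)}=B(f)$, this is the asserted identity $\U B(f)\U^{-1}=A(f)$, completing the proof.

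I do not expect any genuine obstacle here: the analytic heavy lifting is entirely contained in Lemmas \ref{omom=i}, \ref{u*u-v*v} and \ref{hs}, and the present lemma is essentially an assembly step. The one point deserving attention is careful tracking of the factor $\theta(\la_c-\la)$, since it is exactly this factor that confines the clean diagonalization to the range $\la>\lac$ and foreshadows that the complementary regime $\lacz<\la<\lac$, treated in Section 6, will require a modified construction, because there $Q_\pm\neq0$ and $B(f)$ can no longer be mapped unitarily onto $A(f)$.
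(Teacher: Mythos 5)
Your proposal is correct and follows exactly the paper's own route: since $\la>\lac$ the Heaviside terms in Lemma \ref{u*u-v*v} vanish so that $(U,V)$ satisfies \eqref{eq:uvcondition}, and together with Lemma \ref{hs} and separability of $\msH$ this is precisely the input for Theorem \ref{PC}. The paper's proof is a one-line version of the same assembly step, so there is nothing to add.
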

\begin{proof}
By Lemma \ref{u*u-v*v} and Lemma \ref{hs}, we can apply Theorem \ref{PC}.
\end{proof}

\section{Analysis in the case $\la>\lac$}
In this section we prove Theorem \ref{Main} (1). Before starting the proof, we need to know a  property of the Hamiltonian $H(\lambda)$. 
\subsection{Time evolution}
\begin{thm}[Time evolution]
\label{TE}
If $\la>\lacz$, then for all $f\in D(T^{-1/2})$, $\psi\in D(\SQ(T)^{1/2})$ and $t\in\R$,
\begin{align}
e^{it H(\la)}B(f)e^{-it H(\la)}\psi&=B(e^{itT}f)\psi,\label{eq:teani}\\
e^{it H(\la)}B(f)^*e^{-it H(\la)}\psi&=B(e^{itT}f)^*\psi.\label{eq:tecre}
\end{align}
\end{thm}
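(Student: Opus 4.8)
The plan is to establish the two identities by showing that both sides solve the same first-order differential equation in $t$ with the same initial datum, and then invoke uniqueness. First I would fix $f \in D(T^{-1/2})$ and $\psi \in D(\SQ(T)^{1/2})$, and set $u(t) := e^{itH(\la)}B(e^{itT}f)e^{-itH(\la)}\psi$ (and analogously with $B(\cdot)^*$). The goal is to show $u$ is constant in $t$; evaluating at $t=0$ then gives \eqref{eq:teani}. Note that $\la > \lacz$ ensures, by Theorem \ref{sa}(3) together with the remark following it (since $\msH$ is separable), that $\overline{H(\la)}$ is bounded below and $D(\SQ(T)^{1/2}) = D((\overline{H(\la)}+M)^{1/2})$, so that $e^{-itH(\la)}$ preserves $D(\SQ(T)^{1/2})$ and the expression $u(t)$ makes sense; moreover $e^{itT}f \in D(T^{-1/2}) \cap D(T)$ only when $f \in D(T)$, so a density/approximation step in $f$ will be needed at the end.

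The core computation is the commutator relation. For $f \in D(T^{-1/2}) \cap D(T)$ and $\psi \in D(\SQ(T)^{3/2})$, Theorem \ref{CR}(3) gives $[H(\la),B(f)]\psi = -B(Tf)\psi$, with $B(f)$ mapping $D(\SQ(T)^{3/2})$ into $D(\SQ(T))$. I would first prove the theorem for $\psi \in D(\SQ(T)^{3/2})$ and $f \in D(T) \cap D(T^{-1/2})$, where one can differentiate $u(t)$ rigorously: using that $e^{-itH(\la)}$ preserves the relevant domains (by operational calculus, since $D(\SQ(T)^{3/2})$ is characterized via $H(\la)$ up to lower-order terms — this is the technical point requiring care), the derivative of $e^{itH(\la)}B(e^{itT}f)e^{-itH(\la)}\psi$ is
\begin{equation*}
e^{itH(\la)}\left(i[H(\la),B(e^{itT}f)] + B(iTe^{itT}f)\right)e^{-itH(\la)}\psi = e^{itH(\la)}\left(-iB(Te^{itT}f) + iB(Te^{itT}f)\right)e^{-itH(\la)}\psi = 0,
\end{equation*}
where I applied Theorem \ref{CR}(3) with $e^{itT}f$ in place of $f$ (legitimate since $e^{itT}f \in D(T) \cap D(T^{-1/2})$, as $e^{itT}$ commutes with $T$ and with $T^{-1/2}$ on their domains). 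Hence $u$ is constant, proving \eqref{eq:teani} on this dense set.

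The final step is to remove the restrictions. To pass from $\psi \in D(\SQ(T)^{3/2})$ to $\psi \in D(\SQ(T)^{1/2})$, and from $f \in D(T) \cap D(T^{-1/2})$ to $f \in D(T^{-1/2})$, I would use the bounds from Lemma \ref{uvli} and the closedness of $B(f)$: by Lemma \ref{uvli}, $U$ and $V$ leave $D(T^{-1/2})$ invariant, so $\|B(f)\phi\|$ is controlled by $\|(\SQ(T)+I)^{1/2}\phi\| \cdot (\text{something involving } \|T^{-1/2}f\|, \|f\|)$, and $e^{itT}$ acts isometrically; approximating $\psi$ by a sequence in $D(\SQ(T)^{3/2})$ in the graph norm of $\SQ(T)^{1/2}$, and $f$ by a sequence $f_n = E([0,n])f \in D(T)$ in the norm $\|f_n - f\| + \|T^{-1/2}(f_n-f)\| \to 0$, both sides converge by these estimates (using also that $e^{\pm itH(\la)}$ is unitary and preserves $D(\SQ(T)^{1/2})$). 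The identity \eqref{eq:tecre} for $B(f)^*$ follows either by the same argument with $A(Uf)^*+A(JVf)$ in place of $B(f)$, or by taking adjoints of \eqref{eq:teani} on the appropriate dense domain. The main obstacle I anticipate is the domain bookkeeping in the differentiation step — justifying that $e^{-itH(\la)}$ maps $D(\SQ(T)^{3/2})$ into a domain on which the commutator identity of Theorem \ref{CR}(3) applies, and that $t \mapsto u(t)$ is genuinely differentiable rather than merely weakly so; this is where the equivalence of domains $D(\SQ(T)^{1/2}) = D((\overline{H(\la)}+M)^{1/2})$ from Theorem \ref{sa}(3), and an analogous higher-order statement, do the real work.
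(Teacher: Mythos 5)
Your proposal contains one serious gap and one sign-level slip, both in the central differentiation step. The serious gap: to differentiate $u(t)$ \emph{strongly} and invoke Theorem \ref{CR} (3) at time $t$, you need $e^{-itH(\la)}\psi\in D(\SQ(T)^{3/2})$, i.e.\ invariance of $D(\SQ(T)^{3/2})$ under the unitary group. The paper only provides $D(H(\la))=D(\SQ(T))$ and $D((\overline{H(\la)}+M)^{1/2})=D(\SQ(T)^{1/2})$ (Theorem \ref{sa} (3) and the remark following it); there is no identification of $D(\SQ(T)^{3/2})$ with $D((H(\la)+M)^{3/2})$, and your parenthetical ``by operational calculus, since $D(\SQ(T)^{3/2})$ is characterized via $H(\la)$ up to lower-order terms'' is precisely the unproven claim --- it would require new higher-order relative bounds or commutator estimates that are nowhere in the paper. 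The slip: in this paper $B(\cdot)$ is \emph{antilinear} in its test vector (since $A(\cdot)$ is antilinear and $J$ is a conjugation), so $B(iTe^{itT}f)=-iB(Te^{itT}f)$; with your choice $u(t)=e^{itH(\la)}B(e^{itT}f)e^{-itH(\la)}\psi$ the two terms in your display add instead of cancelling, and indeed $u$ cannot be constant (granting the theorem, $u(t)=B(e^{2itT}f)\psi$). Moreover, even accepting your computation, constancy of your $u$ evaluated at $t=0$ would yield $e^{itH(\la)}B(f)e^{-itH(\la)}\psi=B(e^{-itT}f)\psi$, the time-reversed version of \eqref{eq:teani}. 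The correct auxiliary object carries $e^{-itT}$ inside the argument.

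The paper sidesteps the domain problem entirely by working weakly: it differentiates the scalar function $v(t)=\inner<\phi,e^{itH(\la)}B(e^{-itT}f)e^{-itH(\la)}\psi>$ with $\phi,\psi\in D(\SQ(T))$ and $f\in D(T^{-1/2})\cap D(T)$, using the sesquilinear commutation relation of Theorem \ref{CR} (2), which only requires both vectors to lie in $D(\SQ(T))$; since $D(H(\la))=D(\SQ(T))$ is automatically preserved by $e^{-itH(\la)}$, no $3/2$-power domain ever enters. One then gets $v(t)=v(0)$, replaces $f$ by $e^{itT}f$, extends from $\psi\in D(\SQ(T))$ to $\psi\in D(\SQ(T)^{1/2})$ using that $D(\SQ(T))$ is a core of $(\overline{H(\la)}+M)^{1/2}$ together with Theorem \ref{sa} (3), and finally extends to $f\in D(T^{-1/2})$ via $f_n=E((-\infty,n])f$. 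For this last step your appeal to ``bounds from Lemma \ref{uvli}'' is not enough: that lemma gives only domain invariance, whereas one needs $T^{-1/2}Uf_n\ra T^{-1/2}Uf$ and $T^{-1/2}JVf_n\ra T^{-1/2}JVf$, which the paper obtains from the explicit representations \eqref{eq:t-1/2rpmt1/2}, \eqref{eq:t1/2rpmt-1/2} and \eqref{eq:hil} together with the boundedness of the Hilbert transform. Your overall scheme (differentiate, then approximate in $\psi$ and in $f$) is in the right spirit, but as written the key step either fails (antilinear convention) or rests on an unproven invariance of $D(\SQ(T)^{3/2})$.
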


\begin{proof}
It is sufficient to prove \eqref{eq:teani}, because \eqref{eq:tecre} follows from taking the adjoint of \eqref{eq:teani}. We define a function $v:\R\ra\C$ by 
$v(t):=\inner<\phi,e^{it H(\la)}B(e^{-itT}f)e^{-itH(\la)}\psi>, t\in\R$ 
for any $f\in D(T^{-1/2})\cap D(T)$ and $\psi, \phi\in D(\SQ(T))$.
Then $v$ is well-defined by operational calculus and Theorem \ref{sa}. 
The function $v$ is differentiable and, by Theorem \ref{CR} (2), we have for any $t\in\R$,
\begin{align*}
\dfrac{d}{dt}v(t)=&
i\inner<H(\la)e^{-it H(\la)}\phi,B(e^{-itT}f)e^{-it H(\la)}\psi>
-i\inner<B(e^{-itT}f)^*e^{-it H(\la)}\phi,H(\la)e^{-it H(\la)}\psi>\\
&+i\inner<e^{-it H(\la)}\phi,B(Te^{-itT}f)e^{-it H(\la)}\psi>\\
=&0.
\end{align*}
Hence $v(t)=v(0)$ for all $t\in\R$. Hence the equation 
\begin{equation*}
\inner<\phi,e^{it H(\la)}B(e^{-itT}f)e^{-it H(\la)}\psi>=\inner<\phi,B(f)\psi>
\end{equation*}
holds for all $t\in\R$. 
By replacing $f$ with $e^{itT}f$, one has for all $\psi\in D(\SQ(T))$,
\begin{equation*}
e^{it H(\la)}B(f)e^{-it H(\la)}\psi=B(e^{itT}f)\psi.
\end{equation*}
Since $D(\SQ(T))$ is a core of $(H(\la)+M)^{1/2}$ and $D(H(\la)+M)^{1/2}=D(\SQ(T)^{1/2})$ by Theorem \ref{sa} (3), 
we obtain \eqref{eq:teani} for $f\in D(T^{-1/2})\cap D(T)$ and $\psi\in D(\SQ(T)^{1/2})$. 
Finally we extend \eqref{eq:teani} for all $f\in D(T^{-1/2})$. Let  $f\in D(T^{-1/2})$ and $\psi\in D(\SQ(T)^{1/2})$. Then we set 
$f_n:=E((-\infty,n])f$ for each $n\in\N$. Then $f_n\in D(T^{-1/2})\cap D(T)$ for all $n\in\N$ and one can easily show that $f_n\ra f,\ T^{-1/2}f_n\ra T^{-1/2}f$ as $n\ra\infty$ by using functional calculus and the the Lebesgue dominated convergence theorem. 
Thus we have $Uf_n\ra Uf,\  JVf_n\ra JVf$ as $n\ra\infty$ by the boundedness of $U$ and $V$. 
By using the linearity of the Hilbert transform and that of the map $f\mapsto\psi_{g,f}$,  \eqref{eq:t-1/2rpmt1/2} and \eqref{eq:t1/2rpmt-1/2}, and \eqref{eq:hil}, 
we can show that $T^{-1/2}Uf_n\ra T^{-1/2}Uf,\ T^{-1/2}JVf_n\ra T^{-1/2}JVf$ as $n\ra\infty$. 
Therefore we obtain $B(f_n)\phi\ra B(f)\phi$ and $B(e^{itT}f_n)\phi\ra B(e^{itT}f)\phi$ as $n\ra\infty$ for any $\phi\in D(\SQ(T)^{1/2})$ by \cite[Lemma4-28]{AR1}. By the preceding result, we have for any $n\in\N$, 
\begin{equation*}
B(f_n)e^{-it H(\la)}\psi=e^{-it H(\la)}B(e^{itT}f_n)\psi.
\end{equation*}
The equation $D(\SQ(T)^{1/2})=D((H(\la)+M)^{1/2})$ in Theorem \ref{sa} (3) implies that 
\begin{equation*}
e^{-it H(\la)}D(\SQ(T)^{1/2})=D(\SQ(T)^{1/2}).
\end{equation*} 
Hence, taking the limit $n\ra\infty$, we obtain \eqref{eq:teani} for $f\in D(T^{-1/2}),\psi\in D(\SQ(T)^{1/2})$.
\end{proof}

\subsection{Proof of Theorem \ref{Main} (1)}
In this subsection, we assume that $\lambda>\lambda_{\rm c}$.
\begin{lemma}
\label{gse}
Let $\Om:=\U^{-1}\Omz$, where $\U$ is the unitary operator in Lemma \ref{UB(f)U} and $\Omz:=(1,0,0,\ldots)\in\fb(\msH)$ is the Fock vacuum. Then 
there is an eigenvalue $\Eg$ of $H(\la)$ and $\Om$ is a corresponding eigenvector: 
$H(\lambda)\Omega=\Eg \Omega$.
\end{lemma}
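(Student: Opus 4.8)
The plan is to recognise $\Om=\U^{-1}\Omz$ as the vacuum for the transformed fields $B(\cdot)$ and to propagate this property along the unitary group generated by $H(\la)$. Since $A(f)\Omz=0$ for every $f\in\msH$, Lemma \ref{UB(f)U} gives at once
\[
B(f)\Om=\U^{-1}A(f)\,\U\U^{-1}\Omz=\U^{-1}A(f)\Omz=0,\qquad f\in\msH.
\]
Conversely, this property pins $\Om$ down up to a constant: if $\xi\in D(\SQ(T)^{1/2})$ satisfies $B(f)\xi=0$ for all $f\in D(T^{-1/2})$, then $\U\xi\in D(A(f))$ and $A(f)\U\xi=\U B(f)\xi=0$ for all $f$ in the dense set $D(T^{-1/2})$, so by continuity in $f$ the $n$-particle components of $\U\xi$ vanish for $n\ge1$; hence $\U\xi\in\C\Omz$ and $\xi\in\C\Om$.

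The core of the argument is that $e^{-itH(\la)}\Om$ is again annihilated by every $B(f)$. First one needs $\Om\in D(\SQ(T)^{1/2})$. Granting this, Theorem \ref{sa} (3) (applicable since $\la>\lac\ge\lacz$) identifies $D(\SQ(T)^{1/2})$ with $D((\overline{H(\la)}+M)^{1/2})$, which is invariant under $e^{-itH(\la)}$, so $e^{-itH(\la)}\Om\in D(\SQ(T)^{1/2})$ for all $t$. Theorem \ref{TE} then applies with $\psi=\Om$: for $f\in D(T^{-1/2})$ and $t\in\R$,
\[
B(f)\,e^{-itH(\la)}\Om=e^{-itH(\la)}\,B(e^{itT}f)\,\Om=0,
\]
because $e^{itT}f\in D(T^{-1/2})$ and $\Om$ is a $B$-vacuum. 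By the uniqueness just noted, $e^{-itH(\la)}\Om=c(t)\Om$ with $c(t)\in\C$; unitarity of $e^{-itH(\la)}$ and $\|\Om\|=1$ force $|c(t)|=1$, while $c(t)=\inner<\Om,e^{-itH(\la)}\Om>$ is continuous in $t$ with $c(0)=1$ and $c(t+s)=c(t)c(s)$. Hence $c(t)=e^{-it\Eg}$ for a unique $\Eg\in\R$, and Stone's theorem gives $\Om\in D(H(\la))$ with $H(\la)\Om=\Eg\Om$.

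The main obstacle is the regularity $\Om\in D(\SQ(T)^{1/2})$, which is exactly what lets Theorem \ref{TE} be used on $\Om$ itself. I would get it from the explicit quasi-free form of the Bogoliubov vacuum: for $\la>\lac$ the operator $U$ is boundedly invertible (Lemma \ref{u*u-v*v}) and $VU^{-1}$ is a Hilbert--Schmidt contraction (Lemma \ref{hs}), so $\Om$ is, up to normalisation, a Gaussian exponential vector built from a Hilbert--Schmidt operator $K$ determined by $U$, $V$ and $J$; consequently $\inner<\Om,\SQ(T)\Om>$ is a $T$-weighted trace whose finiteness follows by an estimate of the same type as the computation of $\sum_n\inner<e_n,V^*Ve_n>$ in the proof of Lemma \ref{hs}, now using $g\in D(T^{1/2})\cap D(\hat{T}^{-1/2})$ together with the mapping properties of $\Rpm$ from Lemma \ref{ranrpm}. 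An alternative that avoids the explicit form is an approximation argument run through the weak commutation relation of Theorem \ref{CR} (2), but reusing the Hilbert--Schmidt bound already available seems the most economical route.
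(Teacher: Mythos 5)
Your argument is essentially the paper's own proof: the vacuum property $B(f)\Om=0$ together with its uniqueness (transported by $\U$ to the characterization of the Fock vacuum), the invariance $B(f)e^{-itH(\la)}\Om=0$ via Theorem \ref{TE}, the resulting one-parameter group identity $e^{-itH(\la)}\Om=e^{-it\Eg}\Om$, and Stone's theorem to conclude $\Om\in D(H(\la))$ with $H(\la)\Om=\Eg\Om$. The regularity point you single out, $\Om\in D(\SQ(T)^{1/2})$ (needed to invoke Theorem \ref{TE} on $\Om$ and, with Theorem \ref{sa} (3), to keep $e^{-itH(\la)}\Om$ in that domain), is not discussed in the paper, which applies the time-evolution theorem to $\Om$ directly; so your sketched Hilbert--Schmidt/trace justification is a careful supplement to, not a departure from, the paper's route.
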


\begin{proof}
In general, by \cite[Proposition4-4]{AR1} for a dense subspace $\msD\subset\msH$,  if  $\psi\in\cap_{f\in\msD}D(A(f))$ satisfies $A(f)\psi=0$ for all $f\in\msD$, 
then there is a constant $\alpha\in\C$ such that $\psi=\alpha\Omega_0$.
Thus, by Lemma \ref{UB(f)U}, $B(f)\phi=0$ for all $f\in D(T^{-1/2})$. 
Hence there is a constant $\alpha\in\C$ such that $\phi=\alpha\Omega$. 
For any $f\in D(T^{-1/2})$ and $t\in\R$,
\begin{equation*}
B(f)e^{-it H(\la)}\Omega=e^{-it H(\la)}B(e^{itT}f)\Omega=0
\end{equation*}
by Lemma \ref{TE}. 
Thus, for each $t\in\R$, there is a constant $\alpha(t)\in\C$ such that 
$e^{-it H(\la)}\Omega=\alpha(t)\Omega$. 
Then we have $|\alpha(t)|=1, \alpha(t+s)=\alpha(t)\alpha(s)$ for all $t,s\in\R$, 
since $\{e^{-it H(\la)}\}_{t\in\R}$ is a strongly continuous one-parameter unitary group. 
Thus there exists a constant $\Eg\in\R$ such that $\alpha(t)=e^{-it\Eg}, t\in\R$.
The differentiation of the equation $e^{-it H(\la)}\Omega=e^{-it\Eg}\Omega$ in $t$ implies  $\Om\in D(H(\la))$ and $\Om\in\Ker(H(\la)-\Eg)$. 
\end{proof}
{\it Proof of Theorem \ref{Main} (1).}\\
The subspace $\mU:=\msL\{B(f_1)^*\cdots B(f_n)^*\Omega, \Omega\ |\ f_j\in D(T^{-1/2}), j=1,\ldots,n, \ n\in\N\}$ is dense in $\fb(\msH)$ by the fact that 
 $\mU=\U^{-1}\fbfin(D(T^{-1/2}))$, where $\msL(\msD)$ denotes the subspace algebraically spanned by the vectors in $\msD$. 
By Lemma \ref{TE} and Lemma \ref{te}, for any $t\in\R, f_j\in D(T^{-1/2}), j=1,\ldots,n$, we have 
\begin{align*}
e^{it H(\la)}B(f_1)^*\cdots B(f_n)^*\Omega
=&B(e^{itT}f_1)^*\cdots B(e^{itT}f_n)^*e^{it H(\la)}\Omega\\
=&B(e^{itT}f_1)^*\cdots B(e^{itT}f_n)^*e^{it\Eg}\Omega\\
=&e^{it\Eg}\U^{-1}e^{it\SQ(T)}A(f_1)^*\cdots A(f_n)^*\Omega_0\\
=&\U^{-1}e^{it(\SQ(T)+\Eg)}\U B(f_1)^*\cdots B(f_n)^*\Omega.
\end{align*}
By this equation and a limiting argument, we obtain $\U e^{it H(\la)}\U^{-1}=e^{it(\SQ(T)+\Eg)}$. 
By the unitary covariance of functional calculus, we have 
\begin{equation*}
\U e^{it H(\la)}\U^{-1}=e^{it\U H(\la)\U^{-1}},\quad t\in\R.
\end{equation*}
Hence \eqref{eq:main1} holds. 
The equation \eqref{eq:main1} and the well-known spectral properties of $\SQ(T)$ imply that 
$\Eg$ is the ground state energy of $H(\lambda)$ and $\Om$ is the unique ground state 
of $H(\lambda)$. ${}\hfill\Box$

\begin{lemma}
The ground state energy $\Eg$ is given as follows: 
\begin{align}
&\Eg=\dfrac{\la}{4}\|g\|^2-\Tr(T^{1/2}V^*VT^{1/2}),\label{eq:eg}\\
&\Tr(T^{1/2}V^*VT^{1/2})=
\dfrac{\la^2}{4}\disp\int_{[E_0,\infty)}\disp\int_{[E_0,\infty)}
\dfrac{\mu}{(\mu+\mu')^2|\Dm(\mu^2-E_0^2)|^2}d\|E(\mu)g\|^2d\|E(\mu')g\|^2.\label{eq:gse}
\end{align}
\end{lemma}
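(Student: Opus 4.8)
The plan is to compute the Fock‑vacuum expectation $\inner<\Omz,H(\la)\Omz>$ in two ways and compare them. Since $\SQ(T)\Omz=0$, $\Omz\in D(\SQ(T))=D(H(\la))$ (for $\la>\lac$), and $\inner<\Omz,\Segal(g)^2\Omz>=\tfrac12\|(A(g)+A(g)^*)\Omz\|^2=\tfrac12\|A(g)^*\Omz\|^2=\tfrac12\|g\|^2$ (using $A(g)\Omz=0$ and the CCR \eqref{eq:ccr}), one gets immediately
\[
\inner<\Omz,H(\la)\Omz>=\frac{\la}{2}\inner<\Omz,\Segal(g)^2\Omz>=\frac{\la}{4}\|g\|^2 .
\]
On the other hand, Theorem \ref{Main} (1) yields the operator identity $H(\la)=\U^{-1}\SQ(T)\U+\Eg$, so that $\inner<\Omz,H(\la)\Omz>=\inner<\Omz,\U^{-1}\SQ(T)\U\Omz>+\Eg$. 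Comparing,
\[
\Eg=\frac{\la}{4}\|g\|^2-\inner<\Omz,\U^{-1}\SQ(T)\U\Omz> ,
\]
and it remains to identify $\inner<\Omz,\U^{-1}\SQ(T)\U\Omz>$ with $\Tr(T^{1/2}V^*VT^{1/2})$.

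For this, fix a CONS $\{e_n\}_{n}\subset D(T)$ of $\msH$. On $\fbfin(D(T))$ one has, as a quadratic form, the mode expansion $\SQ(T)=\sum_{j,k}\inner<e_j,Te_k>A(e_j)^*A(e_k)$; since $\U^{-1}A(f)\U=B(f)=A(Uf)+A(JVf)^*$ by Lemma \ref{UB(f)U} (and likewise for the adjoint), conjugation by $\U^{-1}$ turns this into $\U^{-1}\SQ(T)\U=\sum_{j,k}\inner<e_j,Te_k>B(e_j)^*B(e_k)$. Using $B(e_k)\Omz=A(Ue_k)\Omz+A(JVe_k)^*\Omz=JVe_k$ (viewed as a one‑particle vector), the anti-unitarity of $J$, and Parseval's identity, one obtains
\[
\inner<\Omz,\U^{-1}\SQ(T)\U\Omz>=\sum_{j,k}\inner<e_j,Te_k>\inner<JVe_j,JVe_k>=\sum_{j,k}\inner<e_j,Te_k>\inner<V^*Ve_k,e_j>=\sum_{j}\inner<Te_j,V^*Ve_j>=\Tr(T^{1/2}V^*VT^{1/2}),
\]
and the last quantity is finite because $V$ is Hilbert--Schmidt (Lemma \ref{hs}) and $g\in D(T^{1/2})$, so that $T^{1/2}V^*VT^{1/2}$ is trace class. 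This establishes \eqref{eq:eg}. The delicate step — which I expect to be the main obstacle — is justifying the termwise evaluation at $\Omz$: because $T$ is purely absolutely continuous it has no orthonormal basis of eigenvectors, so $\U\Omz$ need not lie in $\fbfin(D(T))$ and the displayed equalities hold a priori only as form identities on that dense space. One makes the evaluation rigorous by noting $\U\Omz\in D(\SQ(T))$ (since $\Omz\in D(H(\la))$ and $\U$ intertwines $H(\la)$ with $\SQ(T)+\Eg$), writing $\inner<\Omz,\U^{-1}\SQ(T)\U\Omz>=\inner<\U\Omz,\SQ(T)\U\Omz>$, cutting off the particle number so that the mode expansion applies exactly, and using the trace‑class bound above (equivalently, the Hilbert--Schmidt bound for $V$) to control the limit.

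It remains to prove \eqref{eq:gse}, the explicit evaluation of $\Tr(T^{1/2}V^*VT^{1/2})=\Tr(TV^*V)$. Here I would reuse the representation obtained in the proof of Lemma \ref{hs},
\[
\inner<f,V^*Vf>=\frac{\la^2}{4}\int_{[E_0,\infty)}\bigl|\inner<f,R_{-\mu}(T)\Dm(T^2-E_0^2)^{-1}g>\bigr|^2\,d\|E(\mu)g\|^2 ,\qquad f\in\msH .
\]
Choosing a CONS contained in $D(T^{1/2})$ and applying Tonelli's theorem to the non‑negative integrand gives $\Tr(TV^*V)=\tfrac{\la^2}{4}\int_{[E_0,\infty)}\|T^{1/2}R_{-\mu}(T)\Dm(T^2-E_0^2)^{-1}g\|^2\,d\|E(\mu)g\|^2$; evaluating the inner norm by functional calculus as $\int_{[E_0,\infty)}\mu'(\mu'+\mu)^{-2}\,|\Dm(\mu'^2-E_0^2)|^{-2}\,d\|E(\mu')g\|^2$ and relabelling $\mu\leftrightarrow\mu'$ produces exactly \eqref{eq:gse}. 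Equivalently, \eqref{eq:gse} is the computation \eqref{eq:trvv} from the proof of Lemma \ref{hs} carried out with the two extra copies of $T^{1/2}$ inserted, which contribute the factor $\mu'$ in the numerator.
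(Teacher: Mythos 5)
Your proposal is correct and follows essentially the same route as the paper: both compute $\inner<\Omz,H(\la)\Omz>=\la\|g\|^2/4$ and identify $\inner<\U\Omz,\SQ(T)\U\Omz>$ with $\Tr(T^{1/2}V^*VT^{1/2})$ through the Bogoliubov relation $B(f)\Omz=JVf$, and both obtain \eqref{eq:gse} by redoing the computation \eqref{eq:trvv} with the extra factor of $T^{1/2}$ inserted. The only difference is that the termwise evaluation you justify by a mode-expansion/cutoff sketch is exactly what the paper gets in one stroke from Lemma \ref{conv} in the Appendix, via $\|\SQ(T)^{1/2}\U\Omz\|^2=\sum_n\|A(T^{1/2}e_n)\U\Omz\|^2=\sum_n\|VT^{1/2}e_n\|^2$ for a CONS $\{e_n\}\subset D(T^{1/2})$, which also sidesteps your passage from $\Tr(TV^*V)$ to $\Tr(T^{1/2}V^*VT^{1/2})$ (and note that the trace-class property really comes from the explicit double-integral estimate in your last paragraph, not from $V$ being Hilbert--Schmidt alone).
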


\begin{proof}
The operator $\U$ leaves $D(\SQ(T))$ invariant by Theorem \ref{Main} (1). 
In particular, $\U\Omz\in D(\SQ(T)^{1/2})$. 
Thus, by Lemma \ref{conv}, the isometry of $\U$ and the definition of $B(\cdot)$, we have $\inner<\Omz,(H(\la)-\Eg)\Omz>=\Tr(T^{1/2}V^*VT^{1/2})$. 
It is easy to see that $\inner<\Omz,H(\la)\Omz>=\la\|g\|^2/4$. 
Hence \eqref{eq:eg} holds. 
Formula \eqref{eq:gse} can be proved in the same way as in \eqref{eq:trvv}.
\end{proof}

\section{Analysis in the case $\lacz<\la<\lac$}
In Section 5, we proved Theorem \ref{Main} (1). But the proof is valid only for the case  $\la>\lac$. Therefore it is necessary to find another pair of  operators $U$ and $V$ if one  wants to use a Bogoliubov transformation for the spectral analysis of $H(\lambda)$ in the case  $\lambda\leq\lambda_{\rm c}$. 
In this section we assume that $T$ and $g$ satisfy Assumption \ref{ass}, $E_0>0$ and 
$\lacz<\la<\lac$. Under these conditions, we can define operators $\xi, X, Y$ and $T_{\pm}$ as follows:
\begin{align*}
&\xi:=\Omp T\Omp^*+\beta P,\\
&X:=U\Omp^*+\Tp P,\ Y:=V\Omp^*+\Tm P,\\
&\Tpm:=\frac{1}{2}(\beta^{1/2}T^{-1/2}\pm\beta^{-1/2}T^{1/2}),
\end{align*}
where $\beta:=(E_0^2+x_0)^{1/2}$.

\begin{rem}
\label{betasmall}
The definition of $x_0$ implies the following:
\begin{equation*}
E_0^2+x_0\left\{
\begin{array}{ccc}
>0,&\mathrm{if}\ &\lacz<\la<\lac,\\
=0,&\mathrm{if}\ &\la=\lacz,\\
<0,&\mathrm{if}\ &\la<\lacz.
\end{array}
\right.
\end{equation*}
Thus, in the case $\lacz<\la<\lac$, we see that the inequality $0<\beta<E_0$ holds. 
Let 
\begin{equation*}
C(f):=A(Xf)+A(JYf)^*, f\in\msH.
\end{equation*}
Then $C(f)$ is a densely defined closable operator. We denotes its closure by the same symbol.
\end{rem}

\subsection{Properties of $X,Y$ and $\xi$}
In this subsection we study operators $X,Y$ and $\xi$. First, we consider $\xi$. Let be 
\begin{equation*}
\tilde{T}:=\Omp T\Omp^*.
\end{equation*}
\begin{lemma}
The operator $\tilde{T}$ is a self-adjoint operator with $D(\tilde{T})=D(T)$.
\end{lemma}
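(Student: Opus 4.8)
The plan is to exploit that, since $\lacz<\la<\lac$ forces $\la<\lac$, the operator $\Omp$ is a partial isometry. By Lemma \ref{omom=i} and the remark following it, $\Omp^*\Omp=I$ and $\Omp\Omp^*=I-P$; hence $\Omp$ is isometric with range $\msM:=\Ran(I-P)$, and since $P$ is the rank-one orthogonal projection onto $\C\Ub$ (recall $\|\Ub\|=1$), applying $\Omp\Omp^*$ to $\Ub$ together with the isometry of $\Omp$ gives $\Omp^*\Ub=0$. Thus $\Ker\Omp^*=\C\Ub$, and $\Omp$ restricts to a unitary isomorphism $W\colon\msH\to\msM$ whose inverse is $\Omp^*\rest\msM$.

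First I would pin down the domain. As $\Omp$ and $\Omp^*$ are bounded, $D(\tilde{T})=\{f\in\msH:\Omp^*f\in D(T)\}$. If $f\in D(T)$ then $\Omp^*f=f+\Rp^*f$, and $\Rp^*f\in D(T^2)\subset D(T)$ by Lemma \ref{ranrpm}, so $\Omp^*f\in D(T)$; hence $D(T)\subset D(\tilde{T})$. Conversely, if $\Omp^*f\in D(T)$ then $f=\Omp\Omp^*f+Pf$, where $\Omp\Omp^*f\in D(T)$ because $\Omp=I+\Rp$ maps $D(T)$ into $D(T)$ (again Lemma \ref{ranrpm}), and $Pf=\inner<\Ub,f>\Ub\in D(T^2)\subset D(T)$. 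Hence $D(\tilde{T})=D(T)$. On this domain $\tilde{T}$ is symmetric: for $f,g\in D(T)$ we have just seen $\Omp^*f,\Omp^*g\in D(T)$, so $\inner<\tilde{T}f,g>=\inner<T\Omp^*f,\Omp^*g>=\inner<\Omp^*f,T\Omp^*g>=\inner<f,\tilde{T}g>$.

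For self-adjointness I would use the orthogonal decomposition $\msH=\msM\oplus\C\Ub$ and show that $\tilde{T}$ equals $\hat{T}\oplus 0$, where $\hat{T}:=WTW^{-1}$ is the unitary conjugate of $T$, acting self-adjointly in the Hilbert space $\msM$, and $0$ is the zero operator on the one-dimensional space $\C\Ub$. A direct sum of self-adjoint operators is self-adjoint, so this concludes the argument. To check the identity, write $f=m+c\Ub$ with $m\in\msM$ and $c\in\C$; then $\Omp^*f=\Omp^*m=W^{-1}m$ since $\Omp^*\Ub=0$ and $\Omp^*\rest\msM=W^{-1}$. Hence $f\in D(\tilde{T})$ iff $m\in W(D(T))=D(\hat{T})$, and in that case $\tilde{T}f=\Omp T W^{-1}m=WTW^{-1}m=\hat{T}m=(\hat{T}\oplus 0)f$. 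This also shows $D(\tilde{T})=W(D(T))\oplus\C\Ub$, consistent with the domain computation above.

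The one delicate point — and the reason one cannot simply invoke $(\Omp T\Omp^*)^*=\Omp T\Omp^*$ — is that $\Omp^*$ is not injective, so the naive adjoint-of-a-product formula does not apply to $\Omp T\Omp^*$ directly. The decomposition $\msH=\msM\oplus\C\Ub$ is chosen precisely to remove this difficulty: on $\msM$ it replaces $\Omp^*$ by the genuine inverse $W^{-1}$, for which the product formula is unproblematic, while on $\C\Ub$ the operator is trivially bounded and self-adjoint. Everything else reduces to routine bookkeeping with Lemmas \ref{ranrpm} and \ref{omom=i}.
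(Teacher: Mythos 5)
Your argument is correct, but it takes a different route from the paper's. The paper proves self-adjointness directly: having identified $D(\tilde{T})=D(T)$ via Lemma \ref{ranrpm}, it notes $\tilde{T}$ is symmetric and then, for $\phi\in D((\tilde{T})^*)$, tests against vectors $\Omp\psi$ with $\psi\in D(T)$ and uses $\Omp^*\Omp=I$ to get $\inner<\Omp^*(\tilde{T})^*\phi,\psi>=\inner<\Omp^*\phi,T\psi>$, whence $\Omp^*\phi\in D(T)$ by the self-adjointness of $T$, so $D((\tilde{T})^*)\subset D(\tilde{T})$. You instead exhibit $\tilde{T}$ as the direct sum $WTW^{-1}\oplus 0$ relative to $\msH=\Ran(I-P)\oplus\C\Ub$, with $W$ the unitary obtained by restricting the codomain of the isometry $\Omp$; the needed facts ($\Omp^*\Ub=0$, $P\,\Omp=0$, $\Ran\Omp\subset\Ran(I-P)$, $W^{-1}=\Omp^*\rest\Ran(I-P)$, and the domain bookkeeping via Lemma \ref{ranrpm} and $\Ub\in D(T^2)$) are all justified or routine, so the identification is sound and self-adjointness follows from unitary invariance and stability under direct sums. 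What your route buys is that the spectral structure of $\tilde{T}$ (and hence of $\xi$) comes for free from $\tilde{T}\cong T\oplus 0$ — essentially the content of Lemma \ref{sppT'}, which the paper proves separately afterwards via an explicit spectral measure $\Omp E(\cdot)\Omp^*+E_P(\cdot)$; the paper's argument for the present lemma is shorter and avoids setting up the decomposition, at the cost of not yielding that extra information.
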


\begin{proof}
By Lemma \ref{ranrpm} we see that $D(\tilde{T})=D(T)$. Hence $\tilde{T}$ is symmetric. 
For any $\phi\in D((\tilde{T})^*)$ and $\psi\in D(T)=D(\tilde{T})$, we have 
$\inner< \Omp^*(\tilde{T})^*\phi, \psi>=\inner<\Omp^*\phi,T\psi>$. 
This implies that $\Omp^*\phi\in D(T)$. Hence $\tilde{T}$ is self-adjoint.
\end{proof}

\begin{lemma}
\label{sppT'}
The spectra of $\tilde{T}$ are as follows:
\begin{equation*}
\sigma(\tilde{T})=\{0\}\cup\sigma(T),\sigac(\tilde{T})=\sigma(T),\sigp(\tilde{T})=\{0\},\sigsc(\tilde{T})=\emptyset.
\end{equation*}
\end{lemma}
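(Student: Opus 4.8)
The operator $\tilde T=\Omp T\Omp^*$ differs from $T$ by a conjugation with the partial isometry $\Omp$ (recall from Lemma \ref{omom=i} and the remark after it that $\Omp^*\Omp=I$ and $\Omp\Omp^*=I-P$, since $\la<\la_c$). The natural strategy is therefore to split $\msH$ orthogonally as $\Ran P\oplus\Ran(I-P)$ and analyze $\tilde T$ on each summand. On $\Ran(I-P)$ the map $\Omp^*$ is unitary onto $\msH$, so $\tilde T\rest\Ran(I-P)$ is unitarily equivalent to $T$; on $\Ran P$, which is one-dimensional and spanned by $\Ub$, I expect $\tilde T$ to act as $0$. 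This will give $\sigma(\tilde T)=\{0\}\cup\sigma(T)$ and, more refined, $0\in\sigp(\tilde T)$ while the rest of the spectrum is purely absolutely continuous of the same type as $T$.

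The key steps, in order. First I would show $\Ub\in\Ker\tilde T$, i.e. $\Omp^*\Ub=0$: this follows because $\Ran(\Omp^*)\perp\Ker(\Omp)$ and $\Ub$ spans $\Ker\Omp^*=\Ran(I-\Omp\Omp^*)=\Ran P$; equivalently one checks $\Omp^*\Ub=(I+\Rp^*)\Ub$ and uses the explicit form of $\Ub$ together with $\Rp^*\Ub=-\Ub$, which is exactly the content of $\Omp\Omp^*=I-P$ applied suitably. Hence $0\in\sigp(\tilde T)$ with eigenvector $\Ub$. Second, decompose: for $\psi\in D(\tilde T)=D(T)$ write $\psi=\alpha\Ub+\psi'$ with $\psi'\in\Ran(I-P)$; since $\Omp^*$ annihilates $\Ub$ and maps $\Ran(I-P)$ isometrically onto $\msH$, one gets $\tilde T\psi=\Omp T\Omp^*\psi'$, and the restriction of $\tilde T$ to the reducing subspace $\Ran(I-P)$ is $\Omp T\Omp^*\rest\Ran(I-P)$, unitarily equivalent to $T$ via the unitary $\Omp^*\rest\Ran(I-P):\Ran(I-P)\to\msH$. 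I must check that $\Ran P$ and $\Ran(I-P)$ both reduce $\tilde T$; this is where I need $T\Ub\in\Ran P$ or a compatible statement — actually it is cleaner to note $\tilde T=\Omp T\Omp^*$ has $\Ran(I-P)$ invariant (from $\Omp^*\Omp=I$ one gets $\Omp^*\tilde T=T\Omp^*$ on $D(T)$, hence $\tilde T$ leaves $\Ran\Omp=\Ran(I-P)$ invariant) and is symmetric, so both subspaces reduce it. Third, read off the spectral types: on $\Ran(I-P)$, unitary equivalence to $T$ gives $\sigma_{\rm ac}=\sigma(T)$, $\sigma_{\rm p}=\emps$ (as $T$ is purely absolutely continuous by Assumption \ref{ass}(1)), $\sigma_{\rm sc}=\emps$; on $\Ran P$, $\tilde T$ is $0$ times the rank-one projection, contributing $\{0\}$ to $\sigma_{\rm p}$. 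Combining gives the claimed identities, noting $0\notin\sigma(T)$ since $T$ is injective (Remark \ref{remass}) so the point $0$ is genuinely isolated in the point spectrum.

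The main obstacle I anticipate is the bookkeeping around domains and the reducing-subspace argument: one has to be careful that $\Omp^*\rest\Ran(I-P)$ really is unitary onto $\msH$ (it is isometric from $\Omp^*\Omp=I$ read the right way, and surjective because $\Omp$ maps $\msH$ onto $\Ran(I-P)$), and that conjugating an unbounded self-adjoint operator by this unitary behaves as expected on domains — but Lemma \ref{ranrpm} already pins down $D(\tilde T)=D(T)$, which does most of the work. A lighter-weight alternative, matching the terse style of the earlier proofs, is simply to invoke the decomposition $\tilde T=0\rest\Ran P\ \oplus\ (\text{unitarily equiv.\ to }T)\rest\Ran(I-P)$ and cite the standard spectral-mapping facts for second-quantization-free self-adjoint operators; I would phrase the final write-up that way.
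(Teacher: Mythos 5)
Your argument is correct, and it reaches the conclusion by a more streamlined route than the paper. Both proofs rest on the same structural fact, namely that $\Omp$ is an isometry with final subspace $\Ran(I-P)$ and $\Omp^*\Ub=0$, so that $\msH=\Ran P\oplus\Ran(I-P)$ splits $\tilde T$ into the zero operator on the one-dimensional piece and a copy of $T$ on the rest. Where you differ is in how the copy of $T$ is exploited: you promote $\Omp$ to a unitary from $\msH$ onto $\Ran(I-P)$ and read off $\sigma$, $\sigac$, $\sigp$, $\sigsc$ of $\tilde T\rest\Ran(I-P)$ directly from unitary invariance, whereas the paper first exhibits the spectral measure $E_{\tilde T}(B)=\Omp E(B)\Omp^*+E_P(B)$ explicitly, identifies the absolutely continuous and singular parts from it, and then proves $\sigac(\tilde T)=\sigma(T)$ by a two-sided Weyl (approximate eigenvector) argument with sequences $\phi_n=\Omp\psi_n$ and $\Omp^*\eta_n$. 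Your unitary-equivalence argument subsumes both of those steps, at the price of the domain bookkeeping you flag: one should record that $P$ maps $D(T)=D(\tilde T)$ into itself (because $\Ub\in D(T)$), that $P\tilde T=0$ and $\tilde T P=0$ on $D(\tilde T)$ (both via $\Omp^*\Ub=0$, equivalently $P\Omp=0$), so the two subspaces genuinely reduce $\tilde T$, and that $D(T)\cap\Ran(I-P)=\Omp D(T)$ (both inclusions follow from $\Rp,\Rp^*$ mapping $\msH$ into $D(T)$, Lemma \ref{ranrpm}), so the restriction is unitarily equivalent to $T$ on the nose. Two small corrections: $\Omp\Omp^*=I-P$ requires $\la<\lac$, which is the standing hypothesis of this section, so it should be cited as such; and your parenthetical that $0\notin\sigma(T)$ ``since $T$ is injective'' is not right as stated (injectivity does not exclude $0$ from the continuous spectrum) — what guarantees $0\notin\sigma(T)$ here is the assumption $E_0>0$, though the lemma's identities do not actually need this fact. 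Neither point affects the validity of the proof.
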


\begin{proof}
We define a family of projection operators $\{E_P(B)\ |\ B\in\Borel\}$ on $\msH$ as follows: 
$E_P(B)=0$ if $0\nin B$ and $E_P(B)=P$ if $0\in B$ for each $B\in\Borel$. 
It is easy to see that $\{E_{\tilde{T}}(B):=\Omp E(B)\Omp^*+E_P(B) |\ B\in\Borel\}$ is a  spectral measure. Using functional calculus, we see that $E_{\tilde{T}}(\cdot)$ 
 is the spectral measure of $\tilde{T}$. 
It is easy to see that the absolutely continuous part (resp. singular part) of $\tilde{T}$ is $\tilde{T}\rest\Ran(I-P)$ (resp. $\tilde{T}\rest\Ran(P)$) since $T$ is absolutely continuous and $\Ompm$ are partial isometries. 
Thus we see $\sigma(\tilde{T})=\{0\}\cup\sigac(\tilde{T}),\sigp(\tilde{T})=\{0\},\sigsc(\tilde{T})=\emptyset$.

We next show that $\sigac(\tilde{T})=\sigma(T)$. 
For any $\mu\in\sigma(T)$, there is a sequence $\psi_n\in D(T), n\in\N$ such that $\|\psi_n\|=1$ for all $n\in\N$ and $\lim_{n\ra\infty}\|(T-\mu)\psi_n\|=0$. For each $n\in\N$, there is a  $\phi_n\in$ Ran$(I-P)$  such that $\psi_n=\Omp^*\phi_n$. Then $\|\phi_n\|=\|\Omp\psi_n\|=\|\psi_n\|=1$ and $\|(\tilde{T}-\mu)\phi_n\|=\|(T-\mu)\psi_n\|\ra0$ as $n\ra\infty$. Thus we have $\mu\in\sigma(\tilde{T}\rest\Ran(I-P))=\sigma_{\rm ac}(\tilde{T})$. 
For any $\mu\in\sigma_{\rm ac}(\tilde{T})$, there is a sequence $\eta_n\in D(\tilde{T})\cap\Ran(I-P)$ such that $\|\eta_n\|=1$ and $\lim_{n\ra\infty}\|(\tilde{T}-\mu)\eta_n\|=0$. 
Then we easily see that $\Omp^*\eta_n\in D(T)$ for all $n\in\N$. 
The equation $\Omp\Omp^*\eta_n=\eta_n$ implies that $\|\Omp^*\eta_n\|=1$ for all $n\in\N$ and 
\begin{equation*}
\|(T-\mu)\Omp^*\eta_n\|=\|(\tilde{T}-\mu)\eta_n\|\ra0,\quad n\ra\infty.
\end{equation*}
Thus $\mu\in\sigma(T)$. Hence $\sigac(\tilde{T})=\sigma(T)$. 
\end{proof}

\begin{lemma}
\label{xi}
The operator $\xi$ is an injective, non-negative self-adjoint operator with $D(\xi)=D(T)$ and we have the following equations:
\begin{equation}
\label{eq:sppxi}
\sigma(\xi)=\{\beta\}\cup\sigma(T), \sigac(\xi)=\sigma(T), \sigp(\xi)=\{\beta\}, \sigsc(\xi)=\emps.
\end{equation}
In particular, $\beta$ is the ground state energy of $\xi$, which is an isolated eigenvalue of $\xi$, and $\Ub$ is the unique ground state of $\xi$. 
\end{lemma}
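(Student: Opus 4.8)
The plan is to exhibit the orthogonal decomposition $\msH=\Ran(I-P)\oplus\Ran P$ as a reducing decomposition for $\xi=\tilde T+\beta P$, on which $\xi$ acts as (a unitary copy of) $T$ on the first summand and as multiplication by the scalar $\beta$ on the one–dimensional second summand $\C\Ub$; then all the asserted properties read off from those of $T$ together with $0<\beta<E_0$.

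First I would record the partial–isometry identities available in the present regime $\la<\lac$: by Lemma \ref{omom=i} (and the remark following it) we have $\Omp^*\Omp=I$ and $\Omp\Omp^*=I-P$, so $\Omp$ is an isometry, hence injective. From $\Omp\Omp^*P=(I-P)P=0$ and injectivity of $\Omp$ we get $\Omp^*P=0$, and taking adjoints $P\Omp=0$. Since $\Ub\in D(T^2)\cap D(T^{-1})\subset D(T)=D(\tilde T)$, the rank–one projection $P$ maps $D(\tilde T)$ into itself, and $\Omp^*P=0$, $P\Omp=0$ give $\tilde T P=\Omp T\Omp^*P=0$ and $P\tilde T=P\Omp T\Omp^*=0$ on $D(\tilde T)$. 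Thus $\tilde T$ — already known from the preceding lemmas to be self-adjoint with $D(\tilde T)=D(T)$ — is reduced by $P$ and by $I-P$; the bounded self-adjoint operator $\beta P$ trivially is; hence so is $\xi$. Consequently $\xi$ is self-adjoint with $D(\xi)=D(\tilde T)=D(T)$ (this also follows directly from $\beta P$ being a bounded perturbation of the self-adjoint $\tilde T$), and $\xi$ decomposes as the orthogonal sum of $\xi\rest\Ran(I-P)=\tilde T\rest\Ran(I-P)$ and $\xi\rest\Ran P=\beta I$ on $\Ran P$.

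Next I would identify the two summands spectrally. On $\Ran P=\C\Ub$ the operator $\xi$ equals $\beta I$, so $\xi\Ub=\beta\Ub$ with $\|\Ub\|=1$, and by Remark \ref{betasmall} together with the standing assumption $E_0>0$ we have $0<\beta<E_0$. On $\Ran(I-P)$, the unitary $\Omp\colon\msH\to\Ran(I-P)$ intertwines $T$ and $\tilde T\rest\Ran(I-P)$, since $\Omp^*\bigl(\tilde T\rest\Ran(I-P)\bigr)\Omp=\Omp^*\Omp\,T\,\Omp^*\Omp=T$; hence $\tilde T\rest\Ran(I-P)$ is unitarily equivalent to $T$, so it is injective, non-negative, purely absolutely continuous, with spectrum $\sigma(T)$ and no eigenvalues (this is, in effect, Lemma \ref{sppT'} restricted to the absolutely continuous part). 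Taking orthogonal sums, $\xi$ is injective and non-negative, and $\sigma(\xi)=\{\beta\}\cup\sigma(T)$, $\sigac(\xi)=\sigma(T)$, $\sigp(\xi)=\{\beta\}$, $\sigsc(\xi)=\emps$; the union is disjoint and $\beta$ is isolated because $\beta<E_0=\inf\sigma(T)$. Finally $\inf\sigma(\xi)=\min\{\beta,E_0\}=\beta\in\sigp(\xi)$ shows that $\beta$ is the ground state energy, it is an isolated eigenvalue, and $\Ker(\xi-\beta)=\Ran P=\C\Ub$ is one-dimensional, so $\Ub$ is the unique ground state of $\xi$.

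I expect the only delicate point to be the reduction step — verifying $\tilde T P=P\tilde T=0$ on $D(T)$, i.e. that $P$ genuinely commutes with the unbounded operator $\tilde T$ — which rests precisely on $\Ompm$ being partial isometries with trivial kernel and final space $\Ran(I-P)$, a feature special to the range $\la<\lac$; everything else is routine bookkeeping with reducing subspaces and unitary equivalence. It is also worth keeping in mind that the hypothesis $E_0>0$ is used twice: to ensure $\beta>0$ and to ensure $\beta\notin\sigma(T)$, so that $\beta$ is a genuine isolated eigenvalue rather than one embedded in the continuous spectrum.
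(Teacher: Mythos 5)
Your proposal is correct and follows essentially the same route as the paper: decompose $\msH=\Ran(I-P)\oplus\Ran P$ as a reducing decomposition for $\xi$, identify $\xi$ as a copy of $T$ on the first summand and $\beta$ on $\C\Ub$, use $0<\beta<E_0$ (Remark \ref{betasmall}) to get an isolated ground state energy, and use the absolute continuity of $T$ to rule out any eigenvector component in $\Ran(I-P)$, which is exactly the paper's uniqueness argument. The only minor difference is that you re-derive the spectral identification of $\tilde{T}\rest\Ran(I-P)$ directly via the unitary $\Omp\colon\msH\to\Ran(I-P)$, whereas the paper simply cites Lemma \ref{sppT'} (proved there with Weyl sequences) together with the direct-sum spectral calculus.
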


\begin{proof}
By Lemma \ref{sppT'} and a spectral property of direct sum of self-adjoint operators, we have the equation \eqref{eq:sppxi}. 
Thus $\beta$ is an isolated ground state energy by Remark  \ref{betasmall}. 
It is easy to see that $\Ub$ is a ground state of $\xi$. Assume that $f\in \Ker(\xi-\beta)$ satisfies $(I-P)f\neq0$. 
Then $\Omp^*f\neq0$ by Lemma \ref{omom=i}. This implies $T\Omp^*f=\beta\Omp^*f$, but this contradicts Assumption \ref{ass} (1). Hence $(I-P)f=0$, implying that the ground state of $\xi$ is unique.
\end{proof}

\begin{lemma}
\label{xi1/2}
The operators $\xi^{\pm1/2}$ are given by 
\begin{align}
\label{eq:xi1/2}
\xi^{1/2}=&\Omp T^{1/2}\Omp^*+\beta^{1/2} P,\\
\label{eq:xim1/2}
\xi^{-1/2}=&\Omp T^{-1/2}\Omp^*+\beta^{-1/2} P
\end{align}
with $D(\xi^{\pm1/2})=D(T^{\pm1/2})$. 
\end{lemma}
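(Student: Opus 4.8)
The plan is to show that the two operators displayed on the right-hand sides of \eqref{eq:xi1/2} and \eqref{eq:xim1/2} are, respectively, the (unique) non-negative self-adjoint square root of $\xi$ and its inverse. Write $S_{+}:=\Omp T^{1/2}\Omp^*+\beta^{1/2}P$ and $S_{-}:=\Omp T^{-1/2}\Omp^*+\beta^{-1/2}P$. Since $\lacz<\la<\lac$, Lemma \ref{omom=i} gives $\Omp^*\Omp=I$ and $\Omp\Omp^*=I-P$, so $\Omp$ is an isometry onto $\Ran(I-P)$, $\Omp^*$ has kernel $\Ran(P)=\C\Ub$, and in particular $\Omp^*\Ub=0$, $P\Omp=0$, $\Omp^*P=0$. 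Using $\Omp^\sharp=I+\Rp^\sharp$ together with Lemma \ref{ranrpm} (ranges of $\Rp^\sharp$ inside $D(T^{-1})\cap D(T)$, and $\Rp^\sharp$ maps $D(T)$ into $D(T^2)$), one checks that $\Omp^\sharp$ leaves each of $D(T^{1/2})$, $D(T^{-1/2})$ and $D(T)$ invariant; hence $D(S_{+})=D(T^{1/2})$, and since here $E_0>0$ forces $T^{-1/2}$ (hence $S_{-}$) to be bounded and everywhere defined, $D(S_{-})=\msH=D(T^{-1/2})$. Also $\Ub\in D(T^{-1})\cap D(T^2)$, so $\Ran(P)\subset D(T^{1/2})$.

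Next I would show $S_{+}$ is self-adjoint and non-negative. Symmetry is immediate from the formula; self-adjointness follows as in the proof of self-adjointness of $\tilde{T}=\Omp T\Omp^*$ (the lemma preceding Lemma \ref{sppT'}): if $\phi\in D(S_{+}^*)$, testing against $\psi=\Omp\eta$ with $\eta\in D(T^{1/2})$ and using $\Omp^*\Omp=I$, $P\Omp=0$ gives $\inner<\Omp^*S_{+}^*\phi,\eta>=\inner<\Omp^*\phi,T^{1/2}\eta>$ for all such $\eta$, whence $\Omp^*\phi\in D((T^{1/2})^*)=D(T^{1/2})$ and $\phi=\Omp\Omp^*\phi+P\phi\in D(T^{1/2})=D(S_{+})$. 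Non-negativity is clear from $\inner<f,S_{+}f>=\inner<\Omp^*f,T^{1/2}\Omp^*f>+\beta^{1/2}\|Pf\|^2\ge0$. Then, for $f\in D(T)=D(\xi)$, one verifies $S_{+}f\in D(S_{+})$ and, expanding $S_{+}^2f$ and killing the cross terms by $\Omp^*\Omp=I$, $\Omp^*P=0$, $P\Omp=0$, $P^2=P$ and $T^{1/2}T^{1/2}=T$ on $D(T)$, obtains $S_{+}^2f=\Omp T\Omp^*f+\beta Pf=\xi f$. Thus $S_{+}^2$ is a self-adjoint extension of the self-adjoint operator $\xi$, so $S_{+}^2=\xi$; by uniqueness of the non-negative self-adjoint square root, $S_{+}=\xi^{1/2}$ with $D(\xi^{1/2})=D(T^{1/2})$, which is \eqref{eq:xi1/2}. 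For \eqref{eq:xim1/2}, the same bookkeeping yields $S_{-}S_{+}=I$ on $D(T^{1/2})$ and $S_{+}S_{-}=I$ on $\msH$ (now using $T^{-1/2}T^{1/2}=I$ on $D(T^{1/2})$); since $\xi$ is injective with $0\notin\sigma(\xi)$ by Lemma \ref{xi}, $\xi^{-1/2}=(\xi^{1/2})^{-1}$ is bounded and everywhere defined, and these identities force $S_{-}=\xi^{-1/2}$, so $D(\xi^{-1/2})=\msH=D(T^{-1/2})$.

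A more direct alternative is to argue through the spectral theorem: as in the proofs of Lemmas \ref{sppT'} and \ref{xi}, $E_{\xi}(B):=\Omp E(B)\Omp^*+\chi_B(\beta)P$, $B\in\Borel$, is the spectral measure of $\xi$, and $E_{\xi}(\{0\})=0$ since $T$ has no point spectrum and $\beta>0$; applying the functional calculus to $x\mapsto x^{\pm1/2}$ then gives \eqref{eq:xi1/2} and \eqref{eq:xim1/2} at once, the domain assertions being read off from $\|E_{\xi}(\cdot)f\|^2=\|E(\cdot)\Omp^*f\|^2+\chi_{(\cdot)}(\beta)\|Pf\|^2$. I expect the only delicate point in either route to be the domain bookkeeping — confirming that $\Omp^\sharp$ preserves $D(T^{1/2})$ and $D(T)$, so that $S_{+}^2$ genuinely extends $\xi$ (rather than merely agreeing on a core), together with the observation that $E_0>0$ makes $\xi^{-1/2}$ bounded; the algebraic identities themselves are routine consequences of $\Omp^*\Omp=I$, $\Omp^*\Ub=0$, $P\Omp=0$ and $P^2=P$.
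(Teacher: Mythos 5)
Your proposal is correct and follows essentially the same route as the paper: show that $S_{+}=\Omp T^{1/2}\Omp^{*}+\beta^{1/2}P$ is non-negative and self-adjoint with domain $D(T^{1/2})$, deduce $\xi\subset S_{+}^{2}$ and hence equality because a self-adjoint operator admits no proper symmetric extension (equivalently, by uniqueness of the non-negative square root), and then identify $S_{-}$ with $\xi^{-1/2}$ by showing it inverts $\xi^{1/2}$ (the paper phrases this as $S_{-}\subset\xi^{-1/2}$ plus self-adjointness, while you use boundedness of $T^{-1/2}$ from $E_{0}>0$, a harmless variation). Your sketched spectral-measure alternative is also consistent with the paper's own treatment of $\tilde{T}$ in Lemma \ref{sppT'}, but it is not needed.
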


\begin{proof}
We can show in the same way as in the proof of Lemma \ref{xi} that the right hand side of \eqref{eq:xi1/2} is non-negative, self-adjoint operator with its domain $D(T^{1/2})$. 
We have $\xi\subset(\Omp T^{1/2}\Omp^*+\beta^{1/2} P)^2$. Since a self-adjoint operator has no non-trivial symmetric extension, \eqref{eq:xi1/2} holds. 
In the same way as in the case of \eqref{eq:xi1/2}, we can show that the right hand side of  \eqref{eq:xim1/2} is a self-adjoint operator. 
We have $D(\Omp T^{-1/2}\Omp^*+\beta^{-1/2} P)\subset\Ran(\xi^{1/2})$ and $\xi^{1/2}(\Omp T^{-1/2}\Omp^*+\beta^{-1/2} P)=I$ on  $D(\Omp T^{-1/2}\Omp^*)$. Hence 
$\Omp T^{-1/2}\Omp^*+\beta^{-1/2} P\subset\xi^{-1/2}$. 
Thus the equation \eqref{eq:xim1/2} holds.
\end{proof}

Next, we study $X$ and $Y$.
\begin{lemma}
\label{ranxy}
The operators $X^{\sharp}$ and $Y^{\sharp}$ leave $D(T^{-1/2})\ ($resp. $D(T^{1/2}), D(T))$ invariant.
\end{lemma}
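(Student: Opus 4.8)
The plan is to split $X$ and $Y$ into elementary pieces and then track domain inclusions. Write $X=U\Omp^{*}+\Tp P$ and $Y=V\Omp^{*}+\Tm P$. Since $U$ and $\Omp^{*}$ are bounded and, because $\Ub\in D(T^{-1})\cap D(T^{2})$, the operators $\Tpm P\colon f\mapsto\inner<\Ub,f>\Tpm\Ub$ are bounded rank-one maps, $X$ and $Y$ are everywhere-defined bounded operators, and a short computation gives
\begin{equation*}
X^{*}=\Omp U^{*}+\inner<\Tp\Ub,\cdot>\Ub,\qquad Y^{*}=\Omp V^{*}+\inner<\Tm\Ub,\cdot>\Ub.
\end{equation*}
It then suffices to show that each of (i) $\Omp$ and $\Omp^{*}$, (ii) $U^{\sharp}$ and $V^{\sharp}$, and (iii) the rank-one operators $\inner<\Ub,\cdot>\Tpm\Ub$ and $\inner<\Tpm\Ub,\cdot>\Ub$ leaves $D(T^{-1/2})$, $D(T^{1/2})$ and $D(T)$ invariant; the assertion for $X^{\sharp}$ and $Y^{\sharp}$ then follows by composition of these building blocks.

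For (i) I would invoke Lemma~\ref{ranrpm}: since $\Omp=I+\Rp$ and $\Omp^{*}=I+\Rp^{*}$, and $\Rp^{\sharp}$ maps $\msH$ into $D(T^{-1})\cap D(T)$ and maps $D(T)$ into $D(T^{2})$, while $D(T^{-1})\subset D(T^{-1/2})$, $D(T)\subset D(T^{1/2})$ and $D(T^{2})\subset D(T)$, adding the identity preserves each of the three domains. Step (ii) is exactly Lemma~\ref{uvli}. For (iii) the only point requiring a computation is that $\Tpm\Ub\in D(T^{-1/2})\cap D(T^{1/2})\cap D(T)$; by the spectral theorem and $\Ub\in D(T^{-1})\cap D(T^{2})$ one has, for $\alpha\in\{-\tfrac12,\tfrac12,1\}$,
\begin{equation*}
\|T^{\alpha}\Tpm\Ub\|^{2}=\frac{1}{4}\int_{[E_0,\infty)}\left(\beta\mu^{-1}\pm2+\beta^{-1}\mu\right)\mu^{2\alpha}\,d\|E(\mu)\Ub\|^{2},
\end{equation*}
and, since $\mu\ge E_0>0$, the integrand is bounded above by a constant multiple of $1+\mu^{3}\le2+\mu^{4}$, so the integral is at most a constant times $2\|\Ub\|^{2}+\|T^{2}\Ub\|^{2}<\infty$. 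Combined with $\Ub\in D(T^{-1})\cap D(T^{2})\subset D(T^{-1/2})\cap D(T^{1/2})\cap D(T)$, this shows that each rank-one operator in (iii) sends $\msH$ into the two-dimensional space $\C\Ub+\C\Tpm\Ub$, which is contained in all three domains.

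Assembling (i)--(iii): $U\Omp^{*}$, $\Omp U^{*}$, $V\Omp^{*}$, $\Omp V^{*}$ and the rank-one remainder terms each leave $D(T^{-1/2})$, $D(T^{1/2})$ and $D(T)$ invariant, hence so do $X$, $X^{*}$, $Y$ and $Y^{*}$, which is the claim. I do not expect a genuine obstacle here: the proof is essentially careful bookkeeping of domain inclusions resting on Lemmas~\ref{ranrpm} and~\ref{uvli}, the only real computation being the elementary spectral estimate displayed above, which moreover simplifies in the present setting since $E_0>0$ renders $T^{-1}$ bounded and the $D(T^{-1/2})$-invariance automatic.
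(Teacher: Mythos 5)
Your proposal is correct and follows essentially the same route as the paper: the paper's proof is exactly this bookkeeping, deducing the invariance from Lemma \ref{ranrpm} (for $\Omp^{\sharp}=I+\Rp^{\sharp}$), Lemma \ref{uvli} (for $U^{\sharp},V^{\sharp}$) and the definitions of $X$ and $Y$. The only cosmetic difference is that you treat the rank-one terms $\Tpm P$ and their adjoints by the explicit spectral estimate based on $\Ub\in D(T^{-1})\cap D(T^{2})$, whereas the paper disposes of them by citing Lemma \ref{xi1/2}; both are adequate.
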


\begin{proof}
The assertion follows from Lemma \ref{ranrpm}, Lemma \ref{uvli}, Lemma \ref{xi1/2} and the definition of $X$ and $Y$. 
\end{proof}

\begin{lemma}
The following equations hold:
\begin{equation}
\label{eq:properconditionxy}
\left\{
\begin{array}{rl}
X^*X-Y^*Y&=I,\\
X^*_JY-Y^*_JX&=0,\\
XX^*-Y_JY^*_J&=I,\\
XY^*-Y_JX^*_J&=0.
\end{array}
\right.
\end{equation}
\end{lemma}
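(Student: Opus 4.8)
The plan is to reduce all four identities in \eqref{eq:properconditionxy} to a single structural pair of formulas for $X$ and $Y$. First I would add and subtract the defining expressions: from $U+V=T^{-1/2}\Omp T^{1/2}$, $U-V=T^{1/2}\Omp T^{-1/2}$, $\Tp+\Tm=\beta^{1/2}T^{-1/2}$, $\Tp-\Tm=\beta^{-1/2}T^{1/2}$ and Lemma \ref{xi1/2}, the definitions of $X$ and $Y$ give
\begin{align*}
X+Y&=T^{-1/2}\Omp T^{1/2}\Omp^*+\beta^{1/2}T^{-1/2}P=T^{-1/2}\left(\Omp T^{1/2}\Omp^*+\beta^{1/2}P\right)=T^{-1/2}\xi^{1/2},\\
X-Y&=T^{1/2}\Omp T^{-1/2}\Omp^*+\beta^{-1/2}T^{1/2}P=T^{1/2}\left(\Omp T^{-1/2}\Omp^*+\beta^{-1/2}P\right)=T^{1/2}\xi^{-1/2},
\end{align*}
both extending to everywhere-defined bounded operators since $X,Y$ are bounded and $E_0>0$ makes $T^{-1/2},\xi^{-1/2}$ bounded. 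These two formulas are the crux of the argument.

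Next I would show $X_J=X$ and $Y_J=Y$. Since $JT\subset TJ$ by Assumption \ref{ass}(2), Lemma \ref{f(T)_J=f(T)} gives $(T^{\pm1/2})_J=T^{\pm1/2}$. Lemma \ref{omf(T)om} with $F(x)=x$ gives $\Omp T\Omp^*=(\Omp)_JT(\Omp^*)_J=\Omm T\Omm^*$, which together with $P_J=P$ (note $J\Ub=\Ub$) yields $\xi_J=\Omm T\Omm^*+\beta P=\xi$; hence $J\xi\subset\xi J$ and, again by Lemma \ref{f(T)_J=f(T)}, $(\xi^{\pm1/2})_J=\xi^{\pm1/2}$. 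Applying $(\cdot)_J$ to the two formulas above (using $(AB)_J=A_JB_J$) gives $X_J=X$ and $Y_J=Y$, hence $X_J^*=X^*$ and $Y_J^*=Y^*$. Thus the system \eqref{eq:properconditionxy} reduces to the four $J$-free identities $X^*X-Y^*Y=I$, $X^*Y=Y^*X$, $XX^*-YY^*=I$, $XY^*=YX^*$.

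To finish I would set $a:=X+Y=T^{-1/2}\xi^{1/2}$ and $b:=X-Y=T^{1/2}\xi^{-1/2}$, both bounded, so that $X=\frac{1}{2}(a+b)$ and $Y=\frac{1}{2}(a-b)$. Using that $T^{-1/2}$ is bounded self-adjoint, that $T^{-1/2}T^{1/2}=I$ on $D(T^{1/2})$, and that $\xi^{1/2}\xi^{-1/2}=I=\xi^{-1/2}\xi^{1/2}$ (Lemma \ref{xi1/2}), one checks the four products $a^*b=b^*a=ab^*=ba^*=I$. Each of the four $J$-free identities then collapses: for instance $X^*X-Y^*Y=\frac{1}{2}(a^*b+b^*a)=I$, $X^*Y-Y^*X=\frac{1}{2}(b^*a-a^*b)=0$, $XX^*-YY^*=\frac{1}{2}(ab^*+ba^*)=I$, and $XY^*-YX^*=\frac{1}{2}(ba^*-ab^*)=0$. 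This establishes \eqref{eq:properconditionxy}.

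I expect the main obstacle to be the domain bookkeeping in the first and third steps: one must verify that $T^{-1/2}\Omp T^{1/2}\Omp^*$, $T^{1/2}\Omp T^{-1/2}\Omp^*$ and the compositions $a^*b,\,ab^*,\dots$ are everywhere defined and that the indicated middle factors genuinely cancel to the identity. This is handled by Lemmas \ref{ranrpm}, \ref{uvli}, \ref{ranxy}, \ref{xi} and \ref{xi1/2}, and is in fact cleaner than the corresponding step for $U,V$ because $E_0>0$ forces $T^{-1/2}$ and $\xi^{-1/2}$ to be bounded. Conceptually, the reason \eqref{eq:properconditionxy} carries no defect term, in contrast to \eqref{eq:propercondition}, whose last two equations contain the rank-one corrections $\theta(\lac-\la)Q_\pm$, is that $\xi=\Omp T\Omp^*+\beta P$ was built precisely to absorb the rank-one defect $I-\Omp\Omp^*=P$ of the partial isometries $\Ompm$.
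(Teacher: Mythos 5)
Your proposal is correct, but it takes a genuinely different route from the paper. The paper proves \eqref{eq:properconditionxy} by direct computation: it feeds the $(U,V)$ relations \eqref{eq:propercondition} of Lemma \ref{u*u-v*v} into the definitions $X=U\Omp^*+\Tp P$, $Y=V\Omp^*+\Tm P$, together with the auxiliary identities $\Omp^*\Ub=0$, $(U^*\pm V^*)T^{\pm1/2}\Ub=0$, $(U^*\Tpm-V^*\Tmp)\Ub=0$, $\Tp\Tp-\Tm\Tm=I$, $\Tp\Tm=\Tm\Tp$ and $P_J=P$, $(\Tpm)_J=\Tpm$; in that computation the rank-one defects $\theta(\lac-\la)Q_{\pm}$ appearing in \eqref{eq:propercondition} are cancelled by the contributions of the $\Tpm P$ terms. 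You instead bypass Lemma \ref{u*u-v*v} altogether: the closed formulas $X+Y=\overline{T^{-1/2}\xi^{1/2}}$ and $X-Y=T^{1/2}\xi^{-1/2}$ (valid on the dense set $D(T^{1/2})=D(\xi^{1/2})$ and extended by boundedness, all negative powers being bounded since $E_0>0$ and $\beta>0$) follow from Lemma \ref{xi1/2}, the $J$-invariance $X_J=X$, $Y_J=Y$ follows from Lemma \ref{omf(T)om}, Lemma \ref{f(T)_J=f(T)}, $(\Rpm)_J=\Rmp$ and $P_J=P$ (equivalently $J\Ub=\Ub$, which holds since $\la/D'(x_0)>0$ and $Jg=g$), and then everything reduces to $a^*b=b^*a=ab^*=ba^*=I$ with $a=X+Y$, $b=X-Y$, which is immediate from $D(\xi^{1/2})=D(T^{1/2})$ and the elementary relations $T^{\mp1/2}T^{\pm1/2}=I$, $\xi^{\mp1/2}\xi^{\pm1/2}=I$ on the appropriate domains. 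What your approach buys is transparency: it explains structurally why no defect term survives (the operator $\xi=\Omp T\Omp^*+\beta P$ is built exactly to fill in the rank-one gap $I-\Omp\Omp^*=P$ of the partial isometry $\Omp$), it needs only Lemmas \ref{xi}, \ref{xi1/2}, \ref{omf(T)om} and \ref{ranrpm} rather than the full set of $(U,V)$ relations, and the domain bookkeeping is lighter; the paper's computation, on the other hand, stays closer to the pattern of Lemma \ref{u*u-v*v} and does not require the factorizations through $\xi^{\pm1/2}$ or the observation $X_J=X$, $Y_J=Y$.
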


\begin{proof}
The operator $P$ (resp. $\Tpm$) satisfies $P_J=P$ ( resp. $(\Tpm)_J=\Tpm$). 
By \eqref{eq:omom*}, we have $\Omp^*\Ub=0$. 
Hence we obtain $(U^*\pm V^*)T^{\pm1/2}\Ub=0$ and $(U^*\Tpm-V^*\Tmp)\Ub=0$. 
The equations $\Tp\Tp-\Tm\Tm=I$ and $\Tp\Tm-\Tm\Tp=0$ hold on $D(T^{-1})\cap D(T)$. 
By \eqref{eq:propercondition} and direct calculations, we have 
$X^*X-Y^*Y=I$ and $X_J^*Y-Y_J^*X=0$. 
By similar calculations, we have $XX^*-Y_JY_J^*=I$ and $XY^*-Y_JX_J^*=0$ on $D(T^{-1/2})\cap D(T^{1/2})$. Then, by a limiting argument, we obtain  \eqref{eq:properconditionxy}.
\end{proof}

\begin{lemma}
\label{hsy}
The operator $Y$ is Hilbert-Schmidt.
\end{lemma}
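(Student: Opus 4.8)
The plan is to exploit the decomposition $Y = V\Omp^* + \Tm P$ together with the fact, already established in Lemma \ref{hs}, that $V$ (taken in its bounded extension to all of $\msH$) is Hilbert-Schmidt. Since the Hilbert-Schmidt operators form a two-sided $*$-ideal inside the algebra of bounded operators on $\msH$, and $\Omp^* = (I+\Rp)^*$ is bounded, the first summand $V\Omp^*$ is Hilbert-Schmidt as the product of a Hilbert-Schmidt operator and a bounded one.

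For the second summand I would note that $P = \inner<\Ub,\cdot>\Ub$ is a rank-one projection and that $\Ub \in D(T^{-1})\cap D(T^2) \subset D(T^{-1/2})\cap D(T^{1/2})$, so that $\Tm\Ub$ is a well-defined vector of $\msH$. Consequently $\Tm P$ is the bounded rank-one operator $f \mapsto \inner<\Ub,f>\,\Tm\Ub$, and every finite-rank operator is Hilbert-Schmidt. Since the sum of two Hilbert-Schmidt operators is again Hilbert-Schmidt, $Y = V\Omp^* + \Tm P$ is Hilbert-Schmidt, which is the assertion.

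The only point requiring a word of care is the well-definedness of $Y$ on all of $\msH$: one needs $\Ran(\Omp^*) \subset D(V)$, which holds because $\Ran(\Omp^*) \subset D(T^{-1})\cap D(T)$ by Lemma \ref{ranrpm} (and is immediate once $V$ is replaced by its bounded extension), and $\Ran(P) \subset D(\Tm)$, which is the inclusion $\Ub \in D(T^{-1/2})\cap D(T^{1/2})$ noted above. There is no genuine analytic obstacle here; the statement is essentially a corollary of Lemma \ref{hs} once the two summands are identified and the ideal property of the Hilbert-Schmidt class is invoked.
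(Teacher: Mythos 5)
Your argument is correct: with $V$ understood as its bounded extension, $V\Omp^*$ is Hilbert--Schmidt by the two-sided ideal property of the Hilbert--Schmidt class (Lemma \ref{hs} plus boundedness of $\Omp^*$), and $\Tm P\colon f\mapsto \inner<\Ub,f>\Tm\Ub$ is a bounded rank-one operator because $\Ub\in D(T^{-1})\cap D(T^{2})\subset D(T^{-1/2})\cap D(T^{1/2})$, so their sum $Y$ is Hilbert--Schmidt. This is the same decomposition and the same key input (Lemma \ref{hs}) as in the paper, but packaged differently: the paper's one-line proof instead fixes a CONS $\{e_n\}_{n=0}^{\infty}\subset D(T^{-1/2})\cap D(T^{1/2})$ with $e_0=\Ub$ and estimates $\sum_n\|Ye_n\|^2$ directly, using $\Omp^*\Ub=0$ and $Pe_n=0$ for $n\geq1$ so that the sum splits into $\|\Tm\Ub\|^2$ plus $\sum_{n\geq1}\|V\Omp^*e_n\|^2$, the latter controlled by the Hilbert--Schmidt norm of $V$ since $\{\Omp^*e_n\}_{n\geq1}$ is orthonormal by Lemma \ref{omom=i}; Lemma \ref{ranxy} guarantees the vectors involved lie in the right domains. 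Your route avoids any choice of basis and makes the finite-rank nature of the $\Tm P$ term explicit, while the paper's computation stays closer to the concrete trace estimate already carried out in Lemma \ref{hs}; the only point worth keeping from your last paragraph is indeed the domain remark, which becomes vacuous once the bounded extension of $V$ is used, exactly as you say.
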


\begin{proof}
We can easily show that the assertion follows from Lemma \ref{hs}, Lemma \ref{ranxy} and the choice a CONS $\{e_n\}^{\infty}_{n=0}\subset D(T^{-1/2})\cap D(T^{1/2})$ with $e_0=\Ub$. 
\end{proof}

\begin{lemma}
\label{UC(f)U}
There is a unitary operator $\V$ on $\fb(\msH)$ such that for all $f\in \msH$,
\begin{equation*}
\V C(f)\V^{-1}=A(f).
\end{equation*}
\end{lemma}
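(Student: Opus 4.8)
The plan is to invoke Theorem \ref{PC} directly, exactly as in the proof of Lemma \ref{UB(f)U}; all the substantive work has already been carried out in the preceding lemmas of this section, so the argument reduces to checking that the hypotheses of that theorem are satisfied by the pair $(X,Y)$ and the transformation $C$.

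First I would record that $X$ and $Y$ are bounded operators on $\msH$: indeed $U\Omp^*$ and $V\Omp^*$ are bounded since $U$, $V$ (Lemma \ref{uvbdd}) and $\Omp^*$ are, while $\Tpm P$ is bounded because $P$ is the rank-one projection $\inner<\Ub,\cdot>\Ub$ and $\Ub\in D(T^{1/2})\cap D(T^{-1/2})$, so $\Tpm P=\inner<\Ub,\cdot>\,(\Tpm\Ub)$ has one-dimensional range. Consequently $C(f)=A(Xf)+A(JYf)^*$ is a densely defined, closable operator with $\fbz(\msH)\subset D(C(f))$ for every $f\in\msH$, for exactly the reasons that apply to $B(f)$ (cf.\ the discussion in Remark \ref{betasmall}). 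Second, the lemma establishing \eqref{eq:properconditionxy} shows that $(X,Y)$ satisfies precisely the system \eqref{eq:uvcondition}; note that here, in contrast to Lemma \ref{u*u-v*v}, the correction terms involving $Q_\pm$ and $P$ are absent, so no case distinction on $\la$ is needed. Third, Lemma \ref{hsy} gives that $Y$ is Hilbert-Schmidt.

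Since $\msH$ is separable (which is assumed throughout this part of the paper), Theorem \ref{PC} now applies verbatim with $(U,V,B)$ replaced by $(X,Y,C)$: there exists a unitary operator $\V$ on $\fb(\msH)$ with $\V\,\overline{C(f)}\,\V^{-1}=A(f)$ for all $f\in\msH$, and since we agreed to denote the closure of $C(f)$ by the same symbol (Remark \ref{betasmall}), this is the assertion. I do not expect any real obstacle: the only point deserving a moment's attention is the boundedness of $X,Y$ and the closability of $C(f)$, both of which follow from Lemma \ref{ranxy}, the rank-one nature of $P$, and the general properties of $A(\cdot)$ and $A(\cdot)^*$ in the same way as for $B(f)$.
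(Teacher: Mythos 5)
Your proposal is correct and follows exactly the paper's argument: the paper likewise proves this lemma by applying Theorem \ref{PC} to the pair $(X,Y)$, using \eqref{eq:properconditionxy} and Lemma \ref{hsy}. Your additional remarks on the boundedness of $X,Y$ and closability of $C(f)$ are fine and just make explicit what the paper leaves implicit.
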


\begin{proof}
By Theorem \ref{PC}, \eqref{eq:properconditionxy} and Lemma \ref{hsy}, we can prove this assertion.
\end{proof}
\subsection{Commutation relations}
\begin{thm}
\label{CR:C}
The following commutation relations hold:
\begin{enumerate}
\item
For any $f\in D(T)$ and $\psi\in\fbfin(D(T))$,
\begin{equation*}
[H(\la),C(f)]\psi=-C(\xi f)\psi.
\end{equation*}
\item
For any $f\in D(T^{-1/2})\cap D(T)$ and $\psi,\phi\in D(\SQ(T))$,
\begin{equation*}
\label{eq:cronsqt:C}
\inner<H(\la)\phi,C(f)\psi>-\inner<C(f)^*\phi,H(\la)\psi>
=-\inner<\phi,C(\xi f)\psi>.
\end{equation*}
\item
For any $f\in D(T^{-1/2})\cap D(T)$, $C(f)$ maps $D(\SQ(T)^{3/2})$ into $D(\SQ(T))$ and for any $\psi\in D(\SQ(T)^{3/2})$,
\begin{equation*}
\label{eq:cronsqt3/2:C}
[H(\la),C(f)]\psi=-C(\xi f)\psi.
\end{equation*}
\end{enumerate}
\end{thm}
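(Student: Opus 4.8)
The plan is to mimic, step for step, the proof of Theorem~\ref{CR}. As there, parts $(2)$ and $(3)$ will be purely routine once $(1)$ is in hand: for $(2)$ one approximates $\psi,\phi\in D(\SQ(T))$ by vectors in $\fbfin(D(T))$ and passes to the limit in the identity of $(1)$, using the $\SQ(T)$-boundedness of $\Segal(g)^2$, the fundamental $\SQ(T)^{1/2}$-bounds on $A(\cdot)$ and $A(\cdot)^*$, and the invariance in Lemma~\ref{ranxy}; and for $(3)$ one notes that, again by Lemma~\ref{ranxy} and the elementary mapping properties of the creation and annihilation operators, $C(f)$ maps $D(\SQ(T)^{3/2})$ into $D(\SQ(T))$, so the quadratic-form identity of $(2)$ upgrades to the operator identity of $(3)$ by density of $D(\SQ(T))$ in $D(\SQ(T)^{3/2})$. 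Thus the genuine content is $(1)$: the identity $[H(\la),C(f)]\psi=-C(\xi f)\psi$ on $\fbfin(D(T))$ for $f\in D(T)$.

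For $(1)$ the key is to peel off the bound-state mode. From $\Omp^*\Omp=I$ in \eqref{eq:om*om} and $\Omp\Omp^*=I-P$ from \eqref{eq:omom*} (which gives $\Omp^*\Ub=0$, hence $P\Omp=0$ and $\Omp^*P=0$), together with $\xi=\Omp T\Omp^*+\beta P$, $X=U\Omp^*+\Tp P$ and $Y=V\Omp^*+\Tm P$, I would first check the identities $X\Omp=U$, $Y\Omp=V$, $\Omp^*\xi=T\Omp^*$ and $X\Ub=\Tp\Ub$, $Y\Ub=\Tm\Ub$. Taking into account the antilinearity of $A(\cdot)$ and of $J$, these give, on $\fbfin(D(T))$,
\begin{equation*}
C(f)=B(\Omp^* f)+\inner<f,\Ub>C_0,\qquad C_0:=A(\Tp\Ub)+A(J\Tm\Ub)^*=C(\Ub),
\end{equation*}
valid for $f\in D(T)$ since $\Omp^*$ leaves $D(T)$ invariant (Lemma~\ref{ranrpm}). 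Applying Theorem~\ref{CR}(1) to the term $B(\Omp^* f)$ and using $\Omp^*\xi f=T\Omp^* f$ and $\xi\Ub=\beta\Ub$, the claim of $(1)$ collapses to the single identity
\begin{equation*}
[H(\la),C_0]\psi=-\beta\, C_0\psi,\qquad \psi\in\fbfin(D(T)).
\end{equation*}

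To prove this I would repeat the CCR computation of the proof of Theorem~\ref{CR}(1). Using \eqref{eq:ccr} and the commutator of $\SQ(T)$ with the field operators,
\begin{equation*}
[H(\la),C_0]\psi=\Big\{-A(T\Tp\Ub)+A(TJ\Tm\Ub)^*+\tfrac{\la}{\sqrt2}\inner<(\Tm-\Tp)\Ub,g>\Segal(g)\Big\}\psi,
\end{equation*}
and since $\Tm-\Tp=-\beta^{-1/2}T^{1/2}$ the scalar equals $-\tfrac{\la}{\sqrt2}\beta^{-1/2}\inner<\Ub,T^{1/2}g>$. Two elementary facts about $\Ub=\sqrt{\la/D'(x_0)}\,R_{E_0^2+x_0}(T^2)T^{1/2}g$ then finish the computation. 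First, $T\Ub=\sqrt{\la/D'(x_0)}\,T^{-1/2}g+(E_0^2+x_0)T^{-1}\Ub$, and with $\beta^2=E_0^2+x_0$ a short manipulation gives $(T\mp\beta)\Tpm\Ub=\pm\tfrac12\sqrt{\la/D'(x_0)}\,\beta^{-1/2}g$. Second, $D(x_0)=0$ together with $\la\inner<T^{1/2}g,R_z(T^2)T^{1/2}g>=D(z-E_0^2)-1$ yields $\inner<\Ub,T^{1/2}g>=-\la^{-1}\sqrt{\la/D'(x_0)}$. Substituting the second into the scalar and matching with the first — using $JT\subset TJ$ and $Jg=g$ to commute $T$ past $J$ and to rewrite $\inner<g,J\Tm\Ub>$ as $\inner<\Tm\Ub,g>$, and the fact that $\beta$ and $\sqrt{\la/D'(x_0)}$ are real — one finds that the $\Segal(g)$-term supplies precisely the vectors turning $-A(T\Tp\Ub)$ into $-\beta A(\Tp\Ub)$ and $A(TJ\Tm\Ub)^*$ into $-\beta A(J\Tm\Ub)^*$, i.e. $[H(\la),C_0]\psi=-\beta C_0\psi$.

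The main obstacle is exactly this last matching of constants: one must keep careful track of the antilinearity of $J$ and $A(\cdot)$ and of the signs in the $[\Segal(g)^2,\cdot]$ commutator, and check that the number produced by $\inner<\Ub,T^{1/2}g>$ and the number produced by $(T\mp\beta)\Tpm\Ub$ are compatible, so that the surviving constant is $-\beta$ and nothing else — the relation $Jg=g$ and the normalization hidden in the definition of $\Ub$ (the role of $D'(x_0)$) are precisely what make this work. Everything else — the decomposition $C(f)=B(\Omp^* f)+\inner<f,\Ub>C_0$, the reduction via Theorem~\ref{CR}, and parts $(2)$ and $(3)$ — is bookkeeping that transfers verbatim from Sections~3--5, using Lemma~\ref{ranxy}, the standard $A(\cdot)$/$A(\cdot)^*$ estimates, and the $\SQ(T)$-boundedness of $\Segal(g)^2$.
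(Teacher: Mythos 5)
Your proposal is correct, but it is organized differently from the paper's proof. The paper does not split off the bound--state mode: it proves directly the two vector identities $-TXf+\frac{\la}{2}\inner<(Y^*J-X^*)g,f>g=-X\xi f$ and $TJYf+\frac{\la}{2}\inner<f,(Y^*J-X^*)g>g=-JY\xi f$ for all $f\in D(T)$ (Lemma \ref{crxy}), by computing $(Y^*J-X^*)g=-\Omp\Dm(T^2-E_0^2)^{-1}g+\beta^{-1/2}a\la^{-1}\Ub$ with $a=\sqrt{\la/D'(x_0)}$ (using \eqref{eq:cruv2} and $D(x_0)=0$) together with the formula \eqref{eq:ttpmub} for $T\Tpm\Ub$, and then feeds these into the same CCR computation as in Theorem \ref{CR}; parts $(2)$ and $(3)$ follow by the same limiting arguments, exactly as you indicate. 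You instead write $C(f)=B(\Omp^*f)+\inner<f,\Ub>C(\Ub)$ (legitimate, since $\Omp^*\Omega$-relations give $\Omp^*\Ub=0$, $P\Omp=0$, and $\Omp^*\xi=T\Omp^*$ on $D(T)$, and your handling of the antilinearity of $A(\cdot)$ and $J$ is right), quote Theorem \ref{CR}(1) for the continuous piece, and reduce everything to the single-mode identity $[H(\la),C(\Ub)]\psi=-\beta C(\Ub)\psi$; the two constants you isolate, $(T\mp\beta)\Tpm\Ub=\pm\tfrac12 a\beta^{-1/2}g$ (from $(T^2-\beta^2)\Ub=aT^{1/2}g$) and $\inner<\Ub,T^{1/2}g>=-a/\la$ (from $D(x_0)=0$ and $\la\inner<T^{1/2}g,R_z(T^2)T^{1/2}g>=D(z-E_0^2)-1$), are correct, and with them the annihilation and creation parts indeed close with the surviving constant $-\beta$, consistently with $\inner<\xi f,\Ub>=\beta\inner<f,\Ub>$ and $\Omp^*\xi f=T\Omp^*f$. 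What your organization buys is transparency: the only content beyond Theorem \ref{CR} is the eigenmode $\Ub$, and you never re-derive the full intertwining relations for $X$ and $Y$; what the paper's Lemma \ref{crxy} buys is a ready-made statement for arbitrary $f\in D(T)$ that plugs verbatim into the limiting arguments for $(2)$--$(3)$. One minor remark: the sign you carry on the $\Segal(g)$ term, $+\tfrac{\la}{\sqrt2}\inner<(\Tm-\Tp)\Ub,g>$, is the one consistent with Lemmas \ref{cr} and \ref{crxy} (the corresponding display in the paper's proof of Theorem \ref{CR}(1) carries the opposite sign), so your bookkeeping is self-consistent and no gap results.
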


Theorem \ref{CR:C} follows, in the same manner as in the proof of Theorem \ref{CR}, 
from Lemma \ref{uvli}, Lemma \ref{xi1/2} and the next lemma:

\begin{lemma}
\label{crxy}
For any $f\in D(T)$ the following equations hold:
\begin{align}
\label{eq:crxy1}
-TXf+\dfrac{\la}{2}\inner<(Y^*J-X^*)g, f>g=&-X\xi f,\\
\label{eq:crxy2}
TJYf+\dfrac{\la}{2}\inner<f,(Y^*J-X^*)g>g=&-JY\xi f.
\end{align}
\end{lemma}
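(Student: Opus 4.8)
The plan is to prove Lemma~\ref{crxy} as the $(X,Y)$-counterpart of Lemma~\ref{cr}: a pair of purely algebraic identities on $\msH$, obtained by unfolding the definitions $X=U\Omp^{*}+\Tp P$, $Y=V\Omp^{*}+\Tm P$, $\xi=\Omp T\Omp^{*}+\beta P$ and feeding in what is already available, namely \eqref{eq:cruv1}--\eqref{eq:cruv2}, the relations $\Omp^{*}\Omp=I$ and $\Omp\Omp^{*}=I-P$ from Lemma~\ref{omom=i} (valid since $\la<\lac$), $P^{2}=P$, $\Omp^{*}\Ub=0$, and the specific form of $\Ub$. Throughout I would fix $f\in D(T)$; every vector below is then legitimate because $\Omp^{*}$, $U$, $V$, $X$, $Y$ leave $D(T)$ invariant (Lemmas~\ref{ranrpm}, \ref{uvli}, \ref{ranxy}), $D(\xi)=D(T)$ (Lemma~\ref{xi}), $\Ub\in D(T^{-1})\cap D(T^{2})$, and $X$, $Y$ extend to bounded operators.

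For \eqref{eq:crxy1} I would first collapse the products. Multiplying out and using $\Omp^{*}\Omp=I$, $\Omp^{*}P=0$ (a consequence of $\Omp^{*}\Ub=0$), $P\Omp=0$ and $P^{2}=P$ gives $X\xi f=UT\Omp^{*}f+\beta\Tp Pf$, while moving $T$ past $U$ by \eqref{eq:cruv1} gives $TXf=UT\Omp^{*}f-\frac{\la}{2}\inner<\Dm(T^{2}-E_0^{2})^{-1}g,\Omp^{*}f>g+T\Tp Pf$; subtracting,
\begin{equation*}
X\xi f-TXf=(\beta-T)\Tp Pf+\frac{\la}{2}\inner<\Dm(T^{2}-E_0^{2})^{-1}g,\Omp^{*}f>g .
\end{equation*}
The crucial point is that the first term is again a multiple of $g$: writing $\Tp=\frac12\beta^{-1/2}T^{-1/2}(T+\beta)$ one gets $(\beta-T)\Tp=-\frac12\beta^{-1/2}T^{-1/2}(T^{2}-\beta^{2})$, and since $\beta^{2}=E_0^{2}+x_0$ while $\Ub=\sqrt{\la/D'(x_0)}\,R_{E_0^{2}+x_0}(T^{2})T^{1/2}g$ gives $(T^{2}-\beta^{2})\Ub=\sqrt{\la/D'(x_0)}\,T^{1/2}g$, one finds $(\beta-T)\Tp\Ub=-\frac12\sqrt{\la/(\beta D'(x_0))}\,g$. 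Hence $X\xi f-TXf$ equals $g$ times an explicit scalar. It then remains to match this scalar with $-\frac{\la}{2}\inner<(Y^{*}J-X^{*})g,f>=-\frac{\la}{2}\inner<g,(Y-X)f>$; using $Y-X=(V-U)\Omp^{*}-\beta^{-1/2}T^{1/2}P$, the relation \eqref{eq:cruv2}, and the value of $\inner<T^{1/2}g,\Ub>$ (read off from $D(x_0)=0$ and the identity $\la\inner<T^{1/2}g,R_{z}(T^{2})T^{1/2}g>=D(z-E_0^{2})-1$ used in the proof of Lemma~\ref{omom=i}), the two scalars are seen to agree, which is \eqref{eq:crxy1}.

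For \eqref{eq:crxy2} I would apply $J$ to both sides: using $JT\subset TJ$ (so $J(TJYf)=TYf$), $J^{2}=I$ and $Jg=g$, the identity \eqref{eq:crxy2} is equivalent to $Y\xi f+TYf=-\frac{\la}{2}\inner<(Y^{*}J-X^{*})g,f>g$. Unfolding $Y$ exactly as above, and now using $(VT+TV)\Omp^{*}f=\frac{\la}{2}\inner<\Dm(T^{2}-E_0^{2})^{-1}g,\Omp^{*}f>g$ from \eqref{eq:cruv1}, gives $Y\xi f+TYf=(\beta+T)\Tm Pf+\frac{\la}{2}\inner<\Dm(T^{2}-E_0^{2})^{-1}g,\Omp^{*}f>g$; the elementary identity $(\beta+T)\Tm=(\beta-T)\Tp$ (immediate from $\Tp+\Tm=\beta^{1/2}T^{-1/2}$ and $\Tp-\Tm=\beta^{-1/2}T^{1/2}$, since then $(\beta-T)\Tp-(\beta+T)\Tm=\beta(\Tp-\Tm)-T(\Tp+\Tm)=\beta^{1/2}T^{1/2}-\beta^{1/2}T^{1/2}=0$) turns the right side into $X\xi f-TXf$, which by \eqref{eq:crxy1} is already the asserted vector. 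This proves \eqref{eq:crxy2}.

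The main obstacle is the one genuinely non-mechanical step above: verifying that $(\beta-T)\Tp\Ub$ is a scalar multiple of $g$ rather than a new vector, which is precisely what makes the two independently computed sides of \eqref{eq:crxy1} coincide, and where the choices $\beta=(E_0^{2}+x_0)^{1/2}$ and $\Ub\propto R_{E_0^{2}+x_0}(T^{2})T^{1/2}g$ (forcing $(T^{2}-\beta^{2})\Ub\propto T^{1/2}g$) are used essentially; it is likewise here that the regime $\la<\lac$ is needed, through $\Omp\Omp^{*}=I-P$ and $\Omp^{*}\Ub=0$, which keep the rank-one $P$-terms alive and ultimately produce the extra eigenvalue $\beta$ of $\xi$. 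The rest is bookkeeping with the real scalars $\sqrt{\la/D'(x_0)}$ and $\inner<T^{1/2}g,\Ub>$ (well defined since $\la D'(x_0)>0$, as $D$ is decreasing and $\la<\lac<0$) and with signs.
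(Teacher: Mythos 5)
Your proposal is correct and follows essentially the same route as the paper: it rests on exactly the same ingredients — \eqref{eq:cruv1}--\eqref{eq:cruv2}, $\Omp^{*}\Ub=0$, $D(x_0)=0$ together with $\la\inner<T^{1/2}g,R_{z}(T^{2})T^{1/2}g>=D(z-E_0^{2})-1$, and the relation $(T^{2}-\beta^{2})\Ub=\sqrt{\la/D'(x_0)}\,T^{1/2}g$ — and your scalar matching reproduces the paper's computation of $(Y^{*}J-X^{*})g=-\Omp\Dm(T^{2}-E_0^{2})^{-1}g+\beta^{-1/2}\la^{-1}\sqrt{\la/D'(x_0)}\,\Ub$. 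The only (harmless) organizational difference is that you deduce \eqref{eq:crxy2} from \eqref{eq:crxy1} via $J$-conjugation and the identity $(\beta+T)\Tm=(\beta-T)\Tp$, where the paper simply repeats the computation ``similarly.''
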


\begin{rem}
By Lemma \ref{uvli} and the definition of $\xi$, 
the both sides of \eqref{eq:crxy1} and \eqref{eq:crxy2} have meaning.
\end{rem}

\begin{proof}
Let be $a:=\sqrt{\la/D'(x_0)}$. Then we can see by the definition of $x_0$ and \eqref{eq:cruv2},
\begin{align*}
(Y^*J-X^*)g=-\Omp\Dm(T^2-E_0^2)^{-1}g+\frac{\beta^{-1/2}a}{\la}\Ub.
\end{align*}
We have 
\begin{align}
T\Tpm\Ub=&\frac{1}{2}(\beta^{1/2}T^{1/2}\Ub\pm\beta^{-1/2}T^{3/2}\Ub)\nonumber\\
=&\frac{1}{2}(\beta^{1/2}T^{1/2}\Ub\pm\beta^{3/2}T^{-1/2}\Ub\pm\beta^{-1/2}ag). \label{eq:ttpmub}
\end{align}
Thus, for any $f\in D(T)$, we have  
\begin{align*}
&-TXf+\dfrac{\la}{2}\inner<(Y^*J-X^*)g, f>g\\
=&-TU\Omp^*f-\dfrac{\la}{2}\inner<\Dm(T^{2}-E_0^2)^{-1}g,\Omp^*f>g
-T\Tp Pf +\frac{\beta^{-1/2}a}{2}\inner<\Ub,f>g.
\end{align*}
Then, by \eqref{eq:cruv1} and \eqref{eq:ttpmub}, we have 
\begin{align*}
-TXf+\dfrac{\la}{2}\inner<(Y^*J-X^*)g, f>g
=&-UT\Omp^*f-\beta\inner<\Ub,f>\Tp\Ub\\
=&-X(\Omp T\Omp^*+\beta P)f.
\end{align*}
Thus we obtain \eqref{eq:crxy1}. Similarly one can prove \eqref{eq:crxy2}.
\end{proof}

\subsection{Proof of Theorem \ref{Main} (2)}
\begin{thm}
\label{TE:C}
For all $f\in D(T^{-1/2}), \psi\in D(\SQ(T)^{1/2})$ and $t\in\R$, 
\begin{align*}
e^{it H(\la)}C(f)e^{-it H(\la)}\psi=&C(e^{it\xi}f)\psi,\\
e^{it H(\la)}C(f)^*e^{-it H(\la)}\psi=&C(e^{it\xi}f)^*\psi.
\end{align*}
\end{thm}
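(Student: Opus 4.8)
The plan is to transcribe the proof of Theorem \ref{TE} line by line, with $B(\cdot)$ replaced by $C(\cdot)$, the generator $T$ of the time evolution replaced by $\xi$, and Theorem \ref{CR} replaced by Theorem \ref{CR:C}. The only genuinely new point is bookkeeping with $\xi$: by Lemmas \ref{xi} and \ref{xi1/2}, $\xi$ is an injective non-negative self-adjoint operator with $D(\xi)=D(T)$ and $D(\xi^{\pm1/2})=D(T^{\pm1/2})$, so $e^{it\xi}$ is well defined, commutes with every Borel function of $\xi$, and in particular leaves $D(T^{-1/2})$ and $D(T)$ invariant. Since the second asserted identity follows from the first by taking adjoints (as in Theorem \ref{TE}), it suffices to prove $e^{itH(\la)}C(f)e^{-itH(\la)}\psi=C(e^{it\xi}f)\psi$.

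First I would fix $f\in D(T^{-1/2})\cap D(T)$ and $\psi,\phi\in D(\SQ(T))$ and put $v(t):=\inner<\phi,e^{itH(\la)}C(e^{-it\xi}f)e^{-itH(\la)}\psi>$. Since $\la>\lacz$ here, Theorem \ref{sa}(3) gives $D(\SQ(T)^{1/2})=D((\overline{H(\la)}+M)^{1/2})$, so $e^{-itH(\la)}$ leaves $D(\SQ(T)^{1/2})$ invariant; together with $e^{-it\xi}f\in D(\xi^{-1/2})=D(T^{-1/2})$, Lemma \ref{ranxy}, and the standard bounds \eqref{eq:ani}, \eqref{eq:cre} on $A(\cdot)$ and $A(\cdot)^*$, this makes $v$ well defined. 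Because $C(\cdot)$ is antilinear in its argument, $v$ is differentiable with $\frac{d}{dt}C(e^{-it\xi}f)=iC(\xi e^{-it\xi}f)$ on $D(\SQ(T)^{1/2})$, where $f\in D(\xi)$ and $\xi e^{-it\xi}f=e^{-it\xi}\xi f\in D(T^{-1/2})$ are used. Combining this with the derivatives of $e^{\pm itH(\la)}$ and with Theorem \ref{CR:C}(2) gives $\frac{d}{dt}v(t)=0$, exactly as in the proof of Theorem \ref{TE}; hence $\inner<\phi,e^{itH(\la)}C(e^{-it\xi}f)e^{-itH(\la)}\psi>=\inner<\phi,C(f)\psi>$ for all $t\in\R$.

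Replacing $f$ by $e^{it\xi}f$, which stays in $D(T^{-1/2})\cap D(T)$, and letting $\phi$ run over the dense set $D(\SQ(T))$, I obtain $e^{itH(\la)}C(f)e^{-itH(\la)}\psi=C(e^{it\xi}f)\psi$ for $\psi\in D(\SQ(T))$. Since $D(\SQ(T))$ is a core of $(\overline{H(\la)}+M)^{1/2}$ and $C(f)$, $C(e^{it\xi}f)$, $e^{-itH(\la)}$ are continuous from $D(\SQ(T)^{1/2})$ with the $\SQ(T)^{1/2}$-graph norm (equivalent to the $(\overline{H(\la)}+M)^{1/2}$-graph norm by Theorem \ref{sa}(3)) into $\fb(\msH)$, this identity extends to all $\psi\in D(\SQ(T)^{1/2})$. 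Finally, to pass from $f\in D(T^{-1/2})\cap D(T)$ to general $f\in D(T^{-1/2})$ I would use the cut-off $f_n:=E((-\infty,n])f$ exactly as in Theorem \ref{TE}: then $f_n\in D(T^{-1/2})\cap D(T)$, $f_n\to f$, $T^{-1/2}f_n\to T^{-1/2}f$, and using the boundedness of $X$ and $Y$, Lemma \ref{ranxy}, the formulas $X=U\Omp^*+\Tp P$, $Y=V\Omp^*+\Tm P$, the mapping properties of the Hilbert transform recorded in \eqref{eq:t-1/2rpmt1/2}--\eqref{eq:t1/2rpmt-1/2}, and \cite[Lemma4-28]{AR1}, one checks $C(f_n)\phi\to C(f)\phi$ and $C(e^{it\xi}f_n)\phi\to C(e^{it\xi}f)\phi$ on $D(\SQ(T)^{1/2})$; letting $n\to\infty$ in $C(f_n)e^{-itH(\la)}\psi=e^{-itH(\la)}C(e^{it\xi}f_n)\psi$ (legitimate because $e^{-itH(\la)}$ preserves $D(\SQ(T)^{1/2})$) yields the claim for all $f\in D(T^{-1/2})$ and $\psi\in D(\SQ(T)^{1/2})$.

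I do not expect a serious obstacle: the differentiation step is a verbatim transcription of the argument for $B(\cdot)$, and the point requiring the most attention is simply that the role of $T$ in both the commutator and the time evolution is everywhere replaced by $\xi$ while the relevant domains still coincide with those of powers of $T$ — which is exactly what Lemmas \ref{xi} and \ref{xi1/2} guarantee. The one place needing a little extra care is the final approximation step, where the summand $\Tpm P$ in $X$ and $Y$ enters; but $\Ub\in D(T^{-1})\cap D(T^2)$ makes $T^{\pm1/2}\Tpm P$ bounded, so the estimates used for $B(\cdot)$ carry over without change.
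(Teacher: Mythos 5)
Your proposal is correct and is essentially the paper's own argument: the paper proves Theorem \ref{TE:C} simply by noting it follows "in the same way as in the proof of Theorem \ref{TE} and Theorem \ref{CR:C}," which is exactly the transcription you carry out, with the domain bookkeeping for $\xi$ supplied by Lemmas \ref{xi}, \ref{xi1/2} and \ref{ranxy}. Your write-up in fact spells out more detail (the differentiability of $t\mapsto C(e^{-it\xi}f)$, the core argument, and the cut-off approximation with the extra $\Tpm P$ terms) than the paper records.
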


\begin{proof}
These are proved in the same way as in the proof of Theorem \ref{TE} and Theorem \ref{CR:C}.
\end{proof}

\begin{lemma}
Let $\Om:=\V^{-1}\Omz$ where $\V$ is the unitary operator in Lemma \ref{UC(f)U}. Then:
\begin{enumerate}
\item
There is an eigenvalue $\tilde{\Eg}$ of $H(\la)$ and 
$\Om$ is an eigenvector of $H(\la)$ with eigenvalue $\tilde{\Eg}$.
\item
The following equation holds:
\begin{equation*}
\V H(\la)\V^{-1}=\SQ(\xi)+\tilde{\Eg}.
\end{equation*}
\item
The constant $\tilde{\Eg}$ is given as follows:
\begin{equation}
\label{eq:egtilde}
\tilde{\Eg}=\Eg-\beta\|\Tm\Ub\|^2.
\end{equation}
\end{enumerate}
\end{lemma}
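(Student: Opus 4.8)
The plan is to transcribe, almost verbatim, the argument proving Theorem~\ref{Main}~(1), replacing $B(\cdot)$, $\U$, $T$, Theorems~\ref{TE} and~\ref{CR} by $C(\cdot)$, $\V$, $\xi$, Theorems~\ref{TE:C} and~\ref{CR:C}, and then computing $\tilde\Eg$ by a vacuum expectation together with a short trace computation.

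\emph{Part (1).} Since $\V C(f)\V^{-1}=A(f)$ by Lemma~\ref{UC(f)U} and $A(f)\Omz=0$, we get $C(f)\Om=\V^{-1}A(f)\Omz=0$ for every $f\in\msH$. Because $e^{it\xi}$ commutes with $\xi^{-1/2}$, it leaves $D(\xi^{-1/2})=D(T^{-1/2})$ invariant (Lemma~\ref{xi1/2}), so Theorem~\ref{TE:C} gives $C(f)e^{-itH(\la)}\Om=e^{-itH(\la)}C(e^{it\xi}f)\Om=0$ for all $t\in\R$ and $f\in D(T^{-1/2})$. By \cite[Proposition4-4]{AR1} there is $\alpha(t)\in\C$ with $e^{-itH(\la)}\Om=\alpha(t)\Om$; since $\{e^{-itH(\la)}\}_{t}$ is a strongly continuous unitary group and $\|\Om\|=1$, $t\mapsto\alpha(t)$ is a continuous character of $\R$, hence $\alpha(t)=e^{-it\tilde\Eg}$ for some $\tilde\Eg\in\R$, and differentiating at $t=0$ yields $\Om\in D(H(\la))$ and $H(\la)\Om=\tilde\Eg\Om$.

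\emph{Part (2).} The subspace $\msL\{C(f_1)^*\cdots C(f_n)^*\Om,\ \Om\mid f_j\in D(T^{-1/2}),\ n\in\N\}$ equals $\V^{-1}\fbfin(D(T^{-1/2}))$, hence is dense in $\fb(\msH)$. Using Theorem~\ref{TE:C} and the analogue of Lemma~\ref{te} to move $e^{itH(\la)}$ through the product, part~(1), the identities $\V C(h)^*\V^{-1}=A(h)^*$, $\V\Om=\Omz$, $e^{it\SQ(\xi)}A(h)^*e^{-it\SQ(\xi)}=A(e^{it\xi}h)^*$ and $e^{-it\SQ(\xi)}\Omz=\Omz$, one computes for $f_j\in D(T^{-1/2})$
\begin{align*}
e^{itH(\la)}C(f_1)^*\cdots C(f_n)^*\Om
&=e^{it\tilde\Eg}\,C(e^{it\xi}f_1)^*\cdots C(e^{it\xi}f_n)^*\Om\\
&=e^{it\tilde\Eg}\,\V^{-1}e^{it\SQ(\xi)}A(f_1)^*\cdots A(f_n)^*\Omz\\
&=\V^{-1}e^{it(\SQ(\xi)+\tilde\Eg)}\V\,C(f_1)^*\cdots C(f_n)^*\Om .
\end{align*}
By density and boundedness, $\V e^{itH(\la)}\V^{-1}=e^{it(\SQ(\xi)+\tilde\Eg)}$ for all $t\in\R$; comparing with $\V e^{itH(\la)}\V^{-1}=e^{it\V H(\la)\V^{-1}}$ and invoking Stone's theorem gives $\V H(\la)\V^{-1}=\SQ(\xi)+\tilde\Eg$.

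\emph{Part (3).} By part~(2), $H(\la)-\tilde\Eg=\V^{-1}\SQ(\xi)\V$, and $\V\Omz\in D(\SQ(\xi)^{1/2})$ (from part~(2) and Theorem~\ref{sa}~(3)). Hence, by Lemma~\ref{conv} applied to the Bogoliubov pair $(X,Y)$ and $\xi$, together with the isometry of $\V$ and the definition of $C(\cdot)$, $\inner<\Omz,(H(\la)-\tilde\Eg)\Omz>=\|\SQ(\xi)^{1/2}\V\Omz\|^2=\Tr(\xi^{1/2}Y^*Y\xi^{1/2})$; with $\inner<\Omz,H(\la)\Omz>=\frac{\la}{4}\|g\|^2$ this gives $\tilde\Eg=\frac{\la}{4}\|g\|^2-\Tr(\xi^{1/2}Y^*Y\xi^{1/2})$. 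To finish I would evaluate this trace on the orthonormal basis $e_0:=\Ub$, $e_n:=\Omp h_n$ $(n\ge1)$, where $\{h_n\}_{n\ge1}\subset D(T^{1/2})\cap D(T^{-1/2})$ is an orthonormal basis of $\msH$; this is indeed an orthonormal basis of $\msH$ because $\Omp$ is a partial isometry with $\Omp^*\Omp=I$ and final space $\Ran(I-P)=\{\Ub\}^{\perp}$ (Lemma~\ref{omom=i}). Using $\Omp^*\Omp=I$, $\Omp^*\Ub=0$, $P\Omp=0$, $P\Ub=\Ub$, $\xi^{1/2}\Ub=\beta^{1/2}\Ub$ and \eqref{eq:xi1/2}, one gets $Y\xi^{1/2}\Ub=\beta^{1/2}\Tm\Ub$ and $Y\xi^{1/2}\Omp h_n=VT^{1/2}h_n$, so $\Tr(\xi^{1/2}Y^*Y\xi^{1/2})=\beta\|\Tm\Ub\|^2+\sum_{n\ge1}\|VT^{1/2}h_n\|^2=\beta\|\Tm\Ub\|^2+\Tr(T^{1/2}V^*VT^{1/2})$, and then \eqref{eq:eg} yields $\tilde\Eg=\Eg-\beta\|\Tm\Ub\|^2$.

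Parts~(1) and~(2) are essentially mechanical copies of the $\la>\lac$ case. The main thing needing care is part~(3): one must check that $Y\xi^{1/2}$ is Hilbert--Schmidt (so that the trace is finite, as is implicit in \eqref{eq:egtilde}, and the term-by-term summation is justified), that the chosen $\{e_n\}$ genuinely lies in $D(\xi^{1/2})=D(T^{1/2})$ and is complete, and the domain bookkeeping behind $\V\Omz\in D(\SQ(\xi)^{1/2})$.
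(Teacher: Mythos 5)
Your proposal is correct and takes essentially the same route as the paper: parts (1) and (2) are the same transcription of the $\la>\lac$ argument (which is all the paper's proof does), and part (3) is the same computation, namely $\tilde{\Eg}=\frac{\la}{4}\|g\|^2-\Tr(\xi^{1/2}Y^*Y\xi^{1/2})$ obtained from the vacuum expectation via Lemma \ref{conv}, followed by evaluating the trace in a CONS adapted to $\Ub$. Your basis $\{\Ub,\Omp h_n\}$ is just the mirror image of the paper's choice (a CONS $\{e_n\}\subset D(T)$ with $e_0=\Ub$, using $\{\Omp^*e_n\}_{n\ge1}$), and it yields the identical splitting $\beta\|\Tm\Ub\|^2+\Tr(T^{1/2}V^*VT^{1/2})$.
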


\begin{proof}
Parts (1) and (2) can be proved in the same way as in the proof of Theorem \ref{Main} (1). \\
(3)\ We have 
\begin{equation*}
\tilde{\Eg}=\frac{\la}{4}\|g\|^2-\Tr(\xi^{1/2}Y^*Y\xi^{1/2})
\end{equation*}
in the same way as in the proof of Lemma \ref{gse}. 
Then, by Lemma \ref{xi1/2}, we have 
\begin{align*}
\xi^{1/2}Y^*Y\xi^{1/2}
=\Omp T^{1/2}V^*VT^{1/2}\Omp^*+\Omp T^{1/2}V^*\beta^{1/2}\Tm P
+\beta^{1/2}P\Tm VT^{1/2}\Omp^*+\beta P\Tm\Tm P.
\end{align*}
We choose a CONS $\{e_n\}_{n=0}^{\infty}\subset D(T)$ satisfying $e_0=\Ub$. Then it is easy to see that $\{\Omp^*e_n\}_{n=1}^{\infty}$ is a CONS for $\msH$ by Lemma \ref{omom=i}. 
Hence we have 
\begin{align*}
\Tr(\xi^{1/2}Y^*Y\xi^{1/2})=&\sum_{n=1}^{\infty}
\inner<e_n,\Omp T^{1/2}V^*VT^{1/2}\Omp^*e_n>+\beta\|\Tm U_b\|^2\\
=&\Tr(T^{1/2}V^*VT^{1/2})+\beta\|\Tm U_b\|^2.
\end{align*}
Thus we obtain \eqref{eq:egtilde}.
\end{proof}

In particular, $H(\la)$ have eigenvectors as follows:
\begin{equation*}
\phi_n:=\V^{-1}A(\Ub)^{*n}\Om_0,\ H(\la)\phi_n=(n\beta+\tilde{\Eg})\phi_n\ ,n\in\N\cup\{0\}.
\end{equation*}
Hence the spectral properties of $H(\lambda)$ as stated in Theorem \ref{Main} (2) follow.

\section{Analysis in the case $\la<\lacz$.}
In this section, we show that $H(\lambda)$ is unbounded from above and below.
\begin{thm}
Let $g\in D(T^{-1/2})$. 
Then $H(\la)$ is unbounded above for any $\la\in\R$.
If $\la<\lacz$, then $H(\la)$ is unbounded below too. 
\end{thm}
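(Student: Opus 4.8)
The plan is to prove both assertions at once by evaluating the quadratic form of $H(\la)$ on a one‑parameter family of coherent (exponential) vectors and letting the corresponding Rayleigh quotients diverge — to $+\infty$ for the first claim and to $-\infty$ for the second. For $f\in D(T)$ let $\psi_f:=\sum_{n=0}^{\infty}(n!)^{-1/2}f^{\otimes n}$ be the exponential vector with parameter $f$; then $\psi_f\in D(\SQ(T))$, $\|\psi_f\|^2=e^{\|f\|^2}$, $A(g)\psi_f=\inner<g,f>\psi_f$ and $\SQ(T)\psi_f=A(Tf)^*\psi_f$. A short computation using these identities and the CCR \eqref{eq:ccr} gives $\inner<\psi_f,\SQ(T)\psi_f>=\|T^{1/2}f\|^2\|\psi_f\|^2$ and $\inner<\psi_f,\Segal(g)^2\psi_f>=\bigl(2(\Re\inner<g,f>)^2+\tfrac12\|g\|^2\bigr)\|\psi_f\|^2$, and therefore
\begin{equation}
\label{eq:rayleighexp}
\frac{\inner<\psi_f,H(\la)\psi_f>}{\|\psi_f\|^2}=\|T^{1/2}f\|^2+\la\,(\Re\inner<g,f>)^2+\frac{\la}{4}\|g\|^2 ,\qquad f\in D(T).
\end{equation}
Since $\psi_f\in D(\SQ(T))$ lies in the domain of the self‑adjoint realisation $\ovl{H(\la)}$ of $H(\la)$ (self‑adjointness coming from Theorem \ref{sa}), and there $\ovl{H(\la)}\psi_f=H(\la)\psi_f$, it suffices to make the right‑hand side of \eqref{eq:rayleighexp} tend to $\pm\infty$.

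We may assume $g\neq0$, since otherwise $H(\la)=\SQ(T)$ is unbounded above and the hypothesis $\la<\lacz$ is vacuous. For $N\in\N$ set $h_N:=T^{-1}E((1/N,N))g$, a well‑defined element of $D(T)$ obtained by bounded functional calculus. As $T$ is injective the measure $\|E(\cdot)g\|^2$ carries no mass at $0$, and since $g\in D(T^{-1/2})$ monotone convergence yields
\begin{equation*}
a_N:=\inner<g,h_N>=\|T^{1/2}h_N\|^2=\int_{(1/N,N)}\mu^{-1}\,d\|E(\mu)g\|^2\ \nearrow\ \|T^{-1/2}g\|^2=-\lacz^{-1}\qquad(N\to\infty),
\end{equation*}
so $a_N$ is real and $a_N>0$ for all large $N$.

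To prove unboundedness from above, fix such an $N$ and take $f:=is\,h_N$ with $s\in\R$; then $\inner<g,f>=is\,a_N$ is purely imaginary, so $\Re\inner<g,f>=0$ and \eqref{eq:rayleighexp} reduces to $\inner<\psi_f,H(\la)\psi_f>/\|\psi_f\|^2=s^2a_N+\tfrac{\la}{4}\|g\|^2$, which tends to $+\infty$ as $s\to+\infty$, for every $\la\in\R$. For unboundedness from below, assume $\la<\lacz$ and take instead $f:=s\,h_N$ with $s\in\R$; now $\inner<g,f>=s\,a_N$ is real, so \eqref{eq:rayleighexp} gives $\inner<\psi_f,H(\la)\psi_f>/\|\psi_f\|^2=s^2a_N(1+\la a_N)+\tfrac{\la}{4}\|g\|^2$. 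Because $\la<\lacz=-\|T^{-1/2}g\|^{-2}$ we have $1+\la\|T^{-1/2}g\|^2<0$, so by the convergence $a_N\to\|T^{-1/2}g\|^2$ we may fix $N$ with $a_N>0$ and $1+\la a_N<0$; then $a_N(1+\la a_N)<0$ and the quotient tends to $-\infty$ as $s\to+\infty$. (Here $\la<\lacz$ forces $|\la|>|\lacz|\ge\lambda_T(g)$, so $\ovl{H(\la)}$ is self‑adjoint by Theorem \ref{sa} and extends $H(\la)$, which legitimizes speaking of its spectrum.) Hence $H(\la)$ is unbounded both from above and from below.

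The delicate point is the choice of amplitude. A single‑particle trial vector $(0,h,0,\dots)$ gives expectation $\|T^{1/2}h\|^2+\tfrac{\la}{2}|\inner<g,h>|^2+\tfrac{\la}{4}\|g\|^2$, whose negativity (via $|\inner<g,h>|^2\le\|T^{-1/2}g\|^2\|T^{1/2}h\|^2$) only requires $\la<2\lacz$; to reach the sharp threshold $\lacz$ one genuinely needs the coherent amplitude $s$ to grow, which is precisely what brings $s^2a_N^2=(\Re\inner<g,s h_N>)^2$ onto the same scale as $s^2a_N=\|T^{1/2}(s h_N)\|^2$ in \eqref{eq:rayleighexp}. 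The remaining ingredients — that the exponential vectors lie in the relevant domains, and the two expectation identities of the first paragraph (in particular that the field term depends only on $\Re\inner<g,f>$, which is what allows the purely imaginary choice to annihilate it) — are routine Fock‑space computations.
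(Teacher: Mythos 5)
Your argument is correct, and it reaches the same threshold $\lacz$ as the paper, but by a genuinely different device. The paper never leaves the finite particle subspace: it takes $\phi_N=\sum_{n=0}^{N}a_nA(f_\delta)^{*n}\Omega_0$ with hand-tuned coefficients $a_n=n^{-3/4}(n!)^{-1/2}$, a cutoff vector $f_\delta\propto T^{-1}E((\delta,\infty))g$, and an $(\vep,\delta)$ limiting argument exploiting $a_{n-2}/a_n\sim n$ to push the admissible $\la$ down to the sharp value $\lacz$; for unboundedness above it simply takes $n$-particle states built from an $f$ with $\inner<g,f>=0$. You instead use genuine exponential (coherent) vectors, for which the expectation of $H(\la)$ has the closed form \eqref{eq:rayleighexp}; your cutoff $h_N=T^{-1}E((1/N,N))g$ plays exactly the role of the paper's $f_\delta$ (making $T^{-1}g$ available when only $g\in D(T^{-1/2})$ is assumed), the coherent amplitude $s$ replaces the coefficient asymptotics and removes the $\vep$-bookkeeping, and the phase rotation $f=ish_N$ replaces the orthogonality trick for the upper unboundedness. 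Your closing remark about why one-particle states only reach $2\lacz$ correctly identifies the mechanism that both proofs share: the interaction term must be made to scale like the free term, which forces either growing particle number with correlated coefficients (the paper) or growing coherent amplitude (you). Two small points: (i) you should verify, or cite, that the exponential vectors lie in $D(\SQ(T))\cap D(\Segal(g)^2)$, which is the natural domain of $H(\la)$ here --- this is routine since $\|f\|^{2n}/n!$ decays superexponentially, but it is the one domain fact your proof actually uses; (ii) the appeals to Theorem \ref{sa} are unnecessary and not quite available under the bare hypothesis $g\in D(T^{-1/2})$ (that theorem assumes $g\in D(T^{-1/2})\cap D(T)$, and $\lambda_T(g)$ needs $g\in D(T^{1/2})$); since unboundedness above/below is a statement about the quadratic form of the symmetric operator $H(\la)$ on its own domain, and your trial vectors lie in that domain, no self-adjointness is needed, just as in the paper.
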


\begin{proof}
For any $f\in D(T)\backslash\{0\}$, we set $\psi_n:=a_n A(f)^{*n}\Om_0,\ a_n\in\C\sla\{0\},\ n\in\N\cup\{0\}$. Then we have the following equations: 
\begin{equation*}
\begin{array}{ll}
\SQ(T)\psi_n=n\dfrac{a_n}{a_{n-1}}A(Tf)^*\psi_{n-1},&A(g)\psi_n=n\inner<g,f>\dfrac{a_n}{a_{n-1}}\psi_{n-1}, \\
\|\psi_n\|^2=|a_n|^2 n!\|f\|^{2n},&\|A(g)^*\psi_n\|^2=\|g\|^2\|\psi_n\|^2+\|A(g)\psi_n\|^2.
\end{array}
\end{equation*}
Then we have 
\begin{equation*}
\inner<\psi_n,H(\la)\psi_n>
=\|\psi_n\|^2\left(\frac{\la}{4}\|g\|^2+n\frac{2\|T^{1/2}f\|^2+\la|\inner<g,f>|^2}{2\|f\|^2}\right).
\end{equation*}
We take $f$ such that $\inner<g,f>=0$. 
Then we have $\inner<\psi_n,H(\la)\psi_n>/\|\psi_n\|^2\ra\infty$ as $n\ra\infty$ for any $\la\in\R$. 
Thus $H(\la)$ is unbounded above for any $\la\in\R$. 

Let $\phi_N:=\sum_{n=0}^N\psi_n, N=0,1,2,\ldots$. 
Then we have $\|\phi_N\|^2=\sum_{n=0}^{N}\|\psi_n\|^2$ and 
\begin{align*}
\inner<\phi_N,H(\la)\phi_N>
=&\sum_{n=2}^N \|\psi_n\|^2
\left(\frac{\la\|g\|^2}{4}+n\frac{2\|T^{1/2}f\|^2+\la|\inner<g,f>|^2}{2\|f\|^2}
+\frac{\la}{2}\Re\frac{{a_{n-2}}^*}{{a_n}^*}\frac{\inner<g,f>^2}{\|f\|^4}\right)\\
&+\|\psi_1\|^2\left(\frac{\la\|g\|^2}{4}+\frac{\|T^{1/2}f\|^2}{\|f\|^2}
+\frac{\la|\inner<g,f>|^2}{2\|f\|^2}\right)+\frac{\la\|\psi_0\|^2\|g\|^2}{4}.
\end{align*}
Let $a_0:=1,\ a_n:=n^{-3/4}n!^{-1/2}, n\in\N$ and, for any $0<\delta,\ 0<\vep<1$, 
\begin{align*}
&f=f_{\delta}:=\frac{T^{-1}E((\delta,\infty))g}{\|T^{-1}E((\delta,\infty))g\|},\\ 
&c_{\la}(\vep,\delta):=\|T^{1/2}f_{\delta}\|^2\left\{1+\frac{\la}{2}(2-\vep)\|T^{-1/2}E((\delta,\infty))g\|^2\right\}.
\end{align*}
Then $\sum_{n=0}^{\infty}\|\psi_n\|^2$ converges and, for any $N\in\N$,  
\begin{equation}
\label{eq:unbddhami}
\inner<\phi_N,H(\la)\phi_N>
=\sum_{n=2}^N \|\psi_n\|^2n c_{\la}(\vep,\delta)
+\frac{\la}{2}\sum_{n=2}^N \|\psi_n\|^2
\left(\frac{a_{n-2}}{a_n}-n(1-\vep)\right)\inner<g,f_{\delta}>^2+C_N,
\end{equation}
where 
\begin{equation*}
C_N:=\frac{\la\|g\|^2}{4}\sum_{n=0}^N \|\psi_n\|^2
+\|\psi_1\|^2\left(\|T^{1/2}f_{\delta}\|^2+\frac{\la}{2}\inner<g,f_{\delta}>^2\right).
\end{equation*}
For all $0<\delta, 0<\vep<1$, we have 
\begin{equation}
\label{eq:ineq}
-\frac{2}{\|T^{-1/2}E((\delta,\infty))g\|^2(2-\vep)}<\lacz.
\end{equation}
The left hand side of \eqref{eq:ineq} tends to $\lacz$ as $\vep,\delta\dra0$. 
Since $\la<\lacz$, we can take a pair $(\vep,\delta)$ satisfying $c_{\la}(\vep,\delta)<0$. 
We fix such a pair. 
There is a $n_0\in\N$ such that $a_{n-2}/a_n-n(1-\vep)>0$ for all $n\geq n_0$. 
Hence we can see that $\inner<\phi_N,H(\la)\phi_N>/\|\phi_N\|^2$ tends to $-\infty$ as $N\ra\infty$, because the first term on the right hand side of \eqref{eq:unbddhami} tends to $-\infty$ as $N\ra\infty$.
\end{proof}

\section{Generalization of the $\phi^2$-model}
In this section we consider $H(\eta,\lambda)$ defined in Subsection \ref{Hamiltonian}. 
\begin{ass}
\label{ass:sa}
We need the following assumptions:
\begin{enumerate}
\item
$f\in D(T^{1/2})$ and $g\in D(T^{-1/2})\cap D(T^{1/2})$,
\item
$f\in D(T^{-1})$ and $\Re\inner<T^{-1}f,g>=0$,
\item
$f,g\in D(T^{-1})$ and $\Re\inner<T^{-1}f,g>\neq0$.
\end{enumerate}
\end{ass}

We can prove a slight generalization of Theorem \ref{Main}.
\begin{thm}
\label{Main2}
Let $\msH$ be separable. Then the following $(1)$-$(5)$ hold:
\begin{enumerate}
\item
Suppose that Assumption \ref{ass} and Assumption \ref{ass:sa} $(2)$ or $(3)$ hold. 
Let $\la>\lac$. Then there is a unitary operator $\U$ on $\fb(\msH)$ 
such that for all $\eta\in\R$,
\begin{equation*}
\U H(\eta,\la)\U^{-1}=\SQ(T)+\Eg+E_{f,g},
\end{equation*}
where the constant $E_{f,g}\in\R$ is defined by 
\begin{equation*}
E_{f,g}=-\dfrac{\eta^2}{2}\|T^{-1/2}f\|^2
+\dfrac{(\Re\inner<T^{-1}f,g>)^2\eta^2\la}{2(1+\la\|T^{-1/2}g\|^2)}.
\end{equation*}
\item
Suppose that Assumption \ref{ass} and Assumption \ref{ass:sa} $(2)$ or $(3)$ hold. 
Let $E_0>0$ and $\la>\lac$. 
Then there are a unitary operator $\V$ on $\fb(\msH)$ and a non-negative, injective self-adjoint operator $\xi$ on $\msH$such that, for all $\eta\in\R$,
\begin{equation*}
\V H(\eta,\la)\V^{-1}=\SQ(\xi)+\Eg-\Eb+E_{f,g}.
\end{equation*}
\item
Let $T$ be a non-negative, injective self-adjoint operator and suppose that $f$ and $g$ satisfy Assumption \ref{ass:sa} $(1)$ and $(2)$. Then there is a unitary operator $\W$ on $\fb(\msH)$ such that, for all $\eta\in\R$,
\begin{equation*}
\W\overline{H(\eta,\lacz)}\W^{-1}=\overline{H(\lacz)}-\frac{\eta^2}{2}\|T^{-1/2}f\|^2.
\end{equation*}
\item
Let $T$ be a non-negative, injective self-adjoint operator and suppose that $f$ and $g$ satisfy Assumption \ref{ass:sa} $(1)$ and $(3)$. Then, for all $\eta\in\R\sla\{0\}$,
\begin{equation*}
\sigma(\overline{H(\eta,\lacz)})=\R,\quad \sigp(\overline{H(\eta,\lacz)})=\emps.
\end{equation*}
\item
Let $T$ be a non-negative, injective self-adjoint operator and suppose that $f$ and $g$ satisfy Assumption \ref{ass:sa} $(1)$. Moreover, suppose that Assumption \ref{ass:sa} $(2)$ or $(3)$ holds. 
Let $\la<\lacz$. Then, for all $\eta\in\R$, $\overline{H(\eta,\la)}$ is unbounded above and  below. 
\end{enumerate}
\end{thm}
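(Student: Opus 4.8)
The plan is to reduce every part to the already–analysed Hamiltonians $H(\la)$ and $\overline{H(\la)}$ by \emph{completing the square} with a Weyl (displacement) operator, and then to isolate the single new phenomenon, which occurs in part~(4). For $h\in D(T^{1/2})$ let $W_h:=\exp(A(h)^*-A(h))$ be the unitary displacement operator; on suitable domains it satisfies $W_h^{-1}A(k)W_h=A(k)+\inner<k,h>$, whence
\[
W_h^{-1}\Segal(k)W_h=\Segal(k)+\sqrt2\,\Re\inner<h,k>,\qquad
W_h^{-1}\SQ(T)W_h=\SQ(T)+\sqrt2\,\Segal(Th)+\inner<h,Th>,
\]
the second identity requiring $h\in D(T)$. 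Inserting these into $H(\eta,\la)=\SQ(T)+\tfrac{\la}{2}\Segal(g)^2+\eta\Segal(f)$ gives, with $c:=\Re\inner<h,g>$,
\[
W_h^{-1}H(\eta,\la)W_h
=H(\la)+\Segal\!\big(\sqrt2\,Th+\sqrt2\,\la c\,g+\eta f\big)
+\big(\inner<h,Th>+\la c^2+\sqrt2\,\eta\,\Re\inner<h,f>\big).
\]
The field term vanishes precisely when $Th=-\la c\,g-\tfrac{\eta}{\sqrt2}f$, i.e. $h=-\la c\,T^{-1}g-\tfrac{\eta}{\sqrt2}T^{-1}f$, and the self-consistency $c=\Re\inner<h,g>$ then becomes $c\,(1+\la\|T^{-1/2}g\|^2)=-\tfrac{\eta}{\sqrt2}\Re\inner<T^{-1}f,g>$. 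Thus the square can be completed exactly when $1+\la\|T^{-1/2}g\|^2\neq0$, that is when $\la\neq\lacz$.

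First I would dispose of parts (1), (2), (5) and the relevant part-(3) case at one stroke. Whenever $\la\neq\lacz$ the scalar $c$ is determined, $h\in D(T)\subset D(T^{1/2})$ by Assumption~\ref{ass:sa} (using $D(T^{-1})\subset D(T^{-1/2})$), the field term is removed, and a direct computation identifies the surviving constant with $E_{f,g}$ as stated (in case~(2), $\Re\inner<T^{-1}f,g>=0$ forces $c=0$ and kills its second summand). After checking that $W_h$ leaves $D(\SQ(T))$, resp. the form domain $D(\SQ(T)^{1/2})$, invariant so that the identity passes to the (self-adjoint) closures, one obtains $W_h^{-1}H(\eta,\la)W_h=H(\la)+E_{f,g}$. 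For (1) and (2) I compose with the Bogoliubov unitary $\U_0$, resp. $\V_0$, of Theorem~\ref{Main} (1), resp. (2): setting $\U:=\U_0 W_h^{-1}$ (resp. $\V:=\V_0 W_h^{-1}$) yields the asserted diagonalizations with additive constants $\Eg+E_{f,g}$ (resp. $\Eg-\Eb+E_{f,g}$). For (5), with $\la<\lacz$, completing the square is still legitimate, so $\overline{H(\eta,\la)}$ is unitarily equivalent to $\overline{H(\la)}+E_{f,g}$, and unboundedness above and below follows from the theorem of Section~7. In part~(3) with Assumption~\ref{ass:sa}~(2) the self-consistency reads $0=0$, solved by $c=0$, $h=-\tfrac{\eta}{\sqrt2}T^{-1}f$; the square is completed and $\W:=W_h^{-1}$ gives $\W\,\overline{H(\eta,\lacz)}\,\W^{-1}=\overline{H(\lacz)}-\tfrac{\eta^2}{2}\|T^{-1/2}f\|^2$.

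The heart of the matter is part~(4). At $\la=\lacz$ with $\Re\inner<T^{-1}f,g>\neq0$ the self-consistency $c\cdot0=-\tfrac{\eta}{\sqrt2}\Re\inner<T^{-1}f,g>\neq0$ is unsolvable, so the linear term cannot be eliminated. Choosing instead $h=-\tfrac{\eta}{\sqrt2}T^{-1}f$ cancels only the $f$-part and leaves a residual multiple of $\Segal(g)$:
\[
W_h^{-1}\,\overline{H(\eta,\lacz)}\,W_h=\overline{H(\lacz)}+\kappa\,\Segal(g)+\text{const},
\qquad \kappa:=-\lacz\,\eta\,\Re\inner<T^{-1}f,g>\neq0.
\]
Write $K$ for the resulting self-adjoint operator $\overline{H(\lacz)}+\kappa\Segal(g)$. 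Now I exploit the \emph{second} family of displacements $W_{tT^{-1}g}$, $t\in\R$ (legitimate since $g\in D(T^{-1})$ by Assumption~\ref{ass:sa}~(3)). Conjugating $K$ and using crucially that $1+\lacz\|T^{-1/2}g\|^2=0$, every $\Segal(g)$-term generated by the conjugation cancels, leaving the exact covariance
\[
W_{tT^{-1}g}^{-1}\,K\,W_{tT^{-1}g}=K+\nu t,\qquad
\nu:=\sqrt2\,\kappa\,\|T^{-1/2}g\|^2\neq0,\quad t\in\R.
\]
Hence $\sigma(K)=\sigma(K)+\nu t$ for all $t$; being nonempty (self-adjointness) and invariant under every real translation, $\sigma(K)=\R$. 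Likewise an eigenvector $\psi$ of $K$ with eigenvalue $\mu$ would yield, via $W_{tT^{-1}g}\psi$, an uncountable orthogonal family of eigenvectors with eigenvalues $\mu+\nu t$, contradicting separability of $\fb(\msH)$; thus $\sigp(K)=\emps$. Since $\overline{H(\eta,\lacz)}$ equals $K$ up to the unitary $W_h$ and an additive constant, $\sigma(\overline{H(\eta,\lacz)})=\R$ and $\sigp(\overline{H(\eta,\lacz)})=\emps$.

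The main obstacle is not algebraic but the domain bookkeeping for the Weyl conjugations: since $W_h$ does not preserve $\fbfin(D(T))$, one must argue on $D(\SQ(T))$ (resp. $D(\SQ(T)^{1/2})$), which $W_h$ does leave invariant for $h\in D(T)$ (resp. $h\in D(T^{1/2})$), and, at $\la=\lacz$ where $H(\lacz)$ is only essentially self-adjoint (Theorem~\ref{sa}~(3)), carefully transfer the identities to the closures using the core statements of Theorem~\ref{sa}. The conceptual point, however, is the observation in part~(4) that the degeneracy $1+\lacz\|T^{-1/2}g\|^2=0$ simultaneously obstructs the removal of the $g$-directed linear term and produces the exact translation covariance that turns the spectrum into all of $\R$.
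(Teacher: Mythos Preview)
Your proposal is correct and follows essentially the same strategy as the paper. The paper proves the key lemma by conjugating first with $\mathbb{U}_1=e^{-i\Phi_{\rm s}(i\eta T^{-1}f)}$ to remove the $\Phi_{\rm s}(f)$ term, and then with $\mathbb{V}_1=e^{i\Phi_{\rm s}(i\alpha T^{-1}g)}$ to remove the residual $\Phi_{\rm s}(g)$ term, whereas you solve the self-consistency in a single Weyl displacement $W_h$; these are the same computation organised differently (your $W_h$ equals the paper's $\mathbb{U}_2=\mathbb{V}_1\mathbb{U}_1$ up to notation). For part~(4) both arguments exploit the identical translation covariance coming from $1+\lacz\|T^{-1/2}g\|^2=0$; your separability argument for $\sigp=\emptyset$ makes explicit what the paper records as ``easy to see'', and the domain issues you flag are handled in the paper by working on $\fbfin(D(T))$ and passing to closures via the core statements of Theorem~\ref{sa}.
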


It is easy to see that Theorem \ref{Main2} is proved by 
the following lemma and facts in Theorem \ref{Main}.
\begin{lemma}
Let $T$ be a non-negative, injective self-adjont operator, $f\in D(T^{-1})$ and $g\in D(T^{-1/2})\cap D(T)$.
\begin{enumerate}
\item
Let $\Re\inner<T^{-1}f,g>=0$. Then there is a unitary operator $\U_1$ on $\fb(\msH)$ such that for all $\eta,\la\in\R$,
\begin{equation}
\U_1\overline{H(\eta,\la)}\U_1^{-1}=\overline{H(\la)}-\frac{\eta^2}{2}\|T^{-1/2}f\|^2. \label{eq:841}
\end{equation}
\item
Let $\Re\inner<T^{-1}f,g>\neq0$ and $g\in D(T^{-1})$.
\begin{enumerate}
\item
If $\la\neq\lacz$, then there is a unitary operator $\U_2$ on $\fb(\msH)$ such that for all $\eta\in\R$,
\begin{equation*}
\U_2\overline{H(\eta,\la)}\U_2^{-1}=\overline{H(\la)}+E_{f,g}.
\end{equation*}
\item
If $\la=\lacz$, then for all $\eta\in\R\sla\{0\}$,
\begin{equation}
\sigma(\overline{H(\eta,\lacz)})=\R,\quad \sigp(\overline{H(\eta,\lacz)})=\emptyset. \label{eq:843}
\end{equation}
\end{enumerate}
\end{enumerate}
\end{lemma}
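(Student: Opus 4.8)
The plan is to conjugate $H(\eta,\la)$ by Weyl (translation) unitaries. For $h\in\msH$ let $W(h):=e^{A(h)^{*}-A(h)}$, a unitary on $\fb(\msH)$; for $h\in D(T)$ one has the standard intertwining relations $W(h)^{-1}\SQ(T)W(h)=\SQ(T)+\sqrt{2}\,\Segal(Th)+\|T^{1/2}h\|^{2}$ and $W(h)^{-1}\Segal(k)W(h)=\Segal(k)+\sqrt{2}\,\Re\inner<k,h>$ for all $k\in\msH$. Expanding first on the core $\fbfin(D(T))$ gives
\begin{multline*}
W(h)^{-1}H(\eta,\la)W(h)=H(\la)+\Segal\!\big(\sqrt{2}\,Th+\sqrt{2}\,\la c\,g+\eta f\big)\\
{}+\|T^{1/2}h\|^{2}+\la c^{2}+\sqrt{2}\,\eta\,\Re\inner<f,h>,
\end{multline*}
where $c:=\Re\inner<g,h>$. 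Thus the linear term vanishes exactly when $Th=-\tfrac{\eta}{\sqrt{2}}f-\la c\,g$; taking $\Re\inner<g,\cdot>$ of this relation forces $c\,(1+\la\|T^{-1/2}g\|^{2})=-\tfrac{\eta}{\sqrt{2}}\rho$ with $\rho:=\Re\inner<T^{-1}f,g>$. Since $1+\la\|T^{-1/2}g\|^{2}=0$ precisely when $\la=\lacz$, this scalar relation is the origin of the dichotomy.

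For (1), where $\rho=0$, one may take $c=0$ and $h=h_{\eta}:=-\tfrac{\eta}{\sqrt{2}}T^{-1}f$, which works simultaneously for every $\la\in\R$ (note $h_{\eta}\in D(T)\cap D(T^{1/2})$, using $D(T^{-1})\subset D(T^{-1/2})$); the residual constant then evaluates to $-\tfrac{\eta^{2}}{2}\|T^{-1/2}f\|^{2}$, so $\U_{1}:=W(h_{\eta})^{-1}$ yields \eqref{eq:841}. For (2)(i), where $\rho\neq0$ and $\la\neq\lacz$, the coefficient $1+\la\|T^{-1/2}g\|^{2}$ is nonzero, so we solve for $c$ and set $h:=-\tfrac{\eta}{\sqrt{2}}T^{-1}f-\la c\,T^{-1}g$ (here $g\in D(T^{-1})$ is used for $h\in D(T)$ and $g\in D(T^{-1/2})$ for $h\in D(T^{1/2})$); a direct computation of $\|T^{1/2}h\|^{2}+\la c^{2}+\sqrt{2}\,\eta\,\Re\inner<f,h>$ collapses to $E_{f,g}$, and $\U_{2}:=W(h)^{-1}$ gives the claim. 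In both cases the formal identity on $\fbfin(D(T))$ is promoted to an identity of the self-adjoint closures by the usual argument: $W(h)$ maps $D(\SQ(T))$ onto itself, $\Segal(g)^{2}$ and $\Segal(f)$ are $\SQ(T)$-bounded, and $H(\eta,\la)$ and $H(\la)$ are essentially self-adjoint on cores of $\SQ(T)$ by Theorem~\ref{sa} and the remark after it.

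For (2)(ii), the resonant case $\la=\lacz$, $\rho\neq0$, $\eta\neq0$, the scalar relation has no solution and the linear term cannot be removed. Instead, consider the translations $W(sT^{-1}g)$, $s\in\R$, which are well defined since $g\in D(T^{-1})$. Because $1+\lacz\|T^{-1/2}g\|^{2}=0$, both the resulting linear term $\Segal\big(\sqrt{2}\,s(1+\lacz\|T^{-1/2}g\|^{2})g\big)$ and the resulting constant $s^{2}\|T^{-1/2}g\|^{2}(1+\lacz\|T^{-1/2}g\|^{2})$ vanish, i.e. $W(sT^{-1}g)^{-1}H(\lacz)W(sT^{-1}g)=H(\lacz)$ exactly, while $W(sT^{-1}g)^{-1}\eta\Segal(f)W(sT^{-1}g)=\eta\Segal(f)+\sqrt{2}\,s\eta\rho$. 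Passing to closures, $W(sT^{-1}g)^{-1}\overline{H(\eta,\lacz)}W(sT^{-1}g)=\overline{H(\eta,\lacz)}+\sqrt{2}\,s\eta\rho$ for every $s\in\R$. Since $\overline{H(\eta,\lacz)}$ is self-adjoint (Theorem~\ref{sa}), its spectrum is a nonempty closed subset of $\R$ invariant under all translations $x\mapsto x+\sqrt{2}\,s\eta\rho$; as $\eta\rho\neq0$, this forces $\sigma(\overline{H(\eta,\lacz)})=\R$. An eigenvalue $E$ would yield eigenvalues $E+\sqrt{2}\,s\eta\rho$ ($s\in\R$) with mutually orthogonal eigenvectors $W(sT^{-1}g)\psi$, contradicting separability of $\fb(\msH)$; hence $\sigp(\overline{H(\eta,\lacz)})=\emps$, which is \eqref{eq:843}.

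The algebra and the closure/domain bookkeeping are routine once Theorem~\ref{sa} is in hand; the single step that requires insight is observing the exact invariance $W(sT^{-1}g)^{-1}H(\lacz)W(sT^{-1}g)=H(\lacz)$ at the resonance, since it is precisely this translation invariance of the spectrum that simultaneously delivers $\sigma=\R$ and the absence of eigenvalues in (2)(ii).
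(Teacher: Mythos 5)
Your proposal is correct and follows essentially the same strategy as the paper: your Weyl operators $W(h)=e^{A(h)^{*}-A(h)}$ are exactly the paper's unitaries $e^{-i\Segal(i\sqrt{2}\,h)}$, your $\U_1=W(h_\eta)^{-1}$ is the paper's $\U_1=e^{-i\Segal(i\eta T^{-1}f)}$, and your computations of the shifted linear term and of the constants ($-\tfrac{\eta^2}{2}\|T^{-1/2}f\|^2$ in (1), $E_{f,g}$ in (2)(i)) check out. The only genuine differences are cosmetic or local: in (2)(i) you use a single combined translation $W\bigl(-\tfrac{\eta}{\sqrt2}T^{-1}f-\la c\,T^{-1}g\bigr)$ where the paper composes two unitaries $\V_1\U_1$; and in the resonant case (2)(ii) you exploit the exact symmetry $W(sT^{-1}g)^{-1}H(\lacz)W(sT^{-1}g)=H(\lacz)$ directly on $H(\eta,\lacz)$, whereas the paper first removes the $f$-linear term with $\U_1$, obtaining $H_g(\nu,\lacz)=H(\lacz)+\nu\Segal(g)$ up to constants, and then applies the $g$-translations to that operator. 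Both versions rest on the same observation (translation invariance of the spectrum at $\la=\lacz$), and your separability argument for $\sigp=\emps$ is the natural way to make the paper's ``$\alpha$ is arbitrary'' conclusion precise (separability of $\msH$ is indeed assumed in Theorem \ref{Main2}, where the lemma is used).

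One caveat in (2)(ii): you justify ``$\overline{H(\eta,\lacz)}$ is self-adjoint'' by Theorem \ref{sa}, but under the lemma's stated hypotheses only $f\in D(T^{-1})$ is available, and Theorem \ref{sa} (2) requires $f\in D(T^{1/2})$ (while part (1) does not apply at $\la=\lacz$ since $|\lacz|>\lambda_T(g)$). Self-adjointness is needed in your argument to know that the spectrum is a subset of $\R$ before invoking translation invariance. The paper's ordering avoids this: after conjugation by $\U_1$ the leftover linear term has test vector $g\in D(T)\subset D(T^{1/2})$, so Theorem \ref{sa} (2) applies to $\overline{H_g(\nu,\lacz)}$, and self-adjointness of $\overline{H(\eta,\lacz)}$ follows from the unitary equivalence rather than being assumed. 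Your proof is repaired by exactly this one-line reduction (or by noting that in the application, Assumption \ref{ass:sa} (1) supplies $f\in D(T^{1/2})$); everything else stands.
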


\begin{proof}
	Let $\mathbb{U}_1:=e^{-i\Phi _{\rm s}(i\eta T^{-1}f)}$ for all $\eta \in \mathbb{R}$. Then, by direct calculation, we obtain 
	\begin{align}
	\mathbb{U}_1H(\eta ,\lambda )\mathbb{U}_1^{-1}=H(\lambda )-\frac{\eta ^2}{2}\|T^{-1/2}f \|^2-\lambda\eta \kappa \Phi _{\rm s}(g)+\frac{\lambda}{2}\eta ^2\kappa ^2 \label{eq:84p1}
	\end{align}
	on $\fbfin(D(T))$ 
      for all $\eta ,\lambda \in \R$, where $\kappa :=\Re \langle T^{-1}f,g\rangle $.		
	In the case of (1), we have (\ref{eq:841}) by $\kappa =0$ and a limit argument.	
	Next, we prove (2). We assume that $g\in D(T^{-1})$ and $\Re \langle T^{-1}f,g\rangle \neq0$. 
Let $\mathbb{V}_1:=e^{i\Phi _{\rm s}(i\alpha T^{-1}g)}$ for any $\alpha \in \mathbb{R}$ and  define a unitary operator $\mathbb{U}_2:=\mathbb{V}_1\mathbb{U}_1$. Then
	\begin{align*}
	\mathbb{U}_2H(\eta, \lambda)\mathbb{U}_2^{-1}=&H(\lambda)+\Bigl( \alpha +\lambda \alpha \|T^{-1/2}g\|^2 -\lambda \eta \kappa   \Bigr)\Phi _{\rm s}(g)\\
	&-\frac{\eta ^2}{2}\| T^{-1/2}f\|^2 +\frac{\lambda }{2}\eta ^2\kappa ^2+\frac{\alpha }{2}\| T^{-1/2}g \|^2\Bigl( \alpha +\lambda \alpha \| T^{-1/2}g\|^2-2\lambda \eta \kappa \Bigr)
	\end{align*}
	 on $\fbfin(D(T))$ in the same way as in (\ref{eq:84p1}). For $\lambda \not= \lambda _{{\rm c},0}$, let $\alpha =\lambda \eta \kappa (1+\lambda \|T^{-1/2}g\|^2)^{-1}$. Then we obtain 
	 \begin{align}
	 \mathbb{U}_2\overline{H(\eta, \lambda)}\mathbb{U}_2^{-1}=\overline{H(\lambda)}
-\frac{\eta ^2}{2}\| T^{-1/2}f\|^2 +\frac{\lambda \eta ^2\kappa ^2}{2(1+\lambda \|T^{-1/2}g\|^2)}\label{eq:84p2}
	 \end{align}
	 by a limit argument. If $\lambda =\lambda _{{\rm c},0}$, then, for all $\eta,\alpha\in\R$,  we have 
	 \begin{align*}
	 \mathbb{U}_2\overline{H(\eta, \lambda _{{\rm c},0})}\mathbb{U}_2^{-1}=\overline{H_g(-\kappa \eta \lambda _{{\rm c},0},\lambda )}-\frac{\eta^2}{2}\|T^{-1/2}f\|^2+\frac{\lambda _{{\rm c},0}\eta ^2\kappa ^2}{2}+\kappa \eta \alpha
	 \end{align*}
	in the same way as in \eqref{eq:84p2}, where $H_g(\nu, \lambda _{{\rm c},0}):=H(\lambda _{{\rm c},0})+\nu \Phi _{\rm s}(g)$ for all $\nu \in \mathbb{R}.$ It is easy to see that $\sigma (\overline{H_g(\nu, \lambda _{{\rm c},0})})=\mathbb{R}$ and $\sigma_{\rm p} (\overline{H_g(\nu, \lambda _{{\rm c},0})})=\emptyset$ for all $\nu \in \mathbb{R}\backslash\{0\}$, because $\mathbb{V}_1\overline{H_g(\nu ,\lambda _{{\rm c},0})}\mathbb{V}_1^{-1}=\overline{H_g(\nu ,\lambda _{{\rm c},0})}+\nu\alpha \|T^{-1/2}g\|^2$ and $\alpha \in \mathbb{R}$ is arbitrary. Hence we have (\ref{eq:843}).
\end{proof}

\begin{rem}
If $\msH$ is separable, then the condition $g\in D(T^{-1/2})\cap D(T)$ in the above lemma is weakened to the condition $g\in D(T^{-1/2})\cap D(T^{1/2})$.
\end{rem}

\section{Appendix}
In this section, we recall some known facts in Fock space theory. 
Let $T$ be a non-negative, injective self-adjoint operator on $\msH$.
\begin{lemma}{\rm \cite[Theorem 5.16.]{AR2}}\\
Let $f\in D(T^{-1/2})$ and $\psi\in D(\SQ(T)^{1/2})$. Then $\psi\in D(A(f))\cap D(A(f)^*)$ and the following inequalities hold$:$
\begin{align}
\|A(f)\psi\|\leq&\|T^{-1/2}f\|\|\SQ(T)^{1/2}\psi\|,\label{eq:ani}\\
\|A(f)^*\psi\|^2\leq&\|T^{-1/2}f\|^2\|\SQ(T)^{1/2}\psi\|^2+\|f\|^2\|\psi\|^2.\label{eq:cre}
\end{align}
\end{lemma}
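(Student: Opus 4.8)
The plan is to establish \eqref{eq:ani} first, to deduce \eqref{eq:cre} from it together with the canonical commutation relations \eqref{eq:ccr}, and then to remove the restriction to a core of $\SQ(T)^{1/2}$ by a closure argument. Throughout I would work on the dense subspace $\fbfin(D(T^{1/2}))$, which is a core for $\SQ(T)^{1/2}=\oplus_{n=0}^{\infty}(T_{\rm b}^{(n)})^{1/2}$ and is contained in $\fbz(\msH)\subs D(A(f))\cap D(A(f)^*)$, so that every formal manipulation below is legitimate there.

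For $\psi=\{\psi^{(n)}\}\in\fbfin(D(T^{1/2}))$ and $f\in D(T^{-1/2})$, the $n$-th component of $A(f)\psi$ equals $\sqrt{n+1}$ times the contraction of $f$ against the first factor of $\psi^{(n+1)}$. The key step is that for $h\in D(T^{1/2})$ one has $\inner<f,h>=\inner<T^{-1/2}f,T^{1/2}h>$, so this contraction factors as ``apply $T^{1/2}$ to the first factor, then contract $T^{-1/2}f$''; this yields
\begin{equation*}
\|(A(f)\psi)^{(n)}\|\le\sqrt{n+1}\,\|T^{-1/2}f\|\,\|(T^{1/2}\otimes I^{\otimes n})\psi^{(n+1)}\|.
\end{equation*}
Since $\psi^{(n+1)}$ is symmetric, $\|(T^{1/2}\otimes I^{\otimes n})\psi^{(n+1)}\|^2$ is the average over the $n+1$ factors of $\|(\text{$T^{1/2}$ in the $j$-th factor})\psi^{(n+1)}\|^2$, whose sum over $j$ is exactly $\inner<\psi^{(n+1)},T_{\rm b}^{(n+1)}\psi^{(n+1)}>=\|(T_{\rm b}^{(n+1)})^{1/2}\psi^{(n+1)}\|^2$. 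Hence the factor $n+1$ cancels and $\|(A(f)\psi)^{(n)}\|^2\le\|T^{-1/2}f\|^2\,\|(T_{\rm b}^{(n+1)})^{1/2}\psi^{(n+1)}\|^2$; summing over $n\ge0$ and using $(T_{\rm b}^{(0)})^{1/2}=0$ gives \eqref{eq:ani} on $\fbfin(D(T^{1/2}))$.

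For \eqref{eq:cre}, the commutation relations \eqref{eq:ccr} on $\fbz(\msH)$ give, for $\psi\in\fbfin(D(T^{1/2}))$, that $\|A(f)^*\psi\|^2=\inner<\psi,A(f)A(f)^*\psi>=\|A(f)\psi\|^2+\|f\|^2\|\psi\|^2$, and \eqref{eq:ani} finishes the bound. To reach a general $\psi\in D(\SQ(T)^{1/2})$, choose $\psi_k\in\fbfin(D(T^{1/2}))$ with $\psi_k\ra\psi$ and $\SQ(T)^{1/2}\psi_k\ra\SQ(T)^{1/2}\psi$; then $\{A(f)\psi_k\}$ and $\{A(f)^*\psi_k\}$ are Cauchy by the two inequalities just proved, and since $A(f)$ and $A(f)^*$ are closed, $\psi\in D(A(f))\cap D(A(f)^*)$ with \eqref{eq:ani} and \eqref{eq:cre} holding in the limit.

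The only points requiring care are the tensor-product bookkeeping behind the displayed inequality: that $T^{1/2}\otimes I^{\otimes n}$ is well defined on $\hat{\otimes}_{\rm s}^{n+1}D(T^{1/2})$, that this space lies in $D((T_{\rm b}^{(n+1)})^{1/2})$ with the stated expression for its quadratic form, and that $\fbfin(D(T^{1/2}))$ is genuinely a core for $\SQ(T)^{1/2}$. All three follow from the spectral theorem applied in each factor and are precisely the content of \cite[Theorem 5.16.]{AR2}, which one may simply cite for the full details.
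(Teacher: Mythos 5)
Your argument is correct, and it is the standard proof of these bounds. Note, however, that the paper itself does not prove this lemma at all: it is stated in the Appendix purely as a quoted result with the citation \cite[Theorem 5.16.]{AR2}, so there is no internal proof to compare against; what you have written is essentially the argument one expects to find in that reference. The chain of steps checks out: the contraction formula for $(A(f)\psi)^{(n)}$, the insertion $\inner<f,h>=\inner<T^{-1/2}f,T^{1/2}h>$ giving the factor $\|T^{-1/2}f\|$, the permutation-symmetry argument that converts $(n+1)\|(T^{1/2}\otimes I^{\otimes n})\psi^{(n+1)}\|^2$ into the quadratic form of $T_{\rm b}^{(n+1)}$, the identity $\|A(f)^*\psi\|^2=\|A(f)\psi\|^2+\|f\|^2\|\psi\|^2$ from \eqref{eq:ccr} on $\fbz(\msH)$, and the closedness-of-$A(f)^{\sharp}$ limiting argument from the core $\fbfin(D(T^{1/2}))$ are all sound. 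One small caveat: in your closing paragraph you propose to justify the tensor-product bookkeeping (well-definedness of $T^{1/2}\otimes I^{\otimes n}$ on $\hat{\otimes}_{\rm s}^{n+1}D(T^{1/2})$, the form identity for $(T_{\rm b}^{(n+1)})^{1/2}$, and the core property) by citing \cite[Theorem 5.16.]{AR2} itself; that reference is the statement being proved, so invoking it there would be circular. Those three facts should instead be attributed to the elementary second-quantization results (the spectral theorem in each tensor factor and the form representation of $\SQ(T)$), exactly as the first half of your sentence already indicates; with that attribution the proof is complete and self-contained.
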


\begin{lemma}{\rm \cite[Proposition 5.10.]{AR2}}
For any $f\in D(T)$, the following commutation relations on $\fbfin(D(T)):$
\begin{equation}
\label{eq:crsq}
[\SQ(T),A(f)]=-A(Tf),\quad [\SQ(T),A(f)^*]=A(Tf)^*.
\end{equation}
\end{lemma}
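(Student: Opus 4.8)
The plan is to check both commutation relations directly on the algebraic finite-particle space $\fbfin(D(T))$, which is spanned by the Fock vacuum $\Omz$ together with the vectors $h_1\otimes\cdots\otimes h_n$, $h_1,\dots,h_n\in D(T)$ (and their symmetrizations). Since $\SQ(T)=\oplus_{n=0}^{\infty}T_{\rm b}^{(n)}$ is a direct sum, and creation/annihilation operators shift the particle number by $\pm1$, it suffices to carry out the computation on a single $n$-particle sector $\hat{\otimes}_{\rm s}^n D(T)$, on which $\SQ(T)$ acts as the genuine (not-yet-closed) finite sum $\sum_{j=1}^{n}I\otimes\cdots\otimes T\otimes\cdots\otimes I$.

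First I would establish the creation relation $[\SQ(T),A(f)^*]=A(Tf)^*$. For $\psi=h_1\otimes\cdots\otimes h_n$ with each $h_i\in D(T)$ one has $A(f)^*\psi=\sqrt{n+1}\,S_{n+1}(f\otimes\psi)$, and this vector lies in $\hat{\otimes}_{\rm s}^{n+1}D(T)$ precisely because $f\in D(T)$; hence $\SQ(T)A(f)^*\psi$ is well defined. Using that the symmetric operator $\sum_{j=1}^{n+1}I\otimes\cdots\otimes T\otimes\cdots\otimes I$ commutes with the symmetrizer $S_{n+1}$, I split the sum into the term acting on the leading factor $f$ and the $n$ terms acting on the factors of $\psi$, which gives
\[
\SQ(T)A(f)^*\psi=A(Tf)^*\psi+A(f)^*\SQ(T)\psi .
\]
Linearity over $\hat{\otimes}_{\rm s}^n D(T)$ and then over $n$ extends this to all of $\fbfin(D(T))$; the vacuum case is immediate, since $\SQ(T)\Omz=0$, $A(f)^*\Omz=f$, and $\SQ(T)$ restricted to the one-particle space is $T$.

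The annihilation relation $[\SQ(T),A(f)]=-A(Tf)$ then follows by a duality argument. For $\phi,\psi\in\fbfin(D(T))$, using the self-adjointness of $\SQ(T)$ and that $A(f)^*$ acts as the adjoint of $A(f)$ on such vectors, one computes $\inner<\phi,[\SQ(T),A(f)]\psi>=\inner<-[\SQ(T),A(f)^*]\phi,\psi>=\inner<-A(Tf)^*\phi,\psi>=\inner<\phi,-A(Tf)\psi>$; since $\fbfin(D(T))$ is dense and both $[\SQ(T),A(f)]\psi$ and $-A(Tf)\psi$ are honest vectors of $\fb(\msH)$, they coincide. Alternatively one could repeat the tensor-product computation with the explicit action of $A(f)$, contracting $f$ against each factor, but the adjoint route avoids the bookkeeping.

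The only point that needs care — more of a technicality than an obstacle — is the domain tracking: one must verify at each step that the vectors produced remain inside $\fbfin(D(T))$ (this is exactly where $f\in D(T)$ enters, guaranteeing $A(f)^*$ preserves the space), and that on $\hat{\otimes}_{\rm s}^n D(T)$ the operator $\SQ(T)$ may be replaced by the algebraic finite sum rather than its closure $T_{\rm b}^{(n)}$, so that all commutator manipulations stay within the purely algebraic setting where no closure issues arise.
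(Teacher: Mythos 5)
Your proof is correct; note that the paper does not prove this lemma at all but simply cites \cite[Proposition 5.10]{AR2}, and your argument is the standard one: direct verification of the creation relation on elementary symmetric tensors in $\hat{\otimes}_{\rm s}^n D(T)$ (where $T_{\rm b}^{(n)}$ agrees with the algebraic sum and commutes with the symmetrizer), followed by the annihilation relation via the adjoint identity, with the needed domain checks ($A(f)^{\sharp}$ preserving $\fbfin(D(T))$, $\SQ(T)\psi\in\fbz(\msH)\subset D(A(f))$) all in place. Nothing further is required.
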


\begin{lemma}{\rm \cite[Lemma 5.21.]{AR2}}
\label{te}
For any $t\in\R, f\in\msH$, the following equation holds:
\begin{equation*}
e^{it\SQ(T)}A(f)^{\sharp}e^{-it\SQ(T)}=A(e^{itT}f)^{\sharp}.
\end{equation*}
\end{lemma}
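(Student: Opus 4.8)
The plan is to verify the identity sector by sector on the Boson Fock space and then pass to closures. Throughout it suffices to treat the creation operator, since the annihilation case is recovered by taking adjoints: because $e^{\pm it\SQ(T)}$ is unitary, $\bigl(e^{it\SQ(T)}A(f)^*e^{-it\SQ(T)}\bigr)^*=e^{it\SQ(T)}A(f)\,e^{-it\SQ(T)}$, while $\bigl(A(e^{itT}f)^*\bigr)^*=A(e^{itT}f)$, so once the equation for $A(f)^*$ is known the one for $A(f)$ follows immediately. To fix a core, I would use that $\SQ(T)=\oplus_{n=0}^{\infty}T_{\rm b}^{(n)}$ is diagonal in the particle number, so its unitary group decomposes as $e^{it\SQ(T)}\rest\otimes_{\rm s}^n\msH=(e^{itT})^{\otimes n}$, the $n$-fold tensor power of the unitary $e^{itT}$ restricted to the symmetric subspace. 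In particular $e^{\pm it\SQ(T)}$ leaves each $n$-particle sector, hence the finite-particle subspace $\fbz(\msH)$, invariant, and $A(f)^*$ raises the particle number by one and so also maps $\fbz(\msH)$ into itself; thus $e^{it\SQ(T)}A(f)^*e^{-it\SQ(T)}$ is well defined on $\fbz(\msH)$, and it is enough to check the identity there.

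The heart of the argument is a one-line computation on each sector. For $\psi=\{\psi^{(n)}\}\in\fbz(\msH)$, using the explicit formula $(A(f)^*\chi)^{(n)}=\sqrt{n}\,S_n(f\otimes\chi^{(n-1)})$ together with the sector action above, I would write
\[
\bigl(e^{it\SQ(T)}A(f)^*e^{-it\SQ(T)}\psi\bigr)^{(n)}
=\sqrt{n}\,(e^{itT})^{\otimes n}\,S_n\bigl(f\otimes(e^{-itT})^{\otimes(n-1)}\psi^{(n-1)}\bigr).
\]
The key algebraic fact is that for any bounded operator $W$ on $\msH$ the tensor power $W^{\otimes n}$ commutes with the symmetrization $S_n$, since $S_n$ is an average of permutation operators and each permutation operator commutes with $W^{\otimes n}$. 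Applying this with $W=e^{itT}$ and then cancelling $(e^{itT})(e^{-itT})=I$ on the last $n-1$ factors gives $\sqrt{n}\,S_n(e^{itT}f\otimes\psi^{(n-1)})=\bigl(A(e^{itT}f)^*\psi\bigr)^{(n)}$, which is precisely the claimed identity on $\fbz(\msH)$.

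Finally I would extend from the core to the closed operators. The operator $e^{it\SQ(T)}A(f)^*e^{-it\SQ(T)}$ is closed, being a unitary conjugate of the closed operator $A(f)^*$; since $e^{-it\SQ(T)}$ maps $\fbz(\msH)$ bijectively onto itself and $\fbz(\msH)$ is a core for $A(f)^*$ (as one checks by truncating in the particle number), $\fbz(\msH)$ is also a core for this conjugate, and likewise for $A(e^{itT}f)^*$. Two closed operators agreeing on a common core coincide, which yields the stated equality; the annihilation case then follows by the adjoint remark. The only point requiring care---and the step I would be most deliberate about---is this closure/core bookkeeping, namely confirming that $\fbz(\msH)$ is a core for the creation and annihilation operators in the sense in which the paper takes $A(f)^{\sharp}$ to be closed; the sector-wise computation itself is routine. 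As an alternative that avoids the explicit tensor-power formula, one could mimic the proof of Theorem \ref{TE}: for $f\in D(T)$ and $\phi,\psi\in\fbfin(D(T))$ set $v(t):=\inner<\phi,e^{it\SQ(T)}A(e^{-itT}f)e^{-it\SQ(T)}\psi>$, differentiate using $\tfrac{d}{dt}e^{-itT}f=-iTe^{-itT}f$ and the commutation relation \eqref{eq:crsq}, and observe that the commutator term exactly cancels the term produced by the $t$-dependence of the test vector, so that $v'(t)\equiv0$ and hence $v(t)=v(0)$.
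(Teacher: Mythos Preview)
The paper does not supply its own proof of this lemma: it is listed in the Appendix as a known fact, with a citation to \cite[Lemma 5.21]{AR2}, and no argument is given. So there is nothing in the paper to compare your proposal against.

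That said, your proof is correct. The sector-by-sector computation, using that $e^{it\SQ(T)}$ restricts to $(e^{itT})^{\otimes n}$ on $\otimes_{\rm s}^n\msH$ together with the commutation of $W^{\otimes n}$ with $S_n$, is the standard route and almost certainly what the cited reference does. Your closure bookkeeping is also in order: truncation in particle number shows $\fbz(\msH)$ is a core for $A(f)^*$, and unitary conjugation preserves cores, so the extension step goes through; the adjoint remark then handles $A(f)$. The alternative you sketch at the end via differentiation and \eqref{eq:crsq} is likewise valid, though it initially needs $f\in D(T)$ and a further density argument to reach all $f\in\msH$.
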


\begin{lemma}{\rm \cite[Theorem 5.21.]{AR2}}
\label{conv}
Assume that $\msH$ be separable. Let $T$ be a non-negative, injective self-adjoint operator and $\{e_n\}_{n=1}^{\infty}\subset D(T^{1/2})$ be a CONS of $\msH$. Then, for any $\psi\in D(\SQ(T)^{1/2})$, $\sum_{n=1}^{\infty}\|A(T^{1/2}e_n)\psi\|^2$ converges and 
following equation holds$:$
\begin{equation*}
\sum_{n=1}^{\infty}\|A(T^{1/2}e_n)\psi\|^2=\|\SQ(T)^{1/2}\psi\|^2.
\end{equation*}
\end{lemma}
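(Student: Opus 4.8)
The plan is to prove the identity first at the level of quadratic forms and then pass to the limit using only positivity. Let $P_N$ be the orthogonal projection onto $\msL\{e_1,\dots,e_N\}$ and set $S_N:=T^{1/2}P_NT^{1/2}$, a bounded, non-negative, finite-rank operator that can be written as $S_N=\sum_{n=1}^N\inner<T^{1/2}e_n,\cdot>T^{1/2}e_n$. First I would establish the finite-rank second-quantization identity
\[
\sum_{n=1}^N A(T^{1/2}e_n)^*A(T^{1/2}e_n)=\SQ(S_N),
\]
which for any finite family $f_1,\dots,f_N$ reduces on $\fbfin(\msH)$ to $\sum_n A(f_n)^*A(f_n)=\SQ\bigl(\sum_n\inner<f_n,\cdot>f_n\bigr)$ and is checked sector by sector (on the one-particle sector it is simply $\sum_n|\inner<f_n,h>|^2=\inner<h,\,(\sum_n\inner<f_n,\cdot>f_n)h>$). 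Since $0\le S_N\le T$ as forms, we have $\SQ(S_N)\le\SQ(T)$, while each term satisfies $\|A(T^{1/2}e_n)\psi\|^2\le\|\SQ(T)^{1/2}\psi\|^2$ by \eqref{eq:ani}; both sides are therefore continuous in the $\SQ(T)^{1/2}$-graph norm and agree on the form core $\fbfin(\msH)$, so the identity extends to
\[
\sum_{n=1}^N\|A(T^{1/2}e_n)\psi\|^2=\inner<\psi,\SQ(S_N)\psi>,\qquad \psi\in D(\SQ(T)^{1/2}).
\]

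Because every summand is non-negative, the partial sums increase in $N$, and $\SQ(S_N)\le\SQ(T)$ bounds them by $\|\SQ(T)^{1/2}\psi\|^2$. Hence $\sum_{n=1}^\infty\|A(T^{1/2}e_n)\psi\|^2$ converges to some $L\le\|\SQ(T)^{1/2}\psi\|^2$, and the remaining task is to upgrade this to the equality $L=\|\SQ(T)^{1/2}\psi\|^2$. I would do this sector by sector. Writing $\psi=\{\psi^{(k)}\}_k$, both forms decompose as $\inner<\psi,\SQ(S_N)\psi>=\sum_k\inner<\psi^{(k)},(S_N)_{\rm b}^{(k)}\psi^{(k)}>$ and $\|\SQ(T)^{1/2}\psi\|^2=\sum_k\inner<\psi^{(k)},T_{\rm b}^{(k)}\psi^{(k)}>$, with $(S_N)_{\rm b}^{(k)}=\sum_{j=1}^k S_N^{(j)}$ and $T_{\rm b}^{(k)}=\sum_{j=1}^k T^{(j)}$, the superscript $(j)$ meaning the operator acts in the $j$-th tensor slot.

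Using the factorization $S_N^{(j)}=(T^{1/2})^{(j)}P_N^{(j)}(T^{1/2})^{(j)}$ and that $P_N^{(j)}$ is an orthogonal projection, for $\psi^{(k)}\in D((T_{\rm b}^{(k)})^{1/2})$ one obtains
\[
\inner<\psi^{(k)},S_N^{(j)}\psi^{(k)}>=\bigl\|P_N^{(j)}(T^{1/2})^{(j)}\psi^{(k)}\bigr\|^2\nearrow\bigl\|(T^{1/2})^{(j)}\psi^{(k)}\bigr\|^2=\inner<\psi^{(k)},T^{(j)}\psi^{(k)}>
\]
as $N\to\infty$, since $P_N\to I$ strongly forces $P_N^{(j)}\to I$ strongly on $\otimes^k\msH$. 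Summing the finitely many indices $j$ gives $\inner<\psi^{(k)},(S_N)_{\rm b}^{(k)}\psi^{(k)}>\nearrow\inner<\psi^{(k)},T_{\rm b}^{(k)}\psi^{(k)}>$ for each $k$. All these quantities are non-negative and monotone increasing in $N$, so the monotone convergence theorem for series lets me interchange $\lim_N$ with $\sum_k$, yielding $L=\lim_N\sum_k\inner<\psi^{(k)},(S_N)_{\rm b}^{(k)}\psi^{(k)}>=\sum_k\inner<\psi^{(k)},T_{\rm b}^{(k)}\psi^{(k)}>=\|\SQ(T)^{1/2}\psi\|^2$, which is the claim.

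The main obstacle is precisely that $\psi$ lies only in the form domain $D(\SQ(T)^{1/2})$ rather than in $D(\SQ(T))$: every interchange of limits and sums must be justified without access to the operator $\SQ(T)$ itself. The device that makes this painless is to encode the infinite sum over $n$ as the monotone form convergence $S_N\nearrow T$, realized concretely through the elementary strong convergence $P_N\to I$, so that positivity and the monotone convergence theorem carry the entire limiting argument. A secondary point requiring care is the rigorous justification of the operator identity $\sum_{n=1}^N A(T^{1/2}e_n)^*A(T^{1/2}e_n)=\SQ(S_N)$ as a form identity on $D(\SQ(T)^{1/2})$, which I handle by the form-boundedness $\SQ(S_N)\le\SQ(T)$ together with density of $\fbfin(\msH)$.
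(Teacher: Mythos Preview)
The paper does not supply a proof of this lemma: it is quoted verbatim from \cite[Theorem 5.21]{AR2} in the Appendix and left unproved. So there is no ``paper's own proof'' to compare against, and your argument stands or falls on its own merits.

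Your proof is correct in substance. Encoding the partial sum as the quadratic form of $\SQ(S_N)$ with $S_N=T^{1/2}P_NT^{1/2}$ and then exploiting the monotone form convergence $S_N\nearrow T$ (realized concretely by $P_N^{(j)}\to I$ strongly on each tensor slot) is exactly the right device for avoiding any appeal to the operator domain $D(\SQ(T))$, and the final interchange of $\lim_N$ with $\sum_k$ is justified by monotone convergence as you say. One small slip: you write that the identity is checked ``on the form core $\fbfin(\msH)$'', but in the paper's notation $\fbfin(\msH)$ is the algebraic direct sum of the full symmetric tensor powers, which is not contained in $D(\SQ(T)^{1/2})$. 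What you need (and clearly intend) is either $\fbz(\msH)\cap D(\SQ(T)^{1/2})$ or, more concretely, $\fbfin(D(T^{1/2}))$; the latter is a genuine form core for $\SQ(T)$ and the operator identity $\sum_{n=1}^N A(T^{1/2}e_n)^*A(T^{1/2}e_n)=\SQ(S_N)$ certainly holds there. With that correction the argument is complete.
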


\end{document}